\newcommand{\TITLE}{Instantons with continuous conformal symmetries}
\newcommand{\SUBTITLE}{Hyperbolic and singular monopoles and more, oh my!}
\newcommand{\AUTHOR}{C. J. Lang}
\providecommand{\keywords}[1]
{
  {\small	
  \textbf{\textit{Keywords---}} #1}
}
\newcommand{\citecomment}[2][]{\citen{#2}#1\citevar}
\newcommand{\citeone}[1]{\citecomment{#1}}
\newcommand{\citetwo}[2][]{\citecomment[,~#1]{#2}}
\newcommand{\citevar}{\@ifnextchar\bgroup{;~\citeone}{\@ifnextchar[{;~\citetwo}{]}}}
\newcommand{\citefirst}{\@ifnextchar\bgroup{\citeone}{\@ifnextchar[{\citetwo}{]}}}
\newcommand{\cites}{[\citefirst}
\newcommand{\mf}[1]{\mathfrak{#1}}
\newtheorem{theorem}{Theorem}
\numberwithin{theorem}{section}
\newtheorem{lemma}[theorem]{Lemma}
\newtheorem{note}[theorem]{Note}
\newtheorem{prop}[theorem]{Proposition}
\newtheorem{cor}[theorem]{Corollary}
\newtheorem{definition}[theorem]{Definition}
\newtheorem{conj}[theorem]{Conjecture}
\begin{document}

\pagestyle{plain}
\pagenumbering{arabic}

\title{\TITLE\\[0.2em]\smaller{}\SUBTITLE}
\author{\AUTHOR\thanks{E-mail address: clang@mun.ca}}
\affil{Department of Mathematics and Statistics \\ Memorial University of Newfoundland \\ St. John's, NL, Canada, A1C 5S7}
\date{October 14, 2025} %***CHANGE FOR ARXIV***
\maketitle
\begin{abstract}
Throughout this paper, we comprehensively study instantons with every kind of continuous conformal symmetry. Examples of these objects are hard to come by due to non-linear constraints. However, by applying previous work on moduli spaces, we introduce a linear constraint, whose solution greatly simplifies these non-linear constraints. This simplification not only allows us to easily find a plethora of novel instantons with various continuous conformal symmetries and higher rank structure groups, it also provides a framework for classifying such symmetric objects. We also prove that the basic instanton is essentially the only instanton with two particular kinds of conformal symmetry. Additionally, we discuss the connections between instantons with continuous symmetries and other gauge-theoretic objects: hyperbolic and singular monopoles as well as hyperbolic analogues to Higgs bundles and Nahm data.
\end{abstract}
\keywords{instanton, Lie theory, moduli space, representation theory}

\cleardoublepage

\renewcommand\contentsname{Table of Contents}
\tableofcontents
\phantomsection    

\listoftables
\cleardoublepage
\phantomsection	

\section{Introduction}\label{sec:intro}
In this paper, we comprehensively examine instantons with every kind of continuous conformal symmetry. This examination is done by carefully applying two main results, Theorem~\ref{thm:mainthm} and Proposition~\ref{prop:R}, proven in previous work~\cite[Theorem~1.1 \& Proposition~1.2]{lang_moduli_2024}. For any continuous symmetry, these results produce a linear equation that determines if an object possesses that symmetry. These results not only allow us to easily compute examples of instantons, they also provide a framework for classifying them. 

The results listed below differ from their original source as in this paper, we deal with a mix of right and left actions. In particular, when dealing with compact groups, we use the following result.
\begin{theorem}{\cite[Theorem~1.1]{lang_moduli_2024}}
Let $\mathcal{X}$ be a smooth manifold, $\mathcal{G}$ a compact Lie group, and $\mathcal{S}$ a compact, connected Lie group. Suppose that $\mathcal{G}$ and $\mathcal{S}$ act smoothly on $\mathcal{X}$ on the right and left, respectively and the two actions commute. We have that $[A]\in\mathcal{X}/\mathcal{G}$ is fixed by $\mathcal{S}$ if and only if there is some Lie algebra homomorphism $\rho\colon\mathrm{Lie}(\mathcal{S})\rightarrow \mathrm{Lie}(\mathcal{G})$ such that, for all $x\in\mathrm{Lie}(\mathcal{S})$, \label{thm:mainthm}
\begin{equation}
x.A-\rho(x).A=0.\label{eq:maineq}
\end{equation}
\end{theorem}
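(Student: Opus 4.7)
The plan is to package the symmetry data into a compact auxiliary Lie group and then use reductivity of compact Lie algebras to extract $\rho$.

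For the forward (``only if'') direction, assuming $[A]$ is $\mathcal{S}$-fixed, I would set
\[ \mathcal{H} = \{(s,g)\in\mathcal{S}\times\mathcal{G} : s.A = A.g\}. \]
As the preimage of the diagonal of $\mathcal{X}\times\mathcal{X}$ under the continuous map $(s,g)\mapsto (s.A,A.g)$, this is a closed subgroup of the compact Lie group $\mathcal{S}\times\mathcal{G}$, hence itself a compact Lie subgroup by the closed subgroup theorem. The projection $\pi_{\mathcal{S}}\colon\mathcal{H}\to\mathcal{S}$ is surjective (this is precisely the $\mathcal{S}$-fixedness of $[A]$), with kernel $\{e\}\times\mathrm{Stab}_{\mathcal{G}}(A)$, so differentiating yields a short exact sequence of Lie algebras
\[ 0\to\mathrm{Lie}(\mathrm{Stab}_{\mathcal{G}}(A)) \to \mathrm{Lie}(\mathcal{H}) \to \mathrm{Lie}(\mathcal{S}) \to 0 \]
in which the kernel is an ideal. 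Since $\mathcal{H}$ is compact, $\mathrm{Lie}(\mathcal{H})$ carries an $\mathrm{ad}$-invariant inner product, so any ideal admits an ideal complement (its orthogonal complement). Choosing such a complement produces a Lie algebra section $\sigma\colon\mathrm{Lie}(\mathcal{S})\to\mathrm{Lie}(\mathcal{H})\subset\mathrm{Lie}(\mathcal{S})\oplus\mathrm{Lie}(\mathcal{G})$, necessarily of the form $\sigma(x)=(x,\rho(x))$ for some Lie algebra homomorphism $\rho\colon\mathrm{Lie}(\mathcal{S})\to\mathrm{Lie}(\mathcal{G})$. Membership $(x,\rho(x))\in\mathrm{Lie}(\mathcal{H})$ unpacks to $\exp(tx).A = A.\exp(t\rho(x))$ for all $t\in\mathbb{R}$, and differentiating at $t=0$ gives the stated equation $x.A = \rho(x).A$.

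For the reverse (``if'') direction, given such a $\rho$, I would combine the two actions into a single left action of $\mathcal{S}\times\mathcal{G}$ on $\mathcal{X}$ via $(s,g)*A := s.A.g^{-1}$. The fundamental vector field of $(x,\rho(x))$ at $A$ is then $x.A - \rho(x).A$, which vanishes by hypothesis; hence the integral curve through $A$ is the constant curve at $A$, giving $\exp(tx).A = A.\exp(t\rho(x))$ for every $t$ and $x$. Since $\mathcal{S}$ is connected, every $s\in\mathcal{S}$ is a product of such exponentials, and a telescoping argument using commutativity of the two actions produces $g\in\mathcal{G}$ with $s.A = A.g$, so $[s.A]=[A]$.

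The principal obstacle is the Lie-algebra splitting step. The critical input is the compactness of $\mathcal{G}$, which forces $\mathcal{H}$ to be compact and hence $\mathrm{Lie}(\mathcal{H})$ to be reductive; without this, ideals need not have ideal complements and the construction of $\rho$ can fail.
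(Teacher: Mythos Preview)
The paper does not prove this theorem; it is quoted from \cite{lang_moduli_2024} and only applied. That said, the paper's many applications (e.g.\ the proofs of Theorems~\ref{thm:toralsym}, \ref{thm:simplesphericalsym}, \ref{thm:conformalsphericalsym}) spell out the intended strategy: form the stabilizer $S\subseteq\mathcal{S}\times\mathcal{G}$, show it is closed (as the preimage of a point under a smooth map) and hence a compact Lie subgroup, then extract a Lie algebra homomorphism $\rho$. Your proposal is exactly this argument, with the splitting of the short exact sequence via an $\mathrm{ad}$-invariant inner product made explicit; it is correct and matches the approach the paper has in mind.
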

When $\mathcal{S}$ is one-dimensional, we use the following result.
\begin{prop}{\cite[Proposition~1.2]{lang_moduli_2024}}
Let $\mathcal{X}$ be a smooth manifold, $\mathcal{G}$ a Lie group, and $\mathcal{S}$ a connected, one-dimensional Lie group (isomorphic to either $S^1$ or $\mathbb{R}$). Suppose that $\mathcal{G}$ and $\mathcal{S}$ act smoothly on $\mathcal{X}$ on the right and left, respectively, and the two actions commute. We have that $[A]\in\mathcal{X}/\mathcal{G}$ is fixed by $\mathcal{S}$ if and only if there is some $\rho\in\mathrm{Lie}(\mathcal{G})$ such that, for all $t\in\mathbb{R}$, \label{prop:R}
\begin{equation}
t.A-t\rho.A=0.\label{eq:R}
\end{equation}
\end{prop}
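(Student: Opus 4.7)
The plan is to prove both directions by translating the condition on $[A]$ into a statement about fundamental vector fields on $\mathcal{X}$, exploiting the fact that since $\mathcal{S}$ is one-dimensional, $\mathrm{Lie}(\mathcal{S})\cong\mathbb{R}$ and a Lie algebra map out of $\mathrm{Lie}(\mathcal{S})$ is determined by a single element $\rho\in\mathrm{Lie}(\mathcal{G})$. Choose a generator $\xi$ of $\mathrm{Lie}(\mathcal{S})$; by linearity, equation~\eqref{eq:R} is equivalent to the single tangent-vector identity $\xi.A=\rho.A$ in $T_A\mathcal{X}$. So it suffices to prove that $[A]$ is $\mathcal{S}$-fixed if and only if $\xi.A$ lies in the image of the infinitesimal $\mathcal{G}$-action $\mathrm{Lie}(\mathcal{G})\to T_A\mathcal{X}$, $\eta\mapsto\eta.A$.

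For the forward implication, suppose $[A]$ is $\mathcal{S}$-fixed, so that the smooth curve $c(t)=\exp(t\xi).A$ has image in the $\mathcal{G}$-orbit $A.\mathcal{G}$. Because orbits of smooth Lie group actions are initial (weakly embedded) submanifolds of $\mathcal{X}$, $c$ is smooth as a curve into $A.\mathcal{G}$, and hence its velocity $\xi.A=c'(0)$ lies in $T_A(A.\mathcal{G})$. That tangent space is by construction the image of the infinitesimal action map, so there is some $\rho\in\mathrm{Lie}(\mathcal{G})$ with $\rho.A=\xi.A$, producing equation~\eqref{eq:R}.

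For the converse, suppose such a $\rho$ exists, and let $V$ and $W$ be the fundamental vector fields on $\mathcal{X}$ of $\xi$ (for the left $\mathcal{S}$-action) and $\rho$ (for the right $\mathcal{G}$-action), with flows $\psi_t(x)=\exp(t\xi).x$ and $\phi_t(x)=x.\exp(t\rho)$. Because the two actions commute, $\psi_t\circ\phi_s=\phi_s\circ\psi_t$, and hence $[V,W]=0$. The hypothesis then says $V_A=W_A$, so $A$ is a zero of the commuting difference $V-W$; since $[V,W]=0$, the flow of $V-W$ is $\psi_t\circ\phi_{-t}$, and we conclude $\exp(t\xi).A=A.\exp(t\rho)\in A.\mathcal{G}$ for every $t\in\mathbb{R}$. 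Because $\mathcal{S}$ is connected and one-dimensional, it is covered by $t\mapsto\exp(t\xi)$, so $[A]$ is fixed by all of $\mathcal{S}$.

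The main technical point is the claim in the forward direction that $\xi.A\in T_A(A.\mathcal{G})$, which relies on the standard but nontrivial fact that orbits of smooth Lie group actions are initial submanifolds (so that a smooth curve with image in the orbit is smooth into the orbit). Compared with Theorem~\ref{thm:mainthm}, no compactness of $\mathcal{G}$ is required here: because the sought-after homomorphism amounts to a single element of $\mathrm{Lie}(\mathcal{G})$, it can be extracted directly from a tangent-space computation rather than via an averaging or representation-theoretic construction.
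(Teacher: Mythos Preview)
The paper does not itself prove this proposition; it is quoted from \cite{lang_moduli_2024}, and the present paper only alludes to its proof when applying it (for instance, in the proof of Theorem~\ref{thm:circularsym}, where the converse direction is re-derived concretely by writing down a curve, differentiating, and showing the derivative vanishes so the curve is constant).

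Your argument is correct and complete. The converse direction via commuting fundamental vector fields---observing that $A$ is a zero of $V-W$, hence a fixed point of its flow $\psi_t\circ\phi_{-t}$, hence $\exp(t\xi).A=A.\exp(t\rho)$---is precisely the abstract form of that ``differentiate and conclude constancy'' step the paper invokes in its applications. Your forward direction, invoking the fact that orbits of smooth Lie group actions are initial submanifolds so that the smooth curve $t\mapsto\exp(t\xi).A$ is smooth into $A.\mathcal{G}$, is the standard route; it also makes transparent why compactness of $\mathcal{G}$ (required in Theorem~\ref{thm:mainthm}) is unnecessary here: because $\mathrm{Lie}(\mathcal{S})$ is one-dimensional, the sought homomorphism is a single element $\rho$, which can be read off from one tangent space without any averaging or gluing.
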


In Theorem~\ref{thm:mainthm}, we require $\mathcal{G}$ and $\mathcal{S}$ to be compact. In this paper, the group $\mathcal{G}$ is not compact; however, we prove some lemmas in Section~\ref{subsubsec:herdoflemmas} that allow us to focus on compact subgroups. As these results only apply to the elements in $\mathcal{X}$ that interest us and do not hold in general, we cannot use Theorem~\ref{thm:mainthm} directly, but we can use the same techniques to prove our results. As we do not require compactness for Proposition~\ref{prop:R}, we use this result directly to study instantons with circular symmetries.

This paper serves two purposes: to study symmetric instantons and showcase the power of the aforementioned results. In particular, as opposed to previous work studying instantons with finite or abelian symmetries, in this paper, we study instantons with various continuous conformal symmetries. We also identify many novel instantons with various continuous symmetries and higher rank structure groups. Additionally, we examine the relationships between instantons with continuous symmetries and various other gauge-theoretic objects. This work first appeared in my thesis, though it has been improved upon here. In particular, Section~\ref{subsec:CircularSymmetry} has been expanded to cover the equivalence of certain circular symmetries and toral symmetry~\cite[\S1.4 \& \S3]{lang_thesis_2024}.

In the rest of this section, we position this work in the instanton literature. In particular, in Section~\ref{subsec:constructions}, we examine various methods for constructing instantons. In Section~\ref{subsec:symmetric}, we investigate the many works studying symmetric instantons. In Section~\ref{subsec:connections}, we explore the connections between symmetric instantons and other gauge-theoretic objects. In Section~\ref{subsec:outline}, the main results and an outline of the paper are provided.

\subsection{Constructing instantons}\label{subsec:constructions}
Let $M$ be a 4-manifold, $E\rightarrow M$ a vector bundle over $M$ with structure group $\mathrm{Sp}(n)$, equipped with a connection $\mathbb{A}$ of curvature $F_\mathbb{A}$. Instantons are solutions to the self-dual equations $F_\mathbb{A}=\star F_\mathbb{A}$ with finite action $\frac{1}{2\pi}\int_M|F_\mathbb{A}|^2\mathrm{vol}_M$. Due to the integrability of Chern numbers, the action of an instanton, which is related to the second Chern number, is always an integer multiple of some positive constant. This multiple $N$ is known as the instanton number or charge of the instanton. Note that instantons and anti-instantons, solutions to the anti-self-dual equations $F_\mathbb{A}=-\star F_\mathbb{A}$ with finite action, differ by a choice of orientation, so there is no appreciable difference between the objects. 

Of particular interest is when $M=\mathbb{R}^4$. Due to the conformal invariance of the Hodge star acting on two-forms in four dimensions, instantons on $M=\mathbb{R}^4$ are equivalent to those on a punctured four-sphere. Uhlenbeck proved that, in fact, instantons on $M=\mathbb{R}^4$ are equivalent to those on $M=S^4$~\cite{uhlenbeck_removable_1982}.

As the self-dual equations are a system of coupled, non-linear partial differential equations, instantons are extremely difficult to obtain. There are several methods for constructing instantons. However, each method comes with its pitfalls. 

The 't Hooft ansatz is a way of constructing $\mathrm{SU}(2)$ instantons by way of harmonic maps. Using this method, the construction of an $\mathrm{SU}(2)$ instanton of charge $N$ involves choosing $N$ distinct poles and $N$ corresponding weights. Varying these choices produces a $(5N+3)$-dimensional family in the configuration space of $\mathrm{SU}(2)$ instantons with charge $N$, as a constant choice of gauge adds an additional three degrees of freedom. The JNR ansatz is a generalization of 't Hooft's work, using the conformal group to extend the choice of poles and weights by one each~\cite{jackiw_conformal_1977}. This method generates a $(5N+7)$-dimensional family in the configuration space of $\mathrm{SU}(2)$ instantons with charge $N$; only the ratios of the weights in the 't Hooft ansatz relative to this extra weight matter. For a deeper discussion on these ansatzes, see Whitehead's work~\cite[Construction B.1.2 \& Construction B.1.3]{whitehead_integrality_2022}. 

Both of the aforementioned ansatzes are straightforward methods for constructing instantons. However, they do not construct all instantons. Indeed, the moduli space of $\mathrm{SU}(2)$ instantons with charge $N$ is $8N$-dimensional. For $N=1$ and $N=2$, every $\mathrm{SU}(2)$ instanton actually can be constructed using the JNR ansatz. Note that for these cases, the extra dimensions of the family of JNR instantons can be shown to be degenerate. However, for $N\geq 3$, these ansatzes do not construct every charge $N$ instanton.

The ADHM transform is a powerful result relating instantons on $\mathbb{R}^4$ with ADHM data, matrices satisfying a non-linear constraint~\cite{atiyah_construction_1978}. We introduce this transform in more detail in Section~\ref{sec:ADHM}. Unlike the previous ansatzes, the ADHM transform constructs every instanton, even for higher rank structure groups. While they are much easier to find than instantons, finding examples of ADHM data is still challenging, due to the non-linear constraints on the data. 

There are few known examples of instantons. As such, we would like to apply Theorem~\ref{thm:mainthm} and Proposition~\ref{prop:R} to find examples of and provide a framework for classifying symmetric instantons. However, we encounter a problem: the space of all instantons and the corresponding gauge group are infinite-dimensional. Theorem~\ref{thm:mainthm} and Proposition~\ref{prop:R} rely on finite dimensional smooth manifolds. Thankfully, the ADHM transform helps us. This transform allows us to replace our infinite-dimensional spaces with finite-dimensional replacements: we take our set of objects $\mathcal{X}$ to be pairs of quaternionic matrices and our gauge group $\mathcal{G}$ to be a matrix Lie group, acting on these matrices.

Previous work finding examples of ADHM data used representation theory to simplify the ADHM data. However, this work was only done for finite or abelian groups. The current work uses representation theory and Lie theory to derive linear algebraic equations that one must first solve when examining continuous symmetries (in fact, we solve these equations in most cases). Solving these linear equations is not only very simple, it also greatly simplifies the non-linear constraints. 

\subsection{Studying symmetric instantons}\label{subsec:symmetric}
The original motivation for finding symmetric instantons was due to the relationship between instantons and skyrmions. Skyrmions are a three-dimensional topological soliton describing nuclei. For more information regarding these objects, see Manton's recent book~\cite{manton_skyrmion_2022}. A skyrmion has a topological charge $B$ representing the baryon number of the nucleus. For a given charge $B$, minimal energy skyrmions often have a great deal of symmetry. In particular, the $B=1,2,3,4$ minimal skyrmions have spherical, axial, tetrahedral, and cubic symmetry, respectively. 

Atiyah--Manton created a method for approximating the minimal energy skyrmions with charge $B$ by computing the holonomy of a $\mathrm{SU}(2)$ instanton on $\mathbb{R}^4$ with charge $n$ along lines parallel to a chosen axis~\cite{atiyah_skyrmions_1989}. This approach is quite effective, resulting in Skyrme fields with less than $1\%$ more energy than the numerical minimal energy skyrmions. This approach only applied to massless pions. However, the pair later used a trick similar to the one Atiyah used to relate hyperbolic monopoles and circle-invariant instantons in order to approximate massive pions. Indeed, Atiyah--Manton approximated Euclidean skyrmions with massive pions by first approximating them by hyperbolic skyrmions with massless pions and then approximating these hyperbolic skyrmions by computing the holonomy of a $\mathrm{SU}(2)$ instanton about circles~\cite{atiyah_skyrmions_2005}. 

The computations required to use Atiyah--Manton's method have recently been simplified using the theory of parallel transport of an induced connection, reducing the problem from solving a differential equation to a finite-difference approximation~\cite{cork_model_2021,harland_approximating_2023}. This method also bypasses issues related to gauge singularities. Harland--Sutcliffe refined this method to approximate skyrmions by using an ultra-discrete approach~\cite{harland_rational_2023}. Whereas Atiyah--Manton's method required solving a differential equation, this new method requires only evaluating the ADHM data. In particular, Harland--Sutcliffe produced Skyrme fields with charge 1 or 2, finding that their fields have even less excess energy than those generated by Atiyah--Manton.  

The existence of these symmetric minimal energy skyrmions and Atiyah--Manton's method of approximating skyrmions via instantons inspired a search for symmetric instantons. In the same work as above, Atiyah--Manton found an axial instanton with charge 2~\cite{atiyah_skyrmions_1989}. Later, Leese--Manton found a charge 3 tetrahedral instanton and a charge 4 cubic instanton~\cite{leese_stable_1994}. Singer--Sutcliffe realized that for finite symmetry groups leaving one axis invariant, the symmetry of an instanton is always generated by representations of the symmetry group~\cite{singer_symmetric_1999}. Using representation theory, Singer--Sutcliffe constructed an icosahedral instanton with charge 7 and computed a Skyrme field via the holonomy of the instanton. Additionally, they found a family of ADHM data approximating the scattering of seven skyrmions. 

Whereas Leese--Manton found their tetrahedral instanton by placing four poles of equal weight on the vertices of a regular tetrahedron and using the JNR approach, Houghton was prompted by Singer--Sutcliffe's work on symmetric instantons using ADHM data and found ADHM data with tetrahedral symmetry~\cite{houghton_instanton_1999}. Then, by varying the ADHM parameters, Houghton examined the 3-skyrmion vibration modes. Motivated by the existence of a charge 17 skyrmion with icosahedral symmetry, Sutcliffe found an instanton with the same symmetry and charge, predicting a whole range of fullerene instantons~\cite{sutcliffe_instantons_2004}. Sutcliffe also outlined a procedure for constructing instantons with Platonic symmetry, where these symmetries leave one axis invariant~\cite{sutcliffe_platonic_2005}. As a demonstration, Sutcliffe found a charge 17 instanton with the structure of a truncated icosahedron. 

Inspired by the charge five, six, and eight minimal energy skyrmions, which have the symmetry of $D_{2d}$, $D_{4d}$, and $D_{6d}$, respectively, Cork--Halcrow found instantons with the same charge and the same symmetry~\cite{cork_adhm_2022}. All of the Skyrme fields generated by the above symmetric instantons have less than $2\%$ excess energy of the corresponding numerical skyrmion~\cite{cork_adhm_2022}. Cork--Halcrow also found instantons with toral symmetry and instantons which looked like spinning tops, interpolating well-separated clusters and highly-symmetric configurations~\cite{cork_adhm_2022}.

Moving beyond symmetric instantons inspired by symmetric skyrmions, Allen--Sutcliffe constructed instantons with the symmetry of various polytopes~\cite{allen_adhm_2013}. Specifically, they found $\mathrm{SU}(2)$ instantons with the structure of a pentatope, hyperoctahedron, and the 24-cell with charges 4, 7, and 23, respectively. Just like the Platonic symmetry case, the JNR approach can be used to place equal weights at the vertices of a polytope to force the symmetry. However, there may be an instanton with the same symmetry but a lower charge, not of the JNR type, as this ansatz does not describe all instantons. Thus, ADHM data is needed. Allen--Sutcliffe used the ADHM method and representation theory to prove that their symmetric instantons have the lowest charges, excluding the basic charge one instanton. 

The preceding work relied on the finiteness of the symmetry groups to reduce the problem to representation theory. In the current work, the ability to reduce the problem to representation theory uses a completely different approach: Lie theory. Moreover, previous work focused on instantons with isometric symmetries, but the conformal equivalence of the self-dual equation begs for us to search for instantons symmetric under conformal symmetries, like Manton--Sutcliffe's conformal circle action~\cite{manton_platonic_2014}. The Lie theory approach used here allows us to consider instantons with conformal symmetries, thereby allowing us to fully study instantons symmetric under Manton--Sutcliffe's conformal circle action (which correspond to hyperbolic monopoles). We can study instantons symmetric under other conformal symmetries, however, we see in Theorem~\ref{thm:knowconfsphersym} and Corollary~\ref{cor:knowfullsym} that every instanton symmetric under other kinds of continuous conformal symmetries is constructed from the basic charge one instanton, leaving those symmetric under Manton--Sutcliffe's conformal circle action as the only ones of note.

Beckett took a different approach, using the Equivariant Index Theorem and the representation theory of $S^1$ to examine instantons symmetric under a circle action, those which correspond to singular monopoles~\cite{beckett_equivariant_2020}. Whereas previous work relied on the finiteness of symmetry groups, Beckett was able to prove that for this specific circle action, the symmetry of a circle-invariant instanton is generated by representations of $S^1$. Using this result, Beckett produced several examples of circle-invariant $\mathrm{SU}(2)$ instantons and examined the moduli spaces of such instantons for low charges. In Proposition~\ref{prop:singularmonoconst}, we answer Beckett's call for a formula for constructing a singular monopole given circle-invariant ADHM data.

Whitehead used the same approach as Beckett, returning to finite symmetries~\cite{whitehead_integrality_2022}. Doing so, Whitehead was able to prove the non-existence of instantons with certain charges and finite symmetry groups and construct novel symmetric instantons. In particular, Whitehead found a dodecahedral and an icosidodecahedral $\mathrm{SU}(2)$ instanton with charge 13 and 23, respectively. Additionally, Whitehead proved that there is no $\mathrm{SU}(2)$ instanton with charge in $(1,119)$ and the symmetry of the 600-cell, answering a specific case of the conjecture of Allen--Sutcliffe: the lowest charge $\mathrm{SU}(2)$ instanton with the symmetry of a polytope, other than the basic charge one instanton, is the one obtained using the JNR ansatz if and only if the polytope has triangles as two-faces~\cite{allen_adhm_2013}.  

As noted above, previous work on symmetric instantons focused mostly on $\mathrm{Sp}(1)\simeq\mathrm{SU}(2)$ instantons. However, there were a few works that focused on finding symmetric instantons with higher rank structure groups. Conversely to the previous work going from symmetric skyrmions to symmetric instantons, Gat searched for spherically symmetric $\mathrm{SU}(3)$ Skyrme fields by directly finding spherically symmetric $\mathrm{SU}(3)$ instantons and using Atiyah--Manton's holonomy method~\cite{gat_su_1991}. On the other hand, later work found spherically symmetric $\mathrm{SU}(N)$ skyrmions using harmonic maps~\cite{ioannidou_sun_1999}. This discovery inspired a search for the corresponding spherically symmetric $\mathrm{SU}(N)$ instantons, which was also done using harmonic maps, as opposed to ADHM data, which is used in this work~\cite{ioannidou_sun_2000}. Not only does the current work focus on a different family of structure groups, it deals with all kinds of continuous symmetries and investigates all instantons with these symmetries. Note that while the work presented here could be extended to more general structure groups, we examine instantons with the structure group $\mathrm{Sp}(n)$ because this allows for a simpler conformal action on the ADHM data.

While Singer--Sutcliffe's work using representation theory utilized two representations of the given group, Beckett and Whitehead's work only used a single representation. While this single representation still allowed for powerful results when dealing with symmetric $\mathrm{SU}(2)$ instantons, the lack of a second representation makes obtaining results with higher rank structure groups more challenging. In this work, just as Singer--Sutcliffe, we obtain and use two representations. Whereas Singer--Sutcliffe and their contemporaries focused on $\mathrm{SU}(2)$ instantons, we show how the second representation is extremely useful when consider instantons with higher rank structure groups. 

When investigating hyperbolic monopoles, Manton and Sutcliffe realized that many of the Platonic instantons previously identified possessed more symmetry than originally realized. Indeed, in addition to their Platonic symmetry, these instantons are symmetric under Manton--Sutcliffe's conformal circle action~\cite{manton_platonic_2014}. Because of these examples, one could combine the methods for finite symmetries and the methods used here to simplify the search for examples of instantons with finite symmetries and higher rank structure groups. Additionally, like Ioannidou, we could apply Atiyah--Manton's holonomy method of approximating skyrmions, using the simplifications provided by Cork, Harland, and Winyard, to the continuously symmetric instantons generated in this work~\cite{ioannidou_sun_2000}. 

Whether dealing with finite or abelian symmetries, previous work studying symmetric instantons relied on methods very specific to their cases. This work differs in that it applies general results---Theorem~\ref{thm:mainthm} and Proposition~\ref{prop:R}---to study continuous conformal,  generally nonabelian, symmetries, showcasing the power of these results. In particular, we use these results to study instantons with higher rank structure groups. By studying continuous symmetries, we are able to easily identify many novel instantons with various continuous conformal symmetries and higher rank structure groups.

\subsection{Connections with other objects}\label{subsec:connections}
In addition to finding examples of instantons, symmetric instantons are important because of their relationship with other objects. Monopoles are topological solitons over a 3-manifold $\tilde{M}$. In addition to hyperbolic monopoles, for which $\tilde{M}=H^3$, we have singular monopoles, for which there are some $p_1,\ldots,p_l\in\mathbb{R}^3$ such that $\tilde{M}=\mathbb{R}^3\setminus\{p_1,\ldots,p_l\}$. 

Using the conformal equivalence of $S^4\setminus S^2\equiv H^3\times S^1$, Atiyah realized that hyperbolic monopoles with integral mass correspond to instantons symmetric under a certain circle action~\cite{atiyah_instantons_1984,atiyah_magnetic_1984}. The integral mass condition is required so that we can extend an instanton on $S^4\setminus S^2$ to one on $S^4$~\cite{atiyah_magnetic_1984}. More detail about this correspondence is given in Section~\ref{subsubsec:HyperbolicMonopoles}. 

The Nahm transform provides a connection between Euclidean monopoles and Nahm data, solutions to the Nahm equations. Braam--Austin proved a connection between $\mathrm{SU}(2)$ instantons symmetric under a circle action, hyperbolic monopoles, and solutions to a one-dimensional integrable lattice system: a discretization of the Nahm equations~\cite{braam_boundary_1990}. Due to this connection, Ward looked at $\mathrm{SU}(2)$ instantons symmetric under the action of a torus, connecting them with axially symmetric hyperbolic monopoles~\cite{ward_symmetric_2016}. Ward found such symmetric instantons correspond to solutions of an integrable, two-dimensional lattice version of Hitchin's equations.

The relationship between instantons and singular monopoles was first described by Kronheimer, who set up a mini-twistor approach to study the moduli space in a similar fashion to Hitchin's work on Euclidean monopoles~\cite{kronheimer_monopoles_1985,hitchin_monopoles_1982}. This work was explored further by Pauly, in the case of singular monopoles on Euclidean space as well as compact spaces, where all non-flat monopoles must be singular~\cite{pauly_gauge_1996,pauly_monopole_1998}. In particular, Pauly investigated the moduli spaces of these singular monopoles with structure group $\mathrm{SU}(2)$. Additionally, when studying monopoles on $S^3$, Pauly found that associated to each gauge equivalence class of monopoles defined on an open subset $\Omega\subseteq S^3$ is a holomorphic function on the complex two-dimensional space of geodesics in $\Omega$~\cite{pauly_spherical_2001}. Royston et al. determined the dimension of the moduli space of singular monopoles with any simple Lie group as a structure group and maximal symmetry breaking~\cite{moore_parameter_2014}. 

The relationship between singular monopoles and instantons utilizes the Hopf fibration $\pi\colon S^3\rightarrow S^2$, which describes $S^3$ as a non-trivial principal bundle over $S^2$ with fibre $S^1$. That is, for every $p\in S^2$, $\pi^{-1}(p)\simeq S^1$. We can extend the fibration linearly along rays emanating from the origin to a map from $\mathbb{R}^4\setminus \{0\}$ to $\mathbb{R}^3\setminus\{0\}$. Doing so, for all $p\in\mathbb{R}^3\setminus\{0\}$ we have that its preimage under the fibration is a circle. Singular monopoles are in a one-to-one correspondence with circle-invariant instantons on $\mathbb{R}^4\setminus\{0\}$~\cite{pauly_gauge_1996}. Of particular interest are those with Dirac type singularities: the singular monopoles whose corresponding instantons on $\mathbb{R}^4\setminus\{0\}$ can be extended smoothly over the origin in some gauge, giving us an instanton on $\mathbb{R}^4$. This property allows us to describe these monopoles using ADHM data. 

Mirroring the relationships between Euclidean monopoles and the Nahm equations on a finite interval as well as hyperbolic monopoles and the discretized Nahm equation, Cherkis--Kapustin described singular monopoles in terms of solutions to the Nahm equations on a semi-infinite interval~\cite{cherkis_singular_1998,cherkis_singular_1999}. Later, Cherkis used the bow formalism to construct instanton configurations on the Taub-NUT space~\cite{cherkis_moduli_2009}. In particular, Cheshire bows correspond to singular monopoles~\cite{blair_cheshire_2011}. While all singular monopoles are created using the Nahm formalism, this method is difficult to use beyond two singularities. In contrast, the Cheshire bow formalism does not encounter this challenge. 

As previously mentioned, Manton and Sutcliffe realized that many of the Platonic instantons previously identified are also symmetric under Manton--Sutcliffe's conformal circle action~\cite{manton_platonic_2014}. Because of this additional symmetry, these instantons correspond to Platonic hyperbolic monopoles. I previously generalized the set of ADHM data considered by Manton--Sutcliffe for higher rank structure groups~\cite[Definition~1]{lang_hyperbolic_2023}. While Manton--Sutcliffe knew that their data did not describe every hyperbolic monopole with integral mass, they were unable to identify the exact constraints satisfied by all such data. In Section~\ref{subsubsec:conformalsupersphericalsym}, we identify the exact constraints. Moreover, we study all spherically symmetric hyperbolic monopoles, generalizing my previous work~\cite{lang_hyperbolic_2023}. While my previous work done on hyperbolic monopoles falls under the umbrella of the work done here, any significant overlaps have been avoided.

The connections between symmetric instantons and hyperbolic as well as singular monopoles are well-known. In this work, we showcase these connections as well as how these connections arise as opposite circular symmetries, in some sense. Additionally, we see how these connections overlap when dealing with toral symmetry. Moreover, in Sections~\ref{subsubsec:HyperbolicHiggsBundles} and \ref{subsubsec:HyperbolicNahm} we demonstrate how instantons with simple or isoclinic spherical symmetry correspond to hyperbolic analogues to other gauge-theoretic objects: Higgs bundles and Nahm data, respectively.

\subsection{Main results and outline}\label{subsec:outline}
The main results of this work are the theorems where we apply Theorem~\ref{thm:mainthm} or Proposition~\ref{prop:R} to find linear equations completely describing various symmetric topological solitons as well as the theorems where we solve these equations in general. In doing so, we provide a framework for classifying all these symmetric topological solitons. Using these results, we identify multiple novel instantons with various symmetries and higher rank structure groups in Propositions~\ref{prop:IsoEx}, \ref{prop:notinMSset}, and \ref{prop:RotInstEx}. Finally, in Theorem~\ref{thm:knowconfsphersym} and Corollary~\ref{cor:knowfullsym}, we prove that we already know every instanton with two particular kinds of conformal symmetry.  

In Section~\ref{sec:ADHMdata}, we introduce what it means for ADHM data to be symmetric under conformal transformations, proving several lemmas important to later sections. We thoroughly examine this topic in order to provide a comprehensive source on the relationship between symmetric instantons and their corresponding ADHM data. In Section~\ref{subsec:CircularSymmetry}, we investigate instantons with various kinds of circular symmetry. In particular, we look at the connection between such instantons and hyperbolic and singular monopoles. We also examine the equivalence of certain circular symmetries and toral symmetry. In Section~\ref{subsec:ToralSymmetry}, we investigate instantons with toral symmetry. In particular, we look at the connection between such instantons and hyperbolic and singular monopoles with axial symmetry. In Section~\ref{subsec:SphericalSymmetry}, we investigate instantons with the three kinds of spherical symmetry: simple, isoclinic, and conformal. We also discuss the relationship between these symmetric instantons and hyperbolic analogues to Higgs bundles and Nahm data. Additionally, we prove the aforementioned Theorem~\ref{thm:knowconfsphersym} and Corollary~\ref{cor:knowfullsym}. In Section~\ref{subsec:SuperSphericalSymmetry}, we study instantons with isoclinic and conformal superspherical symmetry. In particular, we also look at the connection between such instantons and hyperbolic analogues to Higgs bundles with axial symmetry as well as spherically symmetric hyperbolic and singular monopoles. In Section~\ref{subsec:RotationalSymmetry}, we study instantons with rotational symmetry.

In Appendix~\ref{appendix:contsubgroups}, we identify the different symmetries that we can study. However, we also introduce each symmetry in the section that examines instantons with said symmetry. As such, Appendix~\ref{appendix:contsubgroups} is not required to understand the various sections. Readers interested in the identification of continuous symmetries, however, should read through this appendix.

\section{ADHM data and conformal transformations}\label{sec:ADHMdata}

In this section, we introduce ADHM data and what it means for ADHM data to be symmetric under different conformal transformations. We thoroughly examine this topic in order to provide a comprehensive source on the relationship between symmetric instantons and their corresponding ADHM data.

\subsection{ADHM data and the ADHM transform}\label{sec:ADHM}

In this section, we introduce ADHM data and the ADHM transform.
\begin{definition}
Let $n,k\in\mathbb{N}_+$. Let $(a,b)\in\mathrm{Mat}(n+k,k,\mathbb{H})^{\oplus 2}$ be a pair of quaternionic matrices and for all $x\in\mathbb{H}$, let $\Delta(x):=a-bx$. A pair $(a,b)$ is \textbf{ADHM data} if $b$ has rank $k$ and $\Delta(x)^\dagger\Delta(x)$ is a real, positive definite matrix for all $x\in\mathbb{H}$. Let $\mathcal{N}_{n,k}$ be the set of ADHM data for given $n$ and $k$.\label{def:ADHMdata} 
\end{definition}

\begin{note}
The condition $\Delta(x)^\dagger \Delta(x)$ being positive definite for all $x\in\mathbb{H}$ does not imply that the rank of $b$ is $k$. Indeed, consider 
\begin{equation*}
\Delta(x):=a-bx=\begin{bmatrix}
1 & 0 \\
0 & 1 \\
0 & -x
\end{bmatrix}.
\end{equation*}
We see that $\Delta(x)^\dagger \Delta(x)=\begin{bmatrix}
1 & 0 \\
0 & 1+|x|^2
\end{bmatrix}$ is real and positive definite for all $x\in\mathbb{H}$. However, we see that the rank of $b$ is $1$, not $k=2$. 

Given an $\mathrm{Sp}(n)$ instanton $\mathbb{A}$, the instanton number $I$ is given by $I=\frac{1}{4\pi^2}\int_{\mathbb{R}^4}\mathrm{Tr}(F_\mathbb{A}\wedge F_\mathbb{A})\mathrm{vol}_{\mathbb{R}^4}$, the second Chern number. The ADHM transform provides a correspondence between ADHM data and instantons. Thus, some ADHM data $(a,b)$ corresponds to $\mathbb{A}$. Imposing $b$ has rank $k$ ensures that $I=k$. Suppose $(a,b)$ is $(n+k)\times k$ ADHM data with $k':=\mathrm{rank}(b)<k$. Then the corresponding instanton can also be made from $(n+k')\times k'$ ADHM data $(a',b')$ where $\mathrm{rank}(b')=k'$. Indeed, the above example generates an instanton with $I=1=k-1$, but this instanton can also be generated by $\Delta'(x)=\begin{bmatrix}
1 \\ -x
\end{bmatrix}$. 
\end{note}

We now outline some properties of ADHM data~\cite[\S 2, \S 5.1]{Corrigan_1978}.
\begin{lemma}
Let $(a,b)\in\mathcal{N}_{n,k}$. Then $a^\dagger a$ and $b^\dagger b$ are real, symmetric, and positive definite matrices. Also, $a^\dagger b$ and $b^\dagger a$ are symmetric matrices. \label{lemma:ADHMproperties}
\end{lemma}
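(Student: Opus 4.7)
The plan is to expand
\begin{equation*}
\Delta(x)^\dagger\Delta(x) = a^\dagger a - (a^\dagger b)x - \bar{x}(b^\dagger a) + \bar{x}(b^\dagger b)x
\end{equation*}
and extract constraints on each of the three distinct coefficient matrices $A:=a^\dagger a$, $B:=a^\dagger b$, $C:=b^\dagger b$ by exploiting the hypothesis that the left-hand side is real and positive definite for every $x\in\mathbb{H}$.

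First I would evaluate at $x=0$ to conclude that $A=a^\dagger a$ is real and positive definite; since $A^\dagger=A$ always holds, the combination of Hermitian and real forces $A$ to be symmetric. For $C=b^\dagger b$ I would isolate the pure quadratic behaviour in $x$, for instance by writing $x=x_0+x_1i+x_2j+x_3k$ and reading off the $x_0^2$ coefficient (or, equivalently, by rescaling and letting $|x|\to\infty$). Reality then yields $C$ real, Hermiticity yields symmetry, and positive definiteness follows from the rank-$k$ assumption on $b$ via $v^\dagger b^\dagger b v=\|bv\|^2$.

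The main task is establishing that $B=a^\dagger b$ is symmetric. I would use the observation that a Hermitian quaternionic matrix $M$ is real if and only if $M^T=M$; applied to $P(x):=\Delta(x)^\dagger\Delta(x)$, the hypothesis becomes $P(x)^T=P(x)$ for all $x\in\mathbb{H}$. Viewed as a polynomial identity in the four real components of $x$, this forces each coefficient matrix in the expansion to be symmetric. Matching the coefficient of $x_0$ gives $(B+B^\dagger)^T=B+B^\dagger$; decomposing $B=B_0+B_1 i+B_2 j+B_3 k$ into its real-matrix components, this reduces to $B_1$, $B_2$, $B_3$ all being symmetric. Matching the coefficient of $x_1$ (the case $x=x_1 i$) then supplies the remaining condition $B_0=B_0^T$, after a short computation expanding $iB^\dagger$ and $Bi$ coordinatewise. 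Thus $B^T=B$, so $a^\dagger b$ is symmetric, and $b^\dagger a=(a^\dagger b)^\dagger$ inherits symmetry via $(\overline{B})^T=\overline{B^T}=\overline{B}$.

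The main obstacle is bookkeeping: the noncommutativity of $\mathbb{H}$ means one must track carefully how $i,j,k$ shuffle across real matrix blocks in expressions such as $iB^\dagger - Bi$, using identities like $ij=k$, $ji=-k$, etc. Once the decomposition $B=B_0+B_1 i+B_2 j+B_3 k$ is introduced and each conjugation and transposition is expanded coordinatewise, the remaining matching of real components is mechanical.
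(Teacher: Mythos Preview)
Your proposal is correct and follows essentially the same strategy as the paper: expand $\Delta(x)^\dagger\Delta(x)$ as a polynomial in the real components of $x$ and read off constraints coefficient by coefficient. The only cosmetic differences are in execution. For $b^\dagger b$, the paper isolates the quadratic part via $\Delta(x)^\dagger\Delta(x)+\Delta(-x)^\dagger\Delta(-x)$ rather than your $x_0^2$ coefficient, which amounts to the same thing. For the symmetry of $a^\dagger b$, the paper evaluates the linear part at $x=1,i,j,k$ and then argues entrywise that $(a^\dagger b)_{lm}-(a^\dagger b)_{ml}$ anticommutes with each of $i,j,k$ and hence vanishes, whereas you decompose $B=B_0+B_1 i+B_2 j+B_3 k$ and match real components directly; both unwindings are equivalent and your version is arguably a bit more transparent.
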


\begin{proof}
Suppose $(a,b)\in\mathcal{N}_{n,k}$. Note that $\Delta(0)^\dagger \Delta(0)=a^\dagger a$ is real, symmetric, and positive definite. Let $x\in\mathbb{H}\setminus\{0\}$. Expanding, we see
\begin{equation}
\Delta(x)^\dagger\Delta(x)=a^\dagger b-x^\dagger b^\dagger a-a^\dagger bx+x^\dagger b^\dagger bx.\label{eq:expandDelta}
\end{equation}
Thus, we see that
\begin{equation*}
\Delta(x)^\dagger\Delta(x)+\Delta(-x)^\dagger\Delta(-x)=2(a^\dagger a+x^\dagger b^\dagger bx).
\end{equation*}
As $\Delta(x)^\dagger \Delta(x)$, $\Delta(-x)^\dagger\Delta(-x)$, and $a^\dagger a$ are real, then $x^\dagger b^\dagger bx$ is real. Thus, $|x|^4b^\dagger b$ is real, so $b^\dagger b$ is real, hence symmetric.

Furthermore, we have that the ranks of $b^\dagger b$ and $b$ are equal, so $b^\dagger b$ is a $k\times k$ matrix with rank $k$. That is, positive definite. Similarly, $a^\dagger a$ and $a$ have the same rank, so $a$ must have rank $k$.

Finally, from \eqref{eq:expandDelta}, we see that $x^\dagger b^\dagger a+a^\dagger bx$ is symmetric for all $x\in\mathbb{H}$. Taking $x=1,i,j,k$ we see that $b^\dagger a+a^\dagger b$, $-ib^\dagger a+a^\dagger b i$, $-jb^\dagger a+a^\dagger bj$, and $-kb^\dagger a+a^\dagger bk$ are all symmetric. Multiplying $b^\dagger a+a^\dagger b$ on the left by $i,j,k$ we see that $ib^\dagger a+ia^\dagger b$, $jb^\dagger a+ja^\dagger b$, and $kb^\dagger a+ka^\dagger b$ are all symmetric. Adding symmetric matrices, we see that $ia^\dagger b+a^\dagger bi$, $ja^\dagger b+a^\dagger bj$, and $ka^\dagger b+a^\dagger bk$ are symmetric. 

As $ia^\dagger b+a^\dagger bi$ is symmetric, we know for all $l,m\in\{1,\ldots,k\}$ we have
\begin{equation*}
i(a^\dagger b)_{lm}+(a^\dagger b)_{lm}i=i(a^\dagger b)_{ml}+(a^\dagger b)_{ml}i.
\end{equation*}
Rearranging, we see $i$ and $(a^\dagger b)_{ml}-(a^\dagger b)_{lm}$ anti-commute. Thus, $(a^\dagger b)_{ml}-(a^\dagger b)_{lm}$ must be in the span of $j$ and $k$. We can do the same thing with $ja^\dagger b+a^\dagger bj$ and $ka^\dagger b+a^\dagger bk$ to find that $(a^\dagger b)_{ml}-(a^\dagger b)_{lm}=0$. That is, $a^\dagger b$ is symmetric, so $b^\dagger a$ is as well.
\end{proof}

The ADHM transform from ADHM data to $\mathrm{Sp}(n)$ instantons is given as follows.
\begin{theorem}[ADHM data to instantons]
Let $(a,b)\in\mathcal{N}_{n,k}$ and $x\in\mathbb{H}$. The kernel of $\Delta(x)^\dagger$ is $n$-dimensional, so there is a $V(x)\in\mathrm{Mat}(n+k,n,\mathbb{H})$ such that $V(x)^\dagger \Delta(x)=0$ and $V(x)^\dagger V(x)=I_n$. Moreover, $V$ can be chosen to be smooth. The connection defined by $\mathbb{A}_\mu(x):=V(x)^\dagger \partial_\mu V(x)$ is an instanton. 
\end{theorem}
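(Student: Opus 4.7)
The plan is to verify in turn: the quaternionic dimension of $\ker\Delta(x)^\dagger$, the pointwise existence of the orthonormal frame $V(x)$, its smooth dependence on $x$, and then the self-duality and finite action of the connection $\mathbb{A}$. Since $\Delta(x)^\dagger\Delta(x)$ is positive definite (hence invertible) for every $x\in\mathbb{H}$, the $(n+k)\times k$ matrix $\Delta(x)$ has full column rank $k$; taking adjoints yields $\mathrm{rank}\,\Delta(x)^\dagger = k$, and its kernel has quaternionic dimension $(n+k)-k = n$. An orthonormal basis of $\ker\Delta(x)^\dagger$ then exists pointwise by quaternionic Gram--Schmidt.

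For smoothness, I would work with the orthogonal projection
\begin{equation*}
P(x) := I_{n+k} - \Delta(x)\bigl(\Delta(x)^\dagger\Delta(x)\bigr)^{-1}\Delta(x)^\dagger,
\end{equation*}
which depends smoothly on $x$ because $\Delta$ is polynomial and $\Delta^\dagger\Delta$ is invertible everywhere. Fix a base point $x_0$ and an orthonormal $V(x_0)$ in $\ker\Delta(x_0)^\dagger$. Then $V(x_0)^\dagger P(x) V(x_0)$ is a smooth family of quaternionic $n\times n$ matrices equal to $I_n$ at $x_0$, hence invertible on a neighbourhood of $x_0$; setting $V(x) := P(x) V(x_0) C(x)$ and solving pointwise for $C(x)$ by quaternionic Gram--Schmidt produces a smooth orthonormal frame near $x_0$. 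Since smoothness of $\mathbb{A}$ is a local condition and different local choices of $V$ are related by smooth $\mathrm{Sp}(n)$-valued gauge transformations, this local construction is sufficient.

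The main computation is showing that $\mathbb{A}_\mu = V^\dagger \partial_\mu V$ is an $\mathfrak{sp}(n)$-valued, self-dual connection with finite action. Anti-Hermiticity is immediate from differentiating $V^\dagger V = I_n$. For self-duality, using $VV^\dagger = I_{n+k} - \Delta(\Delta^\dagger\Delta)^{-1}\Delta^\dagger$ together with $V^\dagger \Delta = 0$ and $\partial_\mu \Delta = -b e_\mu$ (where $e_0, e_1, e_2, e_3$ are the quaternion units), a standard manipulation reduces the curvature to
\begin{equation*}
F_{\mu\nu} = V^\dagger b\, e_\mu \bigl(\Delta^\dagger\Delta\bigr)^{-1} \bar e_\nu b^\dagger V - (\mu\leftrightarrow\nu).
\end{equation*}
Here Lemma~\ref{lemma:ADHMproperties} is the crucial input: $\Delta^\dagger\Delta$ is real, so $(\Delta^\dagger\Delta)^{-1}$ commutes past the quaternion units, collapsing the expression to $V^\dagger b (\Delta^\dagger\Delta)^{-1}(e_\mu \bar e_\nu - e_\nu \bar e_\mu) b^\dagger V$. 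A direct check on the four quaternion units shows $(e_\mu \bar e_\nu - e_\nu \bar e_\mu)\,dx^\mu \wedge dx^\nu$ is manifestly self-dual on $\mathbb{R}^4$, so $F_\mathbb{A} = \star F_\mathbb{A}$. Finite action follows from the $|x|^{-4}$ decay of $(\Delta^\dagger\Delta)^{-1}$ at infinity, which makes $|F_\mathbb{A}|^2$ integrable (and in fact yields $\tfrac{1}{4\pi^2}\int \mathrm{Tr}(F_\mathbb{A}\wedge F_\mathbb{A}) = k$).

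The main obstacle is the curvature calculation itself. The individual steps are quaternionic bookkeeping, but two subtleties demand care: keeping track of left versus right multiplication by the units $e_\mu$, and invoking Lemma~\ref{lemma:ADHMproperties} at exactly the right moment so that $(\Delta^\dagger\Delta)^{-1}$ may be moved past the quaternion units, leaving only the antisymmetric combination $e_\mu \bar e_\nu - e_\nu \bar e_\mu$. Without the reality of $\Delta^\dagger\Delta$, the self-dual structure would not emerge, and this is precisely why the ADHM conditions on $(a,b)$ were imposed in the first place.
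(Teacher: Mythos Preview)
Your proposal is correct and follows the classical ADHM argument; the paper itself omits the proof entirely, deferring to the references of Corrigan et al.\ and Christ et al., so there is nothing to compare against beyond noting that your outline is precisely the standard computation those references contain. The one point worth tightening is the curvature step: you should record explicitly that $(\partial_\mu V)^\dagger V + V^\dagger \partial_\mu V = 0$ and $V^\dagger \partial_\mu \Delta + (\partial_\mu V)^\dagger \Delta = 0$ are used together with $VV^\dagger = P$ to pass from $F_{\mu\nu} = \partial_\mu \mathbb{A}_\nu - \partial_\nu \mathbb{A}_\mu + [\mathbb{A}_\mu,\mathbb{A}_\nu]$ to the displayed form, since otherwise the cross terms in the commutator are not visibly accounted for.
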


I omit the proof of this theorem, the converse transform, and the completeness of the transform, as they are covered in great detail by others~\cite{Corrigan_1978,corrigan_construction_1984,christ_general_1978}. 

\begin{note}
There is a freedom in the choice of $V(x)$ above. Indeed, given a smooth function $g\colon\mathbb{H}\rightarrow\mathrm{Sp}(n)$, consider $V'(x):=V(x)g(x)$. Then $V'(x)$ spans the kernel of $\Delta(x)^\dagger$. The corresponding instanton is gauge equivalent to the instanton generated using $V$. Thus, the choice of $V(x)$ does not matter to the final instanton.
\end{note}

\subsection{Conformal transformations}
In this section, we examine the conformal transformations on $\mathbb{R}^4$, equivalently $S^4$. Moreover, we examine their effect on ADHM data and their corresponding instantons. For more information on conformal transformations on $\mathbb{R}^4$, see Wilker's work~\cite{wilker_quaternion_1993}.
\begin{definition}
As in Wilker's work on M\"{o}bius groups, we adopt the convention that conjugation by $0$ is the identity~\cite[\S3]{wilker_quaternion_1993}. The \textbf{conformal group} $\mathrm{Conf}(\mathbb{R}^4)$ is isomorphic to $\mathrm{S	L}(2,\mathbb{H})$, the group of two by two quaternionic matrices given by
\begin{equation}
\mathrm{SL}(2,\mathbb{H})=\left\{\begin{bmatrix}
A & B \\ C & D
\end{bmatrix}\Bigl\vert |AC^{-1} DC-BC|=1\right\}.
\end{equation}
Note that $C$ can be zero, in which case the condition is $|AD|=1$. This group acts on $S^4\simeq \mathbb{R}^4\cup\{\infty\}\simeq\mathbb{H}\cup\{\infty\}$ as
\begin{equation}
\begin{bmatrix}
A & B \\
C & D
\end{bmatrix}.x:=\left\{\begin{array}{cl}
(Ax+B)(Cx+D)^{-1} & \quad\textrm{if }Cx+D\neq 0\\
AC^{-1} & \quad\textrm{if }x=\infty\textrm{ and }C\neq 0\\
\infty & \quad\textrm{otherwise}
\end{array}\right..\label{eq:conformalS4}
\end{equation}\label{def:conformalactionR4}
\end{definition}
 
We need to understand how this group acts on ADHM data, given by Definition~\ref{def:ADHMdata}.
\begin{definition}
Let $(a,b)\in\mathcal{N}_{n,k}$ and $\begin{bmatrix}
A & B \\ C & D
\end{bmatrix}\in\mathrm{SL}(2,\mathbb{H})$. There is a smooth right Lie group action 
\begin{equation}
\begin{bmatrix}
A & B \\ C & D
\end{bmatrix}.(a,b):=(aD-bB,bA-aC).
\end{equation}
This action is called the \textbf{conformal action} on $\mathcal{N}_{n,k}$.
\end{definition}

We have a second action acting on the ADHM data.
\begin{definition}
The gauge group $\mathrm{Sp}(n+k)\times \mathrm{GL}(k,\mathbb{R})$ acts on $\mathcal{N}_{n,k}$ as
\begin{equation}
(Q,K).(a,b):=(QaK^{-1},QbK^{-1}).
\end{equation}
This action is called the \textbf{gauge action} and is a smooth left Lie group action.
\end{definition}
 
These actions are named to reflect their effect on the corresponding instantons.
\begin{lemma} 
Let $(a,b)\in\mathcal{N}_{n,k}$. 
\begin{itemize}
\item[(1)] For $\begin{bmatrix}
A & B \\ C & D
\end{bmatrix}\in\mathrm{SL}(2,\mathbb{H})$, let $(a',b')\in\mathcal{N}_{n,k}$ be given by $(a',b'):=\begin{bmatrix}
A & B \\ C & D
\end{bmatrix}.(a,b)$. The corresponding instantons $\mathbb{A}$ and $\mathbb{A}'$ are related via the same conformal transformation: 
\begin{equation}
\mathbb{A}'=\begin{bmatrix}
A & B \\ C & D
\end{bmatrix}^*\mathbb{A}.
\end{equation}

\item[(2)] For $(Q,K)\in\mathrm{Sp}(n+k)\times\mathrm{GL}(k,\mathbb{R})$, the instantons corresponding to $(a,b)$ and $(Q,K).(a,b)$ are identical. 

\item[(3)] The gauge and conformal actions on $\mathcal{N}_{n,k}$ commute.
\end{itemize}
\end{lemma}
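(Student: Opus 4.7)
The plan is to prove each of the three parts separately, with part (1) requiring the most care because of quaternionic non-commutativity, and parts (2) and (3) reducing to direct computations once one observes which quantities commute with which.

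For part (1), I would expand $\Delta'(x) := a' - b'x$ using the definition of the conformal action, obtaining $\Delta'(x) = a(D + Cx) - b(B + Ax)$. The key step is to factor $(Cx + D)$ out on the right, giving $\Delta'(x) = \Delta(g.x)\,(Cx + D)$ where $g.x = (Ax + B)(Cx + D)^{-1}$ agrees with \eqref{eq:conformalS4}. Because $\Delta(g.x)^\dagger \Delta(g.x)$ is a real matrix by Lemma~\ref{lemma:ADHMproperties}, it commutes with the quaternionic scalar $(Cx + D)$, so $\Delta'(x)^\dagger \Delta'(x) = |Cx + D|^2 \Delta(g.x)^\dagger \Delta(g.x)$, confirming that $(a', b') \in \mathcal{N}_{n,k}$ (away from the isolated point where $Cx + D = 0$, which is handled by the conformal extension to $S^4$). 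Moreover, the kernels satisfy $\ker \Delta'(x)^\dagger = \ker \Delta(g.x)^\dagger$ since $(Cx + D)^\dagger$ is a nonzero scalar, so I can take the orthonormal frame $V'(x) := V(g.x)$, and the resulting connection $\mathbb{A}'_\mu(x) = V(g.x)^\dagger \partial_\mu V(g.x)$ is precisely the pullback $g^* \mathbb{A}$ by the chain rule.

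For part (2), I would note that $\Delta'(x) = Q a K^{-1} - Q b K^{-1} x = Q \Delta(x) K^{-1}$, using that the real entries of $K^{-1}$ commute with the quaternionic scalar $x$. Since $Q \in \mathrm{Sp}(n+k)$ is unitary and constant in $x$, the choice $V'(x) := Q V(x)$ is an orthonormal frame for $\ker \Delta'(x)^\dagger$, and the corresponding connection is $\mathbb{A}'_\mu(x) = V(x)^\dagger Q^\dagger Q\, \partial_\mu V(x) = \mathbb{A}_\mu(x)$, so the two instantons agree in this gauge and are thus gauge equivalent in general. For part (3), I would apply both actions in each order to a generic $(a, b)$ and compare; since the real entries of $K^{-1}$ commute with the quaternionic scalars $A, B, C, D$, one computes
\[
g.((Q, K).(a, b)) = (Q a K^{-1} D - Q b K^{-1} B,\; Q b K^{-1} A - Q a K^{-1} C) = (Q, K).(g.(a, b)).
\]

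The main obstacle is part (1): one must be careful with the order of quaternionic multiplication both when factoring $\Delta'(x)$ on the right and when simplifying $\Delta'(x)^\dagger \Delta'(x)$. Everything works out cleanly only because Lemma~\ref{lemma:ADHMproperties} guarantees that $\Delta(y)^\dagger \Delta(y)$ is real, allowing the scalar $(Cx + D)$ to slide through it freely. Parts (2) and (3) are essentially bookkeeping once one recognizes that $K$ being real means its entries commute with every quaternionic scalar that appears.
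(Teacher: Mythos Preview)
Your proposal is correct and follows essentially the same approach as the paper: for part (1) you use the key factorization $\Delta'(x) = \Delta(g.x)(Cx+D)$ and set $V'(x) = V(g.x)$, for part (2) you use $\Delta'(x) = Q\Delta(x)K^{-1}$ and set $V'(x) = QV(x)$, and part (3) is a direct check---all exactly as the paper does. You supply a bit more detail than the paper (verifying $(a',b')\in\mathcal{N}_{n,k}$ explicitly via the reality of $\Delta(y)^\dagger\Delta(y)$, and writing out the commutativity in part (3) rather than leaving it implicit), but the core argument is the same.
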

 
\begin{proof}
\begin{itemize}
\item[(1)] Let $\Delta(x):=a-bx$ and $\Delta'(x):=a'-b'x$. Let $V(x)\in\mathrm{Mat}(n+k,n,\mathbb{H})$ such that $V(x)^\dagger \Delta(x)=0$ and $V(x)^\dagger V(x)=I_n$. Then define $V'(x):=V\left(\begin{bmatrix}
A & B \\ C & D
\end{bmatrix}.x\right)$. As $\Delta'(x)=\Delta\left(\begin{bmatrix}
A & B \\ C & D
\end{bmatrix}.x\right)(Cx+D)$, we see that $V'(x)^\dagger \Delta'(x)=0$ and $V'(x)^\dagger V'(x)=I_n$. 
 
Therefore, we have
\begin{equation*}
\mathbb{A}_\mu'(x)=V\left(\begin{bmatrix}
A & B \\ C & D
\end{bmatrix}.x\right)^\dagger \partial_\mu V\left(\begin{bmatrix}
A & B \\ C & D
\end{bmatrix}.x\right).
\end{equation*}
Hence, just as in previous work, we have our result~\cite[Proposition~2]{lang_hyperbolic_2023}.

\item[(2)] Let $(a,b)\in\mathcal{N}_{n,k}$ and for some $(Q,K)\in\mathrm{Sp}(n+k)\times\mathrm{GL}(k,\mathbb{R})$, let $(a',b'):=(Q,K).(a,b)$. Let $V(x)\in\mathrm{Mat}(n+k,n,\mathbb{H})$ such that $V(x)^\dagger \Delta(x)=0$ and $V(x)^\dagger V(x)=I_n$. Then define $V'(x):=QV(x)$. We see that
\begin{equation*}
V'(x)^\dagger \Delta'(x)=V(x)^\dagger Q^\dagger (a'-b'x)=V(x)^\dagger \Delta(x)K^{-1}=0.
\end{equation*}
Furthermore, we see that $V'(x)^\dagger V'(x)=I_n$. Therefore, we can use $V$ and $V'$ to construct the corresponding instantons.

Let $\mathbb{A}$ and $\mathbb{A}'$ be the instantons corresponding to $(a,b)$ and $(a',b')$, respectively. Then
\begin{equation*}
\mathbb{A}'_\mu(x)=V(x)^\dagger Q^\dagger \partial_\mu QV(x)=\mathbb{A}_\mu(x).
\end{equation*}
Hence, $\mathbb{A}'=\mathbb{A}$. 

\item[(3)] This fact follows from the definition of the actions.\qedhere
\end{itemize}
\end{proof}
 
We are only interested in instantons up to gauge. That is, we are interested in equivalence classes of instantons in the moduli space of instantons, given by $\mathcal{N}_{n,k}/(\mathrm{Sp}(n+k)\times \mathrm{GL}(k,\mathbb{R}))$. As the actions commute, the conformal action descends to an action on the moduli space. When searching for symmetric instantons, we are looking at fixed points of this action.

Every instanton has a standard form, induced by the gauge action.
\begin{lemma}
To each $(a,b)\in\mathcal{N}_{n,k}$, there is a unique pair $(\hat{M},U)\in\mathcal{N}_{n,k}$ given by $U:=\begin{bmatrix}
0 \\ I_k
\end{bmatrix}$ and $\hat{M}:=\begin{bmatrix}
L \\ M
\end{bmatrix}$ whose lower block $M$ is symmetric, such that $[(\hat{M},U)]=[(a,b)]$. \label{lemma:stdform}
\end{lemma}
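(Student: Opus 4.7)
The plan is to use the two factors of the gauge group $\mathrm{Sp}(n+k)\times\mathrm{GL}(k,\mathbb{R})$ in sequence to reduce $b$ to the canonical matrix $U$, and then to invoke Lemma~\ref{lemma:ADHMproperties} to see that the symmetry of the lower block of the resulting $\hat{M}$ is automatic.

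First I would set up existence. By Lemma~\ref{lemma:ADHMproperties}, the matrix $b^\dagger b$ is real, symmetric, and positive definite, hence admits a real positive-definite square root $S\in\mathrm{GL}(k,\mathbb{R})$ with $S^2=b^\dagger b$. Taking $K=S$ gives $(bS^{-1})^\dagger(bS^{-1})=S^{-1}(b^\dagger b)S^{-1}=I_k$, so the columns of $bS^{-1}$ form an orthonormal $k$-frame in $\mathbb{H}^{n+k}$ with its standard quaternionic inner product. Next, since $\mathrm{Sp}(n+k)$ acts transitively on such orthonormal $k$-frames (complete the frame to an orthonormal basis and read off $Q$), there exists $Q\in\mathrm{Sp}(n+k)$ with $Q(bS^{-1})=U$. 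Setting $\hat{M}:=QaS^{-1}$ yields $(\hat{M},U)=(Q,S).(a,b)$, so $[(\hat{M},U)]=[(a,b)]$. Writing $\hat{M}=\begin{bmatrix}L\\M\end{bmatrix}$, Lemma~\ref{lemma:ADHMproperties} applied to $(\hat{M},U)$ gives that $\hat{M}^\dagger U=M$ is symmetric, which is exactly the required property of the lower block. Membership in $\mathcal{N}_{n,k}$ is preserved because the gauge action maps $\mathcal{N}_{n,k}$ to itself.

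For uniqueness I would compute the stabilizer of $U$ under the gauge action. If $(Q,K).(\hat{M}',U)=(\hat{M},U)$, then in particular $QUK^{-1}=U$. Partitioning $Q=\begin{bmatrix}Q_{11}&Q_{12}\\Q_{21}&Q_{22}\end{bmatrix}$ conformally with the $(n,k)$-block decomposition of $U$, the identity $QU=UK$ forces $Q_{12}=0$ and $Q_{22}=K$; the condition $Q^\dagger Q=I_{n+k}$ then forces $Q_{21}=0$, $Q_{11}\in\mathrm{Sp}(n)$, and $K=Q_{22}\in\mathrm{Sp}(k)\cap\mathrm{GL}(k,\mathbb{R})=\mathrm{O}(k)$. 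The lower blocks of $\hat M$ and $\hat M'$ are then related by $M=KM'K^{-1}$, and this conjugation preserves symmetry. I would therefore read the uniqueness in the statement as asserting that the pair $(\hat{M},U)$ is uniquely determined in its gauge orbit by the shape of $U$ and the symmetry of $M$ up to this residual $\mathrm{Sp}(n)\times\mathrm{O}(k)$ stabilizer, which is the usual meaning of a ``standard form'' in this setting.

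The main obstacle, as the previous paragraph indicates, is precisely interpreting and handling the uniqueness claim: once $b$ is pinned to $U$, there is a genuine residual gauge freedom that the symmetry constraint on $M$ does not eliminate. I would flag this explicitly, verify that the existence argument nevertheless produces a well-defined standard form, and—if strict uniqueness of the pair is needed later—either specialize the argument (e.g.\ by fixing a Cholesky-type square root of $b^\dagger b$ and a specific completion of the frame $bS^{-1}$ to an orthonormal basis, which together pin down $(Q,S)$) or adjoin a further canonical condition on the upper block $L$. Everything else in the proof is routine bookkeeping of the gauge action together with a direct appeal to Lemma~\ref{lemma:ADHMproperties}.
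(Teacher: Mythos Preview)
Your existence argument is essentially the same as the paper's: both use the $\mathrm{GL}(k,\mathbb{R})$ factor to normalize $b^\dagger b$ to $I_k$ (you via the positive square root, the paper via orthogonal diagonalization of $b^\dagger b$), then complete $b$ to an orthonormal frame to obtain $Q\in\mathrm{Sp}(n+k)$ sending $b$ to $U$, and finally invoke Lemma~\ref{lemma:ADHMproperties} to read off symmetry of $M=U^T\hat{M}$. The only cosmetic difference is the choice of $K$; both produce the same normalized $b$.

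On uniqueness you actually go further than the paper does. The paper's proof establishes only existence and does not address uniqueness at all, despite the lemma's wording. Your computation of the stabilizer of $U$ is correct: the residual gauge freedom is $\mathrm{Sp}(n)\times\mathrm{O}(k)$, acting as $L\mapsto qLK^{-1}$ and $M\mapsto KMK^{-1}$, and the symmetry condition on $M$ does not cut this down. So your instinct to read ``unique'' as ``unique up to this residual freedom'' is the right one, and your suggestion to flag this explicitly is well taken---it is precisely what later results such as Lemma~\ref{lemma:isomsimplify} and Corollary~\ref{cor:IsomS} exploit.
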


\begin{definition}
The collection of $\hat{M}$ such that $(\hat{M},U)\in\mathcal{N}_{n,k}$ is denoted $\tilde{\mathcal{N}}_{n,k}$ and instantons of this form are said to be in \textbf{standard form}.
\end{definition}
 
\begin{proof}
Let $(a,b)\in\mathcal{N}_{n,k}$. Given $(Q,K)\in\mathrm{Sp}(n+k)\times\mathrm{GL}(k,\mathbb{R})$, consider $(a',b'):=(Q,K).(a,b)$. Under this transformation, we see that $(b')^\dagger b'=K^T b^\dagger b K$. By Lemma~\ref{lemma:ADHMproperties}, we know that $b^\dagger b$ is a real, symmetric, and positive definite matrix. Thus, it is orthogonally diagonalizeable with real eigenvalues. That is, there is some $J\in\mathrm{O}(k)$ and $\lambda_1,\ldots,\lambda_k>0$ such that $J^Tb^\dagger bJ=\mathrm{diag}(\lambda_1,\ldots,\lambda_k)$. Let $D:=\mathrm{diag}(1/\sqrt{\lambda_1},\ldots,1/\sqrt{\lambda_k})$. Letting $K:=JD\in\mathrm{GL}(k,\mathbb{R})$ and $Q=I_{n+k}$, we note that $K^Tb^\dagger bK=I_k$. Thus, $(a,b)$ is gauge equivalent to some ADHM data with $(b')^\dagger b'=I_k$. 

Now, suppose that $(a,b)\in\mathcal{N}_{n,k}$ such that $b^\dagger b=I_k$. Let $\tilde{Q}$ be a $n\times (n+k)$ quaternionic matrix satisfying $\tilde{Q}\tilde{Q}^\dagger=I_n$ and $\tilde{Q}b=0$. These two equations impose $n(n+k)$ conditions on $\tilde{Q}$, so such a $\tilde{Q}$ exists. Then let $Q:=\begin{bmatrix}
\tilde{Q} \\ b^\dagger
\end{bmatrix}\in\mathrm{Sp}(n+k)$ and note that $Qb=\begin{bmatrix}
0 \\ I_k
\end{bmatrix}$. Letting $K=I_k$, we see that $(a,b)$ is gauge equivalent to ADHM data with $b=\begin{bmatrix}
0 \\ I_k
\end{bmatrix}$. 

Finally, suppose that $(a,b)\in\mathcal{N}_{n,k}$ such that $b=\begin{bmatrix}
0 \\ I_k
\end{bmatrix}$. Write $a=\begin{bmatrix}
L \\ M
\end{bmatrix}$. By Lemma~\ref{lemma:ADHMproperties}, we know that $M=b^\dagger a$ is symmetric. Combining results, we have that any ADHM is equivalent to an element of $\tilde{\mathcal{N}}_{n,k}$.
\end{proof}
 
The gauge action does not physically alter the instanton, so if a conformal action can be undone by a gauge action, then the conformal action physically does not alter the instanton. 
\begin{definition}
Given $\tilde{A}\in\mathrm{SL}(2,\mathbb{H})$, ADHM data $\hat{M}\in\tilde{\mathcal{N}}_{n,k}$ is said to be $\tilde{A}$-\textbf{equivariant} if $\tilde{A}.[(\hat{M},U)]=[(\hat{M},U)]$, that is, if there exists some $(Q,K)\in\mathrm{Sp}(n+k)\times\mathrm{GL}(k,\mathbb{R})$ such that $(Q,K).(\hat{M},U)=\tilde{A}.(\hat{M},U)$. Note that we can extend this concept to $\mathcal{N}_{n,k}$.\label{def:instantonequivar}
\end{definition}
 
Note that, since the gauge and conformal actions commute, if two instantons are related by a gauge transformation, then they are both equivariant under the same conformal transformations.

In order to use Theorem~\ref{thm:mainthm}, we need the group of symmetries to be compact. Right now, our symmetry group is $\mathrm{SL}(2,\mathbb{H})$, which is not compact. Just as we focus on orthogonal transformations when discussing symmetries of monopoles, we wish to reduce our symmetry group to a compact one. To that end, we investigate the symmetries of the basic instanton.
\begin{prop}
Consider the basic instanton $\hat{M}=\begin{bmatrix}
1 & 0 
\end{bmatrix}^T\in\tilde{\mathcal{N}}_{1,1}$. The subgroup $H_{\hat{M}}\subseteq \mathrm{SL}(2,\mathbb{H})$ of symmetries of the instanton is exactly $\mathrm{Sp}(2)$. \label{prop:basicsymmetries}
\end{prop}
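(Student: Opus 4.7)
The plan is to set $\tilde{A} = \begin{bmatrix} A & B \\ C & D \end{bmatrix} \in \mathrm{SL}(2,\mathbb{H})$ and analyze the equivariance condition directly on the basic data $(a, b) = \left(\begin{bmatrix} 1 \\ 0 \end{bmatrix}, \begin{bmatrix} 0 \\ 1 \end{bmatrix}\right)$. The conformal action yields $\tilde{A}.(a,b) = \left(\begin{bmatrix} D \\ -B \end{bmatrix}, \begin{bmatrix} -C \\ A \end{bmatrix}\right)$, while the gauge action $(Q,K).(a,b)$ with $K \in \mathrm{GL}(1,\mathbb{R}) = \mathbb{R}^*$ produces the pair formed by the columns of $K^{-1}Q$. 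Matching the two, $\tilde{A}$ is $\hat{M}$-equivariant precisely when there exists $K \in \mathbb{R}^*$ with
\begin{equation*}
Q := K\begin{bmatrix} D & -C \\ -B & A \end{bmatrix} \in \mathrm{Sp}(2).
\end{equation*}

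The key structural observation is the identity $\begin{bmatrix} D & -C \\ -B & A \end{bmatrix} = J^{-1}\tilde{A}J$ where $J = \begin{bmatrix} 0 & -1 \\ 1 & 0 \end{bmatrix}$. Since $J \in \mathrm{Sp}(2)$, conjugating by $J$ turns the condition $Q \in \mathrm{Sp}(2)$ into the equivalent statement $K\tilde{A} \in \mathrm{Sp}(2)$. The problem thus reduces to identifying all $\tilde{A} \in \mathrm{SL}(2,\mathbb{H})$ admitting a real scalar $K$ with $K\tilde{A} \in \mathrm{Sp}(2)$. For the forward inclusion $\mathrm{Sp}(2) \subseteq H_{\hat{M}}$, choosing $K = 1$ succeeds once one verifies the standard fact $\mathrm{Sp}(2) \subseteq \mathrm{SL}(2,\mathbb{H})$ using Wilker's definition (a direct check splitting on whether $C = 0$).

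For the reverse inclusion, the idea is to exploit the scaling behaviour of Wilker's condition. If $\tilde{A}, K\tilde{A} \in \mathrm{SL}(2,\mathbb{H})$ and $C \neq 0$, then a direct computation gives
\begin{equation*}
\left|(KA)(KC)^{-1}(KD)(KC) - (KB)(KC)\right| = K^2 |AC^{-1}DC - BC| = K^2,
\end{equation*}
and similarly $|(KA)(KD)| = K^2 |AD| = K^2$ when $C = 0$. In either case the left-hand side equals $1$, forcing $K^2 = 1$. Hence $K\tilde{A} = \pm \tilde{A}$ lies in $\mathrm{Sp}(2)$, and since $-I_2 \in \mathrm{Sp}(2)$, $\tilde{A}$ itself lies in $\mathrm{Sp}(2)$. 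The main obstacle is spotting the conjugation-by-$J$ trick that reduces what would otherwise be a messy column-matching calculation into the clean statement that $K\tilde{A} \in \mathrm{Sp}(2)$; once this is in place, the scaling argument with Wilker's formula is only a short case split.
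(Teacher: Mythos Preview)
Your argument is correct and is in fact cleaner than the paper's. Both proofs begin by writing out the equivariance condition and identifying the putative gauge element $Q$ in terms of $A,B,C,D,K$; where they diverge is in how they then force $K^2=1$. The paper computes $QQ^\dagger$ and $Q^\dagger Q$ entrywise and combines those identities with the Wilker condition through an explicit case analysis on $C$ (including a detour through $K^2|B|=|C|$). Your conjugation observation $\begin{bmatrix} D & -C \\ -B & A \end{bmatrix} = J^{-1}\tilde{A}J$ with $J\in\mathrm{Sp}(2)$ short-circuits all of that: it turns the condition ``$Q\in\mathrm{Sp}(2)$'' directly into ``$K\tilde{A}\in\mathrm{Sp}(2)$'', after which the homogeneity of Wilker's defining relation under real scaling gives $K^2=1$ in one line. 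The paper's approach is more hands-on and never makes the conjugacy explicit; yours isolates the structural reason the result holds (namely that Wilker's condition is degree-two homogeneous and $\mathrm{Sp}(2)$ is closed under conjugation by $J$), at the cost of needing the inclusion $\mathrm{Sp}(2)\subseteq\mathrm{SL}(2,\mathbb{H})$ as a separate lemma. Both routes require the same $C=0$ versus $C\neq 0$ split, but in your version that split is confined to a single scaling check rather than spread across several matrix identities.
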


\begin{proof}
Suppose that $(\hat{M},U)$ is equivariant under $\begin{bmatrix}
A & B \\ C & D
\end{bmatrix}\in\mathrm{SL}(2,\mathbb{H})$. Then there exists $Q\in\mathrm{Sp}(2)$ and $K\in\mathrm{GL}(1,\mathbb{R})$ such that $(Q,K).\begin{bmatrix}
A & B \\ C & D
\end{bmatrix}.(\hat{M},U)=(\hat{M},U)$. Thus,
\begin{align*}
\begin{bmatrix}
D \\ -B
\end{bmatrix}K^{-1}&=Q^\dagger \begin{bmatrix}
1 \\ 0
\end{bmatrix};\\
\begin{bmatrix}
-C \\ A
\end{bmatrix}K^{-1} &=Q^\dagger \begin{bmatrix}
0 \\ 1
\end{bmatrix}.
\end{align*}
Solving these equations, we see that $Q=\frac{1}{K}\begin{bmatrix}
D^\dagger & -B^\dagger \\ -C^\dagger & A^\dagger
\end{bmatrix}$. As $Q\in\mathrm{Sp}(2)$, we see that 
\begin{equation}
I_2=QQ^\dagger=\frac{1}{K^2}\begin{bmatrix}
|D|^2+|B|^2 & -D^\dagger C-B^\dagger A \\ -C^\dagger D-A^\dagger B & |A|^2+|C|^2
\end{bmatrix}.\label{eq:basicsymeqs}
\end{equation}

As $\begin{bmatrix}
A & B \\ C & D
\end{bmatrix}\in\mathrm{SL}(2,\mathbb{H})$, we have that $|AC^{-1}DC-BC|=1$. If $C=0$, we have that $|AD|=1$. Furthermore, we have from the above identities that $A^\dagger B=0$ and $|A|^2=K^2$. Thus, $B=0$, so $|D|^2=K^2$. Hence, $1=|AD|=K^2$.

If $C\neq 0$, using \eqref{eq:basicsymeqs}, we see that as $|AC^{-1}DC-BC|=1$,
\begin{equation*}
1=\frac{|A|^2+|C|^2}{|C|}|B|=K^2\frac{|B|}{|C|}.
\end{equation*}
That is, $K^2|B|=|C|$. Hence, 
\begin{equation*}
I_2=Q^\dagger Q=\frac{1}{K^2}\begin{bmatrix}
|D|^2+|C|^2 & -DB^\dagger -CA^\dagger \\ -BD^\dagger -AC^\dagger & |A|^2+|B|^2
\end{bmatrix}.
\end{equation*}
Thus, we see that $|A|^2+|B|^2=K^2=|A|^2+|C|^2$, so $|B|=|C|$. Thus, $K^2=1$.

In either case, $K^2=1$. Thus, we see that
\begin{equation*}
\begin{bmatrix}
A & B \\ C & D
\end{bmatrix}\begin{bmatrix}
A & B \\ C & D
\end{bmatrix}^\dagger=\begin{bmatrix}
|A|^2+|B|^2 & AC^\dagger+BD^\dagger \\ CA^\dagger+DB^\dagger & |C|^2+|D|^2
\end{bmatrix}=I_2.
\end{equation*}
That is, $\begin{bmatrix}
A & B \\ C & D
\end{bmatrix}\in\mathrm{Sp}(2)$.

Conversely, given $\begin{bmatrix}
A & B \\ C & D
\end{bmatrix}\in\mathrm{Sp}(2)$, we define $Q:=\begin{bmatrix}
D^\dagger & -B^\dagger \\ -C^\dagger & A^\dagger
\end{bmatrix}\in\mathrm{Sp}(2)$. Then 
\begin{equation*}
(Q,1).\begin{bmatrix}
A & B \\ C & D
\end{bmatrix}.(\hat{M},U)=(\hat{M},U).
\end{equation*} 
As the transformation was arbitrary, $H_{\hat{M}}=\mathrm{Sp}(2)$.
\end{proof}

Proposition~\ref{prop:basicsymmetries} tells us that the symmetry group of the basic instanton is $\mathrm{Sp}(2)$. Thus, instantons related by a conformal transformation $\tilde{A}\in\mathrm{SL}(2,\mathbb{H})$ have a symmetry group of the form $\tilde{A}^{-1}\mathrm{Sp}(2)\tilde{A}$. It is not known if similar results hold for all non-flat instantons. However, as the basic instanton is the $\mathrm{Sp}(1)$ instanton with the most symmetry, we make the following conjecture. 
\begin{conj}
Suppose that $\mathbb{A}$ is a non-flat instanton and define $H_{\mathbb{A}}\subseteq\mathrm{SL}(2,\mathbb{H})$ to be its group of symmetries. Then there is some $\tilde{A}\in\mathrm{SL}(2,\mathbb{H})$ such that $\tilde{A}H_{\mathbb{A}}\tilde{A}^{-1}\subseteq \mathrm{Sp}(2)$.\label{conj}
\end{conj}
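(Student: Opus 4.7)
The plan is to prove the conjecture in two stages: first show that $H_{\mathbb{A}}$ is compact, then invoke the standard fact that every compact subgroup of $\mathrm{SL}(2,\mathbb{H})$ is conjugate into the maximal compact subgroup $\mathrm{Sp}(2)$.

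For the compactness half, I would exploit the conformal invariance of the four-form $|F_{\mathbb{A}}|^2\,\mathrm{vol}$ in four dimensions. If $\mathbb{A}$ is $g$-equivariant, then $g^*\mathbb{A}$ is gauge equivalent to $\mathbb{A}$, so pointwise $|F_{g^*\mathbb{A}}|^2 = |F_{\mathbb{A}}|^2$; combined with the conformal invariance of $|F|^2\,\mathrm{vol}$ in dimension four this gives $g^*(|F_{\mathbb{A}}|^2\,\mathrm{vol}) = |F_{\mathbb{A}}|^2\,\mathrm{vol}$. Thus $H_{\mathbb{A}}$ preserves the finite smooth measure $\mu := |F_{\mathbb{A}}|^2\,\mathrm{vol}$ on $S^4$, and non-flatness of $\mathbb{A}$ means exactly that $\mu$ has nonempty open support.

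Suppose for contradiction the identity component of $H_{\mathbb{A}}$ is not compact; it then contains a non-compact one-parameter subgroup $\{\phi_t\}$ of $\mathrm{SL}(2,\mathbb{H})$. Every such one-parameter subgroup acts on $S^4$ either hyperbolically (two fixed points, attracting and repelling) or parabolically (a single fixed point toward which every other orbit accumulates as $t \to \pm\infty$). Since $\mu$ is absolutely continuous with respect to the round volume form and hence atomless, for any compact $K$ disjoint from the fixed-point set one has $\mu(\phi_t(K)) \to 0$, contradicting $\phi_t$-invariance as soon as such a $K$ has positive measure, which is possible because $\mu$ has open support. In the hyperbolic case this argument can be made very concrete: conjugating $\phi_t$ into a dilation $x \mapsto e^t x$ on $\mathbb{H}$ yields the pointwise identity $|F_{\mathbb{A}}|^2(x) = e^{4t}|F_{\mathbb{A}}|^2(e^t x)$, and smoothness alone forces $|F_{\mathbb{A}}|^2 \equiv 0$. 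Upgrading this to compactness of all of $H_{\mathbb{A}}$ additionally requires finiteness of the component group $\pi_0(H_{\mathbb{A}})$, which I would expect to follow from a bound in terms of the instanton charge via an equivariant index argument in the spirit of Beckett's and Whitehead's work.

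Once $H_{\mathbb{A}}$ is established to be compact, Cartan's theorem on maximal compact subgroups of connected Lie groups yields the required $\tilde{A}$, since $\mathrm{Sp}(2) \cong \mathrm{Spin}(5)$ is maximal compact in $\mathrm{SL}(2,\mathbb{H}) \cong \mathrm{Spin}(5,1)$---consistently with the fact that it is precisely the symmetry group of the basic instanton (Proposition~\ref{prop:basicsymmetries}). The hardest step will be the parabolic case of the compactness argument, because a parabolic one-parameter subgroup has only a single fixed point on $S^4$ and weaker contraction than a hyperbolic one; some care with the horospherical foliation it defines will be needed to make the measure-theoretic contradiction rigorous. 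Ruling out an infinite component group is the other genuinely delicate point and is where the conjecture is most likely to need input beyond the soft arguments sketched here.
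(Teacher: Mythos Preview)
The paper does \emph{not} prove this statement; it is explicitly left as an open conjecture, and the author remarks that its validity has no bearing on the rest of the paper except for one lemma. So there is no ``paper's proof'' to compare against---you are attempting to settle an open question.

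Your central observation is correct and is the heart of the matter: by conformal invariance of the Yang--Mills density in dimension four together with gauge invariance of $|F|$, every $g\in H_{\mathbb{A}}$ preserves the finite, absolutely continuous measure $\mu=|F_{\mathbb{A}}|^2\,\mathrm{vol}$ on $S^4$ (Uhlenbeck's removable singularities theorem, which the paper cites, guarantees $|F_{\mathbb{A}}|^2$ is smooth on all of $S^4$, hence bounded). Non-flatness makes $\mu\neq 0$ with open support.

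However, the route you then take---classifying non-compact one-parameter subgroups, handling the parabolic case by hand, and separately bounding $\pi_0(H_{\mathbb{A}})$ via index theory---is more laborious than necessary and, as you yourself flag, has genuine gaps. There is a further gap you do not flag: your argument tacitly assumes $H_{\mathbb{A}}$ is a closed (hence Lie) subgroup of $\mathrm{SL}(2,\mathbb{H})$ before you can speak of its identity component or invoke Cartan's conjugacy theorem, and establishing closedness already requires a compactness argument for gauge orbits.

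All of these difficulties evaporate if you use the identification $\mathrm{SL}(2,\mathbb{H})\simeq\mathrm{Spin}(5,1)$ acting by isometries on $H^5$ with ideal boundary $S^4$. An atomless probability measure on $\partial H^5$ has a unique barycentre $b\in H^5$ (minimise the averaged Busemann function $x\mapsto\int_{S^4}B_\xi(x)\,d\mu(\xi)$, which is strictly convex and proper since $\mu$ is atomless); by construction this barycentre is equivariant, so any $g$ preserving $\mu$ fixes $b$. Hence $H_{\mathbb{A}}$ lies in the point stabiliser $\mathrm{Stab}(b)$, which is a conjugate of $\mathrm{Sp}(2)\simeq\mathrm{Spin}(5)$ inside $\mathrm{SL}(2,\mathbb{H})$. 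This yields the conjugating element $\tilde{A}$ directly, with no case analysis, no appeal to $\pi_0$, and no need for $H_{\mathbb{A}}$ to be closed a priori. If this line holds up under scrutiny, it would resolve the conjecture; it is worth writing out carefully.
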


Assuming this conjecture, when discussing symmetric instantons, we need only examine subgroups of $\mathrm{Sp}(2)$. In Appendix~\ref{appendix:contsubgroups}, we classify all the connected, Lie subgroups of $\mathrm{Sp}(2)$ up to conjugacy.

\begin{note}
The validity of this conjecture has no impact on the following results outside of Lemma~\ref{lemma:hypermonoR1}, on which nothing else relies. If it is false, then there are other kinds of symmetric instantons and perhaps there are other kinds of circle actions relating instantons and hyperbolic monopoles.
\end{note}

In an effort to ignore uninteresting instantons, we focus on the following subset of ADHM data.
\begin{definition}
Let $\mathcal{M}_{n,k}$ be the set of $\hat{M}\in\tilde{\mathcal{N}}_{n,k}$ such that
\begin{itemize}
\item $M$ (a $k\times k$ quaternionic matrix) is symmetric,
\item $L$ (a $n\times k$ quaternionic matrix) is such that $LL^\dagger$ is a positive definite matrix (we already know it is positive semi-definite),
\item let $\Delta(x):=\begin{bmatrix}
L \\ M-I_kx
\end{bmatrix}$, for all $x\in\mathbb{H}$, then $R:=\Delta(0)^\dagger \Delta(0)=L^\dagger L+M^\dagger M$ is real and non-singular, 
\item $\Delta(x)^\dagger \Delta(x)$ is non-singular for all $x\in\mathbb{H}$.
\end{itemize}\label{def:Mstd}
\end{definition}

\begin{note}
Given the form of $R$, it is real and non-singular if and only if it is symmetric and non-singular. Additionally, if $R$ is real, the $\Delta(x)^\dagger \Delta(x)$ is automatically real.
\end{note}

\begin{note}
Throughout this work, $k$ indicates both the instanton number and the standard quaternion ($k=ij$) and its use at any given time is contextual.\label{note:koverload}
\end{note}

The condition $LL^\dagger$ is positive definite forces $n\leq k$ and is a natural condition for finding novel instantons. Indeed, otherwise, we have an instanton embedded trivially as an instanton with a higher rank structure group. We ignore these as they are reducible.
\begin{prop}
Suppose that $\hat{M}$ satisfies all of the above conditions of $\mathcal{M}_{n,k}$ in Definition~\ref{def:Mstd}, except $LL^\dagger$ is not positive definite. Then there is some $L'$ such that, up to gauge, $L=\begin{bmatrix}
L' \\ 0
\end{bmatrix}$ and $L'L'^\dagger$ is a positive definite $n'\times n'$ matrix. Then $\hat{M}':=\begin{bmatrix}
L' \\ M
\end{bmatrix}\in\mathcal{M}_{n',k}$. Letting $\mathbb{A}'$ and $\mathbb{A}$ be the instantons corresponding to $\hat{M}'$ and $\hat{M}$, respectively, we have that $\mathbb{A}_\mu=\mathbb{A}'_\mu\oplus 0$.
\end{prop}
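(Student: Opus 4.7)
The plan has three stages: reduce $L$ to block form by a gauge transformation, verify that the resulting $\hat{M}'$ lies in $\mathcal{M}_{n',k}$, and then relate the two instantons via a compatible choice of $V$ matrices.

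First, since $LL^\dagger$ is positive semi-definite of rank $n' := \mathrm{rank}(LL^\dagger) < n$, its kernel inside $\mathbb{H}^n$ (equivalently, $\ker L^\dagger = \mathrm{Im}(L)^\perp$) has quaternionic dimension $n-n'$. Pick an orthonormal basis $u_1,\ldots,u_n$ of $\mathbb{H}^n$ with $u_{n'+1},\ldots,u_n$ spanning this kernel, and let $Q_1\in\mathrm{Sp}(n)$ be the matrix whose rows are $u_i^\dagger$. Applying the gauge transformation $(\mathrm{diag}(Q_1,I_k), I_k) \in \mathrm{Sp}(n+k)\times\mathrm{GL}(k,\mathbb{R})$ leaves $U$ and $M$ fixed but sends $L\mapsto Q_1 L$, whose last $n-n'$ rows vanish by construction. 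Writing the transformed matrix as $\begin{bmatrix} L' \\ 0 \end{bmatrix}$, the product $L'L'^\dagger$ is the top-left $n'\times n'$ block of $Q_1 LL^\dagger Q_1^\dagger$; since this conjugate has rank $n'$ with vanishing trailing rows and columns, the block must itself be positive definite.

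Next, to check $\hat{M}' \in \mathcal{M}_{n',k}$: the matrix $M$ remains symmetric; $L'L'^\dagger$ is positive definite by construction; and the identity $L^\dagger L = L'^\dagger L'$ (the removed rows being zero) shows that both $R$ and $\Delta(x)^\dagger\Delta(x)$ agree with their primed versions, so reality and non-singularity of $R$, as well as non-singularity of $\Delta(x)^\dagger\Delta(x)$ for all $x\in\mathbb{H}$, are all inherited.

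Finally, pick any smooth $V'(x)\in\mathrm{Mat}(n'+k,n',\mathbb{H})$ with $V'(x)^\dagger \Delta'(x)=0$ and $V'(x)^\dagger V'(x)=I_{n'}$, and split it as $V'(x) = \begin{bmatrix} V_1'(x) \\ V_2'(x) \end{bmatrix}$ with $V_1'(x)$ of size $n'\times n'$ and $V_2'(x)$ of size $k\times n'$. Setting
\begin{equation*}
V(x) := \begin{bmatrix} V_1'(x) & 0 \\ 0 & I_{n-n'} \\ V_2'(x) & 0 \end{bmatrix},
\end{equation*}
one checks directly from $\Delta(x)^\dagger = \begin{bmatrix} L'^\dagger & 0 & (M-I_k x)^\dagger \end{bmatrix}$ that $V(x)^\dagger \Delta(x)=0$ and $V(x)^\dagger V(x)=I_n$, so $V$ is a valid choice for constructing $\mathbb{A}$. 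Because the middle block of $V$ is a constant identity, $V(x)^\dagger \partial_\mu V(x) = \mathrm{diag}(\mathbb{A}'_\mu(x), 0)$, which is exactly $\mathbb{A}'_\mu \oplus 0$. The main obstacle is purely bookkeeping: ensuring the block structures are compatible across the gauge reduction of $L$, the kernel description of $\Delta(x)^\dagger$, and the assembly of $V$ from $V'$; the underlying mathematics is routine quaternionic linear algebra.
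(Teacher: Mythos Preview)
Your proof is correct and follows essentially the same approach as the paper. The only cosmetic differences are that you perform the gauge reduction of $L$ in a single step by choosing a full orthonormal basis adapted to $\ker L^\dagger$, whereas the paper peels off one zero row at a time by induction, and you additionally spell out why $\hat{M}'\in\mathcal{M}_{n',k}$; the construction of $V$ from $V'$ and the resulting block-diagonal form of $\mathbb{A}$ are identical.
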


\begin{proof}
By hypothesis, there is a vector $v\neq 0$ such that $|L^\dagger v|^2=v^\dagger LL^\dagger v=0$. Choose an orthonormal basis $\{e_1,\ldots,e_{n-1},v/|v|\}$. Let $q^\dagger=\begin{bmatrix}
e_1 & \cdots & e_{n-1} & v/|v|
\end{bmatrix}$. Thus, $q\in\mathrm{Sp}(n)$ and as $v^\dagger L=0$,
\begin{equation*}
qL=\begin{bmatrix}
e_1^\dagger L \\ \vdots \\ e_{n-1}^\dagger L \\ 0
\end{bmatrix}.
\end{equation*}
Thus, we have a zero row in $qL$. We can do this for more rows by repeating this process until we obtain some matrix $L'$ such that $L'L'^\dagger$ is positive definite $n' \times n'$ matrix and $L=\begin{bmatrix}
L' \\ 0
\end{bmatrix}$. Hence, $\hat{M}':=\begin{bmatrix}
L' \\ M
\end{bmatrix}\in\mathcal{M}_{n',k}$.

Suppose that $\tilde{V}(x)=\begin{bmatrix}
\psi'(x) \\ v'(x)
\end{bmatrix}$ is the data that gives us $\mathbb{A}'_\mu$ for $\hat{M}'$. Then let $V(x):=\begin{bmatrix}
\psi'(x) & 0 \\
0 & I_m \\
v'(x) & 0
\end{bmatrix}$. We see that $V(x)$ is normalized and satisfies $V(x)^\dagger \Delta(x)=0$. Therefore, we see that $\mathbb{A}_\mu(x)=\mathbb{A}'_\mu(x)\oplus 0$. Thus, the instanton is indeed just an embedding of one with a smaller rank structure group.
\end{proof}

There are a variety of connected Lie subgroups of $\mathrm{Sp}(2)$. In Appendix~\ref{appendix:contsubgroups}, we classify all the different subgroups $H$ of interest. In particular, they are all compact or one-dimensional.
\begin{definition} 
Following Conjecture~\ref{conj}, for $\hat{M}\in\mathcal{M}_{n,k}$, let $H_{\hat{M}}\subseteq\mathrm{Sp}(2)$ be the set of conformal transformations under which $\hat{M}$ is equivariant. As we are dealing with group actions, $H_{\hat{M}}$ is a subgroup. We are interested in the case that $H_{\hat{M}}$ is a connected Lie subgroup. 

When $H_{\hat{M}}$ contains a subgroup conjugate to one of the groups listed in Table~\ref{table:conformalsubgroups}, we say that $\hat{M}$ has the corresponding symmetry given in the table.
\end{definition}

\begin{table}[t]
\centering
\begin{tabular}{|c|c|c|}
\hline
$G$ & Name of symmetry & Lie algebra\\\hline
$R_t$, for $t\in[0,1]$ & Circular $t$ & $\left\langle\mathrm{diag}(i,it)\right\rangle$\\
$S^1\times S^1$ & Toral & $\left\langle\mathrm{diag}(i,0),\mathrm{diag}(0,i)\right\rangle$\\
\multirow{3}{*}{$\mathrm{Sp}(1)$} & Simple spherical & $\mf{h}_{3,1,1}$\\
& Isoclinic spherical & $\mf{h}_{4,1}$\\
& Conformal spherical & $\mf{h}_5$\\
$(S^1\times\mathrm{Sp}(1))/\{\pm (1,1)\}$ & Conformal superspherical & $\mf{p}_{3,1,1}$\\
$S^1\times\mathrm{Sp}(1)$ & Isoclinic superspherical & $\mf{p}_{4,1}$\\
$\mathrm{Sp}(1)\times\mathrm{Sp}(1)$ & Rotational & $\mf{sp}(1)\oplus\mf{sp}(1)$\\
$\mathrm{Sp}(2)$ & Full & $\mf{sp}(2)$\\\hline
\end{tabular}
\caption[Connected Lie subgroups of $\mathrm{Sp}(2)$]{List of connected Lie subgroups $G$ of $\mathrm{Sp}(2)$, of non-zero dimension, with corresponding Lie algebras needed to classify all symmetric instantons. For the definitions of $\mf{h}_{3,1,1}$, $\mf{h}_{4,1}$, and $\mf{h}_5$, see Proposition~\ref{prop:sp1subalgebras}. For the definitions of $\mf{p}_{3,1,1}$ and $\mf{p}_{4,1}$, see Proposition~\ref{prop:sp1s1subalgebras}.}\label{table:conformalsubgroups}
\end{table} 

\subsection{A herd of lemmas}\label{subsubsec:herdoflemmas}

In this section, we prove some lemmas that prove to be quite useful in our investigation of symmetric instantons.

As we are interested in Lie group actions, we obtain induced Lie algebra actions. In the following result, we determine exactly how ADHM data in standard form transforms under the action of elements of $\mathfrak{sp}(2)$.
\begin{lemma}
Given $\hat{M}\in\mathcal{M}_{n,k}$ and $A\in\mathrm{Sp}(2)$, let $(\hat{M}_A,U_A):=A.(\hat{M},U)$. We have that $\hat{M}_A$ and $U_A$ satisfy
\begin{equation}
\begin{bmatrix}
\hat{M}_A & U_A
\end{bmatrix}=\begin{bmatrix}
\hat{M} & U
\end{bmatrix}I_k\otimes \left(\begin{bmatrix}
0 & 1 \\ -1 & 0
\end{bmatrix}A\begin{bmatrix}
0 & -1 \\ 1 & 0
\end{bmatrix}\right).\label{eq:conformalaction}
\end{equation}\label{lemma:conformalaction}

Given $\Upsilon\in\mathfrak{sp}(2)$, consider $A(\theta):=e^{\theta\Upsilon}$. Let $\hat{M}_\Upsilon:=\hat{M}_{A}'(0)$ and $U_\Upsilon:=U_{A}'(0)$. Then
\begin{equation}
\begin{bmatrix}
\hat{M}_\Upsilon & U_\Upsilon 
\end{bmatrix}=\begin{bmatrix}
\hat{M} & U
\end{bmatrix} I_k\otimes \left(\begin{bmatrix}
0 & 1 \\ -1 & 0
\end{bmatrix}\Upsilon\begin{bmatrix}
0 & -1 \\ 1 & 0
\end{bmatrix}\right).\label{eq:defLiealgaction}
\end{equation}
Note that $\hat{M}_{-\Upsilon}=-\hat{M}_\Upsilon$ and similarly for $U$.
\end{lemma}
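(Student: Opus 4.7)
The plan is to verify both equations by direct calculation from the definition of the conformal action, then obtain the Lie algebra version by differentiating at $\theta = 0$. To avoid the notational collision between the group element $A \in \mathrm{Sp}(2)$ and the upper-left block in its $2 \times 2$ presentation, I will write $A = \begin{bmatrix} \alpha & \beta \\ \gamma & \delta \end{bmatrix}$ throughout this sketch.

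First, I would unfold the conformal action: by definition, $\hat{M}_A = \hat{M}\delta - U\beta$ and $U_A = U\alpha - \hat{M}\gamma$. The key observation is that these formulas act uniformly on each of the $k$ column pairs of the concatenated data: if $M_i$ and $U_i$ denote the $i$-th columns of $\hat{M}$ and $U$ respectively, then the $i$-th column pair of $\begin{bmatrix} \hat{M}_A & U_A \end{bmatrix}$ equals
\begin{equation*}
\begin{bmatrix} M_i & U_i \end{bmatrix} \begin{bmatrix} \delta & -\gamma \\ -\beta & \alpha \end{bmatrix},
\end{equation*}
where the right-hand $2 \times 2$ matrix is a single fixed scalar matrix independent of $i$. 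Assembling all $k$ such pairs identifies $\begin{bmatrix} \hat{M}_A & U_A \end{bmatrix}$ with $\begin{bmatrix} \hat{M} & U \end{bmatrix}$ right-multiplied by the block matrix $I_k \otimes \begin{bmatrix} \delta & -\gamma \\ -\beta & \alpha \end{bmatrix}$, where the Kronecker product is interpreted so that the inner $2 \times 2$ block mixes each column pair $(M_i, U_i)$ rather than globally separating the $\hat{M}$ and $U$ blocks.

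Next, a routine $2 \times 2$ calculation confirms that
\begin{equation*}
\begin{bmatrix} 0 & 1 \\ -1 & 0 \end{bmatrix} A \begin{bmatrix} 0 & -1 \\ 1 & 0 \end{bmatrix} = \begin{bmatrix} \delta & -\gamma \\ -\beta & \alpha \end{bmatrix},
\end{equation*}
which completes \eqref{eq:conformalaction}. For the Lie algebra version \eqref{eq:defLiealgaction}, I would note that the map $A \mapsto \begin{bmatrix} \hat{M}_A & U_A \end{bmatrix}$ is $\mathbb{R}$-linear in the entries of $A$ (since $\hat{M}$ and $U$ sit on the left of the quaternionic scalars $\alpha, \beta, \gamma, \delta$), so differentiating along $A(\theta) = e^{\theta \Upsilon}$ at $\theta = 0$ simply substitutes $A'(0) = \Upsilon$ into the formula. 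The final antisymmetry $\hat{M}_{-\Upsilon} = -\hat{M}_\Upsilon$ and $U_{-\Upsilon} = -U_\Upsilon$ is then an immediate consequence of this linearity of the right-hand side in $\Upsilon$.

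There is no serious obstacle here; the only subtlety is bookkeeping the Kronecker product convention, specifically ensuring that the $2 \times 2$ conjugation is applied to each column pair $(M_i, U_i)$ rather than to the coarser block decomposition of the concatenated $(n+k) \times 2k$ matrix. Everything else is quaternionic algebra that separates cleanly once the column-pair viewpoint is adopted.
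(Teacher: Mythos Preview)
Your proposal is correct and follows the same approach as the paper: direct expansion of the conformal action to verify \eqref{eq:conformalaction}, followed by differentiation at $\theta=0$ for \eqref{eq:defLiealgaction} and linearity for the antisymmetry. The paper's proof is even terser than yours, simply stating that expanding the right-hand side matches the definitions; your column-pair bookkeeping and explicit $2\times2$ conjugation calculation are a faithful (and more detailed) unpacking of exactly that verification.
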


\begin{proof}
Expanding the right-hand side of \eqref{eq:conformalaction}, we see that it exactly matches the definitions of $\hat{M}_A$ and $U_A$. The second claim follows by differentiating \eqref{eq:conformalaction}. The final claim follows from comparing \eqref{eq:defLiealgaction} for $-\Upsilon$ and $\Upsilon$.
\end{proof}

In order to use Theorem~\ref{thm:mainthm}, we need the gauge group to be compact. Our gauge group is $\mathrm{Sp}(n+k)\times \mathrm{GL}(k,\mathbb{R})$, which is not compact. However, it turns out that when investigating $\tilde{A}$-equivariance, $K$ is determined uniquely by $Q$ and $\tilde{A}$. Thus, we can ignore the non-compact part of the gauge group, leaving us with a compact one.
\begin{lemma}
Let $\hat{M}\in\mathcal{M}_{n,k}$ be $\begin{bmatrix}A & B \\ C & D\end{bmatrix}$-equivariant. More specifically, let $(Q,K)\in\mathrm{Sp}(n+k)\times\mathrm{GL}(k,\mathbb{R})$ be a gauge transformation realizing this equivariance, that is 
\begin{equation*}
\begin{bmatrix}A & B \\ C & D\end{bmatrix}.(Q,K).(\hat{M},U)=(\hat{M},U).
\end{equation*} 
Then $K$ is determined by \label{lemma:KfromQ}
\begin{equation}
K=U^TQ(UA-\hat{M}C).
\end{equation}
\end{lemma}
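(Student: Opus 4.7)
The plan is to unpack the equivariance condition and solve algebraically for $K$. Writing out the combined action, the gauge transformation first sends $(\hat{M}, U)$ to $(Q\hat{M}K^{-1}, QUK^{-1})$, and then the conformal transformation $\begin{bmatrix} A & B \\ C & D \end{bmatrix}$ sends this pair to
\begin{equation*}
\bigl(Q\hat{M}K^{-1} D - QUK^{-1} B,\ QUK^{-1} A - Q\hat{M}K^{-1} C\bigr).
\end{equation*}
Since $K \in \mathrm{GL}(k,\mathbb{R})$ is a real matrix, it commutes past the quaternionic matrices $A,B,C,D$, so I can rewrite this as $\bigl(Q(\hat{M}D - UB)K^{-1},\ Q(UA - \hat{M}C)K^{-1}\bigr)$.

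Setting this equal to $(\hat{M}, U)$ as required by equivariance yields two matrix equations, of which I only need the second:
\begin{equation*}
Q(UA - \hat{M}C) K^{-1} = U.
\end{equation*}
Multiplying on the right by $K$ gives $Q(UA - \hat{M}C) = UK$. Finally, I use the specific form $U = \begin{bmatrix} 0 \\ I_k \end{bmatrix}$ from Lemma~\ref{lemma:stdform}, which is real and satisfies $U^T U = I_k$. Multiplying on the left by $U^T$ isolates $K$:
\begin{equation*}
K = U^T U K = U^T Q(UA - \hat{M}C),
\end{equation*}
which is the desired formula.

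There is no real obstacle here; the argument is a bookkeeping calculation. The only subtle point worth flagging is the commutation of $K$ with the quaternionic entries of $\begin{bmatrix} A & B \\ C & D \end{bmatrix}$, which is legitimate precisely because the gauge group contains $\mathrm{GL}(k,\mathbb{R})$ rather than $\mathrm{GL}(k,\mathbb{H})$. The payoff, as the surrounding discussion indicates, is that $K$ is completely determined by $Q$ and the conformal transformation, so when invoking Theorem~\ref{thm:mainthm} one may discard the non-compact $\mathrm{GL}(k,\mathbb{R})$ factor and work only with the compact group $\mathrm{Sp}(n+k)$.
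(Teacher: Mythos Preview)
Your proof is correct and follows exactly the paper's approach; the paper compresses the argument to the single line ``The equivariance condition implies $Q(UA-\hat{M}C)K^{-1}=U$. The result follows,'' and you have simply filled in the computation leading to that identity and the final left-multiplication by $U^T$.
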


\begin{proof}
The equivariance condition implies $Q(UA-\hat{M}C)K^{-1}=U$. The result follows.
\end{proof}

\begin{note}
We see that $K$ depends smoothly on $Q$ and $\begin{bmatrix}
A & B \\ C & D
\end{bmatrix}$.
\end{note}

Above, we reduced the gauge group from $\mathrm{Sp}(n+k)\times\mathrm{GL}(k,\mathbb{R})$ to $\mathrm{Sp}(n+k)$, which is compact. The following result tells us that when dealing with isometries, we may choose to reduce the gauge group in a different way, reducing it to $\mathrm{O}(k)$, which is much smaller than $\mathrm{Sp}(n+k)$.

Given that we focus on symmetries contained in $\mathrm{Sp}(2)$ and Definition~\ref{def:conformalactionR4} tells us how this group acts on $S^4$, we see that the only isometries on $\mathbb{R}^4$ in this group belong to the subgroup $\mathrm{Sp}(1)\times\mathrm{Sp}(1)\subseteq \mathrm{Sp}(2)$. Note that $\mathrm{Sp}(1)\times\mathrm{Sp}(1)\simeq\mathrm{Spin}(4)$, which is the double cover of $\mathrm{SO}(4)$.
\begin{lemma}
Let $\hat{M}\in\mathcal{M}_{n,k}$. Let $\mathrm{diag}(a,b)\in\mathrm{Sp}(1)\times\mathrm{Sp}(1)\subseteq \mathrm{Sp}(2)$ such that $\hat{M}$ is $\mathrm{diag}(a,b)$-equivariant. Specifically, let $(Q,K)\in \mathrm{Sp}(n+k)\times\mathrm{GL}(k,\mathbb{R})$ be such that 
\begin{equation*}
\mathrm{diag}(a,b).(Q,K).(\hat{M},U)=(\hat{M},U).
\end{equation*} 
Then, $K\in\mathrm{O}(k)$ and there is some $q\in\mathrm{Sp}(n)$ such that $Q=\mathrm{diag}(q,a^\dagger K)$.\label{lemma:isomsimplify}
\end{lemma}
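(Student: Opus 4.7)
The plan is to directly expand the equivariance condition $\mathrm{diag}(a,b).(Q,K).(\hat{M},U)=(\hat{M},U)$ using the formulas for the conformal and gauge actions, and then to read off constraints on the blocks of $Q$ and on $K$. With $A=a$, $D=b$, $B=C=0$, the conformal action of $\mathrm{diag}(a,b)$ sends a pair $(\hat{M}',U')$ to $(\hat{M}'b,\,U'a)$. Applying first the gauge action and then the conformal action, the equivariance condition becomes
\begin{equation*}
Q\hat{M}\,b\,K^{-1}=\hat{M}, \qquad QUaK^{-1}=U,
\end{equation*}
after using that $K^{-1}$, being a real matrix, commutes entrywise with the quaternion scalars $a$ and $b$.

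Next I exploit the explicit form $U=\begin{bmatrix}0\\I_k\end{bmatrix}$. Partitioning $Q$ as a $2\times 2$ block matrix with block rows and columns of sizes $n$ and $k$, the second equation $QUa=UK$ reads
\begin{equation*}
\begin{bmatrix}Q_{12}\\Q_{22}\end{bmatrix}a=\begin{bmatrix}0\\K\end{bmatrix}.
\end{equation*}
Since $a\in\mathrm{Sp}(1)$ is a unit quaternion (so $a^{-1}=a^\dagger$), this immediately yields $Q_{12}=0$ and $Q_{22}=Ka^\dagger=a^\dagger K$, where the second equality uses reality of $K$. This is also consistent with Lemma~\ref{lemma:KfromQ}, which in this case reduces to $K=U^TQUa=Q_{22}a$.

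I then invoke $Q\in\mathrm{Sp}(n+k)$. With $Q_{12}=0$, expanding $QQ^\dagger=I_{n+k}$ in block form forces $Q_{11}Q_{11}^\dagger=I_n$ and $Q_{11}Q_{21}^\dagger=0$; invertibility of $Q_{11}$ then gives $Q_{21}=0$. Thus $Q=\mathrm{diag}(q,\,a^\dagger K)$ with $q:=Q_{11}\in\mathrm{Sp}(n)$. Finally, the bottom-right block identity $Q_{22}Q_{22}^\dagger=I_k$ becomes $a^\dagger KK^Ta=I_k$; since $KK^T$ is a real matrix and $a^\dagger a=1$, this reduces to $KK^T=I_k$, so $K\in\mathrm{O}(k)$.

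No genuine obstacle stands in the way: the argument is a direct block-matrix computation once the two actions are unwound. The one conceptual ingredient to highlight is that the argument crucially uses that $K$ is real (so it commutes with quaternion scalars) and that $a$ is a unit quaternion (so invertible with $a^\dagger a=1$). This is precisely why the result is specific to isometries, i.e.\ to the subgroup $\mathrm{Sp}(1)\times\mathrm{Sp}(1)\subseteq\mathrm{Sp}(2)$, and why no such block-diagonalization should be expected for general elements of $\mathrm{Sp}(2)$.
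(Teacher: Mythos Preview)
Your proof is correct and follows essentially the same approach as the paper: derive $QUaK^{-1}=U$ from the equivariance condition, read off the block structure of $Q$, and then use $Q\in\mathrm{Sp}(n+k)$ to force $K\in\mathrm{O}(k)$. The paper's proof is extremely terse (two sentences), while you have simply made the block-matrix computation explicit.
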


\begin{proof}
Suppose that $\mathrm{diag}(a,b).(Q,K).(\hat{M},U)=(\hat{M},U)$. Then $QUaK^{-1}=U$. Simplifying, we have that $Q$ must have the form given in the statement. That $K\in\mathrm{O}(k)$ is because $Q\in\mathrm{Sp}(n+k)$. 
\end{proof}

Lemma~\ref{lemma:KfromQ} tells us that $Q$ and the conformal transformation in $\mathrm{Sp}(2)$ determine $K$. Furthermore, Lemma~\ref{lemma:isomsimplify} tells us that if the conformal transformation is actually an isometry, then $K\in\mathrm{O}(k)$ and $Q$ must be block diagonal. For such transformations, $K$ determines $Q$. This result is similar to previous work dealing with hyperbolic monopoles~\cite[Lemma~4]{lang_hyperbolic_2023}.
\begin{lemma}
Suppose $\hat{M}\in\mathcal{M}_{n,k}$, $\mathrm{diag}(a,b)\in\mathrm{Sp}(2)$, and suppose there is some $K\in\mathrm{O}(k)$ such that $a^\dagger KMK^T=Mb^\dagger$. If $[K,R]=0$, then there exists a unique $q\in\mathrm{Sp}(n)$ such that $q LK^T=Lb^\dagger$. Specifically,\label{lemma:focusonY}
\begin{equation}
\left(\mathrm{diag}(
Lb^\dagger KL^\dagger (LL^\dagger)^{-1}, a^\dagger K
),K\right).\mathrm{diag}(a,b).(\hat{M},U)=(\hat{M},U).
\end{equation}
That is, $\hat{M}$ is equivariant under $\mathrm{diag}(a,b)$.
\end{lemma}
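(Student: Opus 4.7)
The plan is to unpack the equivariance identity block-by-block, and then show that the $q \in \mathrm{Sp}(n)$ defined by the displayed formula does the job. First I would compute $\mathrm{diag}(a,b).(\hat{M}, U) = (\hat{M} b, U a)$ and observe that $(Q, K)$ undoing this transformation amounts to the two matrix equations $Q \hat{M} b = \hat{M} K$ and $Q U a = U K$. The second equation together with $Q \in \mathrm{Sp}(n+k)$ and $K \in \mathrm{O}(k)$ forces the block form $Q = \mathrm{diag}(q, a^\dagger K)$ with $q \in \mathrm{Sp}(n)$, as in Lemma~\ref{lemma:isomsimplify}. Substituting this form into the first equation separates it into the $L$-condition $q L = L b^\dagger K$ (equivalently $q L K^T = L b^\dagger$, since $K \in \mathrm{O}(k)$) and the given $M$-condition. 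It therefore suffices to construct a unique $q \in \mathrm{Sp}(n)$ satisfying the $L$-condition, and to verify the stated explicit formula.

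The heart of the argument is the intermediate claim $L'^\dagger L' = L^\dagger L$, where $L' := L b^\dagger K$. To establish it, I would conjugate the $M$-condition to obtain $K M^\dagger M K^T = b M^\dagger M b^\dagger$; use that $K \in \mathrm{O}(k)$ turns $[K, R] = 0$ into $K R K^T = R$; and use that $R$ is real, so that $b R b^\dagger = R$. Subtracting the $M^\dagger M$ identity from the $R$ identity then gives $K L^\dagger L K^T = b L^\dagger L b^\dagger$, and a further conjugation by $K$ yields $L'^\dagger L' = L^\dagger L$.

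With this identity in hand, the remainder is routine linear algebra over $\mathbb{H}$. The equality $L^\dagger L = L'^\dagger L'$ implies $\|L v\| = \|L' v\|$ for all $v \in \mathbb{H}^k$, so $\ker L = \ker L'$; and a short computation using $K K^T = I_k$ and $b^\dagger b = 1$ gives $L' L'^\dagger = L L^\dagger$, which is invertible by hypothesis, so both $L$ and $L'$ surject onto $\mathbb{H}^n$. The assignment $q(L v) := L' v$ therefore defines an $\mathbb{H}$-linear isometry of $\mathbb{H}^n$, which is the desired element of $\mathrm{Sp}(n)$; uniqueness is immediate from the surjectivity of $L$. Right-multiplying $q L = L'$ by $L^\dagger (L L^\dagger)^{-1}$ recovers the explicit formula, and reassembling $(Q, K) = (\mathrm{diag}(q, a^\dagger K), K)$ delivers the displayed equivariance.

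The main obstacle is the identity in the second paragraph: neither the $M$-condition nor $[K, R] = 0$ suffices on its own, and it is only by pairing them with the reality of $R$ that the $M^\dagger M$ cross-terms cancel to leave the clean relation $K L^\dagger L K^T = b L^\dagger L b^\dagger$. After that, landing $q$ in $\mathrm{Sp}(n)$ reduces to comparing Hermitian forms on the image of $L$, so no further subtlety arises.
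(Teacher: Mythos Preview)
Your proof is correct and rests on the same algebraic core as the paper: both hinge on the identity $bL^\dagger L\, b^\dagger = K L^\dagger L\, K^T$, obtained by combining the $M$-condition with $[K,R]=0$ and the reality of $R$ via $R = L^\dagger L + M^\dagger M$. The difference is packaging. You isolate this identity upfront as $L'^\dagger L' = L^\dagger L$ (with $L' = Lb^\dagger K$) and then construct $q$ abstractly as the isometry $Lv \mapsto L'v$, recovering the explicit formula afterward. The paper instead writes down $q := Lb^\dagger K L^\dagger (LL^\dagger)^{-1}$ first and then verifies $q^\dagger q = I_n$, the commutator $[Lb^\dagger K L^\dagger, LL^\dagger] = 0$, and $qLK^T = Lb^\dagger$ through three separate computation chains, each of which weaves in the same identity implicitly. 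Your route is conceptually cleaner and makes the role of the key identity transparent; the paper's is more hands-on and sidesteps the kernel/surjectivity discussion entirely.
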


\begin{proof}
If $q$ and $\tilde{q}$ satisfy the constraint, then $(q-\tilde{q})LK^T=0$. As $LL^\dagger$ is positive definite, then $q=\tilde{q}$, so $q$ is unique.

Consider $q:=Lb^\dagger KL^\dagger (LL^\dagger)^{-1}$. We see that
\begin{align*}
q^\dagger q&=(LL^\dagger)^{-1} LK^T bL^\dagger Lb^\dagger KL^\dagger(LL^\dagger)^{-1}\\
&=(LL^\dagger)^{-1} LK^T b(R-M^\dagger M)b^\dagger KL^\dagger(LL^\dagger)^{-1}\\
&=(LL^\dagger)^{-1} LK^T (R-bM^\dagger a^\dagger a Mb^\dagger) KL^\dagger(LL^\dagger)^{-1}\\
&=(LL^\dagger)^{-1} LK^T (R-KM^\dagger MK^T) KL^\dagger(LL^\dagger)^{-1}\\
&=(LL^\dagger)^{-1}L(R-M^\dagger M)L^\dagger (LL^\dagger)^{-1}\\
&=(LL^\dagger)^{-1}LL^\dagger LL^\dagger (LL^\dagger)^{-1}\\
&=I_n.
\end{align*}
Hence, $q\in\mathrm{Sp}(n)$. 

We see that
\begin{align*}
Lb^\dagger KL^\dagger LL^\dagger &=Lb^\dagger K(R-M^\dagger M)L^\dagger\\
&=L(Rb^\dagger K-b^\dagger KM^\dagger K^TKMK^TK)L^\dagger\\
&=L(Rb^\dagger -b^\dagger b M^\dagger a^\dagger aM b^\dagger)KL^\dagger\\
&=L(R-M^\dagger M)b^\dagger KL^\dagger\\
&=LL^\dagger Lb^\dagger KL^\dagger,
\end{align*}
so $[Lb^\dagger KL^\dagger, LL^\dagger]=0$. Therefore, $q=(LL^\dagger)^{-1}Lb^\dagger KL^\dagger$, so as $a^\dagger KMK^T=Mb^\dagger$ and $[K,R]=0$, we have
\begin{align*}
qLK^T&=(LL^\dagger)^{-1}Lb^\dagger KL^\dagger LK^T\\
&=(LL^\dagger)^{-1}Lb^\dagger K(R-M^\dagger M)K^T\\
&=(LL^\dagger)^{-1}Lb^\dagger (R-bM^\dagger a^\dagger aMb^\dagger)\\
&=(LL^\dagger)^{-1}L(R-M^\dagger M)b^\dagger\\
&=(LL^\dagger)^{-1}LL^\dagger Lb^\dagger\\
&=Lb^\dagger.
\end{align*}
Taking $Q:=\mathrm{diag}(q,a^\dagger K)\in\mathrm{Sp}(2)$, we have that $(Q,K).\mathrm{diag}(a,b).(\hat{M},U)=(\hat{M},U)$. That is, $\hat{M}$ is $\mathrm{diag}(a,b)$-equivariant.
\end{proof}

\begin{note}
We see that $q$ is smooth in $K$ and the element in $\mathrm{Sp}(1)\times\mathrm{Sp}(1)$.
\end{note}

The following result allows us to narrow the search for instantons equivariant under isometries.
\begin{cor}
Suppose $\hat{M}\in\mathcal{M}_{n,k}$ is $\mathrm{diag}(a,b)$-equivariant. That is, there is some $(Q,K)\in\mathrm{Sp}(n+k)\times\mathrm{GL}(k,\mathbb{R})$ such that $(Q,K).\mathrm{diag}(a,b).(\hat{M},U)=(\hat{M}.U)$. Then $K\in\mathrm{O}(k)$, $[R,K]=0$, and $Q=\mathrm{diag}(Lb^\dagger KL^\dagger (LL^\dagger)^{-1},a^\dagger K)$.\label{cor:IsomS}
\end{cor}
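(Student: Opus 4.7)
The plan is to combine Lemma~\ref{lemma:isomsimplify} with a brief adjoint calculation. The lemma, applied to the given equivariance $(Q,K).\mathrm{diag}(a,b).(\hat M,U)=(\hat M,U)$, already delivers two of the three conclusions: $K\in\mathrm{O}(k)$ and $Q=\mathrm{diag}(q,a^\dagger K)$ for some $q\in\mathrm{Sp}(n)$. Only $[R,K]=0$ and the explicit formula for $q$ remain to be established.

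To obtain the commutation, I first rewrite the $\hat M$-component of the equivariance as $Q\hat M bK^{-1}=\hat M$, and then, since $K$ is real so that the scalar quaternion $b$ slides past $K^{-1}$ entry by entry, as $Q\hat M=\hat M Kb^\dagger$. Taking adjoints gives $\hat M^\dagger Q^\dagger=bK^T\hat M^\dagger$, so multiplying the two and using $Q^\dagger Q=I_{n+k}$ together with $\hat M^\dagger\hat M=R$ produces
\begin{equation*}
R=\hat M^\dagger Q^\dagger Q\hat M=b(K^TRK)b^\dagger.
\end{equation*}
The decisive point is that $R$ is real by Definition~\ref{def:Mstd}, hence so is $K^TRK$, and conjugation of a real matrix by the unit quaternion $b$ is trivial entry by entry; this forces $R=K^TRK$, and $KK^T=I_k$ then yields $[R,K]=0$.

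Finally, splitting $Q\hat M=\hat M Kb^\dagger$ into blocks, the top row reads $qL=LKb^\dagger=Lb^\dagger K$. Right-multiplying by $L^\dagger(LL^\dagger)^{-1}$, which is well defined because $LL^\dagger$ is positive definite by Definition~\ref{def:Mstd}, gives $q=Lb^\dagger KL^\dagger(LL^\dagger)^{-1}$, completing the claimed block form of $Q$. Alternatively, having also derived the bottom-row identity $a^\dagger KMK^T=Mb^\dagger$ by right-multiplying $a^\dagger KM=MKb^\dagger$ by $K^T$, one can appeal to the uniqueness portion of Lemma~\ref{lemma:focusonY} to the same effect. I do not expect any substantive obstacle; the main care required is bookkeeping how the scalar quaternions $a,b$ commute with (or slide through) the real matrices $K,K^T$ in mixed products, and it is the reality of $R$ that makes the crucial step collapse.
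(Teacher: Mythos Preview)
Your proof is correct and follows essentially the same approach as the paper's. The only cosmetic difference is that you compute $R=b(K^TRK)b^\dagger$ directly from $Q\hat M=\hat M Kb^\dagger$ treated as a single block, whereas the paper splits $\hat M$ into its $L$ and $M$ components before recombining; and the paper invokes the uniqueness clause of Lemma~\ref{lemma:focusonY} for the formula for $q$, which you derive directly (and also mention as an alternative).
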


\begin{proof}
Suppose that $(Q,K)\in\mathrm{Sp}(n+k)\times\mathrm{GL}(k,\mathbb{R})$ such that $(Q,K).\mathrm{diag}(a,b).(\hat{M},U)=(\hat{M},U)$. By Lemma~\ref{lemma:isomsimplify}, $K\in\mathrm{O}(k)$ and there is some $q\in\mathrm{Sp}(n)$ such that $Q=\mathrm{diag}(q,a^\dagger K)$.

We see that
\begin{equation*}
KRK^T=KL^\dagger q^\dagger qLK^T+KM^\dagger K^Taa^\dagger KMK^T.
\end{equation*}
Since $(Q,K).(\hat{M},U)=\mathrm{diag}(a^\dagger,b^\dagger).(\hat{M},U)$ and $R$ is real, we have
\begin{equation*}
KRK^T=bRb^\dagger=R.
\end{equation*}
Hence, $[K,R]=0$. Then, by Lemma~\ref{lemma:focusonY}, we have that $q=Lb^\dagger KL^\dagger (LL^\dagger)^{-1}$.
\end{proof}

In the subsequent sections, we investigate equivariant instantons. In doing so, we find that the continuous symmetries of instantons are generated by Lie algebra representations. We see that instantons generated by isomorphic representations are gauge equivalent, so we need only investigate a single representation from every isomorphism class. Moreover, it does not matter which representative we choose. 
\begin{lemma}
Let $\mathfrak{g}\subseteq \mathfrak{sp}(2)$ be a Lie subalgebra and $\hat{M}\in\mathcal{M}_{n,k}$. Let $\rho,\tilde{\rho}\colon\mathfrak{g}\rightarrow\mathfrak{sp}(n+k)$ and $\lambda,\tilde{\lambda}\colon\mathfrak{g}\rightarrow\mathfrak{so}(k)$ be pairs of isomorphic representations of $\mathfrak{g}$. That is, there is some $P\in\mathrm{Sp}(n+k)$ and $\Lambda\in\mathrm{O}(k)$ such that for all $x\in\mathfrak{g}$, $\tilde{\rho}(x)=P^\dagger \rho(x) P$ and $\tilde{\lambda}(x)=\Lambda^T\lambda(x) \Lambda$. 

Let $(\hat{M}',U'):=(P,\Lambda).(\hat{M},U)\in\mathcal{N}_{n,k}$. Then, for all $x\in\mathfrak{g}$, we have that
\begin{equation}
\left(e^{\tilde{\rho}(x)},e^{\tilde{\lambda}(x)}\right).e^x.(\hat{M},U)=(\hat{M},U),\label{eq:GeneratingProp1}
\end{equation}
if and only if\label{lemma:Generating}
\begin{equation}
\left(e^{{\rho}(x)},e^{{\lambda}(x)}\right).e^x.(\hat{M}',U')=(\hat{M}',U').\label{eq:GeneratingProp2}
\end{equation}
\end{lemma}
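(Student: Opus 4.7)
The plan is to reduce both sides of the stated equivalence to each other by conjugating by the gauge transformation $(P,\Lambda)$, exploiting (i) the identity $e^{A^{-1}BA}=A^{-1}e^B A$ for matrix exponentials, (ii) the fact that $P\in\mathrm{Sp}(n+k)$ and $\Lambda\in\mathrm{O}(k)$ give $P^\dagger=P^{-1}$ and $\Lambda^T=\Lambda^{-1}$, and (iii) the commutativity of the gauge and conformal actions established earlier. There is no analytic content to prove here; the argument is entirely bookkeeping with actions.

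First I would combine (i) and (ii) with the hypotheses $\tilde\rho(x)=P^\dagger\rho(x)P$ and $\tilde\lambda(x)=\Lambda^T\lambda(x)\Lambda$ to obtain
\begin{equation*}
e^{\tilde\rho(x)}=P^\dagger e^{\rho(x)}P,\qquad e^{\tilde\lambda(x)}=\Lambda^T e^{\lambda(x)}\Lambda.
\end{equation*}
Next I would observe from the definition of the gauge action that it composes as $(Q_1,K_1)(Q_2,K_2)=(Q_1Q_2,K_1K_2)$, so that, as a single gauge transformation,
\begin{equation*}
(e^{\tilde\rho(x)},e^{\tilde\lambda(x)})=(P^\dagger,\Lambda^T)\,(e^{\rho(x)},e^{\lambda(x)})\,(P,\Lambda).
\end{equation*}

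Assuming \eqref{eq:GeneratingProp1}, I would substitute this factorisation into the left-hand side and then use the commutativity of the gauge and conformal actions to slide the rightmost factor $(P,\Lambda)$ past $e^x$, landing it directly on $(\hat M,U)$ to produce $(\hat M',U')$. The equation then reads
\begin{equation*}
(P^\dagger,\Lambda^T).(e^{\rho(x)},e^{\lambda(x)}).e^x.(\hat M',U')=(\hat M,U).
\end{equation*}
Applying $(P,\Lambda)$ on the left cancels the leading $(P^\dagger,\Lambda^T)$, converts the right-hand side to $(P,\Lambda).(\hat M,U)=(\hat M',U')$, and yields exactly \eqref{eq:GeneratingProp2}. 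The reverse implication follows by the same manipulation run in the opposite direction, conjugating by $(P^\dagger,\Lambda^T)$ instead. The only point needing mild care is the order of operations, since we are mixing a left gauge action with a right conformal action; no other obstacle arises.
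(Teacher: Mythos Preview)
Your proof is correct and follows essentially the same approach as the paper's own proof: both use the conjugation identity for matrix exponentials to rewrite $(e^{\tilde\rho(x)},e^{\tilde\lambda(x)})$ as $(P^\dagger e^{\rho(x)}P,\Lambda^T e^{\lambda(x)}\Lambda)$, then invoke the commutativity of the gauge and conformal actions to transfer the equation from $(\hat M,U)$ to $(\hat M',U')$. The paper's version is simply more compressed, omitting the explicit factorisation and the step-by-step manipulation that you spell out.
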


\begin{proof}
Consider \eqref{eq:GeneratingProp1}. Using the relationship between the representations, we see that 
\begin{equation*}
\left(P^\dagger e^{{\rho}(x)}P,\Lambda^T e^{{\lambda}(x)}\Lambda\right).e^x.(\hat{M},U)=(\hat{M},U).
\end{equation*}
As the actions commute, we see that upon simplification we obtain \eqref{eq:GeneratingProp2}. Thus, the equations are equivalent.
\end{proof}

\section{Circular symmetry}\label{subsec:CircularSymmetry}

In this section, we find an equation describing all instantons with circular $t$-symmetry. We then discuss the connections between such instantons and hyperbolic and singular monopoles. 

First, we introduce the notion of circular $t$-symmetry, as given in Table~\ref{table:conformalsubgroups}. For $t\in[0,1]$, let $R_t:=\{\mathrm{diag}(e^{i\theta},e^{ti\theta})\mid \theta\in\mathbb{R}\}$. An instanton is said to have circular $t$-symmetry if it is equivariant under every element of $R_t$. Note that for $t\in\mathbb{Q}\cap[0,1]$, $R_t\simeq S^1$ and for $t\in[0,1]\setminus\mathbb{Q}$, $R_t\simeq\mathbb{R}$. 

As $R_t$ is always one-dimensional, we can use Proposition~\ref{prop:R}.
\begin{theorem}
Let $\hat{M}\in\mathcal{M}_{n,k}$ and let $t\in[0,1]$. Then $\hat{M}$ has circular $t$-symmetry if and only if there exists $\rho\in\mathfrak{so}(k)$ such that
\label{thm:circularsym}
\begin{align}
tMi-iM+[\rho,M]&=0;
\label{eq:circularsymmetry}\\
[\rho,R]&=0.\label{eq:circularsymmetryextra}
\end{align}
\end{theorem}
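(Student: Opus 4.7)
The plan is to apply Proposition~\ref{prop:R} directly, since $R_t$ is always one-dimensional. The Lie algebra generator of $R_t$ is $\Upsilon=\mathrm{diag}(i,ti)\in\mathfrak{sp}(2)$, and differentiating the conformal action $(a,b)\mapsto(a e^{ti\theta},b e^{i\theta})$ at $\theta=0$ (equivalently, invoking Lemma~\ref{lemma:conformalaction}) yields the infinitesimal conformal action $(\hat{M}_\Upsilon,U_\Upsilon)=(\hat{M}(ti),Ui)$. Proposition~\ref{prop:R} then reduces circular $t$-symmetry of $\hat{M}$ to the existence of a gauge-algebra element $(\sigma,\rho)\in\mathfrak{sp}(n+k)\oplus\mathfrak{gl}(k,\mathbb{R})$ whose infinitesimal gauge action matches this conformal action on $(\hat{M},U)$.

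Because $R_t\subseteq\mathrm{Sp}(1)\times\mathrm{Sp}(1)$ consists of isometries, the next step is to apply Corollary~\ref{cor:IsomS} to the one-parameter family $K(\theta)=e^{\theta\rho}$ that realises the equivariance at finite level. That corollary forces $K(\theta)\in\mathrm{O}(k)$ and $[K(\theta),R]=0$ for every $\theta$; differentiating at $\theta=0$ yields $\rho\in\mathfrak{so}(k)$ and $[\rho,R]=0$, which is \eqref{eq:circularsymmetryextra}. Moreover, the corollary rigidly determines the block structure of the $\mathrm{Sp}(n+k)$-part of the gauge element in terms of $\rho$ via Lemma~\ref{lemma:focusonY}, so the $L$-component of the Proposition~\ref{prop:R} equation is solved automatically and contributes no new constraint; here the positive-definiteness of $LL^\dagger$ from Definition~\ref{def:Mstd} is precisely what guarantees unique solvability of that block.

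Finally, the $M$-block of the Proposition~\ref{prop:R} equation is the infinitesimal version of the identity $a^\dagger KMK^T=Mb^\dagger$ that appears in Lemma~\ref{lemma:focusonY}: differentiating $e^{-i\theta}e^{\theta\rho}Me^{\theta\rho^T}=Me^{-ti\theta}$ at $\theta=0$ produces $\rho M-M\rho-iM=-tMi$, equivalently $tMi-iM+[\rho,M]=0$, which is \eqref{eq:circularsymmetry}. The converse direction uses the same identities in reverse: given $\rho\in\mathfrak{so}(k)$ satisfying \eqref{eq:circularsymmetry} and \eqref{eq:circularsymmetryextra}, the explicit reconstruction from Lemma~\ref{lemma:focusonY} produces the $\mathrm{Sp}(n+k)$-component, and together these assemble a gauge-algebra element witnessing equivariance. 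The main obstacle is purely bookkeeping: keeping track of the left-versus-right action conventions and the induced sign in Proposition~\ref{prop:R} when transported to our conventions for $\mathcal{N}_{n,k}$, and verifying that the $\sigma_{11}$ produced from Lemma~\ref{lemma:focusonY} genuinely satisfies $\sigma_{11}L-L\rho=tLi$ whenever the two stated equations hold.
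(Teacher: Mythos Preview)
Your proposal is correct and follows essentially the same route as the paper: apply Proposition~\ref{prop:R} to the one-dimensional group $R_t$, then use Corollary~\ref{cor:IsomS} on the resulting one-parameter family $e^{\theta\rho}$ to force $\rho\in\mathfrak{so}(k)$ and $[\rho,R]=0$, and read off \eqref{eq:circularsymmetry} from the $M$-block. For the converse the paper makes one step more explicit than you do: rather than assembling a gauge-\emph{algebra} element, it integrates directly by showing $A(\theta):=e^{-i\theta}e^{\rho\theta}Me^{-\rho\theta}e^{ti\theta}$ is constant (its derivative vanishes by \eqref{eq:circularsymmetry}), which yields the group-level hypothesis $a^\dagger KMK^T=Mb^\dagger$ needed to invoke Lemma~\ref{lemma:focusonY} for each fixed $\theta$.
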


\begin{definition}
The matrix $\rho$ is called the \textbf{generator} of the circular $t$-symmetry of $\hat{M}$.
\end{definition}

\begin{note}
All connected Lie subgroups of $\mathrm{Sp}(2)$ with Lie algebra $\mathbb{R}$ are conjugate to some $R_t$. That is, of the form $AR_tA^\dagger$, for some $A\in\mathrm{Sp}(2)$. Instantons equivariant under this group are of the form $A^\dagger.(\hat{M},U)$, with $\hat{M}$ equivariant under $R_t$.
\end{note}

\begin{proof}
Let $\mathcal{X}$ be the smooth manifold comprised of all pairs of $(n+k)\times k$ quaternionic matrices. The gauge and conformal actions can easily be expanded from $\mathcal{M}_{n,k}$ to smooth actions on $\mathcal{X}$. Just as with $\mathcal{M}_{n,k}$, the conformal action on $\mathcal{X}$ descends to an action on $\mathcal{X}/(\mathrm{Sp}(n+k)\times\mathrm{GL}(k,\mathbb{R}))$.

In particular, we focus on the action of $R_t$ on $\mathcal{X}$. Regardless of the value of $t$, $R_t$ is a connected, one-dimensional Lie group. As such, Proposition~\ref{prop:R} tells us that $\hat{M}$ has circular $t$-symmetry if and only if there is $\rho\in\mathfrak{gl}(k)$ and $\varsigma\in\mathfrak{sp}(n+k)$ such that for all $\theta\in\mathbb{R}$, we have
\begin{equation}
\theta\left(t\hat{M}i,Ui\right)-\theta\left(\varsigma\hat{M}-\hat{M}\rho,\varsigma U-U\rho\right)=(0,0).\label{eq:circularsymearly}
\end{equation}
Indeed, the two terms correspond to the actions of $\mathrm{diag}(i,ti)$ and $(\varsigma,\rho)$ on $(\hat{M},U)$, respectively.

In particular, from the proof of Proposition~\ref{prop:R}, we know that for all $\theta\in\mathbb{R}$, we have $\mathrm{diag}(e^{i\theta},e^{ti\theta}).(\hat{M},U)=(e^{\varsigma\theta},e^{\rho\theta}).(\hat{M},U)$~\cite[Proposition~1.2]{lang_moduli_2024}. By Corollary~\ref{cor:IsomS}, we know that for all $\theta\in\mathbb{R}$, $e^{\rho\theta}\in\mathrm{O}(k)$, $[R,e^{\rho\theta}]=0$, and $e^{\varsigma\theta}=\mathrm{diag}(Le^{-ti\theta}e^{\rho\theta}L^\dagger(LL^\dagger)^{-1},e^{-i\theta}e^{\rho\theta})$. 

From the above, we can differentiate and evaluate at $\theta=0$ to find $\rho\in\mathfrak{so}(k)$, $[R,\rho]=0$, and $\varsigma=\mathrm{diag}(L(-ti+\rho)L^\dagger(LL^\dagger)^{-1},\rho-i)$. Substituting into \eqref{eq:circularsymearly}, we find that $\hat{M}$ has circular $t$-symmetry if and only if there is $\rho\in\mathfrak{so}(k)$ such that for all $\theta\in\mathbb{R}$, we have
\begin{equation}
\theta \begin{bmatrix}
-tLiL^\dagger (LL^\dagger)^{-1}L+L\rho L^\dagger (LL^\dagger)^{-1}L-L\rho +tLi \\
tMi-iM+[\rho ,M]
\end{bmatrix}=0.\label{eq:circularsymconditions}
\end{equation}

Suppose $\hat{M}$ has circular $t$-symmetry. From above, $[R,\rho]=0$. Additionally, evaluating \eqref{eq:circularsymconditions} at $\theta=1$ and focusing on the bottom row, we see that we get \eqref{eq:circularsymmetry}. 

Conversely, similar to the proof of Proposition~\ref{prop:R}, suppose that \eqref{eq:circularsymmetry} and \eqref{eq:circularsymmetryextra} hold for some $\rho\in\mathfrak{so}(k)$~\cite[Proposition~1.2]{lang_moduli_2024}. Let $\mathrm{diag}(e^{i\theta_0},e^{ti\theta_0})\in R_t$. Let $K(\theta):=e^{\rho\theta}\in\mathrm{O}(k)$. We show that $A(\theta):=e^{-i\theta}K(\theta)MK(\theta)^Te^{ti\theta}$ is constant. Indeed,
\begin{equation*}
A'(\theta)=e^{-i\theta}K(\theta)(-iM+[\rho,M]+tMi)K(\theta)^Te^{ti\theta}=0.
\end{equation*}
Hence, $A$ is constant, so 
\begin{equation*}
A(\theta_0)=A(0)=M.
\end{equation*}
As $[\rho,R]=0$, $[K(\theta),R]=0$, so by Lemma~\ref{lemma:focusonY}, we have that $\hat{M}$ is $\mathrm{diag}(e^{i\theta_0},e^{ti\theta_0})$-equivariant. As $\theta_0$ was arbitrary, we have circular $t$-symmetry.
\end{proof}

As given in Table~\ref{table:conformalsubgroups}, an instanton is said to have toral symmetry if it is equivariant under $\mathrm{diag}(e^{i\phi_1},e^{i\phi_2})$ for all $\phi_1,\phi_2\in\mathbb{R}$. It turns out that we are only interested in circular $t$-symmetry when $t$ is rational; otherwise, it is equivalent to toral symmetry. 
\begin{prop}
Let $\hat{M}\in\mathcal{M}_{n,k}$ and $t\in[0,1]\setminus\mathbb{Q}$. Then $\hat{M}$ has circular $t$-symmetry if and only if it has toral symmetry.\label{prop:irrationaltorus}
\end{prop}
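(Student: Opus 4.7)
The forward implication is immediate: if $\hat{M}$ has toral symmetry, then in particular it is equivariant under every element $\mathrm{diag}(e^{i\theta},e^{ti\theta})$ of $R_t \subseteq T$, so it has circular $t$-symmetry. The content is the reverse implication, and the plan is to deduce it from a density argument. Specifically, the classical Kronecker density theorem tells us that when $t \in [0,1] \setminus \mathbb{Q}$, the one-parameter subgroup $R_t$ is dense in the two-torus $T := \{\mathrm{diag}(e^{i\phi_1},e^{i\phi_2}) : \phi_1,\phi_2 \in \mathbb{R}\} \subseteq \mathrm{Sp}(2)$. So it will suffice to prove that the set
\begin{equation*}
H := \{A \in T : \hat{M} \text{ is } A\text{-equivariant}\}
\end{equation*}
is closed in $T$; since $H \supseteq R_t$ and $R_t$ is dense, this forces $H = T$.

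To show $H$ is closed, the strategy is to realize it as the image of a compact set under a continuous projection. The danger is that a priori the gauge group $\mathrm{Sp}(n+k)\times\mathrm{GL}(k,\mathbb{R})$ is non-compact, so ``there exist $(Q,K)$ realizing equivariance'' is not obviously a closed condition on $A$. This is where Corollary~\ref{cor:IsomS} is essential: since every element of $T$ is an isometry of the form $\mathrm{diag}(a,b)$, it tells us that any realization forces $K \in \mathrm{O}(k)$, $[K,R]=0$, and $Q = \mathrm{diag}(Lb^\dagger K L^\dagger(LL^\dagger)^{-1}, a^\dagger K)$. Combined with Lemma~\ref{lemma:focusonY}, this gives the clean characterization that $\hat{M}$ is $\mathrm{diag}(a,b)$-equivariant if and only if there exists $K\in\mathrm{O}(k)$ satisfying the two closed polynomial conditions $[K,R]=0$ and $a^\dagger K M K^T = M b^\dagger$.

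Thus I would introduce the set
\begin{equation*}
S := \{(K,\mathrm{diag}(a,b))\in \mathrm{O}(k)\times T : [K,R]=0,\ a^\dagger K M K^T = M b^\dagger\},
\end{equation*}
observe that $S$ is closed in the compact space $\mathrm{O}(k)\times T$ (hence itself compact), and note that $H$ is exactly the image of $S$ under the continuous projection to $T$. Since continuous images of compact sets are compact and $T$ is Hausdorff, $H$ is closed in $T$, completing the argument by density. The main obstacle I anticipate is conceptual rather than computational: being careful that the density argument is applied at the level of the existence of some realizing $K$ (not a fixed one, which might not converge along a sequence in $R_t$), which is precisely what the compactness of $\mathrm{O}(k)$ buys us once Corollary~\ref{cor:IsomS} has reduced the problem off of the non-compact factor $\mathrm{GL}(k,\mathbb{R})$.
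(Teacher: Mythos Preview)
Your proposal is correct and follows essentially the same approach as the paper: both arguments reduce to the compact gauge factor $\mathrm{O}(k)$ via Corollary~\ref{cor:IsomS}, define a stabilizer set $S$ inside $T \times \mathrm{O}(k)$ cut out by closed conditions, and conclude that $H = \pi_T(S)$ is closed by compactness of $\mathrm{O}(k)$, whence density of $R_t$ in $T$ forces $H = T$. The only cosmetic difference is that you phrase the defining conditions of $S$ algebraically via Lemma~\ref{lemma:focusonY}, whereas the paper writes them as a preimage under the full gauge-action map; and you argue ``image of compact is compact'' where the paper says ``projection with compact fibre is a closed map''---these are equivalent.
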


\begin{proof}
Let $\hat{M}\in\mathcal{M}_{n,k}$ and $t\in[0,1]\setminus\mathbb{Q}$. If $\hat{M}$ has toral symmetry, then we see that it has circular $t$-symmetry. Conversely, suppose that $\hat{M}$ has circular $t$-symmetry. 

Let $H:=\{\mathrm{diag}(e^{i\phi_1},e^{i\phi_2})\mid \mathrm{diag}(e^{i\phi_1},e^{i\phi_2}).[(\hat{M},U)]=[(\hat{M},U)]\}$. That is, $H$ is the group of symmetries of $\hat{M}$, restricted to toral symmetries. As $\hat{M}$ has circular $t$-symmetry, $R_t\subseteq H$. 

In previous work, I prove that if the gauge group acts on a smooth manifold properly, then the group of symmetries is closed~\cite[Proposition~2.4]{lang_moduli_2024}. In particular, this result holds if the gauge group is compact. Here, we do not have a compact gauge group; however, we obtain the same result, as we can reduce our gauge group to a compact one, as per Lemma~\ref{lemma:isomsimplify}. 

Let $S\subseteq S^1\times S^1\times\mathrm{O}(k)$ be the stabilizer group of $(\hat{M},U)$ restricted to toral rotations. That is
\begin{multline*}
S:=\{(\mathrm{diag}(e^{i\theta_1},e^{i\theta_2}),K)\mid (\mathrm{diag}(Le^{-i\theta_2}KL^\dagger (LL^\dagger)^{-1},e^{-i\theta_1} K),K).\\
\mathrm{diag}(e^{i\theta_1},e^{i\theta_2}).(\hat{M},U)=(\hat{M},U)\}.
\end{multline*}
Indeed, if $(Q,K)\in\mathrm{Sp}(n+k)\times\mathrm{GL}(k,\mathbb{R})$ such that $\mathrm{diag}(e^{i\theta_1},e^{i\theta_2}).(Q,K).(\hat{M},U)=(\hat{M},U)$, then Corollary~\ref{cor:IsomS} tells us that $K\in\mathrm{O}(k)$, $[R,K]=0$, and $Q$ is given by the diagonal quaternionic matrix above. Therefore, the pairs in $S$ encapsulate all the toral symmetry of the instanton. That $S$ is a group follows from this fact. For this proof, only the fact that all toral symmetry is captured by $S$ matters. We examine this stabilizer group more in the proof of Theorem~\ref{thm:toralsym}.

Following the proof of Theorem~\ref{thm:mainthm}, we have that $S$ is closed. Indeed, it is the preimage of $(\hat{M},U)$ under the smooth map $f\colon S^1\times S^1\times \mathrm{O}(k)\rightarrow \mathrm{Mat}(n+k,k,\mathbb{H})^{\oplus 2}$ given by 
\begin{equation*}
f(\mathrm{diag}(e^{i\theta_1},e^{i\theta_2}),K):=(\mathrm{diag}(Le^{-i\theta_2}KL^\dagger (LL^\dagger)^{-1},e^{-i\theta_1} K),K).\\
\mathrm{diag}(e^{i\theta_1},e^{i\theta_2}).(\hat{M},U).
\end{equation*} 

We see that $H=\pi_1(S)$, where $\pi_1\colon S^1\times S^1\times\mathrm{O}(k)\rightarrow S^1\times S^1$. As $\mathrm{O}(k)$ is compact, $\pi_1$ is a closed map, so $H=\pi_1(S)$ is closed.

Note that as $t$ is irrational, $R_t$ is dense in $S^1\times S^1$. Then, as $R_t\subseteq H$, we have that 
\begin{equation*}
S^1\times S^1=\overline{R_t}\subseteq \overline{H}=H\subseteq S^1\times S^1.
\end{equation*}
Therefore, $\hat{M}$ has toral symmetry.
\end{proof}

When searching for instantons with circular $t$-symmetry, we need not check the final condition of $\mathcal{M}_{n,k}$ in Definition~\ref{def:Mstd} everywhere, as Lemma~\ref{lemma:finalconditioncirc} establishes. Note the similarity to previous work dealing with hyperbolic monopoles~\cite[Lemma~5]{lang_hyperbolic_2023}.
\begin{lemma}
Suppose that $\hat{M}$ satisfies \eqref{eq:circularsymmetry} and \eqref{eq:circularsymmetryextra} for some $t\in[0,1]$ and $\rho\in\mathfrak{so}(k)$ as well as the first three conditions of Definition~\ref{def:Mstd}. If the final condition is satisfied at all $x=x_0+x_1i+x_2j\in\mathbb{H}$ with $x_2\geq 0$, then $\hat{M}\in\mathcal{M}_{n,k}$. \label{lemma:finalconditioncirc}
\end{lemma}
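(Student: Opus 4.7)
The plan is to exploit the would-be circular $t$-symmetry of $\hat{M}$ to relate $\Delta(x)^\dagger\Delta(x)$ across the $R_t$-orbit of any $x\in\mathbb{H}$, and then observe that every orbit meets the half-space $\{x_2\geq 0\}$.

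First, I would argue that the hypothesis already delivers equivariance of $\hat{M}$ under the full group $R_t$, \emph{without} needing the fourth condition of Definition~\ref{def:Mstd}. Specifically, I would apply Lemma~\ref{lemma:focusonY} with $a=e^{i\theta}$, $b=e^{ti\theta}$, and $K(\theta):=e^{\rho\theta}\in\mathrm{O}(k)$: the algebraic identities $K(\theta)MK(\theta)^T=e^{i\theta}Me^{-ti\theta}$ and $[K(\theta),R]=0$ follow from \eqref{eq:circularsymmetry} and \eqref{eq:circularsymmetryextra} exactly as in the reverse direction of Theorem~\ref{thm:circularsym}, and the proof of Lemma~\ref{lemma:focusonY} only uses the symmetry of $M$, positive definiteness of $LL^\dagger$, and the form of $R$. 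Hence for each $\theta\in\mathbb{R}$ there exists $Q(\theta)\in\mathrm{Sp}(n+k)$ with $(Q(\theta),K(\theta)).\mathrm{diag}(e^{i\theta},e^{ti\theta}).(\hat{M},U)=(\hat{M},U)$.

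Next, I would translate this into an identity for the ADHM matrix. On the one hand, the conformally transformed data $(\hat{M}e^{ti\theta},Ue^{i\theta})$ yields $\Delta'(x)=\Delta(y)e^{ti\theta}$, where $y:=e^{i\theta}xe^{-ti\theta}$. On the other hand, since $K(\theta)$ is real, it commutes element-wise with the quaternion scalar $x$, giving $\Delta'(x)=Q(\theta)^{-1}\Delta(x)K(\theta)$. Equating these and taking $\Delta^\dagger\Delta$ produces
\begin{equation*}
\Delta(x)^\dagger\Delta(x)=K(\theta)e^{-ti\theta}\Delta(y)^\dagger\Delta(y)e^{ti\theta}K(\theta)^T.
\end{equation*}
The note following Definition~\ref{def:Mstd} ensures $\Delta(y)^\dagger\Delta(y)$ is real (as $R$ is), so the quaternion scalars $e^{\pm ti\theta}$ pass through element-wise and cancel, leaving $\Delta(x)^\dagger\Delta(x)=K(\theta)\Delta(y)^\dagger\Delta(y)K(\theta)^T$. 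Since $K(\theta)\in\mathrm{O}(k)$, non-singularity at $y$ is equivalent to non-singularity at $x$.

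Finally, I would show that each orbit contains a point with $x_2\geq 0$. Writing $x=z_1+z_2 j$ with $z_1,z_2\in\mathbb{C}=\mathbb{R}\langle 1,i\rangle$, and using $je^{-ti\theta}=e^{ti\theta}j$, one computes $y=z_1 e^{i(1-t)\theta}+z_2 e^{i(1+t)\theta}j$, whose $j$-coordinate is $\Re(z_2 e^{i(1+t)\theta})$. Because $1+t>0$ for $t\in[0,1]$, as $\theta$ ranges over $\mathbb{R}$ the factor $e^{i(1+t)\theta}$ sweeps the entire unit circle, so $\theta$ can be chosen to make this real part non-negative (trivially so if $z_2=0$). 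By hypothesis $\Delta(y)^\dagger\Delta(y)$ is then non-singular at such a $y$, and the previous display propagates non-singularity back to $x$. The only subtle point is the first step: that Lemma~\ref{lemma:focusonY} remains applicable even though we have not yet verified $\hat{M}\in\mathcal{M}_{n,k}$; this is the main obstacle, and it is resolved by simply inspecting its proof to confirm that the fourth condition of Definition~\ref{def:Mstd} is never invoked.
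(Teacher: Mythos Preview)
Your approach is essentially identical to the paper's: invoke the converse direction of Theorem~\ref{thm:circularsym} (via Lemma~\ref{lemma:focusonY}) to obtain equivariance, use it to conjugate $\Delta(\cdot)^\dagger\Delta(\cdot)$ along an $R_t$-orbit, and then check that every orbit meets the prescribed half-space. Your observation that Lemma~\ref{lemma:focusonY} does not require the fourth condition of Definition~\ref{def:Mstd} is exactly the point the paper uses implicitly.

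There is, however, a slip in your final step. The hypothesis set is $\{x_0+x_1 i+x_2 j:\ x_2\geq 0\}$, which has \emph{no $k$-component}; it is not enough to land at a point with $x_2\geq 0$ while $x_3$ is arbitrary. In your notation, the $j$- and $k$-coordinates of $y$ are $\Re\bigl(z_2 e^{i(1+t)\theta}\bigr)$ and $\Im\bigl(z_2 e^{i(1+t)\theta}\bigr)$ respectively, so you need $z_2 e^{i(1+t)\theta}$ to be a non-negative \emph{real} number, not merely to have non-negative real part. Fortunately your own observation that $e^{i(1+t)\theta}$ sweeps the full unit circle already gives this: choose $\theta$ so that $z_2 e^{i(1+t)\theta}=|z_2|$. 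With that correction the argument is complete and matches the paper's proof.
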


\begin{proof}
Let $z=z_0+z_1i+z_2j+z_3k\in \mathbb{H}$. Note that $\mathrm{diag}(e^{i\theta},e^{ti\theta})$ acts on $z$ as $z\mapsto e^{i\theta}ze^{-ti\theta}$. Specifically, it takes $z\mapsto (z_0+z_1i)e^{(1-t)i\theta}+(z_2+z_3i)e^{(1+t)i\theta}j$. Thus, there is some $\theta\in\mathbb{R}$ such that $x:=e^{i\theta}ze^{-ti\theta}$ has no $k$ component and the $j$ component is non-negative.

As $\hat{M}$ has circular $t$-symmetry, there is a pair $(Q,K)\in\mathrm{Sp}(n+k)\times \mathrm{O}(k)$ such that $(Q,K).\mathrm{diag}(e^{i\theta},e^{ti\theta}).(\hat{M},U)=(\hat{M},U)$. Recall that $K$ is necessarily an orthogonal matrix by Corollary~\ref{cor:IsomS}. Thus,
\begin{equation*}
\Delta(x)=(\hat{M}e^{ti\theta}-Ue^{i\theta}z)e^{-ti\theta}=Q^\dagger \Delta(z)Ke^{-ti\theta}.
\end{equation*}
Hence, $\Delta(x)^\dagger \Delta(x)=e^{ti\theta}K^T\Delta(z)^\dagger \Delta(z) Ke^{-ti\theta}$. As $\Delta(x)^\dagger \Delta(x)$ is real, we have
\begin{equation*}
\Delta(z)^\dagger \Delta(z) = K\Delta(x)^\dagger \Delta(x) K^T.
\end{equation*} 
As $\Delta(x)^\dagger\Delta(x)$ is non-singular, it follows that so too is $\Delta(z)^\dagger\Delta(z)$. 
\end{proof}

\subsection{Structure of circular symmetry}

In light of Theorem~\ref{thm:circularsym}, we know that we can find instantons with circular $t$-symmetry given a matrix $\rho\in\mathfrak{so}(k)$. Such a matrix generates a real representation of $\mathbb{R}$. Starting with such a real representation, we can narrow down the possible $M$, so we are only left with finding $L$ such that we have an instanton, which necessarily has circular $t$-symmetry. 

By proving that the group action of $S^1$ on $\mathbb{H}$ given by $e^{i\theta}.x:=xe^{i\theta}$, which is equivalent to the action of $R_0$ on $\mathbb{H}$, lifts to an action on the spinor bundle, Beckett proved that for instantons with circular $0$-symmetry, their symmetry is generated by representations of $S^1$~\cite[Proposition~4.2.1]{beckett_equivariant_2020}. Using Lie theory, we prove this result for circular $t$-symmetry, for all $t\in\mathbb{Q}\cap [0,1]$. That is, for all circular $t$-symmetry not equivalent to toral symmetry.

Note that a representation $\rho\colon\mathbb{R}\rightarrow\mathfrak{so}(k)$ corresponds to a representation of $S^1$ if and only if $\mathrm{exp}(\rho(2\pi))=I_k$. 
\begin{prop}
Let $t=\frac{a}{b}\in\mathbb{Q}\cap [0,1]$ and let $\hat{M}\in\mathcal{M}_{n,k}$. Then $\hat{M}$ has circular $t$-symmetry if and only if there exists a real representation $\rho\colon\mathbb{R}\rightarrow\mathfrak{so}(k)$, corresponding to a representation of $S^1$, such that for all $\theta\in\mathbb{R}$,
\label{prop:circulartsymrep}
\begin{align}
a\theta Mi-b\theta iM+\frac{1}{2}[\rho(\theta),M]&=0;
\label{eq:circulartsymmetryrepversion}\\
[\rho(\theta),R]&=0.\label{eq:circulartsymmetryextrarepversion}
\end{align}
\end{prop}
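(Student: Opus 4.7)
The proposition has two directions. The backward direction is straightforward: given $\rho$ satisfying the proposition's conditions, define $\rho_{\mathrm{old}} := \rho(1)/(2b) \in \mathfrak{so}(k)$. Evaluating the proposition's first equation at $\theta = 1$ and dividing through by $b$ yields $tMi - iM + [\rho_{\mathrm{old}}, M] = 0$, while the second condition becomes $[\rho_{\mathrm{old}}, R] = 0$. By Theorem~\ref{thm:circularsym}, $\hat{M}$ has circular $t$-symmetry.

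For the forward direction, Theorem~\ref{thm:circularsym} provides some $\rho_{\mathrm{old}} \in \mathfrak{so}(k)$ satisfying the infinitesimal equations, and setting $\rho(\theta) := 2b\theta\rho_{\mathrm{old}}$ then satisfies the proposition's equations for every $\theta \in \mathbb{R}$. The only remaining requirement is that $\rho$ integrate to an $S^1$-representation, equivalently $\exp(\rho(2\pi)) = \exp(4\pi b\rho_{\mathrm{old}}) = I_k$. Hence, the proposition reduces to the claim that $\rho_{\mathrm{old}}$ may be chosen so that its one-parameter subgroup in $\mathrm{O}(k)$ closes up with period dividing $4\pi b$.

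The strategy is to exploit that $R_t \cong S^1$ is a compact circle when $t$ is rational. By Corollary~\ref{cor:IsomS}, the gauge data realizing equivariance may be taken in the compact group $\mathrm{O}(k)$, with $Q$ uniquely determined by $K$. I would consider the closed subgroup
\begin{equation*}
\Gamma := \{(\sigma, K) \in R_t \times \mathrm{O}(k) \mid (\sigma, K) \text{ realizes equivariance for } \hat{M}\},
\end{equation*}
which is compact and, by the hypothesis of circular $t$-symmetry, projects surjectively onto $R_t$. The plan is to construct a continuous Lie group section $\psi\colon R_t \to \Gamma$ of this projection, landing in the identity component $\Gamma_0$. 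Composing with projection to $\mathrm{O}(k)$ yields a Lie group homomorphism $S^1 \to \mathrm{O}(k)$ whose differential at the identity provides the desired $\rho_{\mathrm{old}}$; the group-homomorphism property automatically ensures $\exp(2\pi b \rho_{\mathrm{old}}) = I_k$, hence $\exp(4\pi b \rho_{\mathrm{old}}) = I_k$.

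The main obstacle is constructing the section $\psi$. Taking a maximal torus $T \subseteq \Gamma_0$, the projection $T \to R_t$ is surjective: since $R_t$ is abelian, $\Gamma_0 \to R_t$ factors through the abelianization of $\Gamma_0$, which is in turn the image of any maximal torus. Surjections of tori need not admit Lie group sections, so the delicate step is producing a sub-circle of $T$ mapping onto $R_t$. I would approach this by analyzing the character classifying $T \to R_t$ and, if necessary, adjusting $\rho_{\mathrm{old}}$ by elements of the Lie algebra of the stabilizer of $(\hat{M}, U)$ in $\mathrm{O}(k)$---i.e., by $\eta \in \mathfrak{so}(k)$ with $[\eta, M] = 0$ and $[\eta, R] = 0$---to reach an element generating a closed one-parameter subgroup. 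The proposition only requires period $4\pi b$ rather than the finer $2\pi b$, which should accommodate the double-cover flexibility that can arise when the character $T \to R_t$ fails to be primitive.
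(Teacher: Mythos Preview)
Your backward direction is correct and matches the paper's argument exactly.

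For the forward direction, you identify the right reduction---adjust the given $\rho_{\mathrm{old}}$ so that its one-parameter group closes up---and you correctly flag the obstruction: a surjection of tori onto $S^1$ need not split. But your proposal stops precisely at this point. The suggestion to ``analyze the character'' and ``adjust by the stabilizer'' is the right instinct, but nothing in your argument explains why such an adjustment with the required period actually exists; nor does the $4\pi b$-versus-$2\pi b$ slack resolve the issue, since the degree of the torus character could in principle exceed $2$. This is a genuine gap: the existence of a suitable sub-circle is exactly the content of the proposition, and it does not follow from compactness and maximal-torus generalities alone.

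The paper fills this gap by an explicit, elementary computation rather than an abstract section argument. One block-diagonalises $\rho_{\mathrm{old}} = U\,\mathrm{diag}(\rho_1,\dots,\rho_m)\,U^T$ with each $\rho_l$ either zero or a $2\times 2$ rotation block with parameter $a_l$, and decomposes $M$ accordingly. The constraint $tMi - iM + [\rho_{\mathrm{old}},M]=0$ then forces, for each nonzero block $M_{lp}$, that $|a_l\pm a_p| = 1\pm t$, and for each nonzero diagonal block (when $t\neq 1$) that $|2a_l\pm t|=1$. The key observation is that only these \emph{relations} matter, not the individual $a_l$. So one builds the equivalence relation generated by $l\sim p$ iff $M_{lp}\neq 0$, chooses a base value $a_l'$ in each class (either forced by a nonzero diagonal block, giving $2ba_l'\in\mathbb{Z}$ automatically since $t=a/b$, or freely set to $0$), and propagates through the class via the relations $a_l'-a_p'\in\{\pm(1\pm t)\}$, each step preserving $2ba_l'\in\mathbb{Z}$. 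The resulting $\rho'$ satisfies the same constraint on $M$, and $\tilde\rho(\theta):=2b\theta\rho'$ exponentiates to an $S^1$-representation. This explicit eigenvalue bookkeeping is what your abstract approach was missing; without it, there is no guarantee the stabilizer coset $\rho_{\mathrm{old}}+\mathfrak{s}$ meets the lattice $\tfrac{1}{2b}X_*$.
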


\begin{definition}
We call $\rho$ the \textbf{generating representation} of the circular $t$-symmetry of $\hat{M}$.
\end{definition}

\begin{proof}
If such a $\rho$ exists, by Theorem~\ref{thm:circularsym}, taking $\tilde{\rho}:=\frac{1}{2b}\rho(1)$, we have circular $t$-symmetry. Conversely, suppose that $\hat{M}$ has circular $t$-symmetry. By Theorem~\ref{thm:circularsym}, there exists $\rho\in\mathfrak{so}(k)$ satisfying \eqref{eq:circularsymmetry} and \eqref{eq:circularsymmetryextra}.

As $\rho\in\mathfrak{so}(k)$, there exists $U\in\mathrm{O}(k)$ such that 
\begin{equation*}
\rho=U\mathrm{diag}(\rho_1,\ldots,\rho_m)U^T,
\end{equation*}
with each $\rho_l$ being either zero or $\begin{bmatrix}
0 & a_l \\ -a_l & 0
\end{bmatrix}$ for some $a_l\in\mathbb{R}$. We can write $M$ according to the decomposition of $\rho$ as 
\begin{equation*}
M=U\begin{bmatrix}
M_{11} & \cdots & M_{1m} \\
\vdots & \ddots & \vdots \\
M_{1m}^T & \cdots & M_{mm}
\end{bmatrix}U^T.
\end{equation*}

Consider \eqref{eq:circularsymmetry}. We can multiply on the right by $U$ and the left by $U^T$ to get
\begin{equation*}
\left[\mathrm{diag}(\rho_1,\ldots,\rho_m),\begin{bmatrix}
M_{11} & \cdots & M_{1m} \\
\vdots & \ddots & \vdots \\
M_{1m}^T & \cdots & M_{mm}
\end{bmatrix}\right]=\begin{bmatrix}
iM_{11}-tM_{11}i & \cdots & iM_{1m}-tM_{1m}i \\
\vdots & \ddots & \vdots \\
iM_{1m}^T-tM_{1m}^Ti & \cdots & iM_{mm}-tM_{mm}i
\end{bmatrix}.
\end{equation*}

The on-diagonal components, which are symmetric, must satisfy $[\rho_l,M_{ll}]=iM_{ll}-tM_{ll}i$. If $t\neq 1$ and $|2a_l\pm t|\neq 1$, then $M_{ll}=0$. If $t=1$ and if $a_l\notin \{-1,0,1\}$, then $M_{ll}=\alpha_l I_2$ for some $\alpha_l\in\mathbb{C}$. However, note that such a $M_{ll}$ satisfies the required equation for any value of $a_l$. The off-diagonal components must satisfy $\rho_{l}M_{lp}-M_{lp}\rho_p=iM_{lp}-tM_{lp}i$. If $|a_l\pm a_p|\neq 1\pm t$, then $M_{lp}=0$. It is important to note that the off-diagonal $M$ components do not depend on the exact values of the $a_l$ and $a_p$, but rather the relationships between them. 

Using the above relationships, we proceed to construct a new matrix $\rho'$ of the same form as $\rho$. Specifically, we will find $a_l'\in\mathbb{R}$ and define $\rho_l'$ to have the same form as the corresponding $\rho_l$, swapping $a_l$ with $a_l'$. These $a_l'$ will be chosen such that $\rho':=U\mathrm{diag}(\rho_1',\ldots,\rho_m')U^T$ still satisfies \eqref{eq:circularsymmetry} and \eqref{eq:circularsymmetryextra}, but a multiple of it generates a representation of $\mathbb{R}$ corresponding to a representation of $S^1$.

To begin constructing $\rho'$, define the relation $\sim_0$ on $\{1,\ldots,m\}$ by $l\sim_0 p$ if and only if $M_{lp}\neq 0$. This relation generates an equivalence relation, which we denote by $\sim$. Consider $[l]\in\{1,\ldots,m\}/\sim$. For such a class, if $M_{ll}=0$, then we are free to set $a_l'$ to anything. In particular, we may choose $a_l'=0$. If $t\neq 1$ and $M_{ll}\neq 0$, then from above, we know that $|2a_l\pm t|=1$, so we define $a_l':=a_l$. If $t=1$ and $M_{ll}$ is proportional to the identity, then we may choose $a_l'=0$. In any case, we have that $2ba_l'$ is an integer. 

Regardless of the value of $t$ or if $M_{ll}$ vanishes, for $p\in[l]$, the relationship between $a_p$ and $a_l$ is determined by the form of $M_{lp}$ if $M_{lp}\neq 0$. From above, we see that if $M_{lp}\neq 0$, then $|a_l\pm a_p|=1\pm t$. As $2ba_l'\in\mathbb{Z}$, we have that if $a_p'$ and $a_l'$ satisfy the same relationship as $a_p$ and $a_l$, then $a_p'\in\mathbb{Q}$. Specifically, we have $2ba_p'\in\mathbb{Z}$. Finally, as $p\in[l]$, if $M_{lp}=0$, there is some $q\in[l]$ such that $M_{lq}\neq 0$. Then there is some string of elements in $[l]$ connecting $p$ and $l$ by non-zero off-diagonal components. From above, we see that for each element in this string, $2ba_q'\in\mathbb{Z}$ and the same is true for $p$ in particular. Therefore, $2ba_p'\in\mathbb{Z}$ for all $p\in[l]$, including $l$. Note that by construction, $\rho'$ satisfies \eqref{eq:circularsymmetry}. That $\rho'$ satisfies \eqref{eq:circularsymmetryextra} follows from Corollary~\ref{cor:IsomS}.

Doing the above for each equivalence class, we define $\tilde{\rho}\colon\mathbb{R}\rightarrow \mathfrak{so}(k)$ by $\tilde{\rho}(\theta):=2b\theta \rho'$. From the work above, we note that $\mathrm{exp}(\tilde{\rho}(2\pi))=I_k$. Furthermore, by construction, $\tilde{\rho}$ satisfies \eqref{eq:circulartsymmetryrepversion} and \eqref{eq:circulartsymmetryextrarepversion}.
\end{proof}

Proposition~\ref{prop:circulartsymrep} tells us that we can use representation theory for $S^1$ to better understand instantons with circular $t$-symmetry, for $t\in\mathbb{Q}$. The generating representation was obtained by examining the bottom of \eqref{eq:circularsymconditions}. By focusing instead on the top of that equation, we obtain an induced representation.
\begin{lemma}
Suppose that $\hat{M}\in\mathcal{M}_{n,k}$ has circular $t$-symmetry generated by $\rho\in\mathfrak{so}(k)$. That is, $\rho$ satisfies \eqref{eq:circularsymmetry} and \eqref{eq:circularsymmetryextra}. Then $y:=L(\rho-ti)L^\dagger (LL^\dagger)^{-1}\in\mathfrak{sp}(n)$. 
\end{lemma}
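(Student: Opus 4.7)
My plan is to observe that this lemma is essentially implicit in the proof of Theorem~\ref{thm:circularsym}: the gauge element that compensates the action of $R_t$ on $\hat{M}$ is block-diagonal and lies in $\mathrm{Sp}(n+k)$, so its upper-left block sweeps out a smooth one-parameter family in $\mathrm{Sp}(n)$ through the identity, whose derivative at $\theta=0$ is exactly $y$.

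First I would set $K(\theta):=e^{\rho\theta}\in\mathrm{O}(k)$ and verify the hypotheses of Lemma~\ref{lemma:focusonY} for $(a,b)=(e^{i\theta},e^{ti\theta})$. Exponentiating \eqref{eq:circularsymmetryextra} gives $[K(\theta),R]=0$, and the required intertwining identity $e^{-i\theta}K(\theta)MK(\theta)^{T}e^{ti\theta}=M$ follows because the $\theta$-derivative of the left-hand side vanishes by \eqref{eq:circularsymmetry}, so the expression is constant and equals its value at $\theta=0$. Lemma~\ref{lemma:focusonY} then supplies the compensating gauge
\begin{equation*}
Q(\theta):=\mathrm{diag}\!\left(Le^{-ti\theta}e^{\rho\theta}L^\dagger(LL^\dagger)^{-1},\,e^{-i\theta}e^{\rho\theta}\right)\in\mathrm{Sp}(n+k).
\end{equation*}

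Next, since $Q(\theta)$ is block-diagonal, each block must lie in the relevant symplectic group individually. In particular, $f(\theta):=Le^{-ti\theta}e^{\rho\theta}L^\dagger(LL^\dagger)^{-1}$ is a smooth path in $\mathrm{Sp}(n)$ with $f(0)=I_n$, so $f'(0)\in T_{I_n}\mathrm{Sp}(n)=\mathfrak{sp}(n)$. Differentiating---using that the scalar quaternion $e^{-ti\theta}$ commutes with the real matrix $e^{\rho\theta}$ when both act on $L$ from the right---yields $f'(0)=L(\rho-ti)L^\dagger(LL^\dagger)^{-1}=y$, and hence $y\in\mathfrak{sp}(n)$.

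I expect no serious obstacle: the geometric content is already packaged in Corollary~\ref{cor:IsomS} and Lemma~\ref{lemma:focusonY}, so only bookkeeping remains. As a purely algebraic alternative, one could verify $y+y^\dagger=0$ directly by sandwiching with $LL^\dagger$: using $(\rho-ti)^\dagger=-\rho+ti$, the reality of $R$ together with \eqref{eq:circularsymmetryextra}, and the identity $[\rho,M^\dagger M]=t[i,M^\dagger M]$ (obtained by combining \eqref{eq:circularsymmetry} with its dagger and expanding $[\rho,M^\dagger M]=[\rho,M^\dagger]M+M^\dagger[\rho,M]$), the sandwich collapses to zero, and then positive definiteness of $LL^\dagger$ delivers $y+y^\dagger=0$.
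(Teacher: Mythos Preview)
Your proposal is correct. Your primary argument---differentiating the one-parameter family $f(\theta)=Le^{-ti\theta}e^{\rho\theta}L^\dagger(LL^\dagger)^{-1}\in\mathrm{Sp}(n)$ furnished by Lemma~\ref{lemma:focusonY}---is a genuinely different route from the paper's explicit proof of this lemma, which is purely algebraic: the paper shows directly that $[M^\dagger M,\rho-ti]=0$ via \eqref{eq:circularsymmetry}, combines this with \eqref{eq:circularsymmetryextra} to obtain $[L^\dagger L,\rho-ti]=0$, and hence $[LL^\dagger,L(\rho-ti)L^\dagger]=0$, which is equivalent to $y+y^\dagger=0$. Your group-theoretic approach is slicker in that it packages all of this into ``derivative of a curve in $\mathrm{Sp}(n)$ through the identity lies in $\mathfrak{sp}(n)$,'' at the cost of invoking Lemma~\ref{lemma:focusonY} (whose proof contains comparable algebra); it is also, as you note, essentially what is already implicit in the forward direction of the proof of Theorem~\ref{thm:circularsym}. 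Your sketched algebraic alternative is precisely the paper's argument, so either path is fine.
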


\begin{proof}
By Theorem~\ref{thm:circularsym}, we know that $tMi-iM+[\rho,M]=0$. Using this equation, we see
\begin{equation*}
\begin{aligned}
M^\dagger M(\rho-ti)&=M^\dagger[M,\rho]+M^\dagger \rho M-tM^\dagger Mi\\
&=M^\dagger \rho M-M^\dagger iM\\
&=[M^\dagger,\rho]M+\rho M^\dagger M-M^\dagger iM\\
&=(\rho-ti)M^\dagger M.
\end{aligned}
\end{equation*}
Hence, $[M^\dagger M,\rho-ti]=0$. Then, as $[\rho,R]=0$, by Theorem~\ref{thm:circularsym}, and $R$ is real, $[\rho-ti,R]=0$, so $[L^\dagger L,\rho-ti]=0$. Thus, we see that $[LL^\dagger ,L(\rho-ti)L^\dagger]=0$. Therefore, $y\in\mathfrak{sp}(n)$. 
\end{proof}

However, just as with $\rho$, as we are free to focus on rational values of $t$, we can improve on this result. 
\begin{prop}
Suppose that $\hat{M}\in\mathcal{M}_{n,k}$ has circular $t$-symmetry, with $t=\frac{a}{b}\in\mathbb{Q}\cap[0,1]$, generated by a Lie algebra representation $(\mathbb{R}^k,\rho)$, corresponding to a representation of $S^1$. That is, $\rho$ satisfies \eqref{eq:circulartsymmetryrepversion} and \eqref{eq:circulartsymmetryextrarepversion}. Let $\lambda(\theta):=L(\rho(\theta)-2ai\theta)L^\dagger(LL^\dagger)^{-1}\in\mathfrak{sp}(n)$. We have $(\mathbb{H}^n,\lambda)$ is a quaternionic representation of $\mathbb{R}$ corresponding to a representation of $S^1$.
\end{prop}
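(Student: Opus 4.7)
The plan is to verify the two defining properties of a quaternionic $S^1$-representation in turn: first, that $\lambda$ is a Lie algebra homomorphism $\mathbb{R}\to\mathfrak{sp}(n)$, and second, that $\exp(\lambda(2\pi)) = I_n$. The first will be a direct consequence of the preceding lemma after a rescaling, while the second will follow from exponentiating the explicit gauge transformation produced in the proof of Theorem~\ref{thm:circularsym} and exploiting both the integrality of $a$ and the periodicity of $\rho$.

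For the first property, since $\mathbb{R}$ is abelian, a Lie algebra homomorphism is just an $\mathbb{R}$-linear map. Because $\rho\colon\mathbb{R}\to\mathfrak{so}(k)$ is an abelian Lie algebra representation, it is automatically linear in $\theta$, so $\lambda(\theta)=\theta\,L(\rho(1)-2ai)L^\dagger(LL^\dagger)^{-1}$ is linear as well. To see that its values lie in $\mathfrak{sp}(n)$, set $\sigma := \rho(1)/(2b) \in \mathfrak{so}(k)$; dividing \eqref{eq:circulartsymmetryrepversion} at $\theta=1$ by $b$ and using \eqref{eq:circulartsymmetryextrarepversion} shows that $\sigma$ satisfies \eqref{eq:circularsymmetry} and \eqref{eq:circularsymmetryextra}. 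The preceding lemma then gives $L(\sigma - ti)L^\dagger(LL^\dagger)^{-1}\in\mathfrak{sp}(n)$, and a short manipulation yields $\lambda(\theta) = 2b\theta\,L(\sigma-ti)L^\dagger(LL^\dagger)^{-1}$, so $\lambda(\theta)\in\mathfrak{sp}(n)$.

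The main step is establishing $\exp(\lambda(2\pi)) = I_n$. Here I would invoke the explicit calculation in the proof of Theorem~\ref{thm:circularsym}: the one-parameter gauge realizing the $\mathrm{diag}(e^{i\theta},e^{ti\theta})$-equivariance has the form $e^{\varsigma\theta}=\mathrm{diag}(Le^{-ti\theta}e^{\sigma\theta}L^\dagger(LL^\dagger)^{-1},\,e^{-i\theta}e^{\sigma\theta})$, so comparing with $\varsigma=\mathrm{diag}(L(\sigma-ti)L^\dagger(LL^\dagger)^{-1},\sigma - i)$ yields the key identity
\begin{equation*}
e^{\theta L(\sigma-ti)L^\dagger(LL^\dagger)^{-1}} = Le^{-ti\theta}e^{\sigma\theta}L^\dagger(LL^\dagger)^{-1}.
\end{equation*}
In particular, $\exp(\lambda(\theta)/(2b)) = Le^{-ti\theta}e^{\sigma\theta}L^\dagger(LL^\dagger)^{-1}$. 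Evaluating at $\theta = 4\pi b$: the scalar exponential becomes $e^{-4\pi ai}=1$ because $a\in\mathbb{Z}$, while $e^{4\pi b\sigma}=e^{2\pi\rho(1)}=\exp(\rho(2\pi))=I_k$ by the hypothesis that $\rho$ corresponds to an $S^1$-representation. Thus $\exp(\lambda(2\pi)) = Le^{4\pi b\sigma}L^\dagger(LL^\dagger)^{-1} = LL^\dagger(LL^\dagger)^{-1} = I_n$, completing the argument.

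The main obstacle is recognizing that the already-computed equivariance gauge $Q(\theta)$ exponentiates precisely (a rescaling of) $\lambda(\theta)$; once this is spotted, the integrality of $a$ together with the $S^1$-periodicity of $\rho$ does all the remaining work. An honest proof would also want to note that $\lambda$ is well-defined independent of the representative $\rho$, but this is immediate since any two generating representations in the sense of Proposition~\ref{prop:circulartsymrep} differ by scaling and rearrangement on invariant subspaces that commute with $L^\dagger L$.
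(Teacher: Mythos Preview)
Your proof is correct and follows essentially the same approach as the paper. The only difference is in how you justify the exponential identity: the paper derives $e^{\lambda(\theta)}=Le^{\rho(\theta)-2a\theta i}L^\dagger(LL^\dagger)^{-1}$ directly from the commutativity $[LL^\dagger,\,L(\rho(\theta)-2a\theta i)L^\dagger]=0$ established in the preceding lemma, whereas you extract the same identity from the top-left block of the exponentiated gauge $e^{\varsigma\theta}$ computed in the proof of Theorem~\ref{thm:circularsym}; both routes lead to the same evaluation at $\theta=2\pi$ using $a\in\mathbb{Z}$ and $e^{\rho(2\pi)}=I_k$.
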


\begin{proof}
From the work above, as $\lambda(\theta)=2b\theta y$, we have $\lambda(\theta)\in\mathfrak{sp}(n)$. It remains to show that $e^{\lambda(2\pi)}=I_n$. Indeed, due to the commutativity of $LL^\dagger$ and $L(\rho(\theta)-2a\theta i)L^\dagger$, we have that 
\begin{equation*}
e^{\lambda(\theta)}=Le^{\rho(\theta)-2a\theta i}L^\dagger (LL^\dagger)^{-1}.
\end{equation*}
Taking $\theta=2\pi$, we see that as $\rho$ and $i$ commute,
\begin{equation*}
e^{\lambda(2\pi)}=Le^{\rho(2\pi)}e^{-4a\pi i}L^\dagger(LL^\dagger)^{-1}=I_n.
\end{equation*}
Therefore, $(\mathbb{H}^n,\lambda)$ corresponds to a representation of $S^1$.
\end{proof}

We know that this induced representation satisfies the following equation.
\begin{cor}
Suppose that $\hat{M}\in\mathcal{M}_{n,k}$ has circular $t$-symmetry, with $t=\frac{a}{b}\in\mathbb{Q}\cap[0,1]$, generated by $(\mathbb{R}^k,\rho)$, which corresponds to a representation of $S^1$. Then $\lambda(\theta) L-L\rho(\theta)+2a\theta Li=0$.
\end{cor}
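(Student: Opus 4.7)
The plan is to reduce the equation $\lambda(\theta) L = L\rho(\theta) - 2a\theta Li$ to the single observation, recorded in the proof of the preceding lemma, that $L^\dagger L$ commutes with $\rho - ti$ (and hence, after scaling by $2b\theta$, with $X(\theta) := \rho(\theta) - 2a\theta i$). Write $\lambda(\theta) = L X(\theta) L^\dagger (LL^\dagger)^{-1}$, so that the claim is equivalent to
\begin{equation*}
L X(\theta) L^\dagger (LL^\dagger)^{-1} L = L X(\theta).
\end{equation*}

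First I would note that $[L^\dagger L, X(\theta)] = 0$. Since $\rho(\theta) = 2b\theta\,\rho'$ for the generator $\rho' = \tfrac{1}{2b}\rho(1) \in \mathfrak{so}(k)$ extracted in Proposition~\ref{prop:circulartsymrep}, and since $\rho'$ satisfies $tMi - iM + [\rho',M] = 0$ with $t = a/b$, the computation carried out just before the corollary gives $[L^\dagger L, \rho' - ti] = 0$, and multiplication by $2b\theta$ then yields the claim. Next I would promote this to $[LL^\dagger, LX(\theta)L^\dagger] = 0$ by sandwiching: multiplying the commutator $L^\dagger L X(\theta) = X(\theta) L^\dagger L$ on the left by $L$ and on the right by $L^\dagger$ produces $(LL^\dagger)(LX(\theta)L^\dagger) = (LX(\theta)L^\dagger)(LL^\dagger)$. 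Since $LL^\dagger$ is invertible, I can pass the commutator through $(LL^\dagger)^{-1}$ to get
\begin{equation*}
\lambda(\theta) = L X(\theta) L^\dagger (LL^\dagger)^{-1} = (LL^\dagger)^{-1} L X(\theta) L^\dagger.
\end{equation*}

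From here the computation is essentially automatic:
\begin{equation*}
\lambda(\theta) L = (LL^\dagger)^{-1} L X(\theta) L^\dagger L = (LL^\dagger)^{-1} L L^\dagger L\, X(\theta) = L X(\theta) = L\rho(\theta) - 2a\theta L i,
\end{equation*}
where the middle equality uses $[L^\dagger L, X(\theta)] = 0$ and the next-to-last uses $(LL^\dagger)^{-1}(LL^\dagger) = I_n$.

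There is no real obstacle here; the work has already been done in establishing $\lambda(\theta) \in \mathfrak{sp}(n)$. The only point that requires a little care is confirming that the commutation $[L^\dagger L, \rho' - ti] = 0$, proved earlier for the un-rescaled generator, transfers to the $S^1$-lifted representation $\rho(\theta)$ used here; this is immediate because both sides of the commutator scale linearly in $\theta$. Everything else is algebraic manipulation of matrices using the fact that $L^\dagger(LL^\dagger)^{-1}$ is a right inverse of $L$.
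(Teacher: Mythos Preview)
Your proof is correct. The paper's own proof is a one-liner: the desired identity is precisely the top row of \eqref{eq:circularsymconditions}, already obtained in the proof of Theorem~\ref{thm:circularsym} from the explicit form of $\varsigma$ supplied by Corollary~\ref{cor:IsomS}, rescaled by $2b$ so that the generator $\rho'$ becomes the representation $\rho(\theta)=2b\theta\rho'$ and $t=a/b$ becomes $2a\theta$. You instead re-derive that top row from the commutation $[L^\dagger L,\rho-ti]=0$ established in the preceding lemma (which itself was extracted from the \emph{bottom} row together with $[\rho,R]=0$), and then push $(LL^\dagger)^{-1}$ through to compute $\lambda(\theta)L$ directly. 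Both routes are valid; the paper's is shorter because the equation was already on record, while yours is self-contained and makes transparent that only the two conditions \eqref{eq:circularsymmetry}--\eqref{eq:circularsymmetryextra} are needed, not the particular generator produced by Proposition~\ref{prop:R}.
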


\begin{proof}
This result follows from multiplying the top of \eqref{eq:circularsymconditions} by $2b$.
\end{proof}

\subsection{Hyperbolic monopoles}\label{subsubsec:HyperbolicMonopoles}

In this section, we discuss the connection between hyperbolic monopoles and circle-invariant instantons. In particular, we discuss how hyperbolic monopoles with integral weight correspond to circular $1$-symmetric ADHM data and how to transform this ADHM data into a hyperbolic monopole. This section is an extension of my previous work on hyperbolic monopoles to all hyperbolic monopoles with integral weight~\cite{lang_hyperbolic_2023}.

Let $E\rightarrow H^3$ be a vector bundle over $H^3$ with structure group $\mathrm{Sp}(n)$, equipped with a connection $A$, of curvature $F_A$, and a section $\Phi$ of $\mathrm{End}(E)$, called the Higgs field. Let $\star$ denote the Hodge star. Hyperbolic monopoles are solutions to the Bogomolny equations $F_A=\star D_A\Phi$, with finite energy $\varepsilon=\frac{1}{2\pi}\int_{H^3} |F_A|^2\mathrm{vol}_{H^3}$.

It is well known that the Bogomolny equations are a dimensional reduction of the self-dual equations. That is, monopoles are related to instantons that are independent of one coordinate. When this coordinate is one of the standard Cartesian coordinates, the instanton becomes translation invariant and either has no action (and is therefore flat and of no interest) or has infinite action. Atiyah realized that if this coordinate corresponded to that of a circle, then the compactness of the circle implies that the instanton has finite action. Then, by utilizing the conformal equivalence $S^4\setminus S^2\equiv H^3\times S^1$, he deduced that circle-invariant instantons correspond to hyperbolic monopoles~\cite{atiyah_instantons_1984,atiyah_magnetic_1984}. While Atiyah used a particular circle action, we see below that this conclusion is not true for every circle action.

Atiyah's idea is that given a circle action on $S^4$, remove some $S^2\subseteq S^4$ invariant under the action and use coordinates $(x,y,z,\theta)$ to describe $S^4\setminus S^2$. The coordinates $(x,y,z)$ correspond to coordinates on $H^3$ and $\theta$ corresponds to the coordinate on $S^1$. Specifically, $\theta$ is generated by the circle action. Then, an instanton that is equivariant under the circle action is independent of the $\theta$ coordinate and can be written as a hyperbolic monopole. Conversely, given a hyperbolic monopole, we can construct a circle-invariant instanton on $S^4\setminus S^2$. In order to extend the obtained instanton to all of $S^4$, the monopole must have integral mass~\cite{atiyah_magnetic_1984}. 

It turns out that only one conjugacy class of circle actions gives rise to hyperbolic monopoles.
\begin{lemma}
A circle action relates hyperbolic monopoles with integral mass and instantons invariant under this action if and only if the circle action is conjugate to $R_1$.\label{lemma:hypermonoR1}
\end{lemma}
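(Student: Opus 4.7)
My plan is to prove both directions by combining the classification of one-parameter subgroups of $\mathrm{Sp}(2)$ with a direct fixed-point analysis on $S^4$.

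For the forward direction, I would invoke Atiyah's classical construction~\cite{atiyah_instantons_1984,atiyah_magnetic_1984}. For the $R_1$-action $x\mapsto e^{i\theta}xe^{-i\theta}$ on $S^4=\mathbb{H}\cup\{\infty\}$, the fixed set is $\mathbb{C}\cup\{\infty\}\simeq S^2$, and writing $x=z_1+z_2j$ the quotient map $x\mapsto (z_1,|z_2|)$ identifies $(\mathbb{H}\setminus\mathbb{C})/R_1$ with the upper half-space model of $H^3$. This realizes the conformal equivalence $S^4\setminus S^2\equiv H^3\times S^1$ along which Atiyah reduces the self-dual equations to the Bogomolny equations; the integral mass condition is precisely what allows the resulting instanton to extend across the fixed $S^2$. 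Any conformal conjugate of $R_1$ gives an equivalent correspondence.

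For the converse, I would begin by supposing that some circle action on $S^4$ realizes the correspondence with hyperbolic monopoles of integral mass. By Conjecture~\ref{conj} the action is conjugate into $\mathrm{Sp}(2)$. Every connected one-dimensional subgroup of $\mathrm{Sp}(2)$ lies in a maximal torus, and since all maximal tori of $\mathrm{Sp}(2)$ are conjugate to the standard $T^2=\{\mathrm{diag}(e^{i\theta_1},e^{i\theta_2})\}$, the action is conjugate to $\{\mathrm{diag}(e^{ia\theta},e^{ib\theta})\mid\theta\in\mathbb{R}\}$ for some $a,b\in\mathbb{R}$. Rescaling and applying the Weyl group (which swaps the diagonal entries) reduces this to $R_t$ for some $t\in[0,1]$, with $t\in\mathbb{Q}$ forced by the circle condition. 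To pin down $t=1$, I would compute the fixed set of $R_t$ directly: using $jz=\bar{z}j$ for $z\in\mathbb{C}$, for $x=z_1+z_2j$ one has
\begin{equation*}
e^{i\theta}xe^{-ti\theta}=z_1e^{(1-t)i\theta}+z_2e^{(1+t)i\theta}j,
\end{equation*}
so demanding fixity for all $\theta$ yields $\mathbb{C}\cup\{\infty\}\simeq S^2$ when $t=1$ and only $\{0,\infty\}$ when $t\in[0,1)$.

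The hard part will be making rigorous the final claim that the correspondence with \emph{hyperbolic} monopoles specifically demands an $S^2$ fixed set. The natural concern is $R_0$, whose fixed set $\{0,\infty\}$ produces instead the Hopf-style correspondence with singular monopoles on $\mathbb{R}^3\setminus\{0\}$~\cite{pauly_gauge_1996}, a geometrically distinct situation. The resolution I plan to use is that only when the complement of the fixed set is conformally $H^3\times S^1$ does one obtain a monopole on hyperbolic three-space with the fixed set becoming its conformal boundary; among the $R_t$ this happens uniquely for $t=1$, completing the lemma.
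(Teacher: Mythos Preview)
Your approach is genuinely different from the paper's, and the difference lies exactly at the step you flagged as hard.

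You argue via the geometry of fixed sets: compute that $R_t$ fixes $S^2$ precisely when $t=1$ and only $\{0,\infty\}$ otherwise, then claim that the hyperbolic correspondence forces the complement of the fixed set to be conformally $H^3\times S^1$. The paper instead argues Lie-algebraically. It observes that if the correspondence holds, then the six-dimensional isometry group of $H^3$ together with the $S^1$ rotation gives a seven-dimensional group acting on $H^3\times S^1$; under the conformal equivalence these become conformal maps of $S^4$ commuting with the circle action. Thus the centralizer of $\mathrm{diag}(i,ti)$ in $\mathrm{Lie}(\mathrm{SL}(2,\mathbb{H}))$ must be at least seven-dimensional. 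A direct computation shows this centralizer has dimension $5$ when $t=0$, dimension $3$ when $t\in(0,1)$, and dimension $7$ when $t=1$, so only $t=1$ survives.

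What each buys: your fixed-point route is geometrically vivid and connects nicely to the later discussion of the $R_0$ (singular monopole) and isoclinic (hyperbolic Nahm) cases, but your final inference---that the correspondence with \emph{hyperbolic} monopoles forces the complement of the fixed set to be $H^3\times S^1$---needs an argument you have not supplied, and for $t\in(0,1)\cap\mathbb{Q}$ you would additionally have to contend with non-free orbits (finite isotropy along $z_2=0$ or $z_1=0$) making the quotient an orbifold rather than a manifold. The paper's centralizer argument bypasses all of this: it never mentions fixed sets and converts the question directly into a dimension count, which is both shorter and avoids the gap you identified.
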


\begin{note}
Both Braam--Austin and Manton--Sutcliffe's conformal circle actions are conjugate to $R_1$. Thus, they are equivalent. However, they lead to different models of hyperbolic space, allowing us to view monopoles differently.
\end{note}

\begin{proof}
The isometry group of $H^3$ is six-dimensional. Thus, adding rotation on $S^1$, we have a group with dimension at least seven acting on $H^3\times S^1$ as isometries. As $S^4\setminus S^2\equiv H^3\times S^1$, these isometries correspond to conformal maps on $S^4\setminus S^2$, which we can complete, obtaining conformal maps on $S^4$. 

Suppose that we have a circle action relating hyperbolic monopoles with integral mass and instantons invariant under this action. By Conjecture~\ref{conj}, we may assume that the circle action is conjugate to an action in $\mathrm{Sp}(2)$. This action is generated by an element in $\mathfrak{sp}(2)$. As this Lie algebra is compact, the adjoint orbit of an element of the Lie algebra has a non-empty intersection with any Cartan subalgebra. In particular, this fact means that up to conjugation, the circle action is generated by an element in $\{\mathrm{diag}(ai,bi)\mid a,b\in\mathbb{R}\}$. In particular, up to conjugation, we may assume that the circle action is generated by $\mathrm{diag}(i,ti)\in\mathfrak{sp}(2)$ for some $t\in[0,1]$. 

If a circle action generates this conformal equivalence, then there is a seven-dimensional Lie subalgebra $\mathfrak{l}$ of $\mathrm{Lie}(\mathrm{SL}(2,\mathbb{H}))$ that commutes with $\mathrm{diag}(i,ti)$. This is because the circle action corresponds to rotating $S^1$ in $H^3\times S^1$, thus commutes with isometries of $H^3$. Note that $\mathrm{Lie}(\mathrm{SL}(2,\mathbb{H}))$ is the subset of $\mathrm{Mat}(2,2,\mathbb{H})$ with purely imaginary trace. When $t=0$ or $t\in(0,1)$, the subalgebra of elements that commute with $\mathrm{diag}(i,ti)$ is five-dimensional or three-dimensional, respectively. In either case, the space is not large enough to contain $\mathfrak{l}$. However, when $t=1$, this space is seven-dimensional, as required. 

Finally, conversely, we know that $R_1$ is conjugate to Manton--Sutcliffe and Braam--Austin's circle actions, which relate hyperbolic monopoles with integral mass and instantons invariant under these actions. 
\end{proof}

\begin{note}
That only one conjugacy class corresponds to hyperbolic monopoles is to be expected. All hyperbolic monopoles with integral mass arise from this construction. However, the circle action used to generate $S^4\setminus S^2\equiv H^3\times S^1$ should always generate all hyperbolic monopoles with integral mass. As it stands, only circle actions conjugate to $R_1$ generate hyperbolic monopoles. Being related by conjugation, the hyperbolic monopoles created by each action are all related. If multiple conjugacy classes generated hyperbolic monopoles, then there would be no guarantee that the monopoles would be related.
\end{note}

In later sections, we investigate symmetric hyperbolic monopoles more deeply. However, we see that such monopoles must all arise from instantons equivariant under $R_1$, or a group conjugate to it.

The model of hyperbolic 3-space obtained from the action of $R_1$ is the half-space model. It is obtained as follows. The fixed two-sphere of the $R_1$ action is $\mathbb{C}\cup\{\infty\}$. On $S^4\setminus(\mathbb{C}\cup\{\infty\})$, equivalently $\mathbb{H}\setminus\mathbb{C}$, we use coordinates $(x_0,x_1,r,\theta)$, with $r>0$, which corresponds to $x_0+x_1i+r\cos\theta j+r\sin\theta k$. Note that the action of $e^{i\phi/2}$ sends $(x_0,x_1,r,\theta)$ to $(x_0,x_1,r,\theta+\phi)$. Thus, $(x_0,x_1,r)$ are coordinates for $H^3$ and $\theta$ is the coordinate for $S^1$. 

In these new coordinates, the metric for $\mathbb{H}\setminus\mathbb{C}$ is $dx_0^2+dx_1^2+dr^2+r^2d\theta^2$. As $r>0$, we can divide by $r^2$ to get the metric
\begin{equation*}
\frac{dx_0^2+dx_1^2+dr^2}{r^2}+d\theta^2.
\end{equation*}
This metric is exactly the metric for $H^3\times S^1$, using the half-space model.

The ADHM transform relates instantons and ADHM data. Recall that given ADHM data $(a,b)\in\mathcal{M}$, for all $x\in\mathbb{H}$, we define $\Delta(x):=a-bx$. We then find $V(x)\in\mathrm{Mat}(n+k,n,\mathbb{H})$ such that $V(x)^\dagger\Delta(x)=0$ and $V(x)^\dagger V(x)=I_n$. The instanton is obtained via $\mathbb{A}_\mu(x):=V(x)^\dagger \partial_\mu V(x)$. By changing coordinates to $X$ and $\theta$, where $X$ is the coordinate in $H^3$ and $\theta$ is the $S^1$ coordinate, as the instanton has circular $1$-symmetry, there is some gauge in which the instanton is independent of $\theta$. Thus, the instanton is a function of $X$. In these coordinates, $\mathbb{A}=\mathbb{A}_0 dx_0+\mathbb{A}_1dx_1+\mathbb{A}_rdr+\mathbb{A}_\theta d\theta$. Letting $\Phi:=\mathbb{A}_\theta$ and $A$ be the remainder of $\mathbb{A}$, the pair $(\Phi,A)$ is the corresponding hyperbolic monopole. 

This process is outlined in the case of Manton--Sutcliffe's conformal circle action~\cite[\S4]{manton_platonic_2014}. We now complete this process for our $R_1$ circle action. For other $R_1$ circle actions, similar results can be found.
\begin{prop}
Suppose that $\hat{M}\in\mathcal{M}_{n,k}$ has circular $1$-symmetry, with respect to usual $R_1$ group, generated by $\rho\in\mathfrak{so}(k)$. Let $X$ denote the coordinates of $H^3$. Suppose that we complete the ADHM procedure for $H^3$, finding $V(X)$ satisfying $V(X)^\dagger \Delta(X)=0$ and $V(X)^\dagger V(X)=I_n$. The corresponding hyperbolic monopole $(\Phi,A)$ is given, up to gauge, by 
\begin{align}
\Phi(X)&=\frac{1}{2}V(X)^\dagger \mathrm{diag}\bigl(L(i-\rho)L^\dagger (LL^\dagger)^{-1},i-\rho\bigr) V(X);\\
A_l(X)&=V(X)^\dagger \partial_lV(X).
\end{align}
The $\partial_l$ denotes partial differentiation with respect to the $l$th coordinate of $H^3$.\label{prop:makehypermono}
\end{prop}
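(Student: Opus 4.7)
The strategy is to use the circular $1$-symmetry to put the ADHM solution $V$ into a gauge equivariant under $\phi$-translation, after which the components of $\mathbb{A}=V^\dagger dV$ split cleanly between the $H^3$ and $S^1$ directions. Write $x=x_0+x_1i+rje^{-i\phi}$ with $r>0$; the $R_1$-action $x\mapsto e^{i\alpha}xe^{-i\alpha}$ shifts $\phi$ by $2\alpha$ while fixing $X:=x_0+x_1i+rj$. By Theorem~\ref{thm:circularsym} and Corollary~\ref{cor:IsomS}, equivariance under $\mathrm{diag}(e^{i\alpha},e^{i\alpha})$ is realized by $K(\alpha)=e^{\rho\alpha}$ and
\begin{equation*}
Q(\alpha)=\mathrm{diag}\bigl(Le^{-i\alpha}e^{\rho\alpha}L^\dagger(LL^\dagger)^{-1},\ e^{-i\alpha}e^{\rho\alpha}\bigr).
\end{equation*}
Since $K$ is a one-parameter subgroup and Corollary~\ref{cor:IsomS} determines $Q$ uniquely from $K$, composing two equivariances forces $Q(\alpha+\beta)=Q(\alpha)Q(\beta)$; hence $Q=\exp(\alpha\Xi)$ with generator $\Xi:=\mathrm{diag}(L(-i+\rho)L^\dagger(LL^\dagger)^{-1},\,-i+\rho)\in\mathfrak{sp}(n+k)$.

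Next, I would construct an equivariant $V$. A direct manipulation of the identities $\hat{M}e^{i\alpha}=Q(\alpha)^\dagger\hat{M}K(\alpha)$ and $Ue^{i\alpha}=Q(\alpha)^\dagger UK(\alpha)$, using that the real $K(\alpha)$ commutes with quaternion scalars, yields $\Delta(e^{i\alpha}xe^{-i\alpha})=Q(\alpha)^\dagger\Delta(x)K(\alpha)e^{-i\alpha}$. Consequently $Q(\alpha)V(e^{i\alpha}xe^{-i\alpha})$ annihilates $\Delta(x)^\dagger$ and is orthonormal, so it differs from $V(x)$ only by a right $\mathrm{Sp}(n)$-valued factor. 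Choosing the residual gauge so that this factor is trivial, I set
\begin{equation*}
V(X,\phi):=V(e^{i\phi/2}Xe^{-i\phi/2})=Q(\phi/2)^\dagger V(X).
\end{equation*}

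Then I would compute the components of $\mathbb{A}$ in the coordinates $(x_0,x_1,r,\phi)$. For $l\in\{x_0,x_1,r\}$, since $\partial_l$ passes through the $X$-constant factor $Q(\phi/2)^\dagger$ and $Q(\phi/2)Q(\phi/2)^\dagger=I_{n+k}$, we obtain $\mathbb{A}_l=V(X)^\dagger\partial_lV(X)=A_l$. For the $\phi$-direction, differentiating the one-parameter subgroup gives $\partial_\phi V=\tfrac12 Q(\phi/2)^\dagger\Xi^\dagger V(X)$, and the same cancellation produces
\begin{equation*}
\Phi=\mathbb{A}_\phi=\tfrac12 V(X)^\dagger\Xi^\dagger V(X).
\end{equation*}
A direct computation using $\rho^T=-\rho$ yields $\Xi^\dagger=\mathrm{diag}((LL^\dagger)^{-1}L(i-\rho)L^\dagger,\,i-\rho)$, and the commutator $[L(i-\rho)L^\dagger,\,LL^\dagger]=0$ (equivalent to $\Xi\in\mathfrak{sp}(n+k)$, which follows from Lemma~\ref{lemma:focusonY} applied to the circular $1$-symmetry) lets me rewrite the $(1,1)$ block as $L(i-\rho)L^\dagger(LL^\dagger)^{-1}$, matching the claimed $2\Phi$ exactly.

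The main obstacle is the bookkeeping needed to verify that $Q$ genuinely assembles into a one-parameter subgroup and that the equivariant gauge choice is consistent; both points reduce to the uniqueness clause of Corollary~\ref{cor:IsomS} combined with the symmetry equations \eqref{eq:circularsymmetry}--\eqref{eq:circularsymmetryextra}. Any failure of $V(X,\phi)$ to be single-valued as $\phi\mapsto\phi+2\pi$ is itself an $\mathrm{Sp}(n)$-valued gauge transformation on $H^3$, absorbed into the proposition's "up to gauge" clause.
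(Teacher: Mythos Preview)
Your proposal is correct and follows essentially the same approach as the paper: extend $V$ from $H^3$ to $\mathbb{H}\setminus\mathbb{C}$ by $V(x)=Q(\theta/2)^\dagger V(X)$ using the gauge compensator $Q$ for the $R_1$-action, then read off $\Phi$ and $A_l$ from $V^\dagger dV$. The one stylistic difference is that you invoke the one-parameter subgroup property $Q(\alpha+\beta)=Q(\alpha)Q(\beta)$ (via uniqueness in Corollary~\ref{cor:IsomS}) to identify $Q\partial_\theta Q^\dagger$ directly with $-\Xi$, whereas the paper instead computes $Q(\theta)\partial_\theta Q(\theta)^\dagger$ explicitly and simplifies the top-left block by separately verifying $e^{-i\theta/2}e^{\rho\theta/2}L^\dagger L e^{-\rho\theta/2}e^{i\theta/2}=L^\dagger L$ from the symmetry equations; both routes arrive at the same expression.
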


\begin{proof}
As $\mathbb{H}\setminus\mathbb{C}\simeq H^3\times S^1$, any $x\in\mathbb{H}\setminus\mathbb{C}$ can be uniquely written as 
\begin{equation*}
x=\mathrm{diag}(e^{i\theta/2},e^{i\theta/2}).X=e^{i\theta/2}(x_0+ix_1+rj)e^{-i\theta/2}.
\end{equation*} 
Note the factor of two means that given $x$, we get a unique $\theta\in[0,2\pi]$ and $X\in H^3$.

Let $Q(\theta):=\mathrm{diag}(Le^{-i\theta/2}e^{\rho\theta/2}L^\dagger (LL^\dagger)^{-1},e^{-i\theta/2}e^{\rho\theta/2})\in\mathrm{Sp}(n+k)$ and $K(\theta):=e^{\rho\theta/2}\in\mathrm{SO}(k)$. Note that by the proof of Lemma~\ref{lemma:focusonY}, we have that $[LL^\dagger,Le^{-i\theta/2}e^{\rho\theta/2}L^\dagger]$. Then, per the proof of Theorem~\ref{thm:circularsym}, we have
\begin{equation}
(Q(\theta),K(\theta)).\mathrm{diag}(e^{i\theta/2},e^{i\theta/2}).(\hat{M},U)=(\hat{M},U).\label{eq:circularsymSagain}
\end{equation}
Therefore,
\begin{align*}
\Delta(x)&=\Delta(e^{i\theta/2}(x_0+ix_1+rj)e^{-i\theta/2})\\
&=\hat{M}-Ue^{i\theta/2}(x_0+ix_1+rj)e^{-i\theta/2}\\
&=(\hat{M}e^{i\theta/2}-Ue^{i\theta/2} X)e^{-i\theta/2}\\
&=Q(\theta)^\dagger \Delta(X) K(\theta)e^{-i\theta/2}.
\end{align*}

Extending $V$ to all of $\mathbb{H}\setminus\mathbb{C}$ by $V(x):=Q(\theta)^\dagger V(X)$, we note that $V(x)^\dagger \Delta(x)=0$ and $V(x)^\dagger V(x)=I_n$. Thus, we can use it to create our instanton and thus our monopole.

The $\theta$ component of the instanton $\mathbb{A}$ corresponding to $(\hat{M},U)$ is given by
\begin{equation*}
\mathbb{A}_\theta(x)=V(x)^\dagger\partial_\theta V(x)= V(X)^\dagger Q(\theta)\partial_\theta Q(\theta)^\dagger V(X).
\end{equation*}
Simplifying, we find that
\begin{equation*}
\mathbb{A}_\theta(x)=\frac{1}{2}V(X)^\dagger \begin{bmatrix}
(LL^\dagger)^{-1}Le^{-i\theta/2}e^{\rho\theta/2} L^\dagger L e^{-\rho\theta/2}e^{i\theta/2}(i-\rho)L^\dagger (LL^\dagger)^{-1} & 0 \\
0 & i-\rho
\end{bmatrix}V(X).
\end{equation*}
Recall that $R:=L^\dagger L+M^\dagger M$ commutes with $\rho$, by Theorem~\ref{thm:circularsym}, and thus $e^{\rho\theta}$. Furthermore, as $R$ is real, it commutes with quaternions. By \eqref{eq:circularsymSagain}, we have that 
\begin{equation*}
e^{-i\theta/2}e^{\rho\theta/2}Me^{i\theta/2}e^{-\rho\theta/2}=M.
\end{equation*} 
From this equation, we see that
\begin{align*}
e^{-i\theta/2}e^{\rho\theta/2}L^\dagger Le^{-\rho\theta/2}e^{i\theta/2}&=e^{-i\theta/2}e^{\rho\theta/2}(R-M^\dagger M)e^{-\rho\theta/2}e^{i\theta/2}\\
&=R-M^\dagger e^{\rho\theta/2}e^{-i\theta/2}M e^{-\rho\theta/2}e^{i\theta/2}=R-M^\dagger M=L^\dagger L.
\end{align*}
Returning to the computation of $\mathbb{A}_\theta(x)$, we have
\begin{equation*}
\mathbb{A}_\theta(x)=\frac{1}{2}V(X)^\dagger \mathrm{diag}(L(i-\rho)L^\dagger(LL^\dagger)^{-1},i-\rho)V(X).
\end{equation*}
As in the procedure described above, we take $\Phi(X):=\mathbb{A}_\theta(X)$. Note that $\mathbb{A}_\theta$ is independent of $\theta$.

We have that the other coordinates of $\mathbb{A}$ are given by
\begin{equation*}
\mathbb{A}_l(x)=V(x)^\dagger \partial_l V(x)=V(X)^\dagger \partial_l V(X).
\end{equation*}
Similar to $\Phi$, we take $A_l(X):=\mathbb{A}_l(X)$ and note that in this gauge, $\mathbb{A}$ is independent of $\theta$. Thus, our monopole has the desired form.
\end{proof}

Suppose that $(\Phi,A)$ is a hyperbolic monopole with integral mass. Both $\Phi$ and $A$ are defined on $H^3$. The connection between hyperbolic monopoles with integral mass and instantons with circular $1$-symmetry grants such hyperbolic monopoles an additional symmetry. Such monopoles correspond to an instanton $\mathbb{A}$ with circular $1$-symmetry via $\mathbb{A}=\Phi d\theta +A$. The integral mass condition implies that the instanton $\mathbb{A}$ is defined on the whole of $S^4$. Thus, $\Phi$ and $A$ are also defined on the whole of $S^4$. Moreover, given an embedding of $H^3$ into $S^4$, the ball model for instance, the definitions on $S^4$ agree with the original definition on $H^3$. 
\begin{prop}
Consider the ball model $H^3=\{X=X_1i+X_2j+X_3k\in\mathfrak{sp}(1)\mid R^2:=X_1^2+X_2^2+X_3^2<1\}$ with curvature $-1$. Suppose we have a hyperbolic monopole with integral mass and Higgs field $\Phi$. For all $X\in H^3\setminus \{0\}$, the gauge-invariant quantity $|\Phi|$ satisfies\label{prop:palindromic}
\begin{equation}
|\Phi(X)|=\left|\Phi\left(\frac{X}{R^2}\right)\right|.
\end{equation}
\end{prop}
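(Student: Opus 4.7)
The plan is to realise $X\mapsto X/R^2$ as the restriction, to the pure imaginary locus, of a conformal involution of $S^4$ that commutes with the circle action $R_1$. Consider $\sigma:=\begin{bmatrix}0 & -1 \\ 1 & 0\end{bmatrix}\in\mathrm{Sp}(2)\subseteq\mathrm{SL}(2,\mathbb{H})$; it acts on $\mathbb{H}\cup\{\infty\}\cong S^4$ as the M\"obius transformation $\sigma(x)=-x^{-1}$. A single matrix multiplication shows $\sigma$ commutes with every $\mathrm{diag}(e^{i\theta},e^{i\theta})\in R_1$. For $X\in\mathfrak{sp}(1)$ we have $\bar X=-X$, hence $X^{-1}=-X/|X|^2$ and $\sigma(X)=-X^{-1}=X/R^2$. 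So the proposition amounts to $|\Phi(X)|=|\Phi(\sigma(X))|$ for $X\in\mathfrak{sp}(1)\setminus\{0\}$, interpreted via the natural extension of $\Phi$ supplied by the $R_1$-invariant instanton on $S^4$.

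By Lemma~\ref{lemma:hypermonoR1} and Proposition~\ref{prop:makehypermono}, the integral mass monopole lifts to an $R_1$-invariant instanton $\mathbb{A}$ on $S^4$, from which $\Phi$ is extracted as the component of $\mathbb{A}$ along the $R_1$-Killing vector $V(x)=[i,x]/2$. Because $\sigma$ commutes with $R_1$, its differential satisfies $d\sigma_x(V(x))=V(\sigma(x))$; explicitly, $d\sigma_x(v)=x^{-1}vx^{-1}$, which sends $[i,x]/2$ to $[i,\sigma(x)]/2$. Consequently, the pulled-back instanton $\sigma^*\mathbb{A}$ is again $R_1$-invariant, and the Higgs field $\tilde\Phi$ it produces satisfies $\tilde\Phi(X)=\Phi(\sigma(X))$ pointwise, so $|\tilde\Phi(X)|=|\Phi(X/R^2)|$.

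The crux of the proof is then the identity $|\tilde\Phi|=|\Phi|$. At the ADHM level, Definition~\ref{def:conformalactionR4} sends $(\hat M,U)$ to $(U,-\hat M)$, which is generally not gauge-equivalent to $(\hat M,U)$. Nevertheless, the scalar $|\Phi|^2$ should be unaffected: using the formula in Proposition~\ref{prop:makehypermono}, one relates $V(\sigma(X))$ to the matrix $V'(X)$ built from the transformed ADHM pair via the identity $\Delta(\sigma(X))=\Delta'(X)\cdot X^{-1}$, and then tracks the gauge transformation $(Q,K)\in\mathrm{Sp}(n+k)\times\mathrm{GL}(k,\mathbb{R})$ that restores standard form. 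The main obstacle is verifying that the combined effect of this gauge transformation and the switch $(\hat M,U)\leftrightarrow(U,-\hat M)$ acts on the diagonal factor $\mathrm{diag}(L(i-\rho)L^\dagger(LL^\dagger)^{-1},\,i-\rho)$ by a conjugation that the gauge-invariant norm $|\Phi|^2$ cannot detect; this requires careful bookkeeping of how the generator $\rho$ and the blocks $L$, $M$ transform under the $\sigma$-action, but the commutation of $\sigma$ with $R_1$ supplies the compatibility that makes the cancellation work.
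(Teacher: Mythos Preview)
Your transformation $\sigma$ is correct, and so is the computation $\sigma(X)=X/R^2$ for $X\in\mathfrak{sp}(1)$. The gap is in your choice of circle action. The proposition is stated in the \emph{ball} model, where the relevant circle action is Manton--Sutcliffe's, generated by $\begin{bmatrix}0&1\\-1&0\end{bmatrix}$, not the $R_1$ action $\mathrm{diag}(e^{i\theta},e^{i\theta})$ (which yields the half-space model). Your $\sigma$ does not merely commute with the correct circle action: it \emph{is} the element of Manton--Sutcliffe's circle at $\theta=\pi/2$, since $e^{(\pi/2)\left[\begin{smallmatrix}0&1\\-1&0\end{smallmatrix}\right]}=\begin{bmatrix}0&1\\-1&0\end{bmatrix}$ acts as $x\mapsto -x^{-1}$. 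Once you use the right circle, the instanton is automatically $\sigma$-invariant, and in the gauge where $\mathbb{A}$ is independent of the circle parameter you get $\Phi(X)=\Phi(\sigma(X))$ directly by matching $d\theta$ components. Taking norms finishes the proof in one line; this is exactly what the paper does.

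By working with $R_1$ instead, you reduce $\sigma$ to an external symmetry that only commutes with the circle, and you are then forced to argue that $\sigma^*\mathbb{A}$ and $\mathbb{A}$ give Higgs fields of equal norm. But an $R_1$-invariant instanton need not be $\sigma$-equivariant (cf.\ Proposition~\ref{prop:notgauge}, where different commuting circle actions produce genuinely inequivalent monopoles), so the ADHM bookkeeping you sketch in your final paragraph has no reason to yield the cancellation you need. The ``crux'' you identify is not a technical obstacle to be overcome by careful tracking of $\rho$, $L$, $M$; it is a symptom of having chosen the wrong model.
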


\begin{proof}
Such hyperbolic monopoles correspond to instantons with circular $1$-symmetry. In particular, as we are using the ball model, these monopoles correspond directly with instantons symmetric under Manton--Sutcliffe's conformal circle action~\cite[(4.4)]{manton_platonic_2014}. Given a point $0\neq X\in H^3\subseteq \mathbb{H}$, the orbit $S^1\cdot X$ under this circle action is given by 
\begin{equation*}
S^1\cdot X=\left\{\begin{bmatrix}
\cos\theta & \sin\theta \\ -\sin\theta & \cos\theta 
\end{bmatrix}.X\mid \theta\in\mathbb{R}\right\}.
\end{equation*} 
In particular, when $\theta=\frac{\pi}{2}$, we see that this orbit contains the point $X/R^2$.

We know that there is some gauge in which the instanton $\mathbb{A}:=\Phi d\theta +A$ is independent of $\theta$. Then, as $X$ and $X/R^2$ are in the same orbit, we have that $\mathbb{A}(X)=\mathbb{A}\left(\frac{X}{R^2}\right)$. Recall from above that as we have a hyperbolic monopole with integral mass, $\Phi$ is defined on the whole of $S^4$. Matching $d\theta$ terms, we have that $\Phi(X)=\Phi\left(\frac{X}{R^2}\right)$. This equality is gauge dependent. However, we have that the gauge-invariant quantities $|\Phi(X)|$ and $\left|\Phi\left(\frac{X}{R^2}\right)\right|$ are equal in any gauge. 
\end{proof}

The additional symmetry of hyperbolic monopoles with integral mass proven in Proposition~\ref{prop:palindromic} is only possible due to the connection between hyperbolic monopoles with integral mass and instantons with circular $1$-symmetry.

\begin{note}
In previous work, we examine the norm squared of the Higgs field of a $\mathrm{Sp}(2)$ and $\mathrm{Sp}(4)$ hyperbolic monopole, respectively~\cite[Notes~14~\&~15]{lang_hyperbolic_2023}. Both functions are rational, and are comprised of palindromic polynomials. Additionally, for each rational function, the numerator and denominator have the same degree. The palindromic nature of these polynomials follows from Proposition~\ref{prop:palindromic} and the matching degrees of the numerators and denominators. The respective energy densities of these monopoles are also rational functions of palindromic polynomials. Both of these hyperbolic monopoles have integral mass. The same behaviour is not guaranteed for hyperbolic monopoles with non-integral mass.

Indeed, consider the family of spherically symmetric $\mathrm{Sp}(1)$ hyperbolic monopoles given by Chakrabarti and Nash~\cite{chakrabarti_construction_1986,nash_geometry_1986}. Given $C>1$, the norm of the Higgs field is given by
\begin{equation*}
|\Phi|(r)=\frac{C}{2}\cdot\frac{\left(\frac{1+r}{1-r}\right)^{2C}+1}{\left(\frac{1+r}{1-r}\right)^{2C}-1}-\frac{r^2+1}{4r}.
\end{equation*}
Given a value $C$, the mass of the monopole is $\frac{C-1}{2}$. 

When $C$ is an integer, we see that $|\Phi|(r)$ satisfies Proposition~\ref{prop:palindromic}. However, when $C=\frac{3}{2}$, we see that $|\Phi|(r)=\frac{5r-r^3}{4r^2+12}$, which does not satisfy Proposition~\ref{prop:palindromic}. Moreover, when $2C$ is not an integer, the norm and its square are not rational functions of polynomials and not even defined for $r>1$. Finally, note that integral mass is the condition that twice the mass is in $\mathbb{N}_+$. In this case, the monopole has integral mass if and only if $C\in\mathbb{N}_+\setminus\{1\}$.
\end{note}

\subsection{Singular monopoles}\label{subsubsec:SingularMonopoles}

In this section, we discuss the connection between singular monopoles and circle-invariant instantons, though a different circle action from the one relating hyperbolic monopoles and instantons. In particular, we discuss how a class of singular monopoles correspond to circular $0$-symmetric ADHM data and how to transform this data into a singular monopole.

Let $E\rightarrow R^3\setminus\{0\}$ be a vector bundle with structure group $\mathrm{Sp}(n)$, equipped with a connection $A$, of curvature $F_A$, and a section $\Phi$ of $\mathrm{End}(E)$, called the Higgs field, just as with hyperbolic and Euclidean monopoles. Singular monopoles are solutions to the Bogomolny equations, with finite energy.

The relationship between instantons and singular monopoles was first described by Kronheimer and explored further by Pauly~\cite{kronheimer_monopoles_1985,pauly_gauge_1996,pauly_monopole_1998}. This relationship utilizes the Hopf fibration. 
\begin{definition}
The Hopf map $\pi\colon S^3\rightarrow S^2$ is given by
\begin{equation}
\pi(x,y,z,w):=(x^2+y^2-z^2-w^2,2(yz-xw),2(xz+yw)).
\end{equation}

We may rewrite the Hopf map as $\pi\colon\mathbb{H}\rightarrow\mathfrak{sp}(1)$, given by $\pi(x):=x^\dagger ix$. Note that this new definition matches the previous definition on $S^3\subseteq\mathbb{H}$.
\end{definition}

The Hopf fibration describes $S^3$ as a non-trivial principal bundle over $S^2$ with fibre $S^1$. That is, for every $p\in S^2$, $\pi^{-1}(p)\simeq S^1$. The same is true for all $p\in\mathbb{R}^3\setminus\{0\}$. Additionally, $\pi^{-1}(0)=\{0\}$. Indeed, consider the $R_0$ action on $\mathbb{H}$, taking $x\mapsto \theta.x:=e^{i\theta} x$. We see that $\pi(\theta.x)=\pi(x)$, so the $R_0$ action is transitive and free away from the origin. Note that $\pi$ is not invariant under any other $R_t$ action.

Singular monopoles are in a one-to-one correspondence with circle-invariant instantons on $\mathbb{R}^4\setminus\{0\}$~\cite{pauly_gauge_1996}.  
\begin{definition}
As we are studying instantons over all of $\mathbb{R}^4$, we are interested in singular monopoles arising from instantons that, under some gauge transformation, extend smoothly over the origin. Such objects are called \textbf{monopoles with Dirac type singularities}. More information on such monopoles can be found in Pauly's works, where they are called good singular monopoles~\cite{pauly_gauge_1996,pauly_monopole_1998}.
\end{definition}

Pauly's idea is that given a singular monopole, we can pull-back the vector bundle, Higgs field, and connection to $\mathbb{H}\setminus\{0\}$. Then, by using a one-form that is invariant under the $R_0$ action, we use the monopole to construct a circle-invariant instanton. Conversely, given a circle-invariant instanton, we can use $\pi$ to uniquely define a monopole on $\mathbb{R}^3\setminus\{0\}$. The exact process has been proven by others, but we demonstrate it here for completeness, modifying the process in order to use the usual orientations on three and four dimensional Euclidean space as well as relating monopoles directly to instantons~\cites[Proposition~2]{pauly_monopole_1998}[Lemma~4.1.7]{beckett_equivariant_2020}. Additionally, we provide a formula for computing the monopole directly from the ADHM data.
\begin{prop}
Suppose that $\hat{M}\in\mathcal{M}_{n,k}$ has circular $0$-symmetry, with respect to the usual $R_0$ group, generated by $\rho\in\mathfrak{so}(k)$. Let $X$ denote the coordinates of $\mathfrak{sp}(1)\setminus\{0\}$. Locally, we can identify any point in $\mathbb{H}\setminus\{0\}$ by $X\in\mathfrak{sp}(1)\setminus\{0\}$ and $\theta\in S^1$. Suppose that we have local sections $\mathfrak{sp}(1)\setminus\{0\}\rightarrow\mathbb{H}\setminus\{0\}$ taking $X\mapsto x_X$. Further, suppose we find $V(x_X)$ satisfying $V(x_X)^\dagger \Delta(x_X)=0$ and $V(x_X)^\dagger V(x_X)=I_n$. The corresponding singular monopole $(\Phi,A)$ is given, up to gauge, by 
\begin{align}
\Phi(X)&=\frac{1}{2|X|}V(x_X)^\dagger \mathrm{diag}(-L\rho L^\dagger (LL^\dagger)^{-1},i-\rho) V(x_X);\\
A_l(X)&=V(x_X)^\dagger \left(\pi^*\frac{\partial}{\partial X_l}\right)V(x_X).
\end{align}
The $\partial_l$ denotes partial differentiation with respect to the $l$th coordinate of $\mathfrak{sp}(1)\setminus\{0\}$.\label{prop:singularmonoconst}
\end{prop}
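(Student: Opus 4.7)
The plan is to mirror the hyperbolic monopole construction of Proposition~\ref{prop:makehypermono}, adapted to the $R_0$ action and to the Hopf fibration $\pi\colon\mathbb{H}\setminus\{0\}\to\mathfrak{sp}(1)\setminus\{0\}$. First I would specialize the proof of Theorem~\ref{thm:circularsym} to $t=0$: the generator $\rho\in\mathfrak{so}(k)$ satisfies $[\rho,M]=iM$ and $[\rho,R]=0$. In the form dictated by Corollary~\ref{cor:IsomS}, setting $K(\theta):=e^{\rho\theta}\in\mathrm{O}(k)$ together with $Q(\theta):=\mathrm{diag}(Le^{\rho\theta}L^\dagger(LL^\dagger)^{-1},\,e^{-i\theta}e^{\rho\theta})\in\mathrm{Sp}(n+k)$ produces a gauge transformation realizing $(Q(\theta),K(\theta)).\mathrm{diag}(e^{i\theta},1).(\hat M,U)=(\hat M,U)$ for every $\theta$. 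Because $K(\theta)$ is a real matrix and thus commutes with quaternion scalars, this equivariance descends to the $\Delta$-level identity $\Delta(e^{i\theta}x)=Q(\theta)^\dagger\Delta(x)K(\theta)$; along the way one uses $[LL^\dagger,Le^{\rho\theta}L^\dagger]=0$, which is the content of Lemma~\ref{lemma:focusonY} in this setting.

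Next, I would extend the ADHM frame equivariantly across each Hopf fiber. Given the local section $X\mapsto x_X$ and the normalized frame $V(x_X)$, define $V(e^{i\theta}x_X):=Q(\theta)^\dagger V(x_X)$. The $\Delta$-equivariance above together with $Q(\theta)\in\mathrm{Sp}(n+k)$ shows this definition satisfies $V(x)^\dagger\Delta(x)=0$ and $V(x)^\dagger V(x)=I_n$ throughout $\mathbb{H}\setminus\{0\}$, and the resulting instanton $\mathbb{A}_\mu(x)=V(x)^\dagger\partial_\mu V(x)$ is manifestly $R_0$-invariant. By Pauly's correspondence~\cite{pauly_gauge_1996,pauly_monopole_1998} such an instanton descends uniquely to a singular monopole with Dirac type singularity on $\mathbb{R}^3\setminus\{0\}$; the task is thus to compute $(\Phi,A)$ in terms of the ADHM data, not to reprove the reduction.

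For $\Phi$, I would contract $\mathbb{A}$ with the fundamental vector field of $R_0$, namely $V_x:=ix$, which is tangent to the Hopf fiber through $x$. Differentiating the extension formula at $\theta=0$ gives $\mathbb{A}(V_{x_X})=V(x_X)^\dagger\dot{Q}(0)^\dagger V(x_X)$; the short calculation of $\dot{Q}(0)^\dagger$, again using $[LL^\dagger,L\rho L^\dagger]=0$, yields exactly $\mathrm{diag}(-L\rho L^\dagger(LL^\dagger)^{-1},\,i-\rho)$. To extract $\Phi$ one writes the Kaluza--Klein splitting $\mathbb{A}=\pi^*A+\Phi\,\alpha$ relative to the Hopf connection one-form $\alpha$. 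The prefactor $\frac{1}{2|X|}$ then arises from the normalization of $\alpha$ forced by requiring $(\Phi,A)$ to satisfy the Bogomolny equation with respect to the \emph{Euclidean} metric on $\mathbb{R}^3\setminus\{0\}$: a direct Jacobian computation gives $|\pi(x)|=|x|^2$ and shows that $d\pi$ scales horizontal vectors at $x$ by the factor $2|x|$, so the $\alpha$ compatible with Euclidean Bogomolny has $\alpha(V_{x_X})=2|X|$, and $\Phi(X)=\mathbb{A}(V_{x_X})/\alpha(V_{x_X})$ picks up the stated factor.

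The formula for $A$ is the horizontal complement: the horizontal lift $\pi^*\partial/\partial X_l$ pairs trivially with $\alpha$, so $\mathbb{A}(\pi^*\partial/\partial X_l)$ reads off precisely the pulled-back base connection, giving $A_l(X)=V(x_X)^\dagger(\pi^*\partial/\partial X_l)V(x_X)$. Finally, I would verify independence of the choice of section: two sections of $\pi$ differ by a map $X\mapsto e^{i\theta(X)}\in R_0$, and the $R_0$-equivariance of $V$ translates this change into an honest $\mathrm{Sp}(n)$ gauge transformation on $(\Phi,A)$ over $\mathbb{R}^3\setminus\{0\}$, so the formulas are well-defined up to gauge as claimed. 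The main obstacle is the careful bookkeeping of the Hopf-bundle normalization producing the $\frac{1}{2|X|}$ factor; once that is pinned down, the rest is a direct specialization of the $R_1$ argument of Proposition~\ref{prop:makehypermono} with the $\theta$-scaling adjusted to $t=0$.
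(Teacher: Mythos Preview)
Your proposal is correct and follows essentially the same route as the paper: the same gauge pair $(Q(\theta),K(\theta))$ realizing the $R_0$-equivariance, the same identity $\Delta(e^{i\theta}x)=Q(\theta)^\dagger\Delta(x)K(\theta)$, the same equivariant extension $V(e^{i\theta}x_X):=Q(\theta)^\dagger V(x_X)$, and the same computation of $\Phi$ and $A$ from $\dot Q(0)^\dagger$ and the horizontal lift. The only difference is that the paper, for completeness, spells out the one-form $\xi$ explicitly, checks that $d\xi$ is self-dual, and derives the instanton/monopole dictionary $\mathbb{A}=\pi^*A+\pi^*\Phi\otimes\xi$ with $\xi(\partial_\theta)=2|x|^2$ (whence the $1/(2|X|)$ factor, since $|X|=|x_X|^2$), whereas you cite Pauly for this reduction and recover the factor by a scaling argument on the Hopf connection; both are legitimate and yield the same formulas.
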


\begin{proof}
Let $\frac{\partial}{\partial \theta}$ be the vector field on $\mathbb{H}\setminus\{0\}$ given by
\begin{equation}
\frac{\partial}{\partial \theta}\Bigr\vert_x:=\frac{d}{d\theta}\Bigr\vert_{\theta=0} e^{i\theta}x=-x_1\frac{\partial}{\partial x_0}+x_0\frac{\partial}{\partial x_1}-x_3\frac{\partial}{\partial x_2}+x_2\frac{\partial}{\partial x_3}.
\end{equation}
Then let
\begin{equation}
\xi:=2(-x_1dx_0+x_0dx_1-x_3dx_2+x_2dx_3).\label{eq:defxi}
\end{equation}
We note that given any vector field $V$ on $\mathbb{H}\setminus\{0\}$, $\xi(V)=2\left\langle V,\frac{\partial}{\partial \theta}\right\rangle$. In particular, $\xi\left(\frac{\partial}{\partial \theta}\right)=2|x|^2$. 

Given any vector field $W$ on $\mathfrak{sp}(1)\setminus\{0\}$, denote by $\pi^*W$ the unique vector field on $\mathbb{H}\setminus\{0\}$ such that $d\pi(\pi^*W)=W$ and $\xi(\pi^*W)=0$. That is, $\pi^*W$ is the unique lift of $W$ orthogonal to the circle orbit.

The relationship between the instanton $\mathbb{A}$ on $\mathbb{H}\setminus\{0\}$ and the monopole $(\Phi,A)$ on $\mathfrak{sp}(1)\setminus\{0\}$ is given as follows. Given a monopole, we have 
\begin{equation}
\mathbb{A}=\pi^*A+\pi^*\Phi\otimes \xi.
\end{equation}
Given an instanton, let $s$ be a section and $W$ a vector field on $\mathfrak{sp}(1)\setminus\{0\}$; we have that the monopole is uniquely determined by
\begin{equation}
\begin{aligned}
\pi^*(\Phi s)&=\frac{1}{2|x|^2}\mathbb{A}_\theta\pi^*s;\\
\pi^*(A(W) s)&=\mathbb{A}(\pi^*W)\pi^*s.
\end{aligned}
\end{equation}
That $\Phi$ is linear with respect to functions follows because the derivative of $\pi^*s$ with respect to $\theta$ vanishes.

Given the relationship between $\mathbb{A}$ and $(\Phi,A)$, we see that the curvature of $\mathbb{A}$ is given by
\begin{equation*}
F_{\mathbb{A}}=\pi^* F_A+\pi^* D_A\Phi\wedge \xi+\pi^*\Phi\otimes d\xi.
\end{equation*}

From the definition of $\xi$, we see that with respect to the standard orientation on $\mathbb{H}\setminus\{0\}$, we have that $d\xi$ is self-dual. That is, $\star d\xi=d\xi$. Denoting the anti-self-dual part of a two-form $B$ by $B^-:=\frac{B-\star B}{2}$, we have that
\begin{equation*}
F_{\mathbb{A}}^-=(\pi^*F_A+\pi^* D_A\Phi\wedge\xi)^-.
\end{equation*}

For any one-form $\omega$ on $\mathfrak{sp}(1)\setminus\{0\}$, we can compute that $\star\pi^*(\star\omega)=-\xi\wedge\pi^*\omega$. As we are looking at the anti-self-dual part of the curvature, we have
\begin{equation*}
F_{\mathbb{A}}^-=(\pi^*(F_A-\star D_A\Phi))^-.
\end{equation*}
If $(\Phi,A)$ is a monopole, then $\mathbb{A}$ is an instanton. The finite action condition is satisfied as the monopole has finite energy and we are integrating with respect to the coordinates on $\mathfrak{sp}(1)\setminus\{0\}$ and $S^1$. 

Conversely, if $\mathbb{A}$ is an instanton, $(\Phi,A)$ is a monopole. Indeed, $F_{\mathbb{A}}^-=0$, so $\pi^* (F_A-\star D_A\Phi)$ is self-dual. Note that for any two-form $B$ on $\mathfrak{sp}(1)\setminus\{0\}$, we have $\star B$ is a one-form. Thus, $\star\pi^* B=-\xi\wedge\pi^*\star B$. As $\pi^*(F_A-\star D_A\Phi)$ is self-dual, this equality means that
\begin{equation*}
\pi^* (F_A-\star D_A\Phi)=\star \pi^*(F_A-\star D_A\Phi)=-\xi\wedge \pi^*\star (F_A-\star D_A\Phi).
\end{equation*}
However, $\xi$ is orthogonal to all pullbacks via $\pi$, so $\pi^*(F_A-\star D_A\Phi)=0$, meaning that $(\Phi,A)$ is a monopole. The finite energy condition is satisfied as $\mathbb{A}$ has finite action.

We can use this relationship between instantons and singular monopoles to compute the monopole in terms of the ADHM data. Note that any $x\in\mathbb{H}\setminus\{0\}$ can be uniquely written as a pair $(\theta,X)\in S^1\times (\mathfrak{sp}(1)\setminus\{0\})$ as $x=e^{i\theta}x_X$. Let $Q(\theta):=\mathrm{diag}(Le^{\rho\theta}L^\dagger (LL^\dagger)^{-1},e^{-i\theta}e^{\rho\theta})\in\mathrm{Sp}(n+k)$ and $K(\theta):=e^{\rho\theta}\in\mathrm{SO}(k)$. Per the proof of Theorem~\ref{thm:circularsym}, we have
\begin{equation*}
(Q(\theta),K(\theta)).\mathrm{diag}(e^{i\theta},1).(\hat{M},U)=(\hat{M},U).
\end{equation*}
Therefore, similar to the proof of Proposition~\ref{prop:makehypermono}, we have that $\Delta(x)=Q(\theta)^\dagger \Delta(X)K(\theta)$. 

Extending $V$ to all of $\mathbb{H}\setminus\{0\}$ by $V(x):=Q(\theta)^\dagger V(x_X)$, we note that we can use $V$ to create our instanton and thus our monopole.

The $\theta$ component of the instanton $\mathbb{A}$ corresponding to $(\hat{M},U)$ is given by
\begin{equation*}
\mathbb{A}_\theta(x)=V(x_X)^\dagger Q(\theta)\partial_\theta Q(\theta)^\dagger V(x_X).
\end{equation*}
Simplifying, using the commutativity proven in Proposition~\ref{prop:makehypermono}, we have
\begin{equation*}
\mathbb{A}_\theta(x)=V(x_X)^\dagger \mathrm{diag}(-L\rho L^\dagger (LL^\dagger)^{-1},i-\rho)V(x_X).
\end{equation*}
Recalling that as $X=\pi(x_X)$, so $|X|=|x_X|^2$, we define $\Phi$ via $\Phi(X):=\frac{1}{2|X|}\mathbb{A}_\theta(x_X)$, noting that $\Phi$ does not depend on $\theta$.

Similarly, we find that 
\begin{equation*}
\mathbb{A}\left(\pi^*\frac{\partial}{\partial X_l}\right)(x)=V(x_X)^\dagger \left(\pi^*\frac{\partial}{\partial X_l}\right)V(x_X).
\end{equation*}
Thus, we can define $A_l(X):=\mathbb{A}\left(\pi^*\frac{\partial}{\partial X_l}\right)(x_X)$, noting that $A_l(X)$ does not depend on $\theta$.
\end{proof}

\begin{note}
If we find $V(x)$ for all $x\in\mathbb{H}\setminus\{0\}$, we can construct $A_l(X)$ via
\begin{align}
A_1(X)&:=\frac{1}{2|X|}\left(x_0\mathbb{A}_0(x_X)+x_1\mathbb{A}_1(x_X)-x_2\mathbb{A}_2(x_X)-x_3\mathbb{A}_3(x_X)\right);\\
A_2(X)&:=\frac{1}{2|X|}\left(-x_3\mathbb{A}_0(x_X)+x_2\mathbb{A}_1(x_X)+x_1\mathbb{A}_2(x_X)-x_0\mathbb{A}_3(x_X)\right);\\
A_3(X)&:=\frac{1}{2|X|}\left(x_2\mathbb{A}_0(x_X)+x_3\mathbb{A}_1(x_X)+x_0\mathbb{A}_2(x_X)+x_1\mathbb{A}_3(x_X)\right).
\end{align}
This result follows from finding the lifts of $\frac{\partial}{\partial X_l}$.
\end{note}

The following lemma provides a relationship between symmetric singular monopoles and symmetric instantons.
\begin{lemma}
We have that for $\theta\in\mathbb{R}$, $x\in\mathbb{H}$, and $p\in\mathrm{Sp}(1)$,
\begin{equation}
\pi(e^{i\theta}xp^\dagger)=p\pi(x)p^\dagger.
\end{equation}
If $p\mapsto R_p$ denotes the double cover $\mathrm{Sp}(1)\rightarrow\mathrm{SO}(3)$, then this equation tells us that under $\pi$, the action of $(e^{i\theta},p)\in\mathrm{Sp}(2)$ becomes $R_p\in\mathrm{SO}(3)$. \label{lemma:symsingularmonopoles}
\end{lemma}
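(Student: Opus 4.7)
The plan is a direct calculation using the quaternionic form $\pi(x)=x^\dagger i x$ of the Hopf map, together with the observation that $i$ commutes with $e^{i\theta}$ since they both lie in the complex subalgebra $\mathbb{C}\subseteq\mathbb{H}$.

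First I would expand $\pi(e^{i\theta}xp^\dagger)$ using the definition:
\begin{equation*}
\pi(e^{i\theta}xp^\dagger)=(e^{i\theta}xp^\dagger)^\dagger\, i\, (e^{i\theta}xp^\dagger)=p\,x^\dagger e^{-i\theta}\,i\,e^{i\theta}\,x\,p^\dagger.
\end{equation*}
Since $e^{-i\theta}ie^{i\theta}=i$ (as $i$ commutes with itself and hence with every element of $\mathbb{C}\subseteq\mathbb{H}$), this collapses to $p(x^\dagger i x)p^\dagger=p\,\pi(x)\,p^\dagger$, giving the identity.

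For the second claim, I would observe that $\pi$ takes values in $\mathfrak{sp}(1)=\mathrm{span}_{\mathbb{R}}\{i,j,k\}\simeq\mathbb{R}^3$, so the map $v\mapsto pvp^\dagger$ for $p\in\mathrm{Sp}(1)$ is an $\mathbb{R}$-linear isometry of $\mathfrak{sp}(1)$ (since conjugation preserves the norm on $\mathbb{H}$ and fixes the real line). This is the standard realization of the double cover $\mathrm{Sp}(1)\to\mathrm{SO}(3)$, so the map $v\mapsto pvp^\dagger$ is $R_p$. Combined with the first part, this shows that if $(e^{i\theta},p)\in\mathrm{Sp}(2)$ acts on $x\in\mathbb{H}$ by $x\mapsto e^{i\theta}xp^\dagger$ (which is the $\mathrm{Sp}(1)\times\mathrm{Sp}(1)\subseteq\mathrm{Sp}(2)$ action on the preimage of $\pi$), then $\pi$ intertwines this action with the $\mathrm{SO}(3)$ action $R_p$ on $\mathfrak{sp}(1)$, and notably the $e^{i\theta}$ factor acts trivially downstairs—consistent with the fact that the $R_0$ circle action is exactly the fibre direction of the Hopf fibration.

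There is no real obstacle here; the only subtlety worth noting explicitly is the commutation $e^{-i\theta}i e^{i\theta}=i$, which is what makes the $e^{i\theta}$ factor vanish after applying $\pi$ and thereby explains why it is $R_0$ (and no other $R_t$) that is compatible with the Hopf projection.
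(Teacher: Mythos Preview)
Your proof is correct and is essentially the same as the paper's own approach: the paper simply states that the result follows immediately from the definition of the Hopf map $\pi(x)=x^\dagger i x$, which is exactly the direct computation you wrote out.
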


\begin{proof}
This result follows immediately from the definition of the Hopf map.
\end{proof}

In later sections, we investigate symmetric singular monopoles more deeply. However, we see that such monopoles must arise from instantons equivariant under $R_0$, or a group conjugate to it. 

\section{Toral symmetry}\label{subsec:ToralSymmetry}

In this section, we find an equation describing all instantons with toral symmetry, as given in Table~\ref{table:conformalsubgroups} as well as below. We then discuss the connection between such instantons and axially symmetric hyperbolic and singular monopoles.

First, we introduce the notion of toral symmetry. An instanton is said to have toral symmetry if it is equivariant under $\mathrm{diag}(e^{i\phi_1},e^{i\phi_2})$ for all $\phi_1,\phi_2\in\mathbb{R}$.
\begin{theorem}
Let $\hat{M}\in\mathcal{M}_{n,k}$. Then $\hat{M}$ has toral symmetry if and only if there exists $\rho_1,\rho_2\in\mathfrak{so}(k)$ such that $[\rho_1,\rho_2]=0$ and
\label{thm:toralsym}
\begin{align}
iM=[\rho_1,M],&\quad
-Mi=[\rho_2,M],\label{eq:toralsymmetry1}\\
[\rho_1,R]=0,&\quad
[\rho_2,R]=0.\label{eq:toralsymmetry2}
\end{align}
\end{theorem}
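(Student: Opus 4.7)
The plan is to mimic the proof of Theorem~\ref{thm:circularsym}, but with Proposition~\ref{prop:R} replaced by Theorem~\ref{thm:mainthm}, since $\mathcal{S}=S^1\times S^1$ is compact, connected, and two-dimensional. First I would extend the conformal and gauge actions from $\mathcal{M}_{n,k}$ to the ambient smooth manifold $\mathcal{X}=\mathrm{Mat}(n+k,k,\mathbb{H})^{\oplus 2}$, exactly as in the proof of Theorem~\ref{thm:circularsym}. Since the full gauge group $\mathrm{Sp}(n+k)\times\mathrm{GL}(k,\mathbb{R})$ is not compact, I invoke Corollary~\ref{cor:IsomS}: any gauge transformation $(Q,K)$ realizing equivariance under an isometry $\mathrm{diag}(e^{i\phi_1},e^{i\phi_2})$ must have $K\in\mathrm{O}(k)$ with $[R,K]=0$, and $Q$ is then uniquely determined by $K$ via Lemma~\ref{lemma:focusonY}. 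This lets me apply Theorem~\ref{thm:mainthm} with the effective compact gauge group $\mathrm{O}(k)$.

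Theorem~\ref{thm:mainthm} then produces a Lie algebra homomorphism $\rho\colon\mathbb{R}^2\to\mathfrak{so}(k)$ satisfying the infinitesimal identity $x.(\hat{M},U)-\rho(x).(\hat{M},U)=0$ for every $x\in\mathbb{R}^2$. Setting $\rho_1:=\rho((1,0))$ and $\rho_2:=\rho((0,1))$, the homomorphism property combined with the abelian structure of $\mathbb{R}^2$ immediately gives $[\rho_1,\rho_2]=0$. Applying the explicit $\mathfrak{sp}(2)$-action formula from Lemma~\ref{lemma:conformalaction} to the generators $\mathrm{diag}(i,0)$ and $\mathrm{diag}(0,i)$ and reading off the lower block of the infinitesimal identity yields exactly the two equations in \eqref{eq:toralsymmetry1}. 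The conditions \eqref{eq:toralsymmetry2} come from Corollary~\ref{cor:IsomS}: for each $\theta\in\mathbb{R}$, the one-parameter subgroup $e^{\rho_l\theta}\in\mathrm{O}(k)$ satisfies $[R,e^{\rho_l\theta}]=0$, so differentiating at $\theta=0$ gives $[\rho_l,R]=0$.

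For the converse, given $\rho_1,\rho_2\in\mathfrak{so}(k)$ satisfying all the stated identities, I would define $K(\phi_1,\phi_2):=e^{\rho_1\phi_1+\rho_2\phi_2}\in\mathrm{O}(k)$; commutativity of $\rho_1$ and $\rho_2$ ensures this is a genuine homomorphism $\mathbb{R}^2\to\mathrm{O}(k)$, and $[\rho_l,R]=0$ ensures $[K,R]=0$. Mirroring the converse direction of Theorem~\ref{thm:circularsym}, I would show that
\begin{equation*}
A(\phi_1,\phi_2):=e^{-i\phi_1}K(\phi_1,\phi_2)MK(\phi_1,\phi_2)^{T}e^{i\phi_2}
\end{equation*}
has vanishing partial derivatives in both variables: using that $K$ commutes with the real quaternion~$i$ and with $\rho_1,\rho_2$, one computes $\partial_{\phi_1}A = e^{-i\phi_1}K(-iM+[\rho_1,M])K^{T}e^{i\phi_2}$ and similarly for $\partial_{\phi_2}A$, both of which vanish by \eqref{eq:toralsymmetry1}. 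Hence $A\equiv M$, and Lemma~\ref{lemma:focusonY} then gives $\mathrm{diag}(e^{i\phi_1},e^{i\phi_2})$-equivariance for every $(\phi_1,\phi_2)\in\mathbb{R}^2$, i.e.\ toral symmetry.

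The principal obstacle I anticipate is the reduction to a compact gauge group and the proper interpretation of Theorem~\ref{thm:mainthm}'s hypotheses: one must confirm that the effective $\mathrm{O}(k)$-action on $\mathcal{X}$ genuinely captures every isometric stabilizer of $(\hat{M},U)$, and that the $Q$-component reconstructed from $K$ via Lemma~\ref{lemma:focusonY} faithfully reproduces the original non-compact gauge action in this setting. Once that foundation is laid, commutativity $[\rho_1,\rho_2]=0$ is automatic from the abelian structure of $\mathrm{Lie}(\mathcal{S})$, and both directions of the biconditional reduce to the same bookkeeping as in Theorem~\ref{thm:circularsym}, now run independently in the two commuting one-parameter directions.
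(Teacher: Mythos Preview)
Your proposal is correct and follows essentially the same route as the paper: the paper likewise invokes Corollary~\ref{cor:IsomS} to replace the non-compact gauge group by $\mathrm{O}(k)$, defines the stabilizer $S\subseteq S^1\times S^1\times\mathrm{O}(k)$, and then follows the proof of Theorem~\ref{thm:mainthm} to obtain a Lie algebra homomorphism $\rho\colon\mathbb{R}\oplus\mathbb{R}\to\mathfrak{so}(k)$, after which $\rho_1,\rho_2$ and their commutativity, \eqref{eq:toralsymmetry1}, and \eqref{eq:toralsymmetry2} are extracted exactly as you describe; for the converse the paper simply refers back to the method of Theorem~\ref{thm:circularsym}, which is precisely the two-variable constant-$A$ argument you spell out.
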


\begin{definition}
We call $\rho_1,\rho_2$ the \textbf{generators} of the toral symmetry of $\hat{M}$.
\end{definition}

\begin{note}
All connected Lie subgroups of $\mathrm{Sp}(2)$ with Lie algebra $\mathbb{R}\oplus\mathbb{R}$ are conjugate to $S^1\times S^1$. That is, there is some $A\in\mathrm{Sp}(2)$ such that the Lie group is of the form $A(S^1\times S^1)A^\dagger$. Instantons equivariant under this group are of the form $A^\dagger.(\hat{M},U)$, where $\hat{M}$ is equivariant under $S^1\times S^1$.
\end{note}

\begin{proof}
We use the same notation as introduced in the beginning of the proof of Theorem~\ref{thm:circularsym}. We follow the proof of Theorem~\ref{thm:mainthm} after setting the scene~\cite[Theorem~1.1]{lang_moduli_2024}.

Suppose that $\hat{M}$ has toral symmetry. Let $S\subseteq S^1\times S^1\times\mathrm{O}(k)$ be the stabilizer group of $(\hat{M},U)$ restricted to rotations in $S^1\times S^1$. That is
\begin{multline}
S:=\{(\mathrm{diag}(e^{i\theta_1},e^{i\theta_2}),K)\mid (\mathrm{diag}(Le^{-i\theta_2}KL^\dagger (LL^\dagger)^{-1},e^{-i\theta_1} K),K).\\
\mathrm{diag}(e^{i\theta_1},e^{i\theta_2}).(\hat{M},U)=(\hat{M},U)\}.\label{eq:toralsymmetryS}
\end{multline}
Indeed, if $(Q,K)\in\mathrm{Sp}(n+k)\times\mathrm{GL}(k,\mathbb{R})$ such that $\mathrm{diag}(e^{i\theta_1},e^{i\theta_2}).(Q,K).(\hat{M},U)=(\hat{M},U)$, then Corollary~\ref{cor:IsomS} tells us that $K\in\mathrm{O}(k)$, $[R,K]=0$, and $Q$ is given by the diagonal matrix in \eqref{eq:toralsymmetryS}. Therefore, the pairs in $S$ encapsulate all the toral symmetry of the instanton. That $S$ is a group follows from this fact.

Unlike circular $t$-symmetry, toral symmetry is not one-dimensional. As such, we want to use Theorem~\ref{thm:mainthm} and require a compact Lie group for the gauge group. We see that although our initial gauge action involved a non-compact Lie group, we can focus on a compact subgroup. As such, the stabilizer group is a subgroup of a compact group, just as in the proof of Theorem~\ref{thm:mainthm}~\cite[Theorem~1.1]{lang_moduli_2024}. We then proceed as in said proof.

In particular, we find that $\hat{M}$ has toral symmetry if and only if there is a Lie algebra homomorphism $\rho\colon\mathbb{R}\oplus\mathbb{R}\rightarrow\mathfrak{so}(k)$ such that for all $(\theta_1,\theta_2)\in\mathbb{R}\oplus\mathbb{R}$, we have
\begin{equation}
\left(\begin{bmatrix}
L\left(\rho(\theta_1,\theta_2)-\theta_2 i\right)L^\dagger(LL^\dagger)^{-1}L-L\left(\rho(\theta_1,\theta_2)-\theta_2 i\right) \\
M\theta_1 i-\theta_2 iM+[\rho(\theta_1,\theta_2),M]
\end{bmatrix},0\right)=(0,0).\label{eq:toralfulleq}
\end{equation}

We can simplify these constraints. In particular, suppose $\hat{M}$ has toral symmetry. If we let $\rho_1:=\rho(1,0)$ and $\rho_2:=\rho(0,1)$, then we have $[\rho_1,\rho_2]=0$. Moreover, evaluating \eqref{eq:toralfulleq} at $(\theta_1,\theta_2)=(1,0),(0,1)$, we have, respectively,
\begin{gather}
\begin{bmatrix}
L\rho_1L^\dagger (LL^\dagger)^{-1}L-L\rho_1 \\
-iM+[\rho_1,M]
\end{bmatrix}=0;\label{eq:toraldiff1}\\
\begin{bmatrix}
-LiL^\dagger (LL^\dagger)^{-1}L+L\rho_2L^\dagger (LL^\dagger)^{-1}L+Li-L\rho_2 \\
[\rho_2,M]+Mi
\end{bmatrix}=0.\label{eq:toraldiff2}
\end{gather} 
Focusing on the bottom rows, we see that \eqref{eq:toralsymmetry1} is satisfied. Furthermore, just as in the circular $t$-symmetry case, Corollary~\ref{cor:IsomS} tells us that $[R,e^{\rho_l \theta}]=0$, for all $\theta\in\mathbb{R}$. Differentiating and evaluating at $\theta=0$ for both $l=1,2$, we have $[R,\rho_1]=0$ and $[R,\rho_2]=0$.

We can use the same method as in the proof of Theorem~\ref{thm:circularsym} to prove the converse.
\end{proof}

Just as with circular $t$-symmetry, we need not check the final condition of $\mathcal{M}_{n,k}$ in Definition~\ref{def:Mstd} everywhere.
\begin{lemma}
Suppose that $\hat{M}$ satisfies \eqref{eq:toralsymmetry1} and \eqref{eq:toralsymmetry2} for some $\rho_1,\rho_2\in\mathfrak{so}(k)$ satisfying $[\rho_1,\rho_2]=0$, as well as the first three conditions of Definition~\ref{def:Mstd}. If the final condition is satisfied at all $x=x_0+x_2j\in\mathbb{H}$ with $x_0,x_2\geq 0$, then $\hat{M}\in\mathcal{M}_{n,k}$.
\end{lemma}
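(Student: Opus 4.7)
My plan is to mirror the argument of Lemma~\ref{lemma:finalconditioncirc}, using the toral action to transport an arbitrary $z\in\mathbb{H}$ to a point $x=x_0+x_2 j$ with $x_0,x_2\geq 0$ where the fourth condition of Definition~\ref{def:Mstd} is assumed to hold, and then transferring non-singularity of $\Delta(x)^\dagger\Delta(x)$ back to $\Delta(z)^\dagger\Delta(z)$ by an orthogonal conjugation.

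First, I would analyze how $\mathrm{diag}(e^{i\phi_1},e^{i\phi_2})$ acts on $\mathbb{H}$. Writing $z=(z_0+z_1 i)+(z_2+z_3 i)j$ and using $j\,e^{-i\phi_2}=e^{i\phi_2}j$, the action sends $z$ to $(z_0+z_1 i)\,e^{i(\phi_1-\phi_2)}+(z_2+z_3 i)\,e^{i(\phi_1+\phi_2)}j$. The two angle combinations $\phi_1-\phi_2$ and $\phi_1+\phi_2$ are independent, so I can separately rotate the $\mathbb{R}[i]$-part and the $\mathbb{R}[i]j$-part of $z$ to their non-negative real values, producing $x=|z_0+z_1 i|+|z_2+z_3 i|j$ lying in the set where the fourth condition is assumed.

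Second, I would establish that $\hat M$ is equivariant under this chosen toral element. Although $\hat M$ is not yet known to lie in $\mathcal{M}_{n,k}$, the converse direction of Theorem~\ref{thm:toralsym} relies only on Lemma~\ref{lemma:focusonY}, whose hypotheses are that $M$ is symmetric, $LL^\dagger$ is positive definite, and $R$ is real — precisely the first three conditions of Definition~\ref{def:Mstd} — together with the relations \eqref{eq:toralsymmetry1} and \eqref{eq:toralsymmetry2}. This yields a pair $(Q,K)\in\mathrm{Sp}(n+k)\times\mathrm{O}(k)$ realizing equivariance under $\mathrm{diag}(e^{i\phi_1},e^{i\phi_2})$.

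Third, I would carry through the identity $\Delta(x)=Q^\dagger\Delta(z)Ke^{-i\phi_2}$ exactly as in the circular case, obtaining
\begin{equation*}
\Delta(x)^\dagger\Delta(x)=e^{i\phi_2}K^T\Delta(z)^\dagger\Delta(z)Ke^{-i\phi_2}.
\end{equation*}
Since $M$ is symmetric and $R$ is real, the note following Definition~\ref{def:Mstd} guarantees that $\Delta(x)^\dagger\Delta(x)$ is real. A real matrix commutes with any scalar quaternion, so the $e^{\pm i\phi_2}$ factors can be cancelled, and the equation rearranges to $\Delta(z)^\dagger\Delta(z)=K\,\Delta(x)^\dagger\Delta(x)\,K^T$. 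Non-singularity of the right-hand side — given by hypothesis at $x$ — forces non-singularity of $\Delta(z)^\dagger\Delta(z)$, giving $\hat{M}\in\mathcal{M}_{n,k}$.

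I do not expect a real obstacle; the argument is a two-parameter upgrade of the circular case. The only step worth double-checking is that the converse direction of Theorem~\ref{thm:toralsym} genuinely does not require the fourth condition of Definition~\ref{def:Mstd}, since that theorem is formally stated for $\hat{M}\in\mathcal{M}_{n,k}$. Tracing through Lemma~\ref{lemma:focusonY} and the constancy argument used for circular symmetry confirms that only conditions (1)--(3) are invoked, so the equivariance step is legitimate and the plan goes through.
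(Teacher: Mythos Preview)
Your proposal is correct and follows essentially the same approach as the paper: compute the toral action on $\mathbb{H}$ to rotate an arbitrary $z$ to some $x=x_0+x_2 j$ with $x_0,x_2\geq 0$, then invoke the same conjugation argument as in Lemma~\ref{lemma:finalconditioncirc}. The paper's proof is in fact terser—it just says ``the rest follows as in the proof of Lemma~\ref{lemma:finalconditioncirc}''—whereas you carefully verify that the equivariance step only uses conditions (1)--(3) of Definition~\ref{def:Mstd}, which is a point the paper leaves implicit.
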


\begin{proof}
Let $z=z_0+z_1i+z_2j+z_3k\in \mathbb{H}$. Note that $\mathrm{diag}(e^{i\theta},e^{i\phi})$ acts on $z$ as $z\mapsto e^{i\theta}ze^{-i\phi}$. Specifically, it takes $z\mapsto (z_0+z_1i)e^{i(\theta-\phi)}+(z_2+z_3i)e^{i(\theta+\phi)}j$. Thus, there is some $\theta,\phi\in\mathbb{R}$ such that $x:=e^{i\theta}ze^{-i\phi}$ has no $k$ or $i$ component and the real and $j$ component are non-negative. The rest follows as in the proof of Lemma~\ref{lemma:finalconditioncirc}.
\end{proof}

\subsection{Structure of toral symmetry}

In light of Theorem~\ref{thm:toralsym}, we know that we can find instantons with toral symmetry given $\rho_1,\rho_2\in\mathfrak{so}(k)$. Note that such matrices generate a real representation of $\mathbb{R}\oplus\mathbb{R}$. Starting with such a real representation, we can narrow down the possible $M$, so we are only left with finding $L$ such that we have an instanton, which necessarily has toral symmetry. 

Much like the case of circular $t$-symmetry, when $t$ is rational, we can do better than representations of $\mathbb{R}\oplus\mathbb{R}$. Note that a representation $\rho\colon\mathbb{R}\oplus\mathbb{R}\rightarrow\mathfrak{so}(k)$ corresponds to a representation of $S^1\times S^1$ if and only if $\mathrm{exp}(\rho(2\pi,0))=I_k=\mathrm{exp}(\rho(0,2\pi))$. 
\begin{prop}
Let $\hat{M}\in\mathcal{M}_{n,k}$. Then $\hat{M}$ has toral symmetry if and only if there exists a real representation $\rho\colon\mathbb{R}\oplus\mathbb{R}\rightarrow\mathfrak{so}(k)$, corresponding to a representation of $S^1\times S^1$, such that for all $t_1,t_2\in\mathbb{R}$,
\label{prop:toralsymrep}
\begin{align}
t_1iM-t_2Mi=\frac{1}{2}[\rho(t_1,t_2),M],\label{eq:toralsymmetryrepversion}\\
[\rho(t_1,t_2),R]=0.\label{eq:toralsymmetryrepversion2}
\end{align}
\end{prop}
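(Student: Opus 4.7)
The plan is to adapt the strategy of Proposition~\ref{prop:circulartsymrep} to the two-parameter setting. The forward direction is immediate from Theorem~\ref{thm:toralsym}: setting $\rho_1 := \tfrac{1}{2}\rho(1,0)$ and $\rho_2 := \tfrac{1}{2}\rho(0,1)$, these commute because $\rho$ is a Lie algebra homomorphism into the abelian image, and evaluating \eqref{eq:toralsymmetryrepversion} and \eqref{eq:toralsymmetryrepversion2} at $(t_1,t_2) = (1,0)$ and $(0,1)$ recovers \eqref{eq:toralsymmetry1} and \eqref{eq:toralsymmetry2}, so $\hat{M}$ has toral symmetry.

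For the converse, Theorem~\ref{thm:toralsym} produces commuting $\rho_1, \rho_2 \in \mathfrak{so}(k)$ satisfying \eqref{eq:toralsymmetry1} and \eqref{eq:toralsymmetry2}. Because they commute and are both real skew-symmetric, there is $U \in \mathrm{O}(k)$ simultaneously block-diagonalizing them: in some ordered basis each non-trivial common block is $2\times 2$ with $\rho_1$ acting as $a_l J$ and $\rho_2$ as $b_l J$ for $J = \begin{bmatrix} 0 & 1 \\ -1 & 0 \end{bmatrix}$, while trivial blocks are $1 \times 1$ zeros. Writing $M$ in the matching block form $M = U(M_{lp})U^T$, the two halves of \eqref{eq:toralsymmetry1} decouple block by block into $\rho_{1,l} M_{lp} - M_{lp} \rho_{1,p} = iM_{lp}$ and $\rho_{2,l} M_{lp} - M_{lp} \rho_{2,p} = -M_{lp}i$. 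A direct expansion in quaternion components, following exactly the on-diagonal and off-diagonal analysis in the proof of Proposition~\ref{prop:circulartsymrep}, shows that every non-vanishing on-diagonal block $M_{ll}$ forces $2a_l, 2b_l \in \{-1, 0, 1\}$ independently, and every non-vanishing off-diagonal $M_{lp}$ forces each of $a_l - a_p$ and $b_l - b_p$ into $\{-1, 0, 1\}$, with the specific value dictated by which of the four real quaternion components of $M_{lp}$ survive.

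Define $\sim_0$ by $l \sim_0 p$ iff $M_{lp} \neq 0$, and let $\sim$ be the generated equivalence relation. Within each class the above relations force the original $a_l$ (respectively $b_l$) to differ by integers, and any on-diagonal constraint active in the class forces the entire class into $\tfrac{1}{2}\mathbb{Z}$. Construct primed values class by class: if the class already lies in $\tfrac{1}{2}\mathbb{Z}$ (in particular whenever some $M_{ll} \neq 0$ in the class) set $a_l' := a_l$ and propagate identically; otherwise, shift the whole class by the fractional part common to all its members, choosing this shift as a function of that fractional residue so that any two originally equal eigenvalues living in distinct classes receive equal primed values. Carry out the same procedure for $b_l$ to obtain $\rho_1', \rho_2' \in \mathfrak{so}(k)$. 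They still satisfy \eqref{eq:toralsymmetry1} by construction; they satisfy \eqref{eq:toralsymmetry2} because the primed eigenstructure is only a coarsening of the original one (no originally equal eigenvalues are split), so the existing constraint $[R, \rho_l] = 0$ is stronger than what $[R, \rho_l'] = 0$ demands, exactly as at the end of the proof of Proposition~\ref{prop:circulartsymrep}. Setting $\rho(t_1, t_2) := 2t_1 \rho_1' + 2t_2 \rho_2'$, linearity yields \eqref{eq:toralsymmetryrepversion} and \eqref{eq:toralsymmetryrepversion2}, and because every eigenfrequency of $\rho_l'$ lies in $\tfrac{1}{2}\mathbb{Z}$ we have $\mathrm{exp}(\rho(2\pi, 0)) = \mathrm{exp}(\rho(0, 2\pi)) = I_k$, so $\rho$ corresponds to a representation of $S^1 \times S^1$.

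The hard part will be the block-by-block bookkeeping. Because $iM$ and $Mi$ enter \eqref{eq:toralsymmetry1} separately, each non-vanishing block $M_{lp}$ contributes two logically independent constraints, one coupling the $a$-parameters and one the $b$-parameters, and one must verify carefully, as in the circular case, that expanding $M_{lp}$ into its four real quaternion components forces both families independently into the half-integer set. The secondary subtlety is choosing the shifts across distinct equivalence classes with the same fractional residue consistently, so that the coarsening argument for $[R, \rho_l'] = 0$ goes through without any hidden splitting of originally equal eigenvalues.
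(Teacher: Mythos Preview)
Your approach is essentially correct and shares the core idea with the paper: modify the generators so that doubled eigenvalues become integers. The execution differs in two respects worth noting.

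First, the paper does \emph{not} perform a simultaneous block-diagonalization. Instead it observes that $\rho_1$ and $\rho_2$ each separately generate a (conjugate) circular $0$-symmetry, so the modification procedure from Proposition~\ref{prop:circulartsymrep} can be applied to $\rho_1$ alone in its own block-diagonal frame, then to $\rho_2$ alone in a different frame $U'$ where the roles of $\rho_1$ and $\rho_2$ are swapped. The only thing to check is that modifying the $a_{1j}$ does not destroy $[\rho_1,\rho_2]=0$, which is immediate. This sidesteps all of your joint bookkeeping: there is no need to track how the sign choices for the $a_l$'s interact with those for the $b_l$'s, nor to worry about whether the two families of constraints on the same block $M_{lp}$ are compatible with a single basis choice.

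Second, for the condition $[\rho(t_1,t_2),R]=0$, the paper simply invokes Corollary~\ref{cor:IsomS}, exactly as Proposition~\ref{prop:circulartsymrep} does. Your reference to ``exactly as at the end of the proof of Proposition~\ref{prop:circulartsymrep}'' is therefore a misattribution: that proof does not use a coarsening argument but rather Corollary~\ref{cor:IsomS}. Your coarsening argument is a valid alternative and is more self-contained, but it forces you to introduce the extra consistency rule about shifting by a function of the fractional residue, which the paper avoids entirely.

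A minor correction: your stated on-diagonal constraint $2a_l,2b_l\in\{-1,0,1\}$ should be $\{-1,1\}$ (the $0$ case forces $M_{ll}=0$), and the off-diagonal constraint is on $|a_l\pm a_p|$ rather than $a_l-a_p$ alone; neither affects the half-integer conclusion.
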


\begin{definition}
We call $\rho$ the \textbf{generating representation} of the toral symmetry of $\hat{M}$.
\end{definition}

\begin{proof}
If such a $\rho$ exists, by Theorem~\ref{thm:toralsym}, we have toral symmetry generated by the pair $\frac{1}{2}\rho(1,0)$ and $\frac{1}{2}\rho(0,1)$. Conversely, suppose that $\hat{M}$ has toral symmetry. By Theorem~\ref{thm:toralsym}, there exists $\rho_1,\rho_2\in\mathfrak{so}(k)$ satisfying $[\rho_1,\rho_2]=0$, \eqref{eq:toralsymmetry1}, and \eqref{eq:toralsymmetry2}. 

Note that $\rho_1$ and $\rho_2$ generate circular $0$-symmetry in $(\hat{M},U)$. However, $\rho_1$ generates the usual symmetry, whereas $\rho_2$ generates a conjugate circular $0$-symmetry.

As $\rho_1$ and $\rho_2$ are diagonalizeable and commute, they are simultaneously diagonalizeable. Furthermore, noting that the eigenvalues of a matrix in $\mathfrak{so}(k)$ come in pairs $\pm \lambda i$ for $\lambda\in\mathbb{R}$ or single zeros and $[\rho_1,\rho_2]=0$, there is some $U\in\mathrm{O}(k)$ such that 
\begin{equation*}
\rho_1=U\mathrm{diag}(\rho_{11},\ldots,\rho_{1m})U^T \quad\textrm{and}\quad
\rho_2=U\mathrm{diag}(\rho_{21},\ldots,\rho_{2m})U^T,
\end{equation*}
where $\rho_{1j}$ is either $\begin{bmatrix}
0 & a_{1j} \\ 
-a_{1j} & 0
\end{bmatrix}$ for some $a_{1j}\in\mathbb{R}$ or $\rho_{1j}=0$. Additionally, $\rho_{2j}$ is either zero or there exists some $a_{2j},b_{2j}\in\mathbb{R}$ such that $\rho_{2j}=\begin{bmatrix}
ib_{2j} & a_{2j} \\ -a_{2j} & ib_{2j}
\end{bmatrix}$. This fact follows from the simultaneous diagonalization of the matrices. 

Just as in the proof of Proposition~\ref{prop:circulartsymrep}, we can choose the $a_{1j}$ such that $2\rho_1$ generates a representation of $S^1$. Importantly, we note that changing the $a_{1j}$ does not affect the commutativity of $\rho_1$ and $\rho_2$. Similarly, we can choose a $U'\in\mathrm{O}(k)$ such that the forms of $\rho_1$ and $\rho_2$ are swapped. We can then repeat a similar analysis and modify $\rho_2$ such that $2\rho_2$ generates a representation of $S^1$.

As the matrices commute and each generates a representation of $S^1$, $\tilde{\rho}\colon\mathbb{R}\oplus\mathbb{R}\rightarrow\mathfrak{so}(k)$ taking $\tilde{\rho}(t_1,t_2):=2t_1\rho_1+2t_2\rho_2$ satisfies \eqref{eq:toralsymmetryrepversion} and generates a representation of $S^1\times S^1$. That $\tilde{\rho}$ satisfies \eqref{eq:toralsymmetryrepversion2} follows from Corollary~\ref{cor:IsomS}.
\end{proof}

Proposition~\ref{prop:toralsymrep} tells us that we can use the representation theory of $S^1\times S^1$ to better understand instantons with toral symmetry. 

The generating representation in Proposition~\ref{prop:toralsymrep} was obtained  by examining the bottom of \eqref{eq:toraldiff1} and \eqref{eq:toraldiff2}. By focusing instead on the top of those equations, we obtain an induced representation.
\begin{lemma}
Suppose that $\hat{M}\in\mathcal{M}_{n,k}$ has toral symmetry generated by a representation $(\mathbb{R}^k,\rho)$, which corresponds to a representation of $S^1\times S^1$. Let $\lambda\colon\mathbb{R}\oplus\mathbb{R}\rightarrow\mathfrak{sp}(n)$ be defined by $\lambda(t_1,t_2):=L\left(\rho(t_1,t_2)-2t_2i\right)L^\dagger (LL^\dagger)^{-1}$. We have $(\mathbb{H}^n,\lambda)$ is a quaternionic representation of $\mathbb{R}\oplus\mathbb{R}$, corresponding to a representation of $S^1\times S^1$.
\end{lemma}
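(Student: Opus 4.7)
The plan is to verify three things about $\lambda$: (i) $\lambda(t_1,t_2)\in\mathfrak{sp}(n)$ for every $(t_1,t_2)\in\mathbb{R}\oplus\mathbb{R}$; (ii) $\lambda$ is a Lie algebra homomorphism; and (iii) $e^{\lambda(2\pi,0)}=I_n=e^{\lambda(0,2\pi)}$, so that $\lambda$ integrates to a representation of $S^1\times S^1$. The argument closely parallels the circular $t$-symmetry case treated in the lemma and proposition immediately preceding Proposition~\ref{prop:circulartsymrep}, with the role of the real parameter $t$ now played by $t_2$.

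The engine of the proof is the commutation identity $[L^\dagger L,\rho(t_1,t_2)-2t_2 i]=0$. To obtain it I would start from $t_1 iM-t_2 Mi=\tfrac12[\rho(t_1,t_2),M]$ in Proposition~\ref{prop:toralsymrep}, together with its adjoint $M^\dagger \rho(t_1,t_2)=\rho(t_1,t_2)M^\dagger+2t_1 M^\dagger i-2t_2 iM^\dagger$, and run the same bookkeeping as in the circular case to get $[M^\dagger M,\rho(t_1,t_2)]=-2t_2[i,M^\dagger M]$, whence $[M^\dagger M,\rho(t_1,t_2)-2t_2 i]=0$. Combining with $[R,\rho(t_1,t_2)]=0$ (also from Proposition~\ref{prop:toralsymrep}) and the reality of $R$ yields $[L^\dagger L,\rho(t_1,t_2)-2t_2 i]=0$, and a two-line manipulation then gives $[LL^\dagger,L(\rho(t_1,t_2)-2t_2 i)L^\dagger]=0$, so $\lambda(t_1,t_2)$ admits the two equivalent forms $L(\rho(t_1,t_2)-2t_2 i)L^\dagger(LL^\dagger)^{-1}$ and $(LL^\dagger)^{-1}L(\rho(t_1,t_2)-2t_2 i)L^\dagger$. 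Since $\rho$ is real and $i^\dagger=-i$, we have $(\rho(t_1,t_2)-2t_2 i)^\dagger=-(\rho(t_1,t_2)-2t_2 i)$, and taking the dagger of the first form gives $\lambda(t_1,t_2)^\dagger=-\lambda(t_1,t_2)$, proving (i).

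For (ii), $\lambda$ is manifestly $\mathbb{R}$-linear and $\mathbb{R}\oplus\mathbb{R}$ is abelian, so it suffices to show that the image consists of commuting matrices. Writing $X_t:=\rho(t_1,t_2)-2t_2 i$ and $X_s:=\rho(s_1,s_2)-2s_2 i$, the identity $[L^\dagger L,X_s]=0$ collapses the product to $\lambda(t_1,t_2)\lambda(s_1,s_2)=LX_tX_sL^\dagger(LL^\dagger)^{-1}$ after moving $L^\dagger L$ past $X_t$ in the middle. Since $\rho$ is a representation of an abelian Lie algebra and the real matrices $\rho(t_1,t_2),\rho(s_1,s_2)$ commute with the scalar $i$, $X_t$ and $X_s$ commute, so the right side is symmetric in $t$ and $s$, giving $[\lambda(t_1,t_2),\lambda(s_1,s_2)]=0$.

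For (iii), the same commutation $[L^\dagger L,X_t]=0$ yields inductively $\lambda(t_1,t_2)^n=LX_t^nL^\dagger(LL^\dagger)^{-1}$ for every $n\geq 1$, so summing the exponential series (with the $n=0$ term handled by $LL^\dagger(LL^\dagger)^{-1}=I_n$) produces $e^{\lambda(t_1,t_2)}=Le^{X_t}L^\dagger(LL^\dagger)^{-1}$. At $(2\pi,0)$, $X_{2\pi,0}=\rho(2\pi,0)$ and $e^{\rho(2\pi,0)}=I_k$ by the $S^1\times S^1$ hypothesis, so $e^{\lambda(2\pi,0)}=LL^\dagger(LL^\dagger)^{-1}=I_n$. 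At $(0,2\pi)$, commutativity of the real matrix $\rho(0,2\pi)$ with $i$ factors the exponential as $e^{\rho(0,2\pi)}e^{-4\pi i}=I_k\cdot 1=I_k$, again producing $I_n$. The one place to be careful is the bookkeeping behind $[M^\dagger M,\rho(t_1,t_2)-2t_2 i]=0$, but no conceptual input beyond what already appears in the circular $t$-symmetry case is needed.
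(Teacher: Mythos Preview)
Your proof is correct and follows essentially the same route as the paper: establish $[M^\dagger M,\rho(t_1,t_2)-2t_2 i]=0$ from \eqref{eq:toralsymmetryrepversion}, combine with $[R,\rho(t_1,t_2)]=0$ and the reality of $R$ to get $[L^\dagger L,\rho(t_1,t_2)-2t_2 i]=0$, deduce $\lambda(t_1,t_2)\in\mathfrak{sp}(n)$, and then use the exponential formula $e^{\lambda(t_1,t_2)}=Le^{\rho(t_1,t_2)-2t_2 i}L^\dagger(LL^\dagger)^{-1}$ together with $e^{\rho(2\pi,0)}=e^{\rho(0,2\pi)}=I_k$ and $e^{-4\pi i}=1$. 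Your explicit verification that the images of $\lambda$ commute (so that $\lambda$ is a homomorphism of abelian Lie algebras) is a detail the paper leaves implicit, but otherwise the arguments coincide.
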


\begin{proof}
By Proposition~\ref{prop:toralsymrep}, we know that for all $t_1,t_2\in\mathbb{R}$, \eqref{eq:toralsymmetryrepversion} and \eqref{eq:toralsymmetryrepversion2} hold. Using the former,
\begin{equation*}
\begin{aligned}
M^\dagger M\left(\rho(t_1,t_2)-2t_2i\right)&=M^\dagger[M,\rho(t_1,t_2)]+M^\dagger \rho(t_1,t_2) M-2t_2M^\dagger Mi\\
&=M^\dagger \rho(t_1,t_2) M-2t_1M^\dagger iM\\
&=[M^\dagger,\rho(t_1,t_2)]M+\rho(t_1,t_2) M^\dagger M-2t_1M^\dagger iM\\
&=\left(\rho(t_1,t_2)-2t_2i\right)M^\dagger M.
\end{aligned}
\end{equation*}
Hence, $\left[M^\dagger M,\rho(t_1,t_2)-2t_2i\right]=0$. Then, as $[\rho(t_1,t_2),R]=0$, by \eqref{eq:toralsymmetryrepversion2}, and $R$ is real, $\left[\rho(t_1,t_2)-2t_2i,R\right]=0$, so $\left[L^\dagger L,\rho(t_1,t_2)-2t_2i\right]=0$. Thus, we see that 
\begin{equation*}
\left[LL^\dagger ,L\left(\rho(t_1,t_2)-2t_2i\right)L^\dagger\right]=0.
\end{equation*} 
Therefore, $\lambda(t_1,t_2)\in\mathfrak{sp}(n)$. 

Just as in the circular $t$-symmetry case, when $t\in\mathbb{Q}$, we have that 
\begin{equation*}
e^{\lambda(t_1,t_2)}=Le^{\rho(t_1,t_2)-2t_2i}L^\dagger (LL^\dagger)^{-1}.
\end{equation*}
As $\rho$ and $i$ commute, we have that as $(\mathbb{R}^k,\rho)$ corresponds to a representation of $S^1\times S^1$, $e^{\lambda(2\pi,0)}=I_n=e^{\lambda(0,2\pi)}$. Therefore, $(\mathbb{H}^n,\lambda)$ corresponds to a representation of $S^1\times S^1$.
\end{proof}

We know that this induced representation satisfies the following equation.
\begin{cor}
If $\hat{M}\in\mathcal{M}_{n,k}$ has toral symmetry generated by $(\mathbb{R}^k,\rho)$, then $\lambda(t_1,t_2)L-L\rho(t_1,t_2)+2t_2Li=0$, for all $t_1,t_2\in\mathbb{R}$.
\end{cor}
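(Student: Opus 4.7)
The plan is to extract this identity from the top rows of the auxiliary equations \eqref{eq:toraldiff1} and \eqref{eq:toraldiff2} that were produced inside the proof of Theorem~\ref{thm:toralsym}, in direct analogy with how the circular $t$-symmetry version of the corollary was obtained by multiplying the top of \eqref{eq:circularsymconditions} by $2b$. The proof of Theorem~\ref{thm:toralsym} only consumed the bottom rows of those equations to deduce \eqref{eq:toralsymmetry1}, so the top rows are still available as unused information.

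First, I would invoke Proposition~\ref{prop:toralsymrep} to pass from the generating representation $\rho$ back to a pair of matrices satisfying the hypotheses of Theorem~\ref{thm:toralsym}: namely $\rho_1 := \tfrac{1}{2}\rho(1,0)$ and $\rho_2 := \tfrac{1}{2}\rho(0,1)$. Running these through the argument of Theorem~\ref{thm:toralsym} produces \eqref{eq:toraldiff1} and \eqref{eq:toraldiff2}, whose top rows read
\begin{align*}
L\rho_1 L^\dagger (LL^\dagger)^{-1} L &= L\rho_1,\\
L(\rho_2 - i) L^\dagger (LL^\dagger)^{-1} L &= L(\rho_2 - i).
\end{align*}

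Second, I would use the $\mathbb{R}$-linearity of $\rho$ to write $\rho(t_1,t_2) - 2t_2 i = 2t_1 \rho_1 + 2t_2(\rho_2 - i)$, and then combine the two identities above linearly by multiplying them by $2t_1$ and $2t_2$ respectively and adding. This yields
\begin{equation*}
L\bigl(\rho(t_1,t_2) - 2t_2 i\bigr) L^\dagger (LL^\dagger)^{-1} L = L\bigl(\rho(t_1,t_2) - 2t_2 i\bigr).
\end{equation*}
Right-multiplying the definition $\lambda(t_1,t_2) = L(\rho(t_1,t_2) - 2t_2 i) L^\dagger (LL^\dagger)^{-1}$ by $L$ and comparing with the previous line gives $\lambda(t_1,t_2)\, L = L\rho(t_1,t_2) - 2t_2 Li$, which rearranges to the claimed identity.

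The only real subtlety is bookkeeping the factor-of-two conventions relating the ``generators'' $\rho_1, \rho_2$ of Theorem~\ref{thm:toralsym} to the ``generating representation'' $\rho$ of Proposition~\ref{prop:toralsymrep}; beyond this, there is no essential obstacle, as the work was already done inside the proof of Theorem~\ref{thm:toralsym} and I am simply reading off the top rows of its intermediate equations and combining them linearly.
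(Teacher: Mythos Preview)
Your proposal is correct and takes essentially the same approach as the paper, which simply says the result ``follows from the top components of \eqref{eq:toraldiff1} and \eqref{eq:toraldiff2}.'' You have supplied the details the paper omits, including the factor-of-two bookkeeping between the generators $\rho_1,\rho_2$ and the generating representation $\rho$.
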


\begin{proof}
This result follows from the top components of \eqref{eq:toraldiff1} and \eqref{eq:toraldiff2}.
\end{proof}

\subsection{Axially symmetric singular monopoles}\label{subsubsec:AxialSymSingularMono}

In this section, we discuss the connections between toral symmetric instantons and axially symmetric singular monopoles. Just like hyperbolic monopoles, axially symmetric singular monopole are monopoles symmetric under all rotations about one of their axes. 
\begin{prop}
A singular $\mathrm{Sp}(n)$-monopole with Dirac type singularities is axially symmetric if and only if its ADHM data has toral symmetry. \label{prop:axialsymsingularmono}
\end{prop}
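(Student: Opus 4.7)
The plan is to leverage the correspondence between singular monopoles with Dirac type singularities and circular $0$-symmetric instantons (Proposition~\ref{prop:singularmonoconst}) together with the equivariance of the Hopf map established in Lemma~\ref{lemma:symsingularmonopoles}. The key observation is that the two $S^1$ factors in a maximal torus of $\mathrm{Sp}(2)$ play distinct roles under the Hopf fibration: the left factor $\{\mathrm{diag}(e^{i\theta},1)\}=R_0$ is exactly the fibre action whose invariants define the monopole, while the right factor $\{\mathrm{diag}(1,e^{i\phi})\}$ projects via $\pi$ to a one-parameter subgroup of $\mathrm{SO}(3)$, i.e.\ an axial rotation.

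First, for the forward direction, I would start with an axially symmetric singular monopole $(\Phi,A)$ with Dirac type singularities. Pauly's correspondence promotes it to an instanton on $\mathbb{R}^4\setminus\{0\}$, which extends across the origin in some gauge and therefore admits ADHM data; this data is circular $0$-symmetric by construction. The axial symmetry furnishes an $S^1\subseteq\mathrm{SO}(3)$ fixing one axis, which (after conjugating by a suitable element of $\mathrm{Sp}(1)\times\mathrm{Sp}(1)$ so that the axis becomes the $i$-axis) lifts through the double cover $\mathrm{Sp}(1)\to\mathrm{SO}(3)$ to the subgroup $\{e^{i\phi}\mid\phi\in\mathbb{R}\}\subseteq\mathrm{Sp}(1)$. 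By Lemma~\ref{lemma:symsingularmonopoles}, this is precisely the image under $\pi$ of the action of $\{\mathrm{diag}(1,e^{i\phi})\}\subseteq\mathrm{Sp}(2)$ on $\mathbb{H}\setminus\{0\}$, so the monopole's axial symmetry lifts to equivariance of the instanton, and hence of the ADHM data, under this second circle. Combined with the already present $R_0$ invariance, the ADHM data is equivariant under the full torus $\{\mathrm{diag}(e^{i\theta},e^{i\phi})\}$.

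Conversely, given $\hat{M}\in\mathcal{M}_{n,k}$ with toral symmetry, Theorem~\ref{thm:toralsym} packages the invariance and in particular yields circular $0$-symmetry from the first factor, so Proposition~\ref{prop:singularmonoconst} produces a singular monopole $(\Phi,A)$ with Dirac type singularities. The second $S^1$ factor $\{\mathrm{diag}(1,e^{i\phi})\}$ acts on $\mathbb{H}\setminus\{0\}$ on the right by $x\mapsto xe^{-i\phi}$; applying Lemma~\ref{lemma:symsingularmonopoles} shows that under $\pi$ this descends to conjugation by $e^{i\phi}\in\mathrm{Sp}(1)$, i.e.\ a full one-parameter subgroup of $\mathrm{SO}(3)$ fixing the $i$-axis in $\mathfrak{sp}(1)\cong\mathbb{R}^3$. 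Since the ADHM construction (Proposition~\ref{prop:singularmonoconst}) is gauge-equivariant with respect to the induced transformations on $V$, this rotational symmetry is inherited by $(\Phi,A)$, giving axial symmetry.

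The main obstacle will be the careful bookkeeping of sides and covers: verifying that the right $\mathrm{Sp}(2)$ action on $\mathbb{H}$ induced by $\mathrm{diag}(1,e^{i\phi})$ really matches a right multiplication $xp^\dagger$ with $p=e^{i\phi}$ so that Lemma~\ref{lemma:symsingularmonopoles} applies verbatim, and that the axis identified in $\mathfrak{sp}(1)$ is fixed (up to the freedom of conjugating by an element of $\mathrm{Sp}(1)\times\mathrm{Sp}(1)$, which corresponds to choosing the axis of symmetry). Once these identifications are made, both directions reduce to applying Lemma~\ref{lemma:symsingularmonopoles} and invoking Theorem~\ref{thm:toralsym} and Proposition~\ref{prop:singularmonoconst}.
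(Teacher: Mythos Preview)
Your proposal is correct and follows essentially the same approach as the paper. The paper's proof works directly with the relation $\mathbb{A}=\pi^*A+\pi^*\Phi\otimes\xi$ and verifies explicitly that the one-form $\xi$ is invariant under $f_{\theta,\phi}(x):=e^{i\theta}xe^{-\upsilon\phi}$, then computes $f_{\theta,\phi}^*\mathbb{A}=\pi^*(g).\mathbb{A}$ and, for the converse, decomposes along $\pi^*\partial_l$ using orthogonality of $\xi$ to pullbacks; it also keeps the axis generator $\upsilon\in\mathfrak{sp}(1)$ general and only conjugates to the standard torus at the end, whereas you normalize to the $i$-axis up front---but these are cosmetic differences rather than a different route.
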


\begin{proof}
Recall the correspondence between a singular monopole $(\Phi,A)$ and an instanton $\mathbb{A}$ given by \eqref{eq:defxi}. Suppose that we have an axially symmetric singular monopole with Dirac type singularities. Let $p\mapsto R_p$ be the double cover $\mathrm{Sp}(1)\rightarrow\mathrm{SO}(3)$. Then there is some $\upsilon\in\mathfrak{sp}(1)$ such that the monopole is equivariant under $R_{e^{\upsilon\phi}}$ for all $\phi\in\mathbb{R}$. Then there is some gauge transformation $g$ such that $g.\Phi=R_{e^{\upsilon\phi}}^*\Phi$ and $g.A=R_{e^{\upsilon\phi}}^*A$. Let $f_{\theta,\phi}\colon\mathbb{H}\setminus\{0\}\rightarrow\mathbb{H}\setminus\{0\}$ be the rotation $f_{\theta,\phi}(x):=e^{i\theta}xe^{-\upsilon\phi}$. It is straightforward to show that the one-form $\xi$ defined in \eqref{eq:defxi} is invariant under $f_{\theta,\phi}$. That is, $f_{\theta,\phi}^*\xi=\xi$. By Lemma~\ref{lemma:symsingularmonopoles}, we have that $\pi\circ f_{\theta,\phi}=R_{e^{\upsilon\phi}}\circ\pi$. Then we have that
\begin{equation*}
\begin{aligned}
f_{\theta,\phi}^*\mathbb{A}&=f_{\theta,\phi}^*\circ \pi^*A+(f_{\theta,\phi}^*\circ \pi^*\Phi)\otimes \xi\\
&=\pi^*\circ R_{e^{\upsilon\phi}}^*A+(\pi^*\circ R_{e^{\upsilon\phi}}^*\Phi)\otimes \xi\\
&=\pi^*(g).\pi^*A+\pi^*(g).\Phi\otimes \xi=\pi^*(g).\mathbb{A}.
\end{aligned}
\end{equation*}
Thus, $\mathbb{A}$ is equivariant under $f_{\theta,\phi}$.

Conversely, suppose that $\mathbb{A}$ is equivariant under $f_{\theta,\phi}$, as defined above. Then there is some gauge transformation $g$ such that $f_{\theta,\phi}^*\mathbb{A}=g.\mathbb{A}$. Hence,
\begin{equation*}
\pi^*( R_{e^{\upsilon\phi}}^*A)+\pi^*( R_{e^{\upsilon\phi}}^*\Phi)\otimes\xi=f_{\theta,\phi}^*\mathbb{A}=g.\mathbb{A}=g.\pi^*A+g.\pi^*\Phi\otimes\xi.
\end{equation*}
Evaluating on $\pi^*\partial_l$, recalling that $\xi(\pi^*\partial_l)=0$ and $d\pi (\pi^*\partial_l)=\partial_l$ by definition, we can decompose the previous expression, finding that the monopole is equivariant under $R_{e^{\upsilon\phi}}$. Indeed, the gauge transformation descends as everything but $g$ is already known to be circle-invariant.

Therefore, $(\Phi,A)$ is equivariant under all $R_{e^{\upsilon\phi}}$ if and only if $\mathbb{A}$ is equivariant under $\mathrm{diag}(e^{i\theta},e^{\upsilon\phi})$ for all $\theta,\phi\in\mathbb{R}$. We see that this subgroup is conjugate to $S^1\times S^1$, so we have an axially symmetric, singular monopole with Dirac type singularities if and only if its ADHM data has toral symmetry.
\end{proof}

Although there are two commuting circle actions conjugate to $R_0$ in $S^1\times S^1$, only one singular monopole can be created from an instanton with toral symmetry using the given Hopf map, as the other circle action does not preserve this map. The Hopf map must be changed in order for another circle action to create a singular monopole.

\subsection{Axially symmetric hyperbolic monopoles}

In this section, we discuss the connections between toral symmetric instantons and axially symmetric hyperbolic monopoles. An axially symmetric hyperbolic monopole is a monopole that is symmetric under rotations about one of its axes. As the subgroups rotating about axes are all conjugate, we focus on rotations about the $z$-axis. 
\begin{prop}
A hyperbolic $\mathrm{Sp}(n)$-monopole with integral mass is axially symmetric if and only if its ADHM data has toral symmetry. \label{prop:axialhypermono}
\end{prop}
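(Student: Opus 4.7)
The plan is to follow the proof of Proposition~\ref{prop:axialsymsingularmono} mutatis mutandis, replacing the Hopf fibration $\pi$ with the quotient map $\pi_1\colon \mathbb{H}\setminus\mathbb{C}\to H^3$ arising from the identification $S^4\setminus S^2\equiv H^3\times S^1$ of Section~\ref{subsubsec:HyperbolicMonopoles}, and replacing the $R_0$-invariant one-form $\xi$ with $d\theta$. The integral mass hypothesis supplies the $R_1$-equivariance of the corresponding instanton on $S^4$ together with the decomposition $\mathbb{A}=\pi_1^*A+\pi_1^*\Phi\otimes d\theta$ implicit in Proposition~\ref{prop:makehypermono}.

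The key preliminary step is to identify the circle in $\mathrm{Sp}(2)$ that corresponds to axial rotation of $H^3$. The generator $\mathrm{diag}(i,i)$ of $R_1$ acts on $\mathbb{H}$ as $x\mapsto e^{i\theta}xe^{-i\theta}$, rotating the $(j,k)$-plane and hence the $S^1$ fibre while fixing each $H^3$-slice. The complementary circle in the standard maximal torus of Table~\ref{table:conformalsubgroups}, generated by $\mathrm{diag}(i,-i)$, acts as $x\mapsto e^{i\phi/2}xe^{i\phi/2}$; in the coordinates $(x_0,x_1,r,\theta)$ of Section~\ref{subsubsec:HyperbolicMonopoles}, a direct calculation shows this preserves $r$ and $\theta$ and rotates $(x_0,x_1)$, i.e.\ it is axial rotation about the geodesic $\{x_0=x_1=0\}$ in the half-space model of $H^3$. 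Since any two axial rotations are conjugate within $\mathrm{Sp}(1)\times\mathrm{Sp}(1)\subseteq \mathrm{Sp}(2)$, we lose no generality by working with this specific axis.

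For the forward direction, let $f_\phi(x):=e^{i\phi/2}xe^{i\phi/2}$ be the lift to $\mathbb{H}\setminus\mathbb{C}$ of axial rotation $R_\phi\colon H^3\to H^3$; by construction $\pi_1\circ f_\phi=R_\phi\circ\pi_1$ and $f_\phi^*d\theta=d\theta$. The monopole equivariance $g.(\Phi,A)=R_\phi^*(\Phi,A)$ then pulls back, via $\mathbb{A}=\pi_1^*A+\pi_1^*\Phi\otimes d\theta$, to $f_\phi^*\mathbb{A}=\pi_1^*g.\mathbb{A}$. Combined with the $R_1$-equivariance, this yields equivariance of the instanton under the 2-torus generated by $R_1$ and $\{f_\phi\}$, which by Theorem~\ref{thm:toralsym} is toral symmetry of the ADHM data. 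For the converse, toral symmetry means equivariance under the full maximal torus, which contains both $R_1$ and the complementary circle $\tilde C=\{f_\phi\}$ identified above; since $\tilde C$ commutes with $R_1$, it descends through $\pi_1$ to axial rotation on $H^3$, and decomposing its equivariance against $d\theta$ and $\pi_1^*$ exactly as in Proposition~\ref{prop:axialsymsingularmono} recovers axial equivariance of $(\Phi,A)$.

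The principal subtlety, just as in the singular case, is verifying that the realizing gauge transformations are $\theta$-independent so that they transfer between $S^4\setminus S^2$ and $H^3$ without obstruction. For the forward direction this is automatic, since we begin with a gauge transformation on $H^3$; for the converse, commutativity of $\tilde C$ with $R_1$ forces the instanton's gauge transformation to be $R_1$-invariant and hence to descend. This bookkeeping is the main technical work of the proof and is the direct analogue of the $\xi$-based argument used in Proposition~\ref{prop:axialsymsingularmono}.
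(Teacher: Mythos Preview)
Your proposal is correct and follows essentially the same approach as the paper: identify how the maximal torus of $\mathrm{Sp}(2)$ splits into the $R_1$ factor (acting on the $S^1$ fibre) and a complementary circle (acting as axial rotation on $H^3$), then transfer equivariance through the decomposition $\mathbb{A}=\pi_1^*A+\pi_1^*\Phi\,d\theta$ exactly as in Proposition~\ref{prop:axialsymsingularmono}. The paper parameterises the torus as $\mathrm{diag}(e^{i(\phi_1+\phi_2)},e^{i(\phi_1-\phi_2)})$ and computes the resulting action on $(x_0,x_1,r,\theta)$ directly, whereas you work with the generators $\mathrm{diag}(i,i)$ and $\mathrm{diag}(i,-i)$; these are the same decomposition. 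Your remark that the realising gauge transformation must be $R_1$-invariant in the converse direction is a point the paper leaves implicit by simply invoking the singular-monopole argument.
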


\begin{proof}
Through isometries, we may assume that the hyperbolic model is the half-space model and the monopole is symmetric about the $z$-axis. Because of the conformal equivalence of $S^4\setminus S^2\equiv H^3\times S^1$, these isometries on hyperbolic space correspond to conformal maps on $S^4$.

The action of $R_1$ induces the half-space model of hyperbolic space, with coordinates $(x_0,x_1,r)$ with $r>0$. The two-sphere removed is $\mathbb{C}\cup\{\infty\}$. Letting $\theta$ be the coordinate of $S^1$, we have that a point $(x_0,x_1,r,\theta)\in H^3\times S^1$ corresponds to a point $x_0+x_1i+r\cos\theta j+r\sin\theta k\in\mathbb{H}$. 

We now look at how $S^1\times S^1$ acts on these coordinates. Given $x\in\mathbb{H}$, write $x=x_0+x_1i+r\cos\theta j+r\sin\theta k$. Also, let $\mathrm{diag}(e^{i(\phi_1+\phi_2)},e^{i(\phi_1-\phi_2)})\in S^1\times S^1$. This isometry takes $x\mapsto e^{i(\phi_1+\phi_2)}xe^{-i(\phi_1-\phi_2)}$. Simplifying, we see that $x_0,x_1,r$ are transformed via
\begin{equation*}
\begin{bmatrix}
x_0 \\ x_1 \\ r
\end{bmatrix}\mapsto \begin{bmatrix}
\cos 2\phi_2 & -\sin 2\phi_2 & 0 \\
\sin 2\phi_2 & \cos 2\phi_2 & 0 \\
0 & 0 & 1
\end{bmatrix}\begin{bmatrix}
x_0 \\ x_1 \\ r
\end{bmatrix}.
\end{equation*}
Additionally, $\theta\mapsto \theta-2\phi_1$. Thus, we see that this action rotates $H^3$ about the $z$-axis by $2\phi_2$ and rotates $S^1$ by $-2\phi_1$. 

Now that we understand the relationship between the toral action on $S^4$ and the corresponding action on $H^3\times S^1$, we can proceed. Note the correspondence between a hyperbolic monopole with integral mass $(\Phi,A)$ and an instanton $\mathbb{A}$ is given by $\mathbb{A}=\pi_1^*A+\pi_1^*\Phi d\theta$, where $\pi_1\colon S^4\setminus S^2\rightarrow H^3$. As $\pi^*\Phi d\theta$ and $\pi^*A$ are orthogonal, we can proceed as in the proof of Proposition~\ref{prop:axialsymsingularmono}. Ultimately, we have a hyperbolic monopole with integral mass is axially symmetric if and only if its ADHM data has toral symmetry.
\end{proof}

Unlike singular monopoles, as there are two commuting circle actions conjugate to $R_1$ in $S^1\times S^1$, different hyperbolic monopoles can be constructed from one instanton with toral symmetry by using different circle actions conjugate to $R_1$. In Proposition~\ref{prop:notgauge}, we see that these monopoles need not be gauge equivalent.

\section{Spherical symmetry}\label{subsec:SphericalSymmetry}

In this section, we find equations describing all instantons with the various kinds of spherical symmetry, as given in Table~\ref{table:conformalsubgroups}. We separate the cases of simple, isoclinic, and conformal spherical symmetry. We also discuss the connections between these symmetric instantons and hyperbolic analogues to Higgs bundles and Nahm data. Finally, we discuss instantons with full symmetry.

\begin{note}
All connected Lie subgroups of $\mathrm{Sp}(2)$ with Lie algebra $\mathfrak{sp}(1)$ are conjugate to the connected Lie subgroups $G_{3,1,1}$, $G_{4,1}$, or $G_{5}$ induced by $\mathfrak{h}_{3,1,1}$, $\mathfrak{h}_{4,1}$, and $\mathfrak{h}_{5}$, respectively, defined in \eqref{eq:sp1subalgebras}. That is, there is some $A\in\mathrm{Sp}(2)$ and $G$ one of the subgroups $G_{3,1,1}$, $G_{4,1}$, or $G_5$ such that the Lie group is of the form $AGA^\dagger$. Instantons equivariant under this group are of the form $A^\dagger.(\hat{M},U)$, where $\hat{M}$ is equivariant under $G$.
\end{note}

\begin{definition}
Let $\varepsilon$ be the Levi-Civita symbol. Let $(\upsilon_1,\upsilon_2,\upsilon_3):=(i/2,j/2,k/2)$ be the \textbf{standard basis} of $\mathfrak{sp}(1)$. Note that $[\upsilon_l,\upsilon_m]=\sum_{p=1}^3\varepsilon_{lmp}\upsilon_p$.\label{def:stdbasissp1}
\end{definition}

\subsection{Simple spherical symmetry}\label{subsubsec:SimpleSphericalSym}

In this section, we find an equation describing all instantons with simple spherical symmetry, as given in Table~\ref{table:conformalsubgroups} as well as below. We also discuss the connections between these instantons and hyperbolic monopoles with no continuous symmetries and a hyperbolic analogue to Higgs bundles.

First, we introduce the notion of simple spherical symmetry. An instanton is said to have simple spherical symmetry if it is equivariant under $\mathrm{diag}(p,p)$ for all $p\in\mathrm{Sp}(1)$.
\begin{theorem}
Let $\hat{M}\in\mathcal{M}_{n,k}$. Then $\hat{M}$ has simple spherical symmetry if and only if there exists a real representation $\rho\colon\mathfrak{sp}(1)\rightarrow\mathfrak{so}(k)$ such that for all $\upsilon\in\mathfrak{sp}(1)$,
\label{thm:simplesphericalsym}
\begin{align}
[\upsilon,M]+[\rho(\upsilon),M]=0,\label{eq:simplesphericalsym1}\\
[\rho(\upsilon),R]=0.\label{eq:simplesphericalsym2}
\end{align}
\end{theorem}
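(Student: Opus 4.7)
Following the pattern of the proof of Theorem~\ref{thm:toralsym}, my plan is to apply Theorem~\ref{thm:mainthm} with $\mathcal{S} = \mathrm{Sp}(1)$, which is compact, connected, and simply connected. Corollary~\ref{cor:IsomS} is the key reduction: for any isometry $\mathrm{diag}(p, p) \in \mathrm{Sp}(1)\times\mathrm{Sp}(1) \subseteq \mathrm{Sp}(2)$, any equivariance-realising gauge transformation $(Q, K)$ necessarily has $K \in \mathrm{O}(k)$ commuting with $R$ and $Q = \mathrm{diag}(Lp^\dagger KL^\dagger(LL^\dagger)^{-1}, p^\dagger K)$. Therefore the stabilizer group
\begin{equation*}
S := \{(p, K) \in \mathrm{Sp}(1)\times\mathrm{O}(k) \mid (Q(p,K), K).\mathrm{diag}(p,p).(\hat{M}, U) = (\hat{M}, U)\}
\end{equation*}
lies inside a compact group, and the closed-preimage argument already used in Proposition~\ref{prop:irrationaltorus} shows $S$ is a closed Lie subgroup. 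Simple spherical symmetry of $\hat{M}$ is precisely the surjectivity of the projection $\pi_1\colon S \to \mathrm{Sp}(1)$.

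For the forward direction, I would run the same argument as in the proof of Theorem~\ref{thm:mainthm} to produce a Lie algebra homomorphism $\rho\colon\mathfrak{sp}(1)\to\mathfrak{so}(k)$ such that $(\upsilon, \rho(\upsilon)) \in \mathrm{Lie}(S)$ for every $\upsilon \in \mathfrak{sp}(1)$. Translating this back to the original gauge group and differentiating at the identity, in analogy with the corresponding step in the proof of Theorem~\ref{thm:toralsym}, the vanishing of the combined conformal--gauge action reads
\begin{equation*}
(\hat{M}\upsilon, U\upsilon) - (\varsigma\hat{M} - \hat{M}\rho(\upsilon),\, \varsigma U - U\rho(\upsilon)) = 0,
\end{equation*}
with $\varsigma = \mathrm{diag}(L(\rho(\upsilon) - \upsilon)L^\dagger(LL^\dagger)^{-1},\, \rho(\upsilon) - \upsilon)$ obtained by differentiating Corollary~\ref{cor:IsomS}'s formula for $Q$. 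The $U$-component is automatic once $\varsigma$ has this form, and the $M$-component simplifies directly to equation~\eqref{eq:simplesphericalsym1}. Differentiating the relation $[K, R] = 0$ from Corollary~\ref{cor:IsomS} yields \eqref{eq:simplesphericalsym2}.

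For the converse, because $\mathrm{Sp}(1)$ is simply connected, the Lie algebra homomorphism $\rho$ lifts uniquely to a Lie group homomorphism $\widetilde{\rho}\colon\mathrm{Sp}(1)\to\mathrm{O}(k)$. Equation~\eqref{eq:simplesphericalsym1} is equivalent to $[\upsilon I_k + \rho(\upsilon), M] = 0$, and since $\upsilon I_k$ commutes with $\rho(\upsilon)$, exponentiation gives $[e^{\upsilon\theta}\widetilde{\rho}(e^{\upsilon\theta}), M] = 0$ for all $\theta \in \mathbb{R}$. A short rearrangement of this commutation (with the appropriate choice of sign for $\widetilde{\rho}$) yields the identity $p^\dagger KMK^T = Mp^\dagger$ needed in Lemma~\ref{lemma:focusonY} with $a = b = p = e^{\upsilon\theta}$ and $K = \widetilde{\rho}(p)$. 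Combined with \eqref{eq:simplesphericalsym2}, which gives $[\widetilde{\rho}(p), R] = 0$, Lemma~\ref{lemma:focusonY} furnishes an explicit gauge compensator realising $\mathrm{diag}(p, p)$-equivariance for every $p$ in the one-parameter subgroup $\{e^{\upsilon\theta}\}$. Since $\mathrm{Sp}(1)$ is compact and connected, it is generated by such subgroups, and the homomorphism property of $\widetilde{\rho}$ ensures that the equivariances patch consistently. The main obstacle, absent in the abelian toral case, is precisely this patching step: one cannot extend from one-parameter subgroup equivariances to global $\mathrm{Sp}(1)$-equivariance without exploiting the fact that the compensating gauges themselves form a group homomorphism from $\mathrm{Sp}(1)$, which is what the simply connected hypothesis supplies through $\widetilde{\rho}$.
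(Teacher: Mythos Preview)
Your proposal is correct and follows essentially the same approach as the paper: set up the stabilizer in $\mathrm{Sp}(1)\times\mathrm{O}(k)$ via Corollary~\ref{cor:IsomS}, invoke the mechanism of Theorem~\ref{thm:mainthm} to obtain a Lie algebra homomorphism $\rho\colon\mathfrak{sp}(1)\to\mathfrak{so}(k)$, read off \eqref{eq:simplesphericalsym1} from the $M$-component and \eqref{eq:simplesphericalsym2} from differentiating $[K,R]=0$, and for the converse exponentiate along one-parameter subgroups and appeal to Lemma~\ref{lemma:focusonY}. The paper's converse (referred to the method of Theorem~\ref{thm:circularsym}) phrases the exponentiation as showing $A(\theta):=e^{-\upsilon\theta}e^{\rho(\upsilon)\theta}Me^{-\rho(\upsilon)\theta}e^{\upsilon\theta}$ has vanishing derivative, which is equivalent to your direct exponentiation of $[\upsilon I_k+\rho(\upsilon),M]=0$.

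One simplification: your worry about ``patching'' and the need for a global lift $\widetilde{\rho}$ via simple connectedness is unnecessary. The exponential map on the compact connected group $\mathrm{Sp}(1)$ is surjective, so \emph{every} $p\in\mathrm{Sp}(1)$ is already of the form $e^{\upsilon\theta}$ for some $\upsilon\in\mathfrak{sp}(1)$ and $\theta\in\mathbb{R}$; your one-parameter argument therefore establishes $\mathrm{diag}(p,p)$-equivariance for every $p$ directly, with $K=e^{\rho(\upsilon)\theta}$, and no product decomposition or global homomorphism is required.
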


\begin{definition}
We call $\rho$ the \textbf{generating representation} of the simple spherical symmetry of $\hat{M}$.
\end{definition}

\begin{proof}
We use the same notation as introduced in the beginning of the proof of Theorem~\ref{thm:circularsym}. We follow the proof of Theorem~\ref{thm:mainthm} after setting the scene.

Suppose that $\hat{M}$ has simple spherical symmetry. Let $S\subseteq \mathrm{Sp}(1)\times\mathrm{O}(k)$ be the stabilizer group of $(\hat{M},U)$ restricted to rotations in the diagonal embedding of $\mathrm{Sp}(1)$ into $\mathrm{Sp}(2)$. That is
\begin{equation}
S:=\{(\mathrm{diag}(p,p),K)\mid (\mathrm{diag}(Lp KL^\dagger (LL^\dagger)^{-1},p K),K).\mathrm{diag}(p^\dagger,p^\dagger).(\hat{M},U)
=(\hat{M},U)\}.\label{eq:simplesphericalsymS}
\end{equation}%***CHECK HERE FOR PUBLICATION, margins may change this to multline
Indeed, if $(Q,K)\in\mathrm{Sp}(n+k)\times\mathrm{GL}(k,\mathbb{R})$ such that $\mathrm{diag}(p^\dagger,p^\dagger).(Q,K).(\hat{M},U)=(\hat{M},U)$, then Corollary~\ref{cor:IsomS} tells us that $K\in\mathrm{O}(k)$, $[R,K]=0$, and $Q=\mathrm{diag}(LpKL^\dagger (LL^\dagger)^{-1},p K)$. Therefore, the pairs in $S$ encapsulate all the simple spherical symmetry of the instanton. That $S$ is a group follows from this fact as well as the fact that the conformal action is a right Lie group action. Indeed, it is easy to see that $S$ contains the identity and is closed under inversion. To see that $S$ is closed under multiplication, suppose that $(p_1I_2,K_1),(p_2I_2,K_2)\in S$. Then there exists unique $Q_1,Q_2\in\mathrm{Sp}(n+k)$ such that for $l=1,2$,
\begin{equation*}
(Q_l,K_l).p_l^\dagger I_2.(\hat{M},U)=(\hat{M},U).
\end{equation*}
Then we see that 
\begin{equation*}
(Q_1Q_2,K_1K_2).(p_1p_2)^\dagger I_2.(\hat{M},U)=(Q_1,K_1).(Q_2,K_2).p_1^\dagger I_2.p_2^\dagger I_2.(\hat{M},U).
\end{equation*}
As the two actions commute, we see that $(Q_1Q_2,K_1K_2).(p_1p_2)^\dagger I_2.(\hat{M},U)=(\hat{M},U)$. By Corollary~\ref{cor:IsomS}, the product $Q_1Q_2$ is such that $(p_1p_2 I_2,K_1K_2)\in S$. That is, $S$ is a group.

Just like toral symmetry, the stabilizer group is a subgroup of a compact group, so we proceed as in the proof of Theorem~\ref{thm:mainthm}~\cite[Theorem~1.1]{lang_moduli_2024}. In particular, we find that $\hat{M}$ has simple spherical symmetry if and only if there is a Lie algebra homomorphism $\rho\colon\mathfrak{sp}(1)\rightarrow\mathfrak{so}(k)$ such that for all $\upsilon\in\mathfrak{sp}(1)$, we have
\begin{equation}
\left(\begin{bmatrix}
L\upsilon L^\dagger(LL^\dagger)^{-1}L+L\rho(\upsilon)L^\dagger (LL^\dagger)^{-1}L-L\rho(\upsilon)-L\upsilon \\
[\upsilon,M]+[\rho(\upsilon),M]
\end{bmatrix},0\right)=(0,0).\label{eq:simplesphericalsymdiff}
\end{equation}

We can simplify these constraints. In particular, suppose $\hat{M}$ has simple spherical symmetry. In Note~\ref{note:simplesphericalsymconstraint}, we discuss the top row of \eqref{eq:simplesphericalsymdiff}. For now, focusing on the bottom row, we see that $[\upsilon,M]+[\rho(\upsilon),M]=0$. Furthermore, as mentioned above, Corollary~\ref{cor:IsomS} tells us that $[R,e^{\theta\rho(\upsilon)}]=0$, for all $\theta\in\mathbb{R}$. Differentiating and evaluating at $\theta=0$, we have $[R,\rho(\upsilon)]=0$.

We can use the same method as in the proof of Theorem~\ref{thm:circularsym} to prove the converse.
\end{proof}

Just as with our previous symmetries, we need not check the final condition of $\mathcal{M}_{n,k}$ in Definition~\ref{def:Mstd} everywhere.
\begin{lemma}
Suppose that $\hat{M}$ satisfies \eqref{eq:simplesphericalsym1} and \eqref{eq:simplesphericalsym2} for some real representation $(\mathbb{R}^k,\rho)$ as well as the first three conditions of Definition~\ref{def:Mstd}. If the final condition is satisfied at all $x=x_0+x_1i\in\mathbb{H}$ with $x_1\geq 0$, then $\hat{M}\in\mathcal{M}_{n,k}$. 
\end{lemma}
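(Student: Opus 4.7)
The plan is to adapt Lemma~\ref{lemma:finalconditioncirc}: reduce an arbitrary $z \in \mathbb{H}$ to a quaternion of the prescribed form by acting with a suitable diagonal element of $\mathrm{Sp}(1) \subseteq \mathrm{Sp}(2)$, then transfer non-singularity along the resulting gauge transformation. The geometric input is that $\mathrm{diag}(p, p) \in \mathrm{Sp}(2)$ acts on $\mathbb{H}$ as conjugation $x \mapsto p x p^{\dagger}$, which fixes the real part and realizes the standard $\mathrm{SO}(3)$-action on the imaginary part via the double cover $\mathrm{Sp}(1) \to \mathrm{SO}(3)$. Hence for any $z = z_0 + z_1 i + z_2 j + z_3 k$ I can choose $p \in \mathrm{Sp}(1)$ with $x := p^{\dagger} z p = z_0 + \sqrt{z_1^2 + z_2^2 + z_3^2}\, i$; this $x$ lies in the preferred slice, so by hypothesis $\Delta(x)^{\dagger} \Delta(x)$ is non-singular.

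The crux is promoting the Lie algebra equations \eqref{eq:simplesphericalsym1} and \eqref{eq:simplesphericalsym2} to genuine $\mathrm{Sp}(1)$-equivariance of $\hat{M}$. Since $\mathrm{Sp}(1)$ is simply connected, $\rho$ integrates to a Lie group homomorphism $\tilde{\rho}\colon \mathrm{Sp}(1) \to \mathrm{O}(k)$ with $\tilde{\rho}(e^{\upsilon}) = e^{\rho(\upsilon)}$. For each $\upsilon \in \mathfrak{sp}(1)$, I would show that $A(\theta) := e^{\theta \upsilon}\, e^{\theta \rho(\upsilon)}\, M\, e^{-\theta \rho(\upsilon)}\, e^{-\theta \upsilon}$ is constant by differentiating and invoking \eqref{eq:simplesphericalsym1}, keeping careful track of the fact that the scalar quaternion $e^{\theta \upsilon}$ commutes with the real matrix $e^{\theta \rho(\upsilon)}$. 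Setting $\theta = 1$ and $K := \tilde{\rho}(p)$ then yields $p K M K^{T} = M p$, which is precisely the algebraic hypothesis of Lemma~\ref{lemma:focusonY} for $\mathrm{diag}(p^{\dagger}, p^{\dagger})$ (take $a = b = p^{\dagger}$), while \eqref{eq:simplesphericalsym2} exponentiates to $[K, R] = 0$. Because Lemma~\ref{lemma:focusonY} uses only the first three conditions of Definition~\ref{def:Mstd}, it applies here and produces $(Q, K) \in \mathrm{Sp}(n+k) \times \mathrm{O}(k)$ with $(Q, K).\mathrm{diag}(p^{\dagger}, p^{\dagger}).(\hat{M}, U) = (\hat{M}, U)$.

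From this equivariance, $\hat{M} p^{\dagger} = Q^{\dagger} \hat{M} K$ and $U p^{\dagger} = Q^{\dagger} U K$. Using $p^{\dagger} p = 1$ together with the commutativity of the real matrix $K$ with the scalar quaternion $z$, one expects
\begin{equation*}
\Delta(x) = (\hat{M} p^{\dagger} - U p^{\dagger} z) p = Q^{\dagger} (\hat{M} - U z) K p = Q^{\dagger} \Delta(z) K p,
\end{equation*}
so that $\Delta(x)^{\dagger} \Delta(x) = p^{\dagger} K^{T} \Delta(z)^{\dagger} \Delta(z) K p$. Since $p$ and $K$ are invertible and the left-hand side is non-singular, $\Delta(z)^{\dagger} \Delta(z)$ must be non-singular as well, completing the verification that $\hat{M} \in \mathcal{M}_{n,k}$. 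The main obstacle is the constancy of $A(\theta)$: it requires careful interleaving of \eqref{eq:simplesphericalsym1} with the commutation relations between scalar quaternions and real matrices, together with the check that the integrated identity $pKMK^{T} = Mp$ matches the precise sign conventions in Lemma~\ref{lemma:focusonY}. Once that step is in hand, the remainder runs in direct parallel with Lemma~\ref{lemma:finalconditioncirc}.
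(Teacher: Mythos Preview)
Your proposal is correct and follows essentially the same approach as the paper: reduce an arbitrary $z$ to the slice $x_0 + x_1 i$ with $x_1 \ge 0$ via conjugation by some $p \in \mathrm{Sp}(1)$, then transport non-singularity using the group-level equivariance. The paper's proof is terser, simply invoking ``the rest follows as in the proof of Lemma~\ref{lemma:finalconditioncirc}''; you have spelled out explicitly the integration of \eqref{eq:simplesphericalsym1}--\eqref{eq:simplesphericalsym2} to the group identity $pKMK^T = Mp$ and the subsequent application of Lemma~\ref{lemma:focusonY}, including the observation that this lemma only needs the first three conditions of Definition~\ref{def:Mstd}.
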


\begin{proof}
Let $z=z_0+z_1i+z_2j+z_3k\in \mathbb{H}$. Note that $z-z_0\in\mathfrak{sp}(1)$. Recall that for $p\in\mathrm{Sp}(1)$ and $x\in\mathfrak{sp}(1)$, $x\mapsto pxp^\dagger$ corresponds to rotating $x$ by $R_p\in\mathrm{SO}(3)$. There exists some $p\in\mathrm{Sp}(1)$ and $x_1\geq 0$ such that $p(z-z_0)p^\dagger=x_1i$. We see that $pzp^\dagger=z_0+x_1i=:x$. The rest follows as in the proof of Lemma~\ref{lemma:finalconditioncirc}.
\end{proof}

\subsubsection{Structure of simple spherical symmetry}

Theorem~\ref{thm:simplesphericalsym} tells us exactly how to search for instantons with simple spherical symmetry: use a real $k$-representation of $\mathfrak{sp}(1)$ to narrow down the possible $M$. Then we are only left with finding $L$ such that $\hat{M}\in\mathcal{M}_{n,k}$. In this section, we investigate what representations generate instantons with simple spherical symmetry and what the corresponding ADHM data looks like.

\begin{note}
For every $k\in\mathbb{N}_+$, there is a unique, up to isomorphism, irreducible complex $k$-representation $(V_k,\rho_k)$ with highest weight $\frac{k-1}{2}$.

Additionally, for every $k\in\mathbb{N}_+$ odd or divisible by four, there is a unique, up to isomorphism, irreducible real $k$-representation $(\mathbb{R}^k,\varrho_k)$. When $k$ is odd, the complexification of this representation is isomorphic to $(V_k,\rho_k)$. When $k$ is divisible by four, the complexification of $(\mathbb{R}^k,\varrho_k)$ is isomorphic to $(V_{k/2},\rho_{k/2})^{\oplus 2}$.
\end{note}

\begin{definition}
Let $\mathrm{ad}\colon\mathfrak{sp}(1)\rightarrow\mathfrak{gl}(\mathbb{H})$ be the Lie algebra homomorphism $\mathrm{ad}(x)(z):=[x,z]$. Consider the real representation $(\mathbb{H},\mathrm{ad})$ and let $V:=\mathbb{R}^k$. Given a real representation $(V,\rho)$ of $\mathfrak{sp}(1)$, we define the induced real representation 
\begin{equation}
(\hat{V},\hat{\rho}):=(V,\rho)\otimes_\mathbb{R} (V^*,\rho^*)\otimes_\mathbb{R} (\mathbb{H},\mathrm{ad}).
\end{equation}
Unravelling how $\hat{\rho}$ acts on $\hat{V}=\mathrm{Mat}(k,k,\mathbb{H})$, let $\upsilon\in\mathfrak{sp}(1)$ and $A\in\mathrm{Mat}(k,k,\mathbb{H})$. Then
\begin{equation}
\hat{\rho}(\upsilon)(A)=[\rho(\upsilon),A]+[\upsilon,A].
\end{equation}  
\end{definition}

\begin{note}
As $\mathfrak{sp}(1)$ has Dynkin diagram $A_1$, all representations of $\mathfrak{sp}(1)$ are self-dual.
\end{note}

\begin{note}
Note that we can restrict the action of $\mathrm{ad}(x)$ to $\mathfrak{sp}(1)$. Denote the adjoint representation of $\mathfrak{sp}(1)$ by $(\mathfrak{sp}(1),\mathrm{ad})$ and the trivial representation by $(\mathbb{R},0)$. The real representation $(\mathbb{H},\mathrm{ad})$ decomposes as
\begin{equation}
(\mathbb{H},\mathrm{ad})=(\mathfrak{sp}(1),\mathrm{ad})\oplus (\mathbb{R},0).
\end{equation}
\end{note}

The definition of $(\hat{V},\hat{\rho})$ is well-motivated. Firstly, note that $M\in\hat{V}$. Secondly, given the connection between the action of $\hat{\rho}(\upsilon)$ and \eqref{eq:simplesphericalsym1}, we immediately obtain the following corollary.
\begin{cor}
Let $\hat{M}\in\mathcal{M}_{n,k}$. Then $\hat{M}$ has simple spherical symmetry if and only if there is some real $k$-representation $(V,\rho)$ such that $\hat{\rho}(\upsilon)(M)=0$ and $[\rho(\upsilon),R]=0$ for all $\upsilon\in\mathfrak{sp}(1)$.\label{cor:simplesphersym}
\end{cor}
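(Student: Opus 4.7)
The plan is to treat this as an immediate repackaging of Theorem~\ref{thm:simplesphericalsym} using the definition of the induced real representation $(\hat{V},\hat{\rho})$ given just before the statement. Since $M\in\mathrm{Mat}(k,k,\mathbb{H})=\hat{V}$, it makes sense to apply $\hat{\rho}(\upsilon)$ to $M$. The formula $\hat{\rho}(\upsilon)(A)=[\rho(\upsilon),A]+[\upsilon,A]$ recorded in the definition shows that the first condition of Theorem~\ref{thm:simplesphericalsym}, namely \eqref{eq:simplesphericalsym1}, is literally the equation $\hat{\rho}(\upsilon)(M)=0$, while \eqref{eq:simplesphericalsym2} is unchanged. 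Applying Theorem~\ref{thm:simplesphericalsym} then yields the biconditional.

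The only substantive step is verifying that the tensor-product prescription $\hat{\rho}:=\rho\otimes\rho^{*}\otimes\mathrm{ad}$ really does act on $A\in\mathrm{Mat}(k,k,\mathbb{H})$ by the advertised formula. First I would observe that the $(V,\rho)\otimes(V^{*},\rho^{*})$ factor acts on $\mathrm{End}(V)\cong V\otimes V^{*}$ by conjugation, giving $A\mapsto\rho(\upsilon)A-A\rho(\upsilon)=[\rho(\upsilon),A]$; second, the $(\mathbb{H},\mathrm{ad})$ factor acts entrywise on the quaternionic values by $q\mapsto[\upsilon,q]$. Because $\rho(\upsilon)\in\mathfrak{so}(k)$ has real entries, it commutes past the quaternionic entries of $A$, so the two actions decouple and sum to $[\rho(\upsilon),A]+[\upsilon,A]$, as claimed.

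Given how mechanical the derivation is, there is no genuine obstacle; the entire weight of the argument rests on Theorem~\ref{thm:simplesphericalsym}, whose proof in turn reduces the non-compact gauge group to a compact one using Corollary~\ref{cor:IsomS} and then applies Theorem~\ref{thm:mainthm}. The purpose of stating the result as a corollary is conceptual rather than computational: it identifies the $M$ compatible with a given $\rho$ as precisely the $\hat{\rho}$-invariant elements of $\hat{V}$, which is exactly the object one wants to decompose via representation theory when producing examples of simple spherically symmetric instantons in the subsequent subsections.
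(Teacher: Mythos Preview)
Your proposal is correct and matches the paper's approach exactly: the paper states that ``given the connection between the action of $\hat{\rho}(\upsilon)$ and \eqref{eq:simplesphericalsym1}, we immediately obtain the following corollary,'' which is precisely the repackaging you describe. Your additional verification that the tensor-product action unwinds to $[\rho(\upsilon),A]+[\upsilon,A]$ is more detail than the paper provides, but it is accurate and in the same spirit.
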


\begin{note}
If $R$ is proportional to the identity, then it automatically commutes with everything, so we can ignore the commutators with $R$. Hence, such a condition does not appear anywhere in previous work dealing with hyperbolic monopoles~\cite{lang_hyperbolic_2023}.
\end{note}

\begin{note}
Suppose that $\hat{M}\in\mathcal{M}_{n,k}$ has simple spherical symmetry. Corollary~\ref{cor:simplesphersym} tells us that there is some real representation $(V,\rho)$ of $\mathfrak{sp}(1)$ such that $\mathrm{span}(M)\subseteq\hat{V}$ is an invariant subspace, which is acted on trivially. 

As $\mathfrak{sp}(1)$ is semi-simple, the representation $(\hat{V},\hat{\rho})$ decomposes into irreducible representations. We explore the case $M=0$ below, so suppose that $M\neq 0$. As $\mathrm{span}(M)$ is a one-dimensional invariant subspace, acted on trivially, $(\mathrm{span}(M),0)$ is a summand of the representation $(\hat{V},\hat{\rho})$. So the decomposition of $(\hat{V},\hat{\rho})$ must contain trivial summands. Moreover, $M$ is in the direct sum of these trivial summands. Trivially, if $M=0$, then it is in the direct sum of the trivial summands as well.
\end{note}

\begin{prop}[The $M=0$ case]
Suppose $M=0_k$ and $\hat{M}\in\mathcal{M}_{n,k}$. Then $n=k$ and there is some gauge in which $L$ is diagonal and positive-definite. Such a $\hat{M}$ has rotational symmetry.

Let the norm of an element $A\in\mathfrak{sp}(k)$ be given by $|A|^2:=-\frac{1}{k}\mathrm{Tr}(A^2)$. Then we have that the corresponding instanton is given, up to gauge, by
\begin{equation}
\mathbb{A}=\frac{xdx^\dagger-(dx)x^\dagger}{2}\cdot \mathrm{diag}\left(\frac{1}{|x|^2+\alpha_1^2},\ldots,\frac{1}{|x|^2+\alpha_k^2}\right).
\end{equation}\label{prop:M=0}
\end{prop}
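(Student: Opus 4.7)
The plan is to handle four claims in sequence. First, $n = k$ follows immediately from Definition~\ref{def:Mstd}: $LL^\dagger$ being $n\times n$ positive definite forces $\mathrm{rank}(L) = n$, so $n \leq k$, while $R = L^\dagger L$ (using $M = 0$) being $k\times k$ and non-singular forces $\mathrm{rank}(L) = k$, so $k \leq n$.

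For the diagonalization, since $R = L^\dagger L$ is real, symmetric, and positive definite, there exist $K_0 \in \mathrm{O}(k)$ and a positive diagonal $D$ with $K_0^T R K_0 = D^2$. A short check shows that $q := L K_0 D^{-1}$ satisfies $q^\dagger q = I_k$, so $q \in \mathrm{Sp}(n)$ (using $n=k$) and $L = q D K_0^T$. The gauge subgroup of $\mathrm{Sp}(n+k) \times \mathrm{GL}(k,\mathbb{R})$ preserving the standard form of Lemma~\ref{lemma:stdform} consists of pairs $(\mathrm{diag}(Q_{11}, Q_{22}), Q_{22})$ with $Q_{11} \in \mathrm{Sp}(n)$ and $Q_{22} \in \mathrm{O}(k)$: preserving $U$ forces $Q$ to be block diagonal with lower block equal to the $K$ factor, and symmetry of $M$ is automatically preserved. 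Taking $Q_{11} = q^\dagger$ and $Q_{22} = K_0^T$ then transforms $L$ into $q^\dagger L K_0 = D$ while keeping $M = 0$.

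Next, with $L$ diagonal real and $M = 0$, I would verify rotational symmetry by applying Lemma~\ref{lemma:focusonY} with $K = I_k$: for any $\mathrm{diag}(a,b) \in \mathrm{Sp}(1) \times \mathrm{Sp}(1)$, the hypothesis $a^\dagger K M K^T = M b^\dagger$ reads $0 = 0$ and $[I_k, R] = 0$ is trivial, so the lemma produces an explicit $q \in \mathrm{Sp}(n)$ realizing $\mathrm{diag}(a,b)$-equivariance; since this works for every $(a,b)$, $\hat{M}$ has rotational symmetry.

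Finally, I would compute the instanton. Writing $L = \mathrm{diag}(\alpha_1, \ldots, \alpha_k)$, the matrix $\Delta(x) = \begin{bmatrix} L \\ -I_k x \end{bmatrix}$ decomposes, after a row permutation, as a direct sum of $k$ basic-instanton blocks $\begin{bmatrix} \alpha_l \\ -x \end{bmatrix}$. A normalized kernel frame can therefore be taken block diagonal with upper block $V_0 = \mathrm{diag}(x^\dagger / r_l)$ and lower block $V_1 = \mathrm{diag}(\alpha_l / r_l)$, where $r_l := \sqrt{|x|^2 + \alpha_l^2}$; the standard one-charge computation on each block yields $(V^\dagger \partial_\mu V)_{ll} = \frac{1}{2(|x|^2 + \alpha_l^2)}(x \partial_\mu x^\dagger - (\partial_\mu x) x^\dagger)$, which assembles to the stated formula. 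The principal technical point is the quaternionic diagonalization above; everything else is a dimension count, a direct application of Lemma~\ref{lemma:focusonY}, or reduction to the standard one-charge instanton computation.
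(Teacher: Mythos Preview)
Your proof is correct and follows essentially the same route as the paper: the rank argument for $n=k$, a gauge transformation built from the spectral data of $R=L^\dagger L$ to diagonalize $L$, a direct verification of $\mathrm{Sp}(1)\times\mathrm{Sp}(1)$-equivariance, and the block-by-block basic-instanton computation of $\mathbb{A}$. The only cosmetic differences are that the paper first gauges $L$ to $R^{1/2}$ via $Q_{11}=R^{-1/2}L^\dagger$ and then orthogonally diagonalizes, and that it writes down the compensating gauge $Q=\mathrm{diag}(q^\dagger,p^\dagger)\otimes I_k$ explicitly rather than invoking Lemma~\ref{lemma:focusonY}; your use of that lemma with $K=I_k$ is a clean shortcut and does not even require $L$ to be in diagonal form.
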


\begin{note}
While such instantons have rotational symmetry, one class in the moduli space has full symmetry (the class with $L\in\mathrm{Sp}(k)$). However, we note that if all $\alpha_i$ are equal, then while the symmetry group of the instanton may not be $\mathrm{Sp}(2)$ precisely, it is conjugate to this group. This fact serves as a reminder that if an instanton possesses a certain kind of symmetry, it may possess more.
\end{note}

\begin{proof}
Suppose $M=0$. We know that $n\leq k$. As $\mathrm{rank}(L^\dagger L)\leq n$ and $R:=L^\dagger L+M^\dagger M=L^\dagger L$ is non-singular, we see that $n=k$. As $R$ is real and positive-definite, it has a unique symmetric square root $R^{1/2}$. Furthermore, as $R$ is positive-definite, $R^{1/2}$ is non-singular. Gauging $L$ by $R^{-1/2}L^\dagger$, we obtain $R^{1/2}$. As $R^{1/2}$ is real and symmetric, it is orthogonally diagonalizeable. Hence, we can gauge our $\hat{M}$ by such a matrix so as to ensure that $L$ is diagonal and positive-definite. That is, there are $\alpha_1,\ldots,\alpha_k>0$ such that $L=\mathrm{diag}(\alpha_1,\ldots,\alpha_k)$. 

We now show that the corresponding instanton has rotational symmetry. Consider $\mathrm{diag}(p,q)\in\mathrm{Sp}(2)$. Let $(\hat{M}',U'):=\mathrm{diag}(p,q).(\hat{M},U)$. We see that $\hat{M}'=\hat{M}q$ and $U'=Up$. Let $Q:=\mathrm{diag}(q^\dagger,p^\dagger)\otimes I_k\in\mathrm{Sp}(2k)$. Simplifying, we see that $Q\hat{M}'=\hat{M}$ and $QU'=U$. That is, 
\begin{equation*}
(Q,I_k).\mathrm{diag}(p,q).(\hat{M},U)=(\hat{M},U).
\end{equation*} 
Hence, $\hat{M}$ is $\mathrm{diag}(p,q)$-equivariant. As the transformation was arbitrary, $\hat{M}$ has rotational symmetry.

Finally, we compute the instanton. Let $\{e_1,\ldots,e_k\}$ be the standard basis for $\mathbb{H}^k$. For $l\in\{1,\ldots,k\}$, let $v_l(x):=\frac{1}{\sqrt{\alpha_l^2+|x|^2}}\begin{bmatrix}
x^\dagger e_l \\ \alpha_l e_l
\end{bmatrix}$. Then let $V(x):=\begin{bmatrix}
v_1(x) & \cdots & v_k(x)
\end{bmatrix}$. We see that $V(x)^\dagger \Delta(x)=0$ and $V(x)^\dagger V(x)=I_{k}$. Therefore, we can use $V$ to construct our instanton. Simplifying $\mathbb{A}=V(x)^\dagger dV(x)$, we obtain the desired form for $\mathbb{A}$.
\end{proof}

\begin{note}
Proposition~\ref{prop:M=0} tells us that $\mathrm{Sp}(k)$ instantons with $M=0$ are reducible, being formed from the class of $\mathrm{Sp}(1)$ instantons with rotational symmetry and $M=0$. 
\end{note}

We are interested in the trivial summands of $(\hat{V},\hat{\rho})$. 
\begin{lemma}
Given $m\geq n\geq 1$, consider $(V_m,\rho_m)\otimes_\mathbb{C} (V_n,\rho_n)\otimes_\mathbb{C} (V_3,\rho_3)$ and $(V_m,\rho_m)\otimes_\mathbb{C} (V_n,\rho_n)\otimes_\mathbb{C} (V_1,\rho_1)$. The former has a single trivial summand when $m=n\geq 2$ or $m=n+2$. The latter has a single trivial summand when $m=n\geq 1$. Otherwise, there are no trivial summands.\label{lemma:wheretrivsummands}
\end{lemma}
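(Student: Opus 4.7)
The plan is to reduce everything to the Clebsch--Gordan decomposition for $\mathfrak{sp}(1)\cong\mathfrak{su}(2)$ combined with the self-duality of the $V_k$ and Schur's lemma. Recall that, since $\mathfrak{sp}(1)$ has Dynkin diagram $A_1$, every $V_k$ is self-dual, and for $m\geq n\geq 1$ the Clebsch--Gordan rule gives
\begin{equation*}
V_m\otimes_\mathbb{C} V_n\;\cong\;\bigoplus_{j=0}^{n-1} V_{m-n+1+2j}\;=\;V_{m+n-1}\oplus V_{m+n-3}\oplus\cdots\oplus V_{m-n+1}.
\end{equation*}
Since the trivial representation is $V_1$ and each $V_k$ is self-dual, Schur's lemma yields
\begin{equation*}
\dim\mathrm{Hom}_{\mathfrak{sp}(1)}\!\bigl(V_1,\,V_m\otimes V_n\otimes V_c\bigr)\;=\;\dim\mathrm{Hom}_{\mathfrak{sp}(1)}\!\bigl(V_c,\,V_m\otimes V_n\bigr),
\end{equation*}
so the number of trivial summands in $V_m\otimes V_n\otimes V_c$ equals the multiplicity of $V_c$ in $V_m\otimes V_n$.

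Next I handle the two cases. For $c=1$: the Clebsch--Gordan decomposition contains $V_1$ precisely when the index $m-n+1$ equals $1$, i.e.\ $m=n$, and in that case with multiplicity exactly one (it is the smallest summand). This gives the second statement of the lemma. For $c=3$: $V_3$ appears in $V_m\otimes V_n$ iff $3\in\{m-n+1,\,m-n+3,\,\ldots,\,m+n-1\}$, which requires simultaneously $m-n+1\leq 3$, $m+n-1\geq 3$, and $m-n$ even (the parity of the indices in the decomposition is fixed). Imposing $m\geq n\geq 1$, the even condition forces $m-n\in\{0,2\}$; the inequality $m+n\geq 4$ then rules out $(m,n)=(1,1)$ in the $m=n$ branch, leaving $m=n\geq 2$ or $m=n+2$. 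In each such case the multiplicity is one, since all summands in $V_m\otimes V_n$ are pairwise non-isomorphic.

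The remaining case analysis is a direct check, so there is no real obstacle: the only subtlety is being careful with the parity condition on the Clebsch--Gordan indices, which is what eliminates the pairs with $m-n$ odd and, together with the range condition, pins down exactly $m=n\geq 2$ and $m=n+2$ for $c=3$, and $m=n$ for $c=1$. In all other cases the relevant $V_c$ fails to appear in $V_m\otimes V_n$, so there is no trivial summand in the triple tensor product, completing the proof.
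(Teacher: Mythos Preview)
Your proof is correct and follows essentially the same route as the paper: both arguments rest on the Clebsch--Gordan decomposition of $V_m\otimes V_n$. The paper handles the $V_1$ case identically (tensoring with $V_1$ is the identity, so one just reads off when $V_1$ appears in $V_m\otimes V_n$) and defers the $V_3$ case to a cited lemma, whereas you give a self-contained argument via the Schur/self-duality reduction $\dim\mathrm{Hom}(V_1,V_m\otimes V_n\otimes V_c)=\dim\mathrm{Hom}(V_c,V_m\otimes V_n)$ and then locate $V_3$ explicitly in the Clebsch--Gordan string.
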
 

\begin{proof}
The first statement follows from previous work on hyperbolic monopoles~\cite[Lemma~7]{lang_hyperbolic_2023}. The second statement follows from the Clebsch--Gordan decomposition, noting that the tensor product of some representation and $(V_1,\rho_1)$ is isomorphic to the original representation. 
\end{proof}

Given $A\in\mathrm{Mat}(k,k,\mathbb{H})$, let $A_0\in\mathrm{Mat}(k,k,\mathbb{R})$ and $\vec{A}\in\mathrm{Mat}(k,k,\mathfrak{sp}(1))$ such that $A=A_0+\vec{A}$. From the decomposition of $(\mathbb{H},\mathrm{ad})=(\mathfrak{sp}(1),\mathrm{ad})\oplus (\mathbb{R},0)$, we have that $\mathrm{span}(M_0)$ is an invariant subspace of $(V,\rho)\otimes_\mathbb{R} (V^*,\rho^*)\otimes_\mathbb{R} (\mathbb{R},0)$ and $\mathrm{span}(\vec{M})$ is an invariant subspace of $(V,\rho)\otimes_\mathbb{R}(V^*,\rho^*)\otimes_\mathbb{R} (\mathbb{R}^3,\varrho_3)$, each of which is acted on trivially. Therefore, the representation theory used to construct $\vec{M}$ is the same as that used for $M$ in previous work dealing with hyperbolic monopoles~\cite[\S4.1]{lang_hyperbolic_2023}. We now investigate the remaining trivial summands.
\begin{lemma}
Given $m\geq n\geq 1$, we have that $(\mathbb{R}^m,\varrho_m)\otimes_\mathbb{R} ((\mathbb{R}^n)^*,\varrho_n^*)\otimes_\mathbb{R} (\mathbb{R},0)$ has a single trivial summand when $m=n$, both odd, and four trivial summands when $m=n$, both divisible by four. Otherwise, there are no trivial summands.\label{lemma:howmanytrivsummands}
\end{lemma}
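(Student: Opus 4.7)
The plan is to exploit that tensoring with the trivial representation $(\mathbb{R},0)$ does not alter the isomorphism class, so one only needs to count trivial summands in $(\mathbb{R}^m,\varrho_m)\otimes_\mathbb{R}((\mathbb{R}^n)^*,\varrho_n^*)$. The key observation I would use is that for any real $\mathfrak{sp}(1)$-representation $W$, the multiplicity of the trivial summand equals $\dim_\mathbb{R} W^{\mathfrak{sp}(1)}$, which equals $\dim_\mathbb{C}(W\otimes_\mathbb{R}\mathbb{C})^{\mathfrak{sp}(1)}$, i.e.\ the trivial multiplicity of the complexification. This reduces the problem to a complex computation where Clebsch--Gordan is available.

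Next I would recall the complexifications of the real irreducibles of $\mathfrak{sp}(1)$. When $k$ is odd, $\varrho_k$ complexifies to $(V_k,\rho_k)$, while when $k$ is divisible by $4$ the complex irrep $(V_{k/2},\rho_{k/2})$ is of quaternionic type and $\varrho_k$ complexifies to $(V_{k/2},\rho_{k/2})^{\oplus 2}$. Combined with the Clebsch--Gordan rule that $V_a\otimes_\mathbb{C} V_b^*$ contains a single trivial summand precisely when $a=b$ and none otherwise, the count is reduced to checking which tensor products of the complex factors contain $V_1$ as a summand, and with what multiplicity.

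Then I would check the four parity cases individually. If $m$ and $n$ are both odd, the complexification is $V_m\otimes V_n^*$, contributing one trivial summand exactly when $m=n$. If $m$ and $n$ are both divisible by $4$, the complexification distributes as four copies of $V_{m/2}\otimes V_{n/2}^*$, yielding four trivial summands exactly when $m=n$. In the mixed cases (one of $m,n$ odd, the other divisible by $4$) a trivial summand would require matching an odd highest-weight index with an even one, which is impossible; hence no trivial summands arise.

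The argument is essentially bookkeeping and I expect no real obstacle. The most delicate point is the justification that real trivial multiplicity agrees with complex trivial multiplicity after complexification; this is immediate because the trivial complex representation $\mathbb{C}$ decomposes as a real representation into two copies of $\mathbb{R}$ while the trivial real representation complexifies to $\mathbb{C}$, so the two dimensions coincide.
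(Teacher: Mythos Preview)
Your proposal is correct and follows essentially the same strategy as the paper: complexify, use the known complexifications of the real irreducibles $\varrho_k$, and apply Clebsch--Gordan to count trivial summands in each parity case. The only cosmetic difference is that you justify the equality of real and complex trivial multiplicities via invariants, while the paper observes that all complex summands appearing are odd-dimensional (hence of real type) so that the real and complex decompositions coincide; both arguments are valid and yield the same count.
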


\begin{proof}
Note that if $m$ and $n$ are both odd, then the complexification of the real representation is isomorphic, as a complex representation, to $(V_m,\rho_m)\otimes_\mathbb{C} (V_n,\rho_n)\otimes_\mathbb{C} (V_1,\rho_1)$. By Lemma~\ref{lemma:wheretrivsummands}, we know that this tensor product has a single trivial summand when $m=n$ and none otherwise. Furthermore, as this tensor product only contains odd dimensional summands, the representation is real, meaning that the real decomposition is the same as the complex. Thus, we have proven the first part of the lemma.

Note that if $m$ and $n$ are both divisible by four, then the complexification of the real representation is isomorphic, as a complex representation, to $(V_{m/2},\rho_{m/2})^{\oplus 2}\otimes_\mathbb{C} (V_{n/2},\rho_{n/2})^{\oplus 2}\otimes_\mathbb{C} (V_1,\rho_1)$. By Lemma~\ref{lemma:wheretrivsummands}, we know that this product has four trivial summands when $m=n$ and none otherwise. Just as above, this decomposition contains only odd dimensional summands, so the representation is real, meaning that the real decomposition is the same as the complex. Thus, we have proven the seconds part of the lemma.

Finally, suppose that one of $m$ or $n$ is odd and the other divisible by four. Then, decomposing the complexification of the real representation as a complex representation and expanding the tensor product, we find that the tensor product contains only even dimensional summands, meaning there are no trivial summands, proving the lemma.
\end{proof}

We now identify the trivial summands discussed above.
\begin{lemma}
Given $n\in\mathbb{N}_+$ odd, $I_n$ spans the unique trivial summand of 
\begin{equation*}
(\mathbb{R}^n,\varrho_n)\otimes_\mathbb{R} ((\mathbb{R}^n)^*,\varrho_n^*)\otimes_\mathbb{R} (\mathbb{R},0).
\end{equation*}

Given $n\in\mathbb{N}_+$ divisible by four, there are four trivial summands of 
\begin{equation*}
(\mathbb{R}^n,\varrho_n)\otimes_\mathbb{R} ((\mathbb{R}^n)^*,\varrho_n^*)\otimes_\mathbb{R} (\mathbb{R},0).
\end{equation*} 
Let $Y_i:=\varrho_n(\upsilon_i)\in\mathfrak{so}(n)$ induce the irreducible real $n$-representation and $y_i:=\rho_{n/2}(\upsilon_i)\in\mathfrak{su}\left(\frac{n}{2}\right)$ induce the irreducible complex $\frac{n}{2}$-representation. Let $U\in\mathrm{SU}\left(\frac{n}{2}\right)$ such that 
\begin{equation*}
Y_i=U^\dagger \begin{bmatrix}
y_i & 0 \\ 0 & y_i
\end{bmatrix}U.
\end{equation*} 
The trivial summands are spanned by the following real matrices 
\begin{equation}
I_n, \quad 
U^\dagger \begin{bmatrix}
iI_{n/2} & 0 \\ 0 & -iI_{n/2} 
\end{bmatrix}U, \quad
U^\dagger \begin{bmatrix}
0 & I_{n/2} \\ -I_{n/2}, 0 
\end{bmatrix}U, \quad
U^\dagger \begin{bmatrix}
0 & iI_{n/2} \\ iI_{n/2} & 0 
\end{bmatrix}U.\label{eq:fourtrivsummands}
\end{equation}
\end{lemma}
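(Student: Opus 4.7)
The plan is to identify the trivial summands of the representation on $\mathrm{End}_{\mathbb{R}}(\mathbb{R}^n) \cong \mathrm{Mat}(n,n,\mathbb{R})$ with the real centralizer of $\varrho_n(\mf{sp}(1))$. Since the third tensor factor $(\mathbb{R},0)$ acts trivially, the action of $\upsilon\in\mf{sp}(1)$ on $A\in\mathrm{Mat}(n,n,\mathbb{R})$ is $A\mapsto[\varrho_n(\upsilon),A]$, so its fixed points are exactly the real centralizer $\mathrm{End}_{\mf{sp}(1)}(\mathbb{R}^n,\varrho_n)$. By the real form of Schur's lemma, this centralizer is a finite-dimensional real division algebra, isomorphic to $\mathbb{R}$, $\mathbb{C}$, or $\mathbb{H}$ according to whether $\varrho_n$ is of real, complex, or quaternionic type.

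For $n$ odd, the complexification of $\varrho_n$ is the irreducible representation $(V_n,\rho_n)$, so $\varrho_n$ is of real type; hence the centralizer is $\mathbb{R}\cdot I_n$, and $I_n$ is the unique generator of the trivial summand. For $n$ divisible by four, the complexification is $(V_{n/2},\rho_{n/2})^{\oplus 2}$ with $V_{n/2}$ of even complex dimension, so $V_{n/2}$ is of quaternionic type: it carries an antilinear $J$ with $J^2=-I$ commuting with $\rho_{n/2}$. Consequently $\varrho_n$ is of quaternionic type and its real centralizer is isomorphic to $\mathbb{H}$, confirming that the trivial summands form a four-real-dimensional subspace.

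To exhibit an explicit basis, I would use the diagonalization $Y_i = U^\dagger\,\mathrm{diag}(y_i,y_i)\,U$ from the statement. By the complex Schur lemma applied to the irreducible $\rho_{n/2}$, the complex centralizer of $\{\mathrm{diag}(y_i,y_i)\}_i$ in $\mathrm{Mat}(n,n,\mathbb{C})$ consists of the matrices $\begin{bmatrix}pI_{n/2}&qI_{n/2}\\ rI_{n/2}&sI_{n/2}\end{bmatrix}$ with $p,q,r,s\in\mathbb{C}$, forming a copy of $\mathrm{Mat}(2,\mathbb{C})$. The real centralizer of $\{Y_i\}$ is the real form of this $\mathrm{Mat}(2,\mathbb{C})$ cut out by requiring $U^\dagger M U$ to be real; since it is already known to be four-real-dimensional, it must coincide with any four-dimensional real subalgebra we can exhibit inside it, and the natural candidate is the image of the standard embedding $\mathbb{H}\hookrightarrow\mathrm{Mat}(2,\mathbb{C})$ sending $1,i,j,k$ to the four blocks of \eqref{eq:fourtrivsummands}.

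The hard part is verifying that each of these four matrices is actually real after conjugation by $U^\dagger,U$. The clean approach is to track the antilinear real structure $c := U\,\tau\,U^\dagger$ on $V_{n/2}\oplus V_{n/2}$ induced from ordinary complex conjugation $\tau$ on $\mathbb{C}^n$: since each $Y_i$ is real, $c$ commutes with $\mathrm{diag}(y_i,y_i)$, and using the quaternionic structure $J$ one can normalize $U$ so that $c(v,w) = (Jw,-Jv)$. A direct computation then shows $M = \begin{bmatrix}pI_{n/2}&qI_{n/2}\\rI_{n/2}&sI_{n/2}\end{bmatrix}$ commutes with $c$---equivalently, $U^\dagger M U$ is real---iff $s=\bar p$ and $r=-\bar q$, which is precisely the standard embedding of $\mathbb{H}$. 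Each of the four matrices of \eqref{eq:fourtrivsummands} satisfies this reality condition and the four are manifestly linearly independent, so by the dimension count they span the real centralizer.
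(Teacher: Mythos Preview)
Your argument is correct and begins exactly as the paper does: a matrix lies in a trivial summand iff $[\varrho_n(\upsilon),A]=0$ for all $\upsilon$, i.e.\ iff it belongs to the real centralizer of the representation. From there the paper simply cites \cite[Lemma~10]{lang_hyperbolic_2023} and stops, whereas you supply a self-contained argument via the real Schur lemma and the real/quaternionic type dichotomy for irreducible $\mathfrak{sp}(1)$-representations. Your route is more conceptual: it explains \emph{why} the centralizer has dimension one (odd $n$, real type) or four ($n\equiv 0\pmod 4$, quaternionic type), and then recovers the explicit basis by identifying the real structure $c=U\tau U^{\dagger}$ on $V_{n/2}\oplus V_{n/2}$ with the standard one $(v,w)\mapsto(Jw,-Jv)$ built from the quaternionic structure $J$ on $V_{n/2}$. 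The resulting reality condition $s=\bar p$, $r=-\bar q$ is exactly the standard embedding $\mathbb{H}\hookrightarrow\mathrm{Mat}(2,\mathbb{C})$, matching the four matrices in \eqref{eq:fourtrivsummands}.

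Two small remarks. First, the sentence ``it must coincide with any four-dimensional real subalgebra we can exhibit inside it'' is loosely phrased; what you actually use (and correctly verify) is that the four displayed matrices lie in the real centralizer and are linearly independent, hence span it by the dimension count. Second, the normalization ``choose $U$ so that $c(v,w)=(Jw,-Jv)$'' is where the freedom in $U$ is absorbed: different $U$ satisfying $Y_i=U^{\dagger}\mathrm{diag}(y_i,y_i)U$ differ by an element of the complex centralizer, and you are using that freedom to put $c$ in standard form. This is legitimate, but note that the statement as written fixes some $U$ rather than a normalized one; strictly speaking the displayed basis is correct for a suitably chosen $U$, which is also how the paper's cited lemma should be read.
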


\begin{note}
While the first matrix, $I_n$, in \eqref{eq:fourtrivsummands} is clearly symmetric, the remaining three matrices are antisymmetric. Indeed, as they are real, the transpose of these matrices agrees with their Hermitian conjugate, which we see results in multiplying the latter three matrices by $-1$.
\end{note}

\begin{proof}
A matrix $A$ is acted on trivially by $(\mathbb{R}^n,\varrho_n)\otimes_\mathbb{R} ((\mathbb{R}^n)^*,\varrho_n^*)\otimes_\mathbb{R} (\mathbb{R},0)$ if and only if $[\varrho_n(\upsilon),A]=0$ for all $\upsilon\in\mathfrak{sp}(1)$. The result follows from previous work on hyperbolic monopoles~\cite[Lemma~10]{lang_hyperbolic_2023}.
\end{proof}

\begin{note}
We have that $(R,0)$ is a trivial summand of $(V,\rho)\otimes_\mathbb{R} (V^*,\rho^*)$, as $R$ is acted on trivially by this representation. From the above work, we know exactly where $R$ must live. Moreover, this constraint helps us identify $L$, as $R$ requires the entirety of $\hat{M}$.
\end{note}

Now that we know exactly what spans the trivial summands of $(\hat{V},\hat{\rho})$, the following theorem tells us exactly what form $M$ takes if $\hat{M}$ has simple spherical symmetry.
\begin{theorem}[Simple Spherical Structure Theorem]\label{thm:simplestructure}
Let $\hat{M}\in\mathcal{M}_{n,k}$ have simple spherical symmetry. Theorem~\ref{thm:simplesphericalsym} tells us that $\hat{M}$ is generated by a real representation $(V,\rho)$ of $\mathfrak{sp}(1)$, which we can decompose as 
\begin{equation}
(V,\rho)\simeq \bigoplus_{a=1}^m (\mathbb{R}^{n_a},\varrho_{n_a}).\label{eq:rhodecompsimple}
\end{equation}
Without loss of generality, we may assume that $n_a\geq n_{a+1}$, for all $a\in\{1,\ldots,m-1\}$.

Let $Y_{i,a}:=\varrho_{n_a}(\upsilon_i)\in\mathfrak{so}(n_a)$ and
\begin{equation*}
Y_i:=\mathrm{diag}(Y_{i,1},\ldots,Y_{i,m})\in\mathfrak{so}(k).
\end{equation*}
Then $Y_i$ induces $(V,\rho)$. For $n_a$ is divisible by four, let $y_{i,a}:=\rho_{\frac{n_a}{2}}(\upsilon_i)\in\mathfrak{su}\left(\frac{n_a}{2}\right)$ induce the irreducible $\frac{n_a}{2}$ complex representation. Then there is some $U_{n_a}\in\mathrm{SU}(n_a)$ such that 
\begin{equation*}
Y_{i,a}=U_{n_a}^\dagger \begin{bmatrix}
y_{i,a} & 0 \\ 0 & y_{i,a}
\end{bmatrix}U_{n_a}.
\end{equation*}

Using the decomposition of $(V,\rho)$ given in \eqref{eq:rhodecompsimple}, we have $(M_\mu)_{ab}\in\mathbb{R}^{n_a}\otimes_\mathbb{R} (\mathbb{R}^{n_b})^*$ such that,
\begin{equation*}
M_\mu=\begin{bmatrix}
(M_\mu)_{11} & \cdots & (M_\mu)_{1m} \\
\vdots & \ddots & \vdots \\
(M_\mu)_{m1} & \cdots & (M_\mu)_{mm}
\end{bmatrix}.
\end{equation*}

Then, up to a $\rho$-invariant gauge and for all $a,b\in\{1,\ldots,m\}$ we have that $\vec{M}=M_1i+M_2j+M_3k$ is given by the Structure Theorem for hyperbolic monopoles~\cite[Theorem~3]{lang_hyperbolic_2023}. Additionally, up to the same $\rho$-invariant gauge and for all $a,b\in\{1,\ldots,m\}$, we have that 
\begin{itemize}
\item[(1)] if $a=b$, then $\exists \lambda_a\in\mathbb{R}$ such that
\begin{equation*}
(M_0)_{aa}=\lambda_aI_{n_a};
\end{equation*}
\item[(2)] if $a<b$ and $n_a=n_b$ are odd, then $\exists \lambda_{a,b}\in\mathbb{R}$ such that
\begin{equation*}
(M_0)_{ab}=\lambda_{a,b}I_{n_a}, \quad\textrm{and}\quad (M_0)_{ba}=\lambda_{a,b}I_{n_a};
\end{equation*}
\item[(3)] if $a<b$ and $n_a=n_b$ are divisible by four, then $\exists \kappa_{a,0},\kappa_{a,1},\kappa_{a,2},\kappa_{a,3}\in\mathbb{R}$ such that
\begin{align*}
(M_0)_{ab}=U_{n_a}^\dagger \begin{bmatrix}
(\kappa_{a,0}+\kappa_{a,1}i)I_{n_a/2} & (\kappa_{a,2}+\kappa_{a,3}i)I_{n_a/2} \\
(-\kappa_{a,2}+\kappa_{a,3}i)I_{n_a/2} & (\kappa_{a,0}-\kappa_{a,1}i)I_{n_a/2}
\end{bmatrix}U_{n_a}, \quad\textrm{and}\\
(M_0)_{ba}=U_{n_a}^\dagger \begin{bmatrix}
(\kappa_{a,0}-\kappa_{a,1}i)I_{n_a/2} & -\kappa_{a,2}-\kappa_{a,3}i)I_{n_a/2} \\
(\kappa_{a,2}-\kappa_{a,3}i)I_{n_a/2} & (\kappa_{a,0}+\kappa_{a,1}i)I_{n_a/2}
\end{bmatrix}U_{n_a};
\end{align*}
\item[(4)] otherwise, $(M_0)_{ab}=0$.
\end{itemize}

Conversely, if $M_0$ has the above form, $\vec{M}$ has the form given in the Structure Theorem for hyperbolic monopoles, for some real representation $(V,\rho)$, and $[\rho(\upsilon),R]=0$ for all $\upsilon\in\mathfrak{sp}(1)$, then $\hat{M}$ has simple spherical symmetry~\cite[Theorem~3]{lang_hyperbolic_2023}.
\end{theorem}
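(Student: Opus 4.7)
The plan is to reduce the statement to an explicit identification of the trivial summands inside $(\hat{V}, \hat{\rho})$ that contain $M$. By Corollary~\ref{cor:simplesphersym}, the simple spherical symmetry of $\hat{M}$ amounts to the statement that $\hat{\rho}(\upsilon)(M) = 0$ for every $\upsilon \in \mathfrak{sp}(1)$; equivalently, $M$ lies in the subspace of $(\hat{V}, \hat{\rho})$ fixed by $\hat{\rho}$, which by semisimplicity is the direct sum of the trivial summands. Using the decomposition $(\mathbb{H}, \mathrm{ad}) = (\mathbb{R}, 0) \oplus (\mathfrak{sp}(1), \mathrm{ad})$ recorded earlier, I would split $M = M_0 + \vec{M}$, so that the condition decouples: $\vec{M}$ lies in the fixed subspace of $(V, \rho) \otimes_\mathbb{R} (V^*, \rho^*) \otimes_\mathbb{R} (\mathfrak{sp}(1), \mathrm{ad})$ and $M_0$ in the fixed subspace of $(V, \rho) \otimes_\mathbb{R} (V^*, \rho^*) \otimes_\mathbb{R} (\mathbb{R}, 0)$. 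The $\vec{M}$ piece is exactly what the Structure Theorem for hyperbolic monopoles addresses, so I would quote that theorem verbatim; all of the genuinely new content concerns the $M_0$ piece.

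For $M_0$, I would decompose $(V, \rho) \simeq \bigoplus_{a=1}^m (\mathbb{R}^{n_a}, \varrho_{n_a})$ and partition $M_0$ into blocks $(M_0)_{ab} \in \mathrm{Mat}(n_a, n_b, \mathbb{R})$. Each block lies in $(\mathbb{R}^{n_a}, \varrho_{n_a}) \otimes_\mathbb{R} ((\mathbb{R}^{n_b})^*, \varrho_{n_b}^*) \otimes_\mathbb{R} (\mathbb{R}, 0)$, and Lemma~\ref{lemma:howmanytrivsummands} enumerates precisely which blocks can be nonzero: only blocks with $n_a = n_b$ contribute, giving a one-dimensional space of trivial summands when the common dimension is odd and a four-dimensional space when it is divisible by four. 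The lemma following Lemma~\ref{lemma:howmanytrivsummands} then identifies explicit generators of these summands, which immediately yields the claimed form of $(M_0)_{ab}$ with free real scalars $\lambda_a$, $\lambda_{a,b}$, and $\kappa_{a,\mu}$, matching cases (1)--(4). The matrices $U_{n_a}$ appear precisely because they conjugate $Y_{i,a}$ into the explicit block form in which those generators were written.

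The remaining work is to impose the symmetry constraint $M^T = M$, which for $M_0$ reads $(M_0)_{ba} = (M_0)_{ab}^T$. In case (1) this forces the single scalar to be real and fixes the diagonal form; in case (2) transposing the identity block gives exactly the symmetric relation displayed. In case (3), exactly one of the four generators (the identity block) is symmetric while the other three are antisymmetric---as recorded in the note following the lemma---so transposition fixes $\kappa_{a,0}$ and negates $\kappa_{a,1}, \kappa_{a,2}, \kappa_{a,3}$, and a direct matching of entries produces the two displayed matrices for $(M_0)_{ab}$ and $(M_0)_{ba}$. The subtle point I expect to be the main obstacle is the phrase ``up to the same $\rho$-invariant gauge'' appearing for both $\vec{M}$ and $M_0$: the residual gauge group commuting with $\rho$ is, by Schur's lemma, a product of general linear groups on the multiplicity spaces of the isotypic components, and one must check that a single element of this group simultaneously normalises $\vec{M}$ to its Structure-Theorem form and $M_0$ to the form above, requiring careful bookkeeping of how both pieces transform under the same change of basis. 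The converse follows by verifying \eqref{eq:simplesphericalsym1} summand by summand using the fact that each listed generator is annihilated by $\hat{\rho}$, with \eqref{eq:simplesphericalsym2} supplied directly by hypothesis.
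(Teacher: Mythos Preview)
Your proposal is correct and follows essentially the same approach as the paper. The paper's proof is a single sentence deferring to the structure of the proof of the Structure Theorem for hyperbolic monopoles, and your outline is precisely the expansion of that structure using the preparatory material (Corollary~\ref{cor:simplesphersym}, Lemma~\ref{lemma:howmanytrivsummands}, and the lemma identifying the explicit generators) that the paper develops immediately before stating the theorem. Your caution about the ``same $\rho$-invariant gauge'' is well placed, but the paper does not address it explicitly either; it is absorbed into the reference to the earlier Structure Theorem's proof.
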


\begin{proof}
The proof follows the same structure as the proof of the Structure Theorem for hyperbolic monopoles~\cite[Theorem~3]{lang_hyperbolic_2023}.
\end{proof}

Examples of ADHM data with simple spherical symmetry are given in previous work, though there they are viewed through the lens of hyperbolic monopoles~\cite[Propositions~5,~6~\&~7]{lang_hyperbolic_2023}. Additionally, in Section~\ref{subsubsec:NotinMSset}, we construct a novel example of ADHM data with conformal superspherical symmetry, meaning this data has simple spherical symmetry.

\begin{note}
We can consider the constraints on the structure group of instantons with simple spherical symmetry just as we did for hyperbolic monopoles in previous work~\cite[\S4.3]{lang_hyperbolic_2023}. In this case, the top row of \eqref{eq:simplesphericalsymdiff} gives us the same induced representation as in previous work~\cite[Lemma~11]{lang_hyperbolic_2023}. Therefore, we do not retread the same ground here.\label{note:simplesphericalsymconstraint}
\end{note}

\subsubsection{Non-symmetric hyperbolic monopoles}

In this section, we discuss the connection between instantons with simple spherical symmetry and hyperbolic monopoles with integral mass and no continuous symmetries. 

Note that the simple spherical subgroup of $\mathrm{Sp}(2)$ contains $R_1$ as a subgroup, whereas the isoclinic and conformal spherical subgroups do not even contain a subgroup conjugate to $R_1$. Hence, we can associate to an instanton with simple spherical symmetry a hyperbolic monopole with integral mass. Unlike toral symmetry, simple spherical symmetry does not impart extra symmetry to hyperbolic monopoles. The lack of additional symmetry is because the rest of the simple spherical symmetry does not commute with the $R_1$ subgroup. 

Indeed, suppose that $\mathrm{diag}(p,p)$ descends to a symmetry of the hyperbolic space. As the circle action only acts on the $S^1$ component of $H^3\times S^1$, we have that at least on $S^4\setminus S^2$, $\mathrm{diag}(p,p)$ and the $R_1$ subgroup must commute. As $S^2\subseteq S^4$ is a codimension two subspace, and the commutator is a smooth map, $\mathrm{diag}(p,p)$ and $R_1$ must commute everywhere. Thus, we must have $[p,e^{i\theta}]=0$ for all $\theta\in\mathbb{R}$, which only happens when $p\in\mathbb{C}$. Thus, the only part of the simple spherical group that descends to a symmetry of $H^3\times S^1$ is the $R_1$ subgroup, which only acts on the $S^1$ component. 

Note that an instanton with simple spherical symmetry can possess more symmetry, in which case the hyperbolic monopole may have a continuous symmetry. In Section~\ref{subsubsec:SpherSymHyperMono}, we discuss the connection between instantons with conformal superspherical symmetry and hyperbolic monopoles with different symmetries. If the instanton has simple spherical symmetry and does not have conformal superspherical symmetry, then the corresponding hyperbolic monopole has no continuous symmetries. 

\subsubsection{Hyperbolic analogue to Higgs bundles}\label{subsubsec:HyperbolicHiggsBundles}

In this section, we discuss the connection between instantons with simple spherical symmetry and a hyperbolic analogue to Higgs bundles. 

Just as Euclidean monopoles, Higgs bundles are a dimensional reduction of instantons. However, while Euclidean monopoles are invariant under translation along one axis, Higgs bundles are invariant under translations along two axes. In Section~\ref{subsubsec:HyperbolicMonopoles}, we cover the connection between hyperbolic monopoles and circle-invariant instantons. This connection is due to the conformal equivalence $S^4\setminus S^2\equiv H^3\times S^1$. There is a whole family of these conformal equivalences. Indeed, for $d\in\{0,1,2\}$ we have $S^4\setminus S^d\equiv H^{d+1}\times S^{4-d-1}$. Note that we can replace $S^4\setminus S^d$ with $\mathbb{R}^4\setminus \mathbb{R}^d$. The hyperbolic monopole case is when $d=2$. Here, we investigate the case $d=1$. In Section~\ref{subsubsec:HyperbolicNahm}, we study the case $d=0$.

While searching for multi-instanton solutions, Witten searched for $\mathrm{Sp}(1)$-instantons that have simple spherical symmetry, in our language~\cite{Witten_exact_1977}. In doing so, Witten found that the symmetric $\mathrm{Sp}(1)$-instantons he was searching for correspond to vortices in the abelian Higgs model on the hyperbolic plane. Moreover, these vortices can be found by solving the Liouville equation. By embedding these vortices into three dimensions, Maldonado was able to generate hyperbolic monopoles~\cite{maldonado_hyperbolic_2017}. Here, we examine such symmetric instantons with higher rank structure group, where we do not necessarily obtain the abelian Higgs model.

Note that the fixed point set of the simple spherical action on $\mathbb{R}^4$ is $\mathbb{R}$. Moreover, for $x\notin \mathbb{R}$, we have that the orbit of $x$ under the simple spherical action is $S^2$. Indeed, given $x=x_0+\vec{x}$, where $x_0\in\mathbb{R}$ and $\vec{x}\in\mathfrak{sp}(1)$, we note that the simple spherical action of $p\in\mathrm{Sp}(1)$ on $x$ corresponds to fixing $x_0$ and rotating $\vec{x}$ by the element in $\mathrm{SO}(3)$ given by the double cover $\mathrm{Sp}(1)\rightarrow\mathrm{SO}(3)$. 

On $\mathbb{R}^4\setminus\mathbb{R}$, we can use the coordinates $(x_0,r,\theta,\phi)$, where the $\mathfrak{sp}(1)$ part is written in spherical coordinates. Doing so, the metric on $\mathbb{R}^4\setminus\mathbb{R}$ is given by
\begin{equation*}
ds^2=dx_0^2+dr^2+r^2(d\theta^2+\sin^2\theta d\phi^2).
\end{equation*}
As $r\neq 0$ on this set, we can divide by $r^2$, obtaining the metric of $H^2\times S^2$. Specifically, we note that we obtain the upper-half space model of the hyperbolic plane.

Therefore, just as in the hyperbolic monopole case, an instanton with simple spherical symmetry corresponds to a hyperbolic analogue to a Higgs bundle and vice versa. 

\subsection{Isoclinic spherical symmetry}\label{subsec:IsoclinicSpherical}

In this section, we find an equation describing all instantons with isoclinic spherical symmetry, as given in Table~\ref{table:conformalsubgroups} as well as below. We also discuss the connections between these instantons and singular monopoles with no continuous symmetries and a hyperbolic analogue to Nahm data.

First, we introduce the notion of isoclinic spherical symmetry. An instanton is said to have isoclinic spherical symmetry if it is equivariant under $\mathrm{diag}(p,1)$ for all $p\in\mathrm{Sp}(1)$.
\begin{theorem}
Let $\hat{M}\in\mathcal{M}_{n,k}$. Then $\hat{M}$ has isoclinic spherical symmetry if and only if there exists a real representation $\rho\colon\mathfrak{sp}(1)\rightarrow\mathfrak{so}(k)$ such that for all $\upsilon\in\mathfrak{sp}(1)$,
\label{thm:isoclinicsphericalsym}
\begin{align}
\upsilon M+[\rho(\upsilon),M]=0,\label{eq:isoclinicsphericalsym1}\\
[\rho(\upsilon),R]=0.\label{eq:isoclinicsphericalsym2}
\end{align}
\end{theorem}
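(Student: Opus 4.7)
The plan is to mirror the proof of Theorem~\ref{thm:simplesphericalsym}, adapted to the diagonal subgroup $G_{4,1}=\{\mathrm{diag}(p,1)\mid p\in\mathrm{Sp}(1)\}\subseteq\mathrm{Sp}(2)$. For the forward direction, I would assume that $\hat{M}$ has isoclinic spherical symmetry and, mirroring \eqref{eq:simplesphericalsymS}, define the stabilizer
\begin{equation*}
S:=\{(\mathrm{diag}(p,1),K)\mid (\mathrm{diag}(LKL^\dagger(LL^\dagger)^{-1},pK),K).\mathrm{diag}(p^\dagger,1).(\hat{M},U)=(\hat{M},U)\}\subseteq\mathrm{Sp}(1)\times\mathrm{O}(k).
\end{equation*}
Corollary~\ref{cor:IsomS}, applied with $a=p^\dagger$ and $b=1$, tells us that every gauge $(Q,K)$ realizing such an equivariance has precisely this diagonal form for $Q$, with $K\in\mathrm{O}(k)$ and $[R,K]=0$, so $S$ captures all of the isoclinic spherical symmetry of $\hat{M}$. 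That $S$ is a group follows from the commutativity of the conformal and gauge actions, exactly as in the proof of Theorem~\ref{thm:simplesphericalsym}.

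Because $S$ sits inside a compact Lie group, Theorem~\ref{thm:mainthm} applies. Using Lemma~\ref{lemma:conformalaction} with $\Upsilon=\mathrm{diag}(\upsilon,0)\in\mathfrak{sp}(2)$ one computes $(\hat{M}_\Upsilon,U_\Upsilon)=(0,U\upsilon)$, while differentiating $Q(\theta)=\mathrm{diag}(Le^{\theta\rho(\upsilon)}L^\dagger(LL^\dagger)^{-1},e^{\theta\upsilon}e^{\theta\rho(\upsilon)})$ at $\theta=0$ produces the gauge generator $\varsigma=\mathrm{diag}(L\rho(\upsilon)L^\dagger(LL^\dagger)^{-1},\upsilon+\rho(\upsilon))$. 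Substituting these into the Theorem~\ref{thm:mainthm} equation $x.A=\rho(x).A$ yields the matrix identity
\begin{equation*}
\begin{bmatrix} L\rho(\upsilon)L^\dagger(LL^\dagger)^{-1}L-L\rho(\upsilon) \\ \upsilon M+[\rho(\upsilon),M] \end{bmatrix}=0,
\end{equation*}
whose bottom row is exactly \eqref{eq:isoclinicsphericalsym1}; the relation \eqref{eq:isoclinicsphericalsym2} then follows by differentiating $[R,e^{\theta\rho(\upsilon)}]=0$ (again forced by Corollary~\ref{cor:IsomS}) at $\theta=0$.

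For the converse, suppose $\rho$ is a real representation satisfying \eqref{eq:isoclinicsphericalsym1} and \eqref{eq:isoclinicsphericalsym2}, and adapt the constant-$A$ argument from the proof of Theorem~\ref{thm:circularsym}. Fix $\upsilon\in\mathfrak{sp}(1)$ and $\theta_0\in\mathbb{R}$, and set $K(\theta):=e^{\theta\rho(\upsilon)}\in\mathrm{O}(k)$ and $p(\theta):=e^{\theta\upsilon}\in\mathrm{Sp}(1)$. The matrix $A(\theta):=p(\theta)K(\theta)MK(\theta)^T$ has derivative
\begin{equation*}
A'(\theta)=p(\theta)K(\theta)\bigl(\upsilon M+[\rho(\upsilon),M]\bigr)K(\theta)^T,
\end{equation*}
where I used that the real matrix $K(\theta)$ commutes with the quaternion scalar $\upsilon$. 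By \eqref{eq:isoclinicsphericalsym1}, $A'(\theta)=0$, so $A(\theta_0)=A(0)=M$, i.e., $K(\theta_0)MK(\theta_0)^T=p(\theta_0)^\dagger M$. Combined with $[R,K(\theta_0)]=0$ from \eqref{eq:isoclinicsphericalsym2}, Lemma~\ref{lemma:focusonY} (applied with $a=p(\theta_0)^\dagger$, $b=1$) produces the gauge transformation realizing $\mathrm{diag}(p(\theta_0)^\dagger,1)$-equivariance. As $\upsilon$ and $\theta_0$ range over all of $\mathfrak{sp}(1)$ and $\mathbb{R}$, the elements $p(\theta_0)^\dagger=e^{-\theta_0\upsilon}$ exhaust $\mathrm{Sp}(1)$, giving isoclinic spherical symmetry.

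The main obstacle, as in every result of this section, is the reduction of the non-compact gauge group $\mathrm{Sp}(n+k)\times\mathrm{GL}(k,\mathbb{R})$ to a compact one so that Theorem~\ref{thm:mainthm} is available; Corollary~\ref{cor:IsomS} accomplishes this by pinning $K$ into $\mathrm{O}(k)$ and expressing $Q$ smoothly in terms of $p$ and $K$. Once this compact stabilizer $S$ is set up, both the forward derivation of \eqref{eq:isoclinicsphericalsym1}--\eqref{eq:isoclinicsphericalsym2} and the converse construction via Lemma~\ref{lemma:focusonY} are direct analogs of the corresponding steps for simple spherical and circular $t$-symmetry.
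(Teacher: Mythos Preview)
Your proposal is correct and follows essentially the same approach as the paper's proof: define the stabilizer $S\subseteq\mathrm{Sp}(1)\times\mathrm{O}(k)$ via Corollary~\ref{cor:IsomS}, apply the compact-stabilizer argument from Theorem~\ref{thm:mainthm} to obtain \eqref{eq:isoclinicsphericalsymdiff}, read off \eqref{eq:isoclinicsphericalsym1}--\eqref{eq:isoclinicsphericalsym2} from the bottom row and the commutator $[R,e^{\theta\rho(\upsilon)}]=0$, and for the converse run the constant-$A(\theta)$ argument from Theorem~\ref{thm:circularsym} together with Lemma~\ref{lemma:focusonY}. The only difference is that you spell out the computation of $(\hat{M}_\Upsilon,U_\Upsilon)$, $\varsigma$, and $A'(\theta)$ explicitly, whereas the paper simply cites the analogous steps in Theorems~\ref{thm:simplesphericalsym} and~\ref{thm:circularsym}.
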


\begin{definition}
We call $\rho$ the \textbf{generating representation} of the isoclinic spherical symmetry of $\hat{M}$.
\end{definition}

\begin{proof}
We use the same notation as introduced in the beginning of the proof of Theorem~\ref{thm:circularsym}. We follow the proof of Theorem~\ref{thm:mainthm} after setting the scene.

Suppose that $\hat{M}$ has isoclinic spherical symmetry. Let $S\subseteq \mathrm{Sp}(1)\times\mathrm{O}(k)$ be the stabilizer group of $(\hat{M},U)$ restricted to rotations in $\mathrm{Sp}(1)\times\{1\}\subseteq\mathrm{Sp}(2)$. That is
\begin{equation}
S:=\{(\mathrm{diag}(p,1),K)\mid (\mathrm{diag}(LKL^\dagger (LL^\dagger)^{-1},p K),K).\mathrm{diag}(p^\dagger,1).(\hat{M},U)
=(\hat{M},U)\}.\label{eq:isoclinicsphericalsymS}
\end{equation}%***CHECK HERE FOR PUBLICATION, margins may change this to multline
Indeed, if $(Q,K)\in\mathrm{Sp}(n+k)\times\mathrm{GL}(k,\mathbb{R})$ such that $\mathrm{diag}(p^\dagger,1).(Q,K).(\hat{M},U)=(\hat{M},U)$, then Corollary~\ref{cor:IsomS} tells us that $K\in\mathrm{O}(k)$, $[R,K]=0$, and $Q=\mathrm{diag}(LKL^\dagger (LL^\dagger)^{-1},p K)$. Therefore, the pairs in $S$ encapsulate all the isoclinic spherical symmetry of the instanton. That $S$ is a group follows from this fact as well as the fact that the conformal action is a right Lie group action, just as in the proof of Theorem~\ref{thm:simplesphericalsym}.

Just like our previous symmetries, other than circular $t$-symmetry, the stabilizer group is a subgroup of a compact group, so we proceed as in the proof of Theorem~\ref{thm:mainthm}~\cite[Theorem~1.1]{lang_moduli_2024}. In particular, we find that $\hat{M}$ has isoclinic spherical symmetry if and only if there is a Lie algebra homomorphism $\rho\colon\mathfrak{sp}(1)\rightarrow\mathfrak{so}(k)$ such that for all $\upsilon\in\mathfrak{sp}(1)$, we have
\begin{equation}
\left(\begin{bmatrix}
L\rho(\upsilon)L^\dagger(LL^\dagger)^{-1}L-L\rho(\upsilon) \\
\upsilon M+[\rho(\upsilon),M]
\end{bmatrix},0\right)=(0,0).\label{eq:isoclinicsphericalsymdiff}
\end{equation}

We can simplify these constraints. In particular, suppose $\hat{M}$ has isoclinic spherical symmetry. In Section~\ref{subsubsec:isocliniccontraint}, we study the top row. For now, focusing on the bottom row, we see that $\upsilon M+[\rho(\upsilon),M]=0$. Furthermore, as mentioned above, Corollary~\ref{cor:IsomS} tells us that $[R,e^{\theta\rho(\upsilon)}]=0$, for all $\theta\in\mathbb{R}$. Differentiating and evaluating at $\theta=0$, we have $[R,\rho(\upsilon)]=0$. 

We can use the same method as in the proof of Theorem~\ref{thm:circularsym} to prove the converse.
\end{proof}

Just as with our previous symmetries, we need not check the final condition of $\mathcal{M}_{n,k}$ in Definition~\ref{def:Mstd} everywhere.
\begin{lemma}
Suppose that $\hat{M}$ satisfies \eqref{eq:isoclinicsphericalsym1} and \eqref{eq:isoclinicsphericalsym2} for some real representation $(\mathbb{R}^k,\rho)$ as well as the first three conditions of Definition~\ref{def:Mstd}. If the final condition is satisfied at all $x=x_0\in\mathbb{H}$ with $x_0\geq 0$, then $\hat{M}\in\mathcal{M}_{n,k}$. 
\end{lemma}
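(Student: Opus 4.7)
The plan is to mirror the proofs of Lemma~\ref{lemma:finalconditioncirc} and its analogues for toral and simple spherical symmetry. The overall idea is: given any $z \in \mathbb{H}$, use the isoclinic spherical symmetry to transport $z$ onto the non-negative real axis, where $\Delta(x)^\dagger\Delta(x)$ is assumed to be non-singular, and then exploit the gauge/conformal relationship to transfer non-singularity back to $z$.

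First, I would identify how the isoclinic spherical subgroup acts on $\mathbb{H}$. By Definition~\ref{def:conformalactionR4}, $\mathrm{diag}(p,1)$ acts on $x \in \mathbb{H}$ by $x \mapsto px$. Hence for any $z \in \mathbb{H} \setminus \{0\}$, the choice $p := \bar{z}/|z| \in \mathrm{Sp}(1)$ yields $pz = |z|$, a non-negative real; if $z=0$, take $p = 1$. Thus every $\mathrm{Sp}(1)$-orbit meets the set $\{x = x_0 \geq 0\}$ on which the final condition of Definition~\ref{def:Mstd} is assumed to hold.

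Next, I would exploit the equivariance. Since $\hat{M}$ has isoclinic spherical symmetry, for each such $p$ there exist $Q \in \mathrm{Sp}(n+k)$ and $K \in \mathrm{GL}(k,\mathbb{R})$ with $(Q,K).\mathrm{diag}(p,1).(\hat{M},U) = (\hat{M},U)$; Corollary~\ref{cor:IsomS} lets me take $K \in \mathrm{O}(k)$. Writing out this equivariance gives $Q^\dagger \hat{M} = \hat{M}K^{-1}$ and $Q^\dagger U = UpK^{-1}$. Because $K$ is real, it commutes with the scalar quaternion $p$ acting on the right, so $UpK^{-1} = UK^{-1}p$. Combining,
\begin{equation*}
Q^\dagger \Delta(z) = Q^\dagger \hat{M} - (Q^\dagger U)z = \hat{M}K^{-1} - UK^{-1}pz = (\hat{M} - Upz)K^{-1} = \Delta(pz)\,K^{-1},
\end{equation*}
so $\Delta(pz) = Q^\dagger \Delta(z)\,K$. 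Taking adjoints and using $Q \in \mathrm{Sp}(n+k)$,
\begin{equation*}
\Delta(|z|)^\dagger \Delta(|z|) = \Delta(pz)^\dagger\Delta(pz) = K^T\,\Delta(z)^\dagger\Delta(z)\,K.
\end{equation*}
Since $\Delta(|z|)^\dagger\Delta(|z|)$ is non-singular by hypothesis and $K$ is invertible, $\Delta(z)^\dagger\Delta(z)$ is non-singular for every $z \in \mathbb{H}$, so $\hat{M}\in\mathcal{M}_{n,k}$.

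There is no real obstacle here; the only care required is in bookkeeping the gauge-conformal compatibility (left versus right action, the role of $p$ versus $p^\dagger$, and the fact that real $K$ commutes with quaternion scalars on the right), all of which are in hand from Corollary~\ref{cor:IsomS} and the pattern already established in Lemma~\ref{lemma:finalconditioncirc}.
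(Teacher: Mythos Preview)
Your proof is correct and takes essentially the same approach as the paper: find $p\in\mathrm{Sp}(1)$ with $pz=|z|$, then use the equivariance to relate $\Delta(z)^\dagger\Delta(z)$ to $\Delta(|z|)^\dagger\Delta(|z|)$ via conjugation by an orthogonal $K$, exactly as in Lemma~\ref{lemma:finalconditioncirc}. The paper's proof is the two-line version of what you wrote out in full.
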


\begin{proof}
Let $z=z_0+z_1i+z_2j+z_3k\in \mathbb{H}$. There is some $p\in\mathrm{Sp}(1)$ such that $z=|z|p^\dagger$. Thus, $x:=pz=|z|$. The rest follows as in the proof of Lemma~\ref{lemma:finalconditioncirc}.
\end{proof}

\subsubsection{Structure of isoclinic spherical symmetry}

Theorem~\ref{thm:isoclinicsphericalsym} tells us exactly how to search for instantons with isoclinic spherical symmetry. It follows the same story as the case of simple spherical symmetry, just with a different equation. In this section, we investigate what representations generate instantons with isoclinic spherical symmetry and what the corresponding ADHM data looks like.

\begin{definition}
Let $\iota\colon\mathfrak{sp}(1)\rightarrow\mathfrak{gl}(\mathbb{H})$ be the inclusion map $\iota(\upsilon)(x):=\upsilon x$. Consider the real representation $(\mathbb{H},\iota)$, where the scalars are restricted from $\mathbb{H}$ to $\mathbb{R}$, and let $V:=\mathbb{R}^k$. Given a real representation $(V,\rho)$ of $\mathfrak{sp}(1)$, we define the induced real representation 
\begin{equation}
(\hat{V},\hat{\rho}):=(V,\rho)\otimes_\mathbb{R} (V^*,\rho^*)\otimes_\mathbb{R} (\mathbb{H},\iota).
\end{equation}
Unravelling how $\hat{\rho}$ acts on $\hat{V}=\mathrm{Mat}(k,k,\mathbb{H})$, let $\upsilon\in\mathfrak{sp}(1)$ and $A\in\mathrm{Mat}(k,k,\mathbb{H})$. Then
\begin{equation}
\hat{\rho}(\upsilon)(A)=[\rho(\upsilon),A]+\upsilon A.
\end{equation}
\end{definition}

\begin{note}
By keeping $(\mathbb{H},\iota)$ as a quaternionic representation, we could have considered $(\hat{V},\hat{\rho})$ as a quaternionic representation. The choice to restrict scalars merely simplifies computation of ADHM data.
\end{note}

The definition of $(\hat{V},\hat{\rho})$ is well-motivated. Firstly, note that $M\in\hat{V}$. Secondly, given the connection between the action of $\hat{\rho}(\upsilon)$ and \eqref{eq:isoclinicsphericalsymdiff}, we immediately obtain the following corollary.
\begin{cor}
Let $\hat{M}\in\mathcal{M}_{n,k}$. Then $\hat{M}$ has isoclinic spherical symmetry if and only if there is some real $k$-representation $(V,\rho)$ such that $\hat{\rho}(\upsilon)(M)=0$ and $[\rho(\upsilon),R]=0$ for all $\upsilon\in\mathfrak{sp}(1)$.
\end{cor}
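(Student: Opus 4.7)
The plan is to recognize that this corollary is essentially a repackaging of Theorem~\ref{thm:isoclinicsphericalsym} in the language of the induced representation $(\hat{V},\hat{\rho})$ that was just introduced. First I would unravel the definition of $\hat{\rho}$: since $(\hat{V},\hat{\rho})=(V,\rho)\otimes_{\mathbb{R}}(V^*,\rho^*)\otimes_{\mathbb{R}}(\mathbb{H},\iota)$, the induced action on $A\in\mathrm{Mat}(k,k,\mathbb{H})$ is exactly $\hat{\rho}(\upsilon)(A)=[\rho(\upsilon),A]+\upsilon A$, as noted just before the corollary. Therefore the condition $\hat{\rho}(\upsilon)(M)=0$ for all $\upsilon\in\mathfrak{sp}(1)$ is literally equation \eqref{eq:isoclinicsphericalsym1}.

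With this identification in hand, the forward direction is immediate: if $\hat{M}$ has isoclinic spherical symmetry, then Theorem~\ref{thm:isoclinicsphericalsym} yields a real representation $\rho\colon\mathfrak{sp}(1)\rightarrow\mathfrak{so}(k)$ satisfying both \eqref{eq:isoclinicsphericalsym1} and \eqref{eq:isoclinicsphericalsym2}. Taking $V=\mathbb{R}^k$, this is exactly the representation required by the corollary. Conversely, given any real $k$-representation $(V,\rho)$ satisfying the corollary's two conditions, we identify $V$ with $\mathbb{R}^k$ (using a choice of basis, which only changes $\rho$ by conjugation in $\mathrm{O}(k)$, preserving both conditions via the invariance of commutators and of the form of the relation $\hat{\rho}(\upsilon)(M)=0$). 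Unwinding $\hat{\rho}(\upsilon)(M)=0$ back to \eqref{eq:isoclinicsphericalsym1} and applying Theorem~\ref{thm:isoclinicsphericalsym} gives isoclinic spherical symmetry.

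There is no genuine obstacle here; the corollary exists to reformulate the theorem in a form that makes clear how $M$ sits as a vector in the representation $(\hat{V},\hat{\rho})$, which is the viewpoint needed for the structural analysis in the subsequent section. The only minor point of care is to confirm that the identification of an abstract real $k$-representation $V$ with $\mathbb{R}^k$ carries $\rho$ to a map valued in $\mathfrak{so}(k)$ (rather than merely $\mathfrak{gl}(k,\mathbb{R})$); this follows because the condition $[\rho(\upsilon),R]=0$ with $R$ real symmetric and positive definite, combined with the fact that $\rho$ exponentiates to a representation by orthogonal transformations preserving a suitable inner product, lets us pick a basis in which $\rho$ takes values in $\mathfrak{so}(k)$.
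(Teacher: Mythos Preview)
Your proposal is correct and matches the paper's approach: the corollary is obtained immediately by observing that the formula $\hat{\rho}(\upsilon)(A)=[\rho(\upsilon),A]+\upsilon A$ makes $\hat{\rho}(\upsilon)(M)=0$ identical to \eqref{eq:isoclinicsphericalsym1}, so the two conditions in the corollary are precisely \eqref{eq:isoclinicsphericalsym1} and \eqref{eq:isoclinicsphericalsym2} from Theorem~\ref{thm:isoclinicsphericalsym}. The discussion in your final paragraph about identifying an abstract $V$ with $\mathbb{R}^k$ and ensuring $\rho$ lands in $\mathfrak{so}(k)$ is unnecessary here, since in the paper's conventions a ``real $k$-representation'' already means $V=\mathbb{R}^k$ with $\rho\colon\mathfrak{sp}(1)\to\mathfrak{so}(k)$.
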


\begin{note}
Just as in the simple spherical symmetry case, we have $(\mathrm{span}(M),0)$ is a summand of $(\hat{V},\hat{\rho})$. As we have dealt with the case $M=0$, we must have that $(\hat{V},\hat{\rho})$ has trivial summands and $M$ lives in their direct sum.
\end{note}

\begin{note}
Restricting scalars to $\mathbb{R}$, $(\mathbb{H},\iota)$ is isomorphic to $(\mathbb{R}^4,\varrho_4)$. Indeed, one can examine how $\upsilon\in\mathfrak{sp}(1)$ acts on the standard real basis of $\mathbb{H}$ and compute the Casimir operator $C_\iota$ of this representation, obtaining $C_\iota=\frac{3}{4}I_4$. Additionally, the complexification of $(\mathbb{R}^4,\varrho_4)$ is isomorphic to $(V_2,\rho_2)^{\oplus 2}$.
\end{note}

We now look at where the trivial summands of $(\hat{V},\hat{\rho})$ are found. First, we examine the tensor product of the relevant complex representations.
\begin{lemma}
Given $m\geq n\geq 1$, consider $(V_m,\rho_m)\otimes_\mathbb{C} (V_n,\rho_n)\otimes_\mathbb{C} (V_2,\rho_2)$. This product has a single trivial summand when $m=n+1$ and none otherwise.\label{lemma:complextensor2}
\end{lemma}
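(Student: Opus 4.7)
The approach is a direct Clebsch--Gordan computation, reducing the question to counting when $V_2$ appears as a summand of $V_m\otimes_\mathbb{C} V_n$.

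First I would apply the Clebsch--Gordan rule to the first two factors: for $m\geq n\geq 1$,
\begin{equation*}
(V_m,\rho_m)\otimes_\mathbb{C}(V_n,\rho_n)\simeq \bigoplus_{j=0}^{n-1}(V_{m-n+1+2j},\rho_{m-n+1+2j}),
\end{equation*}
each summand occurring with multiplicity one. Next, I would tensor each summand with $(V_2,\rho_2)$, using the standard identities $(V_\ell,\rho_\ell)\otimes_\mathbb{C}(V_2,\rho_2)\simeq (V_{\ell+1},\rho_{\ell+1})\oplus(V_{\ell-1},\rho_{\ell-1})$ for $\ell\geq 2$ and $(V_1,\rho_1)\otimes_\mathbb{C}(V_2,\rho_2)\simeq (V_2,\rho_2)$.

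The key observation is that the trivial representation $(V_1,\rho_1)$ appears in $(V_\ell,\rho_\ell)\otimes_\mathbb{C}(V_2,\rho_2)$ if and only if $\ell=2$, and then with multiplicity one. Consequently, the number of trivial summands of $(V_m,\rho_m)\otimes_\mathbb{C}(V_n,\rho_n)\otimes_\mathbb{C}(V_2,\rho_2)$ equals the multiplicity of $(V_2,\rho_2)$ in $(V_m,\rho_m)\otimes_\mathbb{C}(V_n,\rho_n)$.

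Finally I would determine when $\ell=2$ lies in the index set $\{m-n+1,m-n+3,\ldots,m+n-1\}$: this requires $m-n+1\leq 2\leq m+n-1$ together with $m-n$ odd. Combined with $m\geq n\geq 1$, these constraints force $m=n+1$, and in that case $\ell=2$ occurs exactly once, producing a single trivial summand. In every other case, $(V_2,\rho_2)$ does not appear in $(V_m,\rho_m)\otimes_\mathbb{C}(V_n,\rho_n)$, so no trivial summand arises. There is no serious obstacle here; the only minor care required is tracking parity and range of the Clebsch--Gordan indices.
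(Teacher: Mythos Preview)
Your proof is correct and follows essentially the same approach as the paper: both reduce the question to counting $(V_2,\rho_2)$ summands in $(V_m,\rho_m)\otimes_\mathbb{C}(V_n,\rho_n)$ via the observation that $(V_\ell,\rho_\ell)\otimes_\mathbb{C}(V_2,\rho_2)$ contains a trivial summand if and only if $\ell=2$. You are simply more explicit about the Clebsch--Gordan index set and the parity/range analysis, whereas the paper states the conclusion about when $(V_2,\rho_2)$ appears in $(V_m,\rho_m)\otimes_\mathbb{C}(V_n,\rho_n)$ directly.
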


\begin{proof}
Note that $(V_a,\rho_a)\otimes_\mathbb{C} (V_2,\rho_2)$ contains a single trivial summand when $a=2$ and none otherwise. Thus, the complete tensor product contains a single trivial summand for each $(V_2,\rho_2)$ summand of $(V_m,\rho_m)\otimes_\mathbb{C} (V_n,\rho_n)$. However, this product contains a single $(V_2,\rho_2)$ summand when $m=n+1$ and none otherwise.
\end{proof}

Now we are equipped to examine the tensor product of the relevant real representations.
\begin{lemma}
Given $m\geq n\geq 1$, we have that $(\mathbb{R}^m,\varrho_m)\otimes_\mathbb{R} ((\mathbb{R}^n)^*,\varrho_n^*)\otimes_\mathbb{R} (\mathbb{R}^4,\varrho_4)$ has four trivial summands when $m$ is divisible by four, $n$ is odd, and $m=2n\pm 2$. Otherwise, there are no trivial summands.
\end{lemma}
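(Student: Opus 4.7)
The plan is to reduce to Lemma~\ref{lemma:complextensor2} by complexification. Since $\mathfrak{sp}(1)$ is semisimple and complexification is exact, the identity $(W \otimes_{\mathbb{R}} \mathbb{C})^{\mathfrak{sp}(1)} = W^{\mathfrak{sp}(1)} \otimes_{\mathbb{R}} \mathbb{C}$ shows that the number of trivial summands of a real representation $W$ equals that of its complexification. So it suffices to decompose the complexified triple tensor product and count trivial summands there.

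Next, I will apply the identifications recalled in the preceding notes: $(\mathbb{R}^4, \varrho_4)$ complexifies to $(V_2, \rho_2)^{\oplus 2}$; $(\mathbb{R}^a, \varrho_a)$ complexifies to $(V_a, \rho_a)$ when $a$ is odd and to $(V_{a/2}, \rho_{a/2})^{\oplus 2}$ when $4 \mid a$; and all $\mathfrak{sp}(1)$-representations are self-dual. The complexified triple tensor product then becomes a direct sum of copies of $(V_{m'}, \rho_{m'}) \otimes_\mathbb{C} (V_{n'}, \rho_{n'}) \otimes_\mathbb{C} (V_2, \rho_2)$, where $(m', n')$ is $(m, n)$, $(m, n/2)$, $(m/2, n)$, or $(m/2, n/2)$ and the multiplicity is $2$, $4$, $4$, or $8$, depending on whether $m$ and $n$ are odd or divisible by four. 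Lemma~\ref{lemma:complextensor2} then says that each such copy contributes a single trivial summand iff $|m' - n'| = 1$.

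I will conclude with a case analysis on the parities of $m$ and $n$. Both odd makes $|m - n|$ even; both divisible by four makes $m - n$ a nonzero multiple of four (or zero), so $|m/2 - n/2| \neq 1$; if $m$ is odd and $4 \mid n$, then $|m - n/2| = 1$ forces $n = 2m \pm 2$, but both possibilities give $n > m$ (the minus sign requiring $m \geq 3$ for validity), contradicting $m \geq n$; and if $4 \mid m$ and $n$ is odd, then $|m/2 - n| = 1$ rearranges to $m = 2n \pm 2$, which is compatible with $m \geq n$ and automatically has $4 \mid m$ since $n \pm 1$ is even. In this surviving case the multiplicity is $4$, producing the four trivial summands as claimed. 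The main obstacle is careful bookkeeping of the multiplicities, and in particular verifying that the inequality $m \geq n$ cleanly eliminates the mixed-parity case $(m\text{ odd},\, 4 \mid n)$ while preserving its parity-swapped counterpart.
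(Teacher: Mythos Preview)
Your proposal is correct and follows essentially the same approach as the paper: complexify, invoke Lemma~\ref{lemma:complextensor2}, and do a parity case analysis on $m$ and $n$. The only minor difference is that you justify ``complexification preserves the number of trivial summands'' via the invariants identity $(W\otimes_\mathbb{R}\mathbb{C})^{\mathfrak{sp}(1)}=W^{\mathfrak{sp}(1)}\otimes_\mathbb{R}\mathbb{C}$, whereas the paper argues in each case that all summands of the complexified tensor product are odd-dimensional (hence real type), so the real and complex decompositions coincide; your argument is cleaner and avoids this extra check.
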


\begin{proof}
Suppose that $m$ is divisible by four and $n$ is odd. The complexification of the real representation is isomorphic, as a complex representation, to 
\begin{equation*}
(V_{m/2},\rho_{m/2})^{\oplus 2}\otimes_\mathbb{C} (V_n,\rho_n)\otimes_\mathbb{C} (V_2,\rho_2)^{\oplus 2}.
\end{equation*}
By Lemma~\ref{lemma:complextensor2}, we have that there are $4$ trivial summands when $m=2n\pm 2$ and none otherwise. Furthermore, as the tensor product only contains odd dimensional summands, the representation is real, meaning that the real decomposition is the same as the complex. Thus, we have proven the first part of the lemma.

Suppose that $m$ and $n$ are divisible by four. The complexification of the real representation is isomorphic, as a complex representation, to 
\begin{equation*}
(V_{m/2},\rho_{m/2})^{\oplus 2}\otimes_\mathbb{C} (V_{n/2},\rho_{n/2})^{\oplus 2}\otimes_\mathbb{C} (V_2,\rho_2)^{\oplus 2}.
\end{equation*}
As $m/2$ and $n/2$ are both even, by Lemma~\ref{lemma:complextensor2}, this product has no trivial summands. 

Suppose that $m$ and $n$ are both odd. The complexification of the real representation is isomorphic, as a complex representation, to
\begin{equation*}
(V_m,\rho_m)\otimes_\mathbb{C} (V_n,\rho_n)\otimes_\mathbb{C} (V_2,\rho_2)^{\oplus 2}.
\end{equation*}
As $m$ and $n$ are both odd, by Lemma~\ref{lemma:complextensor2}, this product has no trivial summands.

Finally, suppose $m$ is odd and $n$ is divisible by four. The complexification of the real representation is isomorphic, as a complex representation, to
\begin{equation*}
(V_m,\rho_m)\otimes_\mathbb{C} (V_{n/2},\rho_{n/2})^{\oplus 2}\otimes_\mathbb{C} (V_2,\rho_2)^{\oplus 2}.
\end{equation*}
By Lemma~\ref{lemma:complextensor2}, we have that there are $4$ trivial summands when $n=2m\pm 2$. As $m\geq n\geq 1$, we have $m\geq 2m\pm 2\geq 1$. This set of inequalities has no solution. Thus, there are no trivial summands, proving the lemma. 
\end{proof}

We now identify the trivial summands discussed above.
\begin{definition}
Let $C^{n}$ be the unit length generator of the unique one-dimensional representation in $(V_{n+1},\rho_{n+1})\otimes_\mathbb{C} (V_n^*,\rho_n^*)\otimes_\mathbb{C} (V_2,\rho_2)$. This generator is well-defined, up to a $\mathbb{C}^*$ factor, though a choice of scale leaves a $S^1$ factor. This generator can be thought of as a $\mathfrak{sp}(1)$-invariant pair $(C_0^n,C_1^n)$ of complex $(n+1)\times n$ matrices.
\end{definition}
These generators can be identified in much the same way as we compute the $B^n$ matrices in previous work~\cite[Appendix~E]{lang_thesis_2024}.

Above, we defined what spans the trivial summands of $(\hat{V},\hat{\rho})$ when decomposed as a complex representation. We use this definition to determine what spans the real representation.
\begin{lemma}
Given $n\in\mathbb{N}_+$ odd, let $Y_i^+:=\varrho_{2n+ 2}(\upsilon_i)\in\mathfrak{so}(2n+ 2)$ and $Y_i^-:=\varrho_n(\upsilon_i)\in\mathfrak{so}(n)$. Let $y_i^+:=\rho_{n+ 1}(\upsilon_i)\in\mathfrak{su}(n+ 1)$. As the complexification of $(\mathbb{R}^{2n+ 2},\varrho_{2n+ 2})$ is isomorphic to $(V_{n+ 1},\rho_{n+ 1})^{\oplus 2}$, there exists $U_+\in\mathrm{SU}(2n+ 2)$ such that
\begin{equation*}
Y_i^+=U_+^\dagger \begin{bmatrix}
y_i^+ & 0 \\ 0 & y_i^+
\end{bmatrix}U_+.
\end{equation*}
We can find four linearly independent quadruples of real matrices spanning the four trivial summands of $(\mathbb{R}^{2n+ 2},\varrho_{2n+ 2})\otimes_\mathbb{R} ((\mathbb{R}^n)^*,\varrho_n^*)\otimes_\mathbb{R} (\mathbb{R}^4,\varrho_4)$. For some choices of $\alpha,\beta,\gamma,\delta\in\mathbb{C}$, these linearly independent quadruples of matrices are given by
\begin{multline}
\left(U_+^\dagger \begin{bmatrix}
\alpha C_0^{n}-i\gamma C_1^n \\
\beta C_0^{n}-i\delta C_1^n
\end{bmatrix},
U_+^\dagger \begin{bmatrix}
-i\alpha C_0^{n}+\gamma C_1^n \\
-i\beta C_0^{n}+\delta C_1^n
\end{bmatrix},\right.\\
\left.U_+^\dagger \begin{bmatrix}
-\alpha C_1^{n}-i\gamma C_0^n \\
-\beta C_1^{n}-i\delta C_0^n
\end{bmatrix},
U_+^\dagger \begin{bmatrix}
-i\alpha C_1^{n}-\gamma C_0^n \\
-i\beta C_1^{n}-\delta C_0^n
\end{bmatrix}\right)\label{eq:Dmu}
\end{multline}

Given $n\in\mathbb{N}_+$ odd and at least three, let $Y_i^+:=\varrho_{2n- 2}(\upsilon_i)\in\mathfrak{so}(2n- 2)$ and $Y_i^-:=\varrho_n(\upsilon_i)\in\mathfrak{so}(n)$. Let $y_i^+:=\rho_{n- 1}(\upsilon_i)\in\mathfrak{su}(n-1)$. As the complexification of $(\mathbb{R}^{2n-2},\varrho_{2n- 2})$ is isomorphic to $(V_{n-1},\rho_{n-1})^{\oplus 2}$, there exists $U_+\in\mathrm{SU}(2n-2)$ such that
\begin{equation*}
Y_i^+=U_+^\dagger \begin{bmatrix}
y_i^+ & 0 \\ 0 & y_i^+
\end{bmatrix}U_+.
\end{equation*}
We can find four linearly independent quadruples of real matrices spanning the four trivial summands of $(\mathbb{R}^{2n-2},\varrho_{2n-2})\otimes_\mathbb{R} ((\mathbb{R}^n)^*,\varrho_n^*)\otimes_\mathbb{R} (\mathbb{R}^4,\varrho_4)$. For some choices of $\alpha,\beta,\gamma,\delta\in\mathbb{C}$, these linearly independent quadruples of matrices are given by
\begin{multline}
\left(U_+^\dagger \begin{bmatrix}
\alpha (C_1^{n-1})^\dagger+i\gamma (C_0^{n-1})^\dagger \\
\beta (C_1^{n-1})^\dagger+i\delta (C_1^{n-1})^\dagger
\end{bmatrix},
U_+^\dagger \begin{bmatrix}
-i\alpha (C_1^{n-1})^\dagger-\gamma (C_0^{n-1})^\dagger \\
-i\beta (C_1^{n-1})^\dagger-\delta (C_0^{n-1})^\dagger
\end{bmatrix},\right.\\\left.
U_+^\dagger \begin{bmatrix}
\alpha (C_0^{n-1})^\dagger-i\gamma (C_1^{n-1})^\dagger \\
\beta (C_0^{n-1})^\dagger-i\delta (C_1^{n-1})^\dagger
\end{bmatrix},
U_+^\dagger \begin{bmatrix}
i\alpha (C_0^{n-1})^\dagger-\gamma (C_1^{n-1})^\dagger \\
i\beta (C_0^{n-1})^\dagger-\delta (C_1^{n-1})^\dagger
\end{bmatrix}\right)\label{eq:Dmu2}
\end{multline}
\end{lemma}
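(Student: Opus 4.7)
The plan is to obtain the generators in \eqref{eq:Dmu} and \eqref{eq:Dmu2} by lifting the complex generator $C^n$ from the complex tensor product to the real tensor product, using the reality structures on each factor. I first complexify the full representation. Using the given isomorphisms, the complexification of $(\mathbb{R}^{2n+2},\varrho_{2n+2})\otimes_\mathbb{R} ((\mathbb{R}^n)^*,\varrho_n^*)\otimes_\mathbb{R} (\mathbb{R}^4,\varrho_4)$ is
\begin{equation*}
(V_{n+1},\rho_{n+1})^{\oplus 2}\otimes_\mathbb{C} (V_n^*,\rho_n^*)\otimes_\mathbb{C} (V_2,\rho_2)^{\oplus 2}.
\end{equation*}
By the proof of the previous lemma, the trivial-summand subspace of this complexification is four-dimensional, with each copy spanned by $C^n=(C_0^n,C_1^n)$; parametrize these four copies by $\alpha,\beta,\gamma,\delta\in\mathbb{C}$ indexing the choice of copy from each doubled factor.

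Next, I would identify the reality structures. For $(\mathbb{R}^{2n+2},\varrho_{2n+2})$, the unitary $U_+$ provided in the statement intertwines $Y_i^+$ with $\mathrm{diag}(y_i^+,y_i^+)$; since $(\mathbb{R}^{2n+2},\varrho_{2n+2})$ is irreducible real with complexification of quaternionic type (because $n+1$ is even), the real structure transported to $\mathbb{C}^{n+1}\oplus\mathbb{C}^{n+1}$ is the anti-involution that conjugates entries and swaps the two summands with a sign. For $(\mathbb{R}^4,\varrho_4)\simeq (\mathbb{H},\iota)$, one computes directly from left multiplication by $i$ on the basis $\{1,i,j,k\}$ that under its complexification $(V_2,\rho_2)^{\oplus 2}$, the real structure again swaps the two copies of $V_2$ with a sign, sending a complex pair $(u_0,u_1)$ to its swapped conjugate. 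Finally, the middle factor $(\mathbb{R}^n,\varrho_n)$ with $n$ odd is already real, so its real structure is the standard entrywise complex conjugation.

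Combining these three reality structures on the four-dimensional space of trivial summands gives a real structure on $\mathbb{C}^4$ whose fixed set is a four-real-dimensional subspace. Tracing through which complex linear combinations of the four copies of $C^n$ are fixed produces the four quadruples of real matrices displayed in \eqref{eq:Dmu}: the swap of summands in the $\mathbb{R}^{2n+2}$ factor explains the two-row block structure with $U_+^\dagger$; the swap plus sign in the $(\mathbb{R}^4,\varrho_4)$ factor, combined with the identification $\mathbb{H}=\mathbb{R}\{1,i,j,k\}$, forces the pattern of $\pm i$ factors and the trade between $C_0^n$ and $C_1^n$ across the four entries of the quadruple. Linear independence follows for generic $\alpha,\beta,\gamma,\delta$ from the linear independence of the four complex copies. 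The second claim for $(\mathbb{R}^{2n-2},\varrho_{2n-2})$ is proved analogously; the only change is that the unique trivial summand of $V_{n-1}\otimes V_n^*\otimes V_2$ is most naturally written using $(C_0^{n-1})^\dagger$ and $(C_1^{n-1})^\dagger$, since now the larger and smaller factors have exchanged roles.

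The main obstacle is bookkeeping: carefully computing the real structure on $(\mathbb{H},\iota)_\mathbb{C}\simeq (V_2,\rho_2)^{\oplus 2}$ and ensuring that the precise signs and factors of $i$ appearing in \eqref{eq:Dmu} and \eqref{eq:Dmu2} emerge correctly from combining this real structure with the one induced by $U_+$. A parallel but slightly more careful sign-check is needed for the dual factor in the second case, since dualization commutes with complex conjugation only up to a minus sign coming from the Weyl-group element. Once those signs are pinned down, the verification that the displayed quadruples are real, $\mathfrak{sp}(1)$-invariant, and linearly independent becomes essentially automatic.
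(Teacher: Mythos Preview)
Your outline is sound, but it diverges from the paper's argument in a way worth noting. The paper does not work with abstract real structures at all. Instead, it writes down an explicit matrix $U\in\mathrm{SU}(4)$ intertwining $\varrho_4$ with $\rho_2\oplus\rho_2$, then performs the concrete change of variables $D'_\mu := U_+\,U_{\mu\nu}\,D_\nu$. In these new coordinates the invariance equation splits blockwise, so each of the pairs $(A_0,A_1)$, $(A_2,A_3)$, $(B_0,B_1)$, $(B_2,B_3)$ (with $D'_\mu=\begin{bmatrix}A_\mu\\B_\mu\end{bmatrix}$) lies in the span of $(C_0^n,C_1^n)$; this gives the four complex parameters $\alpha,\beta,\gamma,\delta$. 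Inverting via $D_\mu=U_+^\dagger(U^\dagger)_{\mu\nu}D'_\nu$ produces \eqref{eq:Dmu} directly. The paper then simply observes, by the dimension count from the preceding lemma, that some choices of $\alpha,\beta,\gamma,\delta$ must yield four linearly independent \emph{real} quadruples; it never identifies which choices.

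Your approach via real structures is more conceptual and would, if carried through, actually pin down the reality condition on $(\alpha,\beta,\gamma,\delta)$ --- information the paper deliberately leaves implicit. But note a small slip: you say tracing through the real structures ``produces the four quadruples of real matrices displayed in \eqref{eq:Dmu}'', whereas \eqref{eq:Dmu} is the full \emph{complex} four-parameter family, real only for particular parameter values. The paper's explicit-$U$ route sidesteps all the sign bookkeeping you flag as the main obstacle, at the cost of being less informative about reality; your route would give more but demands exactly that bookkeeping.
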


\begin{proof}
Consider the standard Lie algebra isomorphism $\rho_2\colon\mathfrak{sp}(1)\rightarrow \mathfrak{su}(2)$ defined by 
\begin{equation*}
\rho_2\left(\frac{i}{2}\right):=\frac{1}{2}\begin{bmatrix}
i & 0 \\ 0 & -i
\end{bmatrix}, \quad
\rho_2\left(\frac{j}{2}\right):=\frac{1}{2}\begin{bmatrix}
0 & 1 \\ -1 & 0
\end{bmatrix}, \quad
\rho_2\left(\frac{k}{2}\right):=\frac{1}{2}\begin{bmatrix}
0 & i \\ i & 0
\end{bmatrix}.
\end{equation*}
Note that $\left(\mathbb{C}^2,\rho_2\right)$ is the irreducible complex 2-representation of $\mathfrak{sp}(1)$. Additionally, consider the Lie algebra homomorphism $\varrho_4\colon\mathfrak{sp}(1)\rightarrow\mathfrak{so}(4)$ defined by
\begin{equation*}
\begin{aligned}
\varrho_4\left(\frac{i}{2}\right)&:=\frac{1}{2}\begin{bmatrix}
0 & -1 & 0 & 0 \\
1 & 0 & 0 & 0 \\
0 & 0 & 0 & -1 \\
0 & 0 & 1 & 0 
\end{bmatrix},\\
\varrho_4\left(\frac{j}{2}\right)&:=\frac{1}{2}\begin{bmatrix}
0 & 0 & -1 & 0 \\
0 & 0 & 0 & 1 \\
1 & 0 & 0 & 0 \\
0 & -1 & 0 & 0 
\end{bmatrix},\\
\varrho_4\left(\frac{k}{2}\right)&:=\frac{1}{2}\begin{bmatrix}
0 & 0 & 0 & -1 \\
0 & 0 & -1 & 0 \\
0 & 1 & 0 & 0 \\
1 & 0 & 0 & 0 
\end{bmatrix}.
\end{aligned}
\end{equation*}
Note that $\left(\mathbb{R}^4,\varrho_4\right)$ is the irreducible real 4-representation of $\mathfrak{sp}(1)$. Recall that the complexification of $(\mathbb{R}^4,\varrho_4)$ is isomorphic, as a complex representation, to $(V_2,\rho_2)^{\oplus 2}$. That is, there is some $U\in\mathrm{SU}(4)$ such that for all $\upsilon\in\mathfrak{sp}(1)$, $\varrho_4(\upsilon)=U^\dagger \mathrm{diag}(\rho_2(\upsilon),\rho_2(\upsilon))U$. Given the above definitions for $\rho_2$ and $\varrho_4$, one such $U$ is given by
\begin{equation*}
U:=\frac{1}{\sqrt{2}}\begin{bmatrix}
1 & i & 0 & 0 \\
0 & 0 & -1 & i \\
0 & 0 & i & -1 \\
i & 1 & 0 & 0
\end{bmatrix}.
\end{equation*}

Consider the first part of the lemma. We are searching for quadruples $(D_\mu)_{\mu=0}^3$ such that for all $\upsilon\in\mathfrak{sp}(1)$, 
\begin{equation*}
\left(\varrho_{2n+2}(\upsilon)\otimes I_4\right)(D_\mu)_{\mu=0}^3-(D_\mu)_{\mu=0}^3\left(\varrho_n(\upsilon)\otimes I_4\right)+\left(I_{2n+2}\otimes \varrho_4\right)(D_\mu)_{\mu=0}^3=0.
\end{equation*}
We have that 
\begin{align*}
\varrho_{2n+2}(\upsilon)&=U_+^\dagger \mathrm{diag}(\rho_{n+1}(\upsilon),\rho_{n+1}(\upsilon)) U_+,\\
\varrho_n(\upsilon)&=\rho_n(\upsilon), \quad\textrm{and}\\
\varrho_4(\upsilon)&=U^\dagger \mathrm{diag}(\rho_2(\upsilon),\rho_2(\upsilon))U.
\end{align*} 
Letting $(D'_\mu)_{\mu=0}^3:=(U_+ U_{\mu\nu}D_\nu)_{\mu=0}^3$, we have that
\begin{multline*}
\left(\mathrm{diag}(\rho_{n+1}(\upsilon),\rho_{n+1}(\upsilon))\otimes I_4\right) (D'_\mu)_{\mu=0}^3 -(D'_\mu)_{\mu=0}^3 \left(\rho_n(\upsilon)\otimes I_4\right)\\
+\left(I_{2n+2}\otimes \mathrm{diag}(\rho_2(\upsilon),\rho_2(\upsilon))\right)(D'_\mu)_{\mu=0}^3=0.
\end{multline*}
Note that $(D'_\mu)_{\mu=0}^3$ is acted trivially upon by $(V_{n+1},\rho_{n+1})^{\oplus 2}\otimes_\mathbb{C} (V_n^*,\rho_n^*)\otimes_\mathbb{C} (V_2,\rho_2)^{\oplus 2}$. Writing $D'_\mu:=\begin{bmatrix}
A_\mu \\ B_\mu
\end{bmatrix}$, we see that $(A_0,A_1)$, $(A_2,A_3)$, $(B_0,B_1)$, and $(B_2,B_3)$ are acted trivially upon by 
\begin{equation*}
(V_{n+1},\rho_{n+1})\otimes_\mathbb{C} (V_n^*,\rho_n^*)\otimes_\mathbb{C} (V_2,\rho_2).
\end{equation*} 
That is, all of these pairs belong to the span of $(C_0^n,C_1^n)$. Hence, there is some $\alpha,\beta,\gamma,\delta\in\mathbb{C}$ such that 
\begin{align*}
(A_0,A_1)&=\sqrt{2}(\alpha C_0^n,\alpha C_1^n),\\ 
(A_2,A_3)&=\sqrt{2}(\gamma C_0^n,\gamma C_1^n),\\
(B_0,B_1)&=\sqrt{2}(\beta C_0^n, \beta C_1^n), \quad\textrm{and}\\
(B_2,B_3)&=\sqrt{2}(\delta C_0^n,\delta C_1^n).
\end{align*}

Given the definition of $D'_\mu$, we note that $D_\mu=U_+^\dagger (U^\dagger)_{\mu\nu}D'_\nu$. Substituting the values for $D'_\nu$, we find that $(D_\mu)_{\mu=0}^3$ is given by \eqref{eq:Dmu}. In particular, for some choices of $\alpha,\beta,\gamma,\delta$, we obtain four linearly independent quadruples of real matrices, as the space of quadruples of real matrices acted on trivially by $(\mathbb{R}^{2n+2},\varrho_{2n+2})\otimes_\mathbb{R} ((\mathbb{R}^n)^*,\varrho_n^*)\otimes_\mathbb{R} (\mathbb{R}^4,\varrho_4)$ is four dimensional.

Consider the second part of the lemma. We are searching for quadruples $(D_\mu)_{\mu=0}^3$ such that for all $\upsilon\in\mathfrak{sp}(1)$, 
\begin{equation*}
\left(\varrho_{2n-2}(\upsilon)\otimes I_4\right)(D_\mu)_{\mu=0}^3-(D_\mu)_{\mu=0}^3\left(\varrho_n(\upsilon)\otimes I_4\right)+\left(I_{2n-2}\otimes \varrho_4\right)(D_\mu)_{\mu=0}^3=0.
\end{equation*}
We have that 
\begin{align*}
\varrho_{2n-2}(\upsilon)&=U_+^\dagger \mathrm{diag}(\rho_{n-1}(\upsilon),\rho_{n-1}(\upsilon)) U_+,\\
\varrho_n(\upsilon)&=\rho_n(\upsilon), \quad\textrm{and}\\
\varrho_4(\upsilon)&=U^\dagger \mathrm{diag}(\rho_2(\upsilon),\rho_2(\upsilon))U.
\end{align*} 
Letting $(D'_\mu)_{\mu=0}^3:=(U_+ U_{\mu\nu}D_\nu)_{\mu=0}^3$, we have that
\begin{multline*}
\left(\mathrm{diag}(\rho_{n-1}(\upsilon),\rho_{n-1}(\upsilon))\otimes I_4\right) (D'_\mu)_{\mu=0}^3 -(D'_\mu)_{\mu=0}^3 \left(\rho_n(\upsilon)\otimes I_4\right)\\
+\left(I_{2n-2}\otimes \mathrm{diag}(\rho_2(\upsilon),\rho_2(\upsilon))\right)(D'_\mu)_{\mu=0}^3=0.
\end{multline*}
Note that $(D'_\mu)_{\mu=0}^3$ is acted trivially upon by $(V_{n-1},\rho_{n-1})^{\oplus 2}\otimes_\mathbb{C} (V_n^*,\rho_n^*)\otimes_\mathbb{C} (V_2,\rho_2)^{\oplus 2}$. Writing $D'_\mu:=\begin{bmatrix}
A_\mu \\ B_\mu
\end{bmatrix}$, we see that $(A_0,A_1)$, $(A_2,A_3)$, $(B_0,B_1)$, and $(B_2,B_3)$ are acted trivially upon by 
\begin{equation*}
(V_{n-1},\rho_{n-1})\otimes_\mathbb{C} (V_n^*,\rho_n^*)\otimes_\mathbb{C} (V_2,\rho_2).
\end{equation*} 
We know that this tensor product has a single trivial summand. 

Recalling that $(C_0^{n-1},C_1^{n-1})$ spans the unique trivial summand of 
\begin{equation*}
(V_n,\rho_n)\otimes_\mathbb{C} (V_{n-1}^*,\rho_{n-1}^*)\otimes_\mathbb{C} (V_2,\rho_2),
\end{equation*} 
we find that $((C_1^{n-1})^\dagger,-(C_0^{n-1})^\dagger)$ spans the trivial summand of $(V_{n-1},\rho_{n-1})\otimes_\mathbb{C} (V_n^*,\rho_n^*)\otimes_\mathbb{C} (V_2,\rho_2)$. Hence, there is some $\alpha,\beta,\gamma,\delta\in\mathbb{C}$ such that 
\begin{align*}
(A_0,A_1)&=\sqrt{2}(\alpha (C_1^{n-1})^\dagger,-\alpha (C_0^{n-1})^\dagger),\\
(A_2,A_3)&=\sqrt{2}(\gamma (C_1^{n-1})^\dagger,-\gamma (C_0^{n-1})^\dagger),\\
(B_0,B_1)&=\sqrt{2}(\beta (C_1^{n-1})^\dagger, -\beta (C_0^{n-1})^\dagger),\quad\textrm{and}\\
(B_2,B_3)&=\sqrt{2}(\delta (C_1^{n-1})^\dagger,-\delta (C_0^{n-1})^\dagger).
\end{align*} 

Given the definition of $D'_\mu$, we note that $D_\mu=U_+^\dagger (U^\dagger)_{\mu\nu}D'_\nu$. Substituting the values for $D'_\nu$, we find that $(D_\mu)_{\mu=0}^3$ is given by \eqref{eq:Dmu2}. In particular, for some choices of $\alpha,\beta,\gamma,\delta$, we obtain four linearly independent quadruples of real matrices, as the space of quadruples of real matrices acted on trivially by $(\mathbb{R}^{2n+2},\varrho_{2n+2})\otimes_\mathbb{R} ((\mathbb{R}^n)^*,\varrho_n^*)\otimes_\mathbb{R} (\mathbb{R}^4,\varrho_4)$ is four dimensional.
\end{proof}

Now that we know exactly what spans the trivial summands of $(\hat{V},\hat{\rho})$, the following theorem tells us exactly what form $M$ takes if $\hat{M}$ has isoclinic spherical symmetry.
\begin{theorem}[Isoclinic Spherical Structure Theorem]
Let $\hat{M}\in\mathcal{M}_{n,k}$ have isoclinic spherical symmetry. Theorem~\ref{thm:isoclinicsphericalsym} tells us that $\hat{M}$ is generated by a real representation $(V,\rho)$ of $\mathfrak{sp}(1)$, which we can decompose as 
\begin{equation}
(V,\rho)\simeq \bigoplus_{a=1}^m (\mathbb{R}^{n_a},\varrho_{n_a}).\label{eq:rhodecompiso}
\end{equation}
Without loss of generality, we may assume that $n_a\geq n_{a+1}$, for all $a\in\{1,\ldots,m-1\}$.

Let $Y_{i,a}:=\varrho_{n_a}(\upsilon_i)\in\mathfrak{so}(n_a)$ and
\begin{equation*}
Y_i:=\mathrm{diag}(Y_{i,1},\ldots,Y_{i,m})\in\mathfrak{so}(k).
\end{equation*}
Then $Y_i$ induces $(V,\rho)$. For $n_a$ is divisible by four, let $y_{i,a}:=\rho_{\frac{n_a}{2}}(\upsilon_i)\in\mathfrak{su}\left(\frac{n_a}{2}\right)$ induce the irreducible $\frac{n_a}{2}$ complex representation. Then there is some $U_{n_a}\in\mathrm{SU}(n_a)$ such that 
\begin{equation*}
Y_{i,a}=U_{n_a}^\dagger \begin{bmatrix}
y_{i,a} & 0 \\ 0 & y_{i,a}
\end{bmatrix}U_{n_a}.
\end{equation*}

Using the decomposition of $(V,\rho)$ given in \eqref{eq:rhodecompiso}, we have $(M_\mu)_{ab}\in\mathbb{R}^{n_a}\otimes_\mathbb{R} (\mathbb{R}^{n_b})^*$ such that,
\begin{equation*}
M_\mu=\begin{bmatrix}
(M_\mu)_{11} & \cdots & (M_\mu)_{1m} \\
\vdots & \ddots & \vdots \\
(M_\mu)_{m1} & \cdots & (M_\mu)_{mm}
\end{bmatrix}.
\end{equation*}

Then, up to a $\rho$-invariant gauge and for all $a,b\in\{1,\ldots,m\}$ we have that,
\begin{itemize}
\item[(1)] if $a<b$ and $n_a=2n_b+2$, then $\exists \alpha_{a,b},\beta_{a,b},\gamma_{a,b},\delta_{a,b}\in\mathbb{C}$ such that $(M_\mu)_{ab}$ take the form of \eqref{eq:Dmu} and are real and $(M_\mu)_{ba}=(M_\mu)_{ab}^T$;
\item[(2)] if $a<b$ and $n_a=2n_b-2$, then $\exists \alpha_{a,b},\beta_{a,b},\gamma_{a,b},\delta_{a,b}\in\mathbb{C}$ such that $(M_\mu)_{ab}$ take the form of \eqref{eq:Dmu2} and are real and $(M_\mu)_{ba}=(M_\mu)_{ab}^T$;
\item[(3)] otherwise, $(M_\mu)_{ab}=0$.
\end{itemize}

Conversely, if $M$ has the above form for some real representation $(V,\rho)$ and $[\rho(\upsilon),R]=0$ for all $\upsilon\in\mathfrak{sp}(1)$, then $\hat{M}$ has isoclinic spherical symmetry.
\end{theorem}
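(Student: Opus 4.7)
The plan is to mirror the structure of the proof of the Simple Spherical Structure Theorem (Theorem~\ref{thm:simplestructure}), substituting the induced representation $(V,\rho)\otimes_{\mathbb{R}}(V^*,\rho^*)\otimes_{\mathbb{R}}(\mathbb{H},\iota)$ for the simple-spherical one, and then reading off the block structure of $M$ from the identification of the trivial summands established in the two lemmas preceding the theorem.

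First I would invoke Theorem~\ref{thm:isoclinicsphericalsym} to pass from isoclinic spherical symmetry to the existence of a real representation $(\mathbb{R}^k,\rho)$ satisfying \eqref{eq:isoclinicsphericalsym1} and \eqref{eq:isoclinicsphericalsym2}. By the complete reducibility of $\mathfrak{sp}(1)$-representations, write $(\mathbb{R}^k,\rho)\simeq\bigoplus_{a=1}^{m}(\mathbb{R}^{n_a},\varrho_{n_a})$ with $n_a\geq n_{a+1}$. Using Lemma~\ref{lemma:Generating}, replace $\rho$ by the canonical representative $Y_i:=\mathrm{diag}(Y_{i,1},\ldots,Y_{i,m})$; this replacement costs only a $\rho$-invariant gauge transformation in $\mathrm{Sp}(n+k)\times\mathrm{O}(k)$ and does not alter whether \eqref{eq:isoclinicsphericalsym1}--\eqref{eq:isoclinicsphericalsym2} hold.

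Next I would observe that \eqref{eq:isoclinicsphericalsym1} is precisely the statement $\hat{\rho}(\upsilon)(M)=0$ for the induced representation $(\hat{V},\hat{\rho})$, so $M$ must lie in the direct sum of the trivial summands of $(\mathbb{R}^k,\rho)\otimes_{\mathbb{R}}((\mathbb{R}^k)^*,\rho^*)\otimes_{\mathbb{R}}(\mathbb{R}^4,\varrho_4)$. Decomposing $M_\mu$ blockwise according to \eqref{eq:rhodecompiso} gives four copies (one per $\mu\in\{0,1,2,3\}$) of block matrices $(M_\mu)_{ab}\in\mathbb{R}^{n_a}\otimes_{\mathbb{R}}(\mathbb{R}^{n_b})^*$, and the block $(M_\mu)_{ab}$ is forced to lie in the trivial summands of $(\mathbb{R}^{n_a},\varrho_{n_a})\otimes_{\mathbb{R}}((\mathbb{R}^{n_b})^*,\varrho_{n_b}^*)\otimes_{\mathbb{R}}(\mathbb{R}^4,\varrho_4)$. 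The classification lemma immediately preceding the theorem tells me that this space is nonzero only when one of $n_a$, $n_b$ is divisible by four, the other is odd, and the two differ by exactly two; in those cases the quadruple $((M_\mu)_{ab})_{\mu=0}^{3}$ must lie in the four-dimensional span generated by \eqref{eq:Dmu} (when $n_a=2n_b+2$) or \eqref{eq:Dmu2} (when $n_a=2n_b-2$), parametrised by $\alpha_{a,b},\beta_{a,b},\gamma_{a,b},\delta_{a,b}\in\mathbb{C}$. The requirement that $M_\mu$ be real cuts these complex parameters down to exactly a four-real-dimensional slice, which is what the statement records.

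I would then impose the symmetry of $M$ established in Lemma~\ref{lemma:ADHMproperties}: $M=M^T$ forces $(M_\mu)_{ba}=(M_\mu)_{ab}^T$, matching the pairing listed in (1) and (2). The blocks with $a=b$ (and all other off-diagonal pairings) contribute nothing because those tensor-product representations carry no trivial summand under these dimension constraints, yielding case (3). For the converse, I would plug the block description back into $\hat{\rho}(\upsilon)(M)$ and use that each generator in \eqref{eq:Dmu}, \eqref{eq:Dmu2} was defined to satisfy $\hat{\rho}(\upsilon)(\cdot)=0$, so \eqref{eq:isoclinicsphericalsym1} is automatic; combined with the hypothesis $[\rho(\upsilon),R]=0$, Theorem~\ref{thm:isoclinicsphericalsym} delivers isoclinic spherical symmetry.

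The main obstacle I expect is bookkeeping rather than conceptual: verifying that reality of $(M_\mu)_{ab}$ together with the transposition relation $(M_\mu)_{ba}=(M_\mu)_{ab}^T$ is consistently satisfiable on the complex four-parameter family \eqref{eq:Dmu}/\eqref{eq:Dmu2}, and that the resulting moduli exactly match the claimed $\alpha_{a,b},\beta_{a,b},\gamma_{a,b},\delta_{a,b}$. A secondary subtlety is checking that the residual gauge freedom after fixing $\rho=Y$ is precisely the $\rho$-invariant gauge group (the centralizer of $\rho$ in $\mathrm{Sp}(n+k)\times\mathrm{O}(k)$), so that the parametrisation is unambiguous; this follows from Schur's lemma applied to each isotypic component, exactly as in the hyperbolic-monopole Structure Theorem.
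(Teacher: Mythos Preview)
Your proposal is correct and matches the paper's approach: the paper's proof simply states that it ``follows the same structure as the proof of the Structure Theorem for hyperbolic monopoles~\cite[Theorem~3]{lang_hyperbolic_2023}'', which is exactly the template you outlined (decompose $\rho$, use Lemma~\ref{lemma:Generating} to fix the canonical representative, locate $M$ in the trivial summands of $(\hat{V},\hat{\rho})$ via the preceding lemmas, and impose $M=M^T$). Your identification of the bookkeeping subtleties---reality constraints on \eqref{eq:Dmu}/\eqref{eq:Dmu2} and the residual $\rho$-invariant gauge---is accurate, and these are handled in the cited hyperbolic-monopole proof in the same way you describe.
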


\begin{proof}
The proof follows the same structure as the proof of the Structure Theorem for hyperbolic monopoles~\cite[Theorem~3]{lang_hyperbolic_2023}.
\end{proof}

\subsubsection{Novel example of isoclinic spherical symmetry}\label{subsubsec:noveliso}

In this section, we identify novel examples of isoclinic spherical symmetry, using the Isoclinic Spherical Structure Theorem. The first representation that comes to mind to consider is $(\mathbb{R}^4,\varrho_4)\oplus (\mathbb{R},0)$. This representation generates an instanton with rotational symmetry and is discussed in Proposition~\ref{prop:RotInstEx}. Thus, we consider the next simplest case: $(\mathbb{R}^3,\varrho_3)\oplus (\mathbb{R}^4,\varrho_4)$.
\begin{prop}\label{prop:IsoEx}
Let $\lambda>0$. Then define $\hat{M}$ by 
\begin{equation}
\begin{aligned}
M&:=\lambda\begin{bmatrix}
0 & 0 & 0 & 0 & 1 & -i & j \\
0 & 0 & 0 & 0 & i & 1 & -k \\
0 & 0 & 0 & 0 & j & -k & -1 \\
0 & 0 & 0 & 0 & k & j & i \\
1 & i & j & k & 0 & 0 & 0 \\
-i & 1 & -k & j & 0 & 0 & 0 \\
j & -k & -1 & i & 0 & 0 & 0 
\end{bmatrix}\quad\textrm{and}\\
L&:=\lambda\begin{bmatrix}
3 & -i & -j & -k & 0 & 0 & 0 \\
0 & 2\sqrt{2} & k\sqrt{2} & -j\sqrt{2} & 0 & 0 & 0 \\
0 & 0 & \sqrt{6} & i\sqrt{6} & 0 & 0 & 0
\end{bmatrix}.
\end{aligned}
\end{equation}
We have $\hat{M}\in\mathcal{M}_{3,7}$ and it corresponds to a $\mathrm{Sp}(3)$ instanton with isoclinic spherical symmetry and instanton number $7$.
\end{prop}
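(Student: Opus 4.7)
The plan is to verify the four defining conditions of $\mathcal{M}_{3,7}$ given in Definition~\ref{def:Mstd} and then apply Theorem~\ref{thm:isoclinicsphericalsym} with a suitable real representation $\rho\colon\mathfrak{sp}(1)\to\mathfrak{so}(7)$. The instanton number conclusion is immediate: since $U$ has rank $k=7$, the corresponding instanton has instanton number $7$ by the discussion following Definition~\ref{def:ADHMdata}.

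First, I would verify the structural conditions of $\mathcal{M}_{3,7}$. Symmetry of $M$ is a direct inspection: the upper-right $4\times 3$ block is the transpose of the lower-left $3\times 4$ block, and the diagonal blocks are zero. For the candidate generating representation I take $(V,\rho):=(\mathbb{R}^4,\varrho_4)\oplus(\mathbb{R}^3,\varrho_3)$ so that $n_1=4$, $n_2=3$, and $2n_2-2=n_1$; this is exactly the regime of item (2) in the Isoclinic Spherical Structure Theorem. I would then match the off-diagonal block $(M_\mu)_{12}$ against formula~\eqref{eq:Dmu2} by identifying explicit matrices $U_+,C_0^2,C_1^2$ (with $n=3$), and solving for constants $\alpha,\beta,\gamma,\delta\in\mathbb{C}$ that reproduce the block given in the statement up to an overall factor of $\lambda$. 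This fixes a real representation $\rho$ via the $Y_i$ matrices of the Structure Theorem, and the converse direction of that theorem immediately gives $\upsilon M+[\rho(\upsilon),M]=0$ for all $\upsilon\in\mathfrak{sp}(1)$.

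Next, I would compute $R=L^\dagger L+M^\dagger M$ explicitly and check two things at once: that $R$ is real (equivalently, symmetric and with real entries) and that $[\rho(\upsilon),R]=0$. Because $M^\dagger M$ is built from a $\rho$-equivariant block and $L^\dagger L$ is built from $L$, the commutation with $\rho$ will reduce to showing that $R$ is block-diagonal with each block proportional to the identity on the corresponding irreducible summand of $(V,\rho)$, which I expect to follow by direct expansion (the coefficient $\lambda$ has been tuned precisely so that the $L$-block and $M$-block of $R$ on the $\mathbb{R}^4$ summand combine into a scalar, and likewise on the $\mathbb{R}^3$ summand). Once $[\rho(\upsilon),R]=0$ holds, Theorem~\ref{thm:isoclinicsphericalsym} gives isoclinic spherical symmetry.

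The main obstacle is the remaining condition that $\Delta(x)^\dagger\Delta(x)$ be non-singular for every $x\in\mathbb{H}$. Here I would invoke the simplification established in Lemma (the isoclinic analogue of Lemma~\ref{lemma:finalconditioncirc}) to reduce to checking non-singularity only at real $x=x_0\geq 0$. At such $x$, $\Delta(x)^\dagger\Delta(x)=R-x_0(U^\dagger M+M^\dagger U)+x_0^2 I_7$, which simplifies dramatically since $U^\dagger M=M^T=M^*$ is quickly read off from the given $M$. I expect $\Delta(x_0)^\dagger\Delta(x_0)$ to have a clean block structure (scalar on each irreducible summand plus explicit off-diagonal coupling) whose determinant is a product of polynomials in $x_0$ with no non-negative real roots; verifying positivity of these polynomials for $x_0\geq 0$ is the computational crux, but it is a finite check because the tuning of $\lambda$ ensures that each such polynomial has the form $(x_0-c)^2+d$ with $d>0$ or a strictly positive constant term together with a non-negative discriminant-free shape.
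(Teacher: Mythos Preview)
Your plan is correct and follows essentially the same route as the paper: identify the generating representation as $(\mathbb{R}^4,\varrho_4)\oplus(\mathbb{R}^3,\varrho_3)$, invoke the Isoclinic Spherical Structure Theorem to pin down $M$, compute $R$ as a block-scalar (the paper gets $R=12\lambda^2 I_4\oplus 4\lambda^2 I_3$ and $LL^\dagger=12\lambda^2 I_3$), and then use the reduction to $x=x_0\geq 0$ to check positive-definiteness of $\Delta(x)^\dagger\Delta(x)$. The only difference is that the paper carries out the final step by computing the eigenvalues explicitly ($12\lambda^2+x_0^2$ with multiplicity $1$ and $8\lambda^2+x_0^2\pm 2\lambda\sqrt{4\lambda^2+x_0^2}$ with multiplicity $3$ each), whereas you sketch a determinant/polynomial-positivity argument; either works.
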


\begin{proof}
Consider the representation $(V,\rho):=(\mathbb{R}^3,\varrho_3)\oplus (\mathbb{R}^4,\varrho_4)$. Seeking instantons with isoclinic spherical symmetry generated by $(V,\rho)$, the Isoclinic Spherical Structure Theorem tells us that $M$ has to have the form given in the statement, up to a $\rho$-invariant gauge. Thus, if $\hat{M}\in\mathcal{M}_{3,7}$, then it has isoclinic spherical symmetry. To that end, we have $LL^\dagger=12\lambda^2I_3$ and $R=12\lambda^2 I_4\oplus 4\lambda^2 I_3$. Due to the symmetry of the instanton, we need only check the final condition for $x=x_0\geq 0$. Doing so, we see that $\Delta(x)^\dagger\Delta(x)$ is given by
\begin{equation*}
\begin{bmatrix}
12\lambda^2+x_0^2 & 0 & 0 & 0 & -2x_0\lambda & 0 & 0 \\
0 & 12\lambda^2+x_0^2 & 0 & 0 & 0 & -2x_0\lambda & 0 \\
0 & 0 & 12\lambda^2+x_0^2 & 0 & 0 & 0 & 2x_0\lambda \\
0 & 0 & 0 & 12\lambda^2+x_0^2 & 0 & 0 & 0 \\
-2x_0\lambda & 0 & 0 & 0 & 4\lambda^2+x_0^2 & 0 & 0 \\
0 & -2x_0\lambda & 0 & 0 & 0 & 4\lambda^2+x_0^2 & 0 \\
0 & 0 & 2x_0\lambda & 0 & 0 & 0 & 4\lambda^2+x_0^2
\end{bmatrix}.
\end{equation*}
The eigenvalues of $\Delta(x)^\dagger\Delta(x)$ are $12\lambda^2+x_0^2$, $8\lambda^2+x_0^2\pm 2\lambda\sqrt{4\lambda^2+x_0^2}$ with multiplicity $1$, $3$, and $3$, respectively. Therefore, $\Delta(x)^\dagger\Delta(x)$ is positive-definite everywhere.
\end{proof}

\subsubsection{Constraint on isoclinic spherical symmetry}\label{subsubsec:isocliniccontraint}

In Theorem~\ref{thm:isoclinicsphericalsym}, we obtain a representation $(V,\rho)$ of $\mathfrak{sp}(1)$ from an instanton with isoclinic spherical symmetry by focusing on the bottom of \eqref{eq:isoclinicsphericalsymdiff}. This representation induces another $(\hat{V},\hat{\rho})$, which we use to determine what an instanton with isoclinic spherical symmetry looks like. However, by focusing on the top of \eqref{eq:isoclinicsphericalsymdiff}, we obtain a second representation.
\begin{lemma}
Let $\upsilon_1:=i/2$, $\upsilon_2:=j/2$, and $\upsilon_3:=k/2$. Suppose $\hat{M}\in\mathcal{M}_{n,k}$ has isoclinic spherical symmetry generated by $(V,\rho)$, a real representation. Let $y_i:=(LL^\dagger)^{-1}L\rho(\upsilon_i)L^\dagger\in\mathfrak{sp}(n)$. The $y_i$ induce a quaternionic representation of $\mathfrak{sp}(1)$, which we denote by $(W,\lambda)$, where $W:=\mathbb{H}^n$ and $\lambda$ is the linear map taking $\upsilon_i\mapsto y_i$.
\end{lemma}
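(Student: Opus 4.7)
The plan is to establish two things in sequence: first that each $y_i$ genuinely lies in $\mathfrak{sp}(n)$, and then that $\lambda(\upsilon_l)\lambda(\upsilon_m)-\lambda(\upsilon_m)\lambda(\upsilon_l)=\lambda([\upsilon_l,\upsilon_m])$, so that $\lambda$ is a Lie algebra homomorphism into $\mathfrak{sp}(n)$.

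For the first part, the membership $y_i\in\mathfrak{sp}(n)$ amounts to $y_i^\dagger=-y_i$, which in turn boils down to showing $LL^\dagger$ commutes with $L\rho(\upsilon_i)L^\dagger$. I would follow exactly the pattern used just after Theorem~\ref{thm:circularsym} for the analogous circular result: from the bottom row of \eqref{eq:isoclinicsphericalsymdiff}, namely $\upsilon_i M+[\rho(\upsilon_i),M]=0$, I would compute $M^\dagger M\rho(\upsilon_i)$ two different ways (moving $\rho(\upsilon_i)$ past $M$ on each side, at the cost of picking up $\upsilon_i$ terms that cancel) to deduce $[M^\dagger M,\rho(\upsilon_i)]=0$. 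Combined with $[R,\rho(\upsilon_i)]=0$ from Corollary~\ref{cor:IsomS} and the reality of $R$, this yields $[L^\dagger L,\rho(\upsilon_i)]=0$; sandwiching between $L$ and $L^\dagger$ then gives $[LL^\dagger,L\rho(\upsilon_i)L^\dagger]=0$, which is exactly what is needed for $y_i^\dagger=-y_i$.

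The second part is where the top row of \eqref{eq:isoclinicsphericalsymdiff} does its real work. That row rearranges to the identity
\begin{equation*}
L\rho(\upsilon)L^\dagger(LL^\dagger)^{-1}L=L\rho(\upsilon),
\end{equation*}
which I will use to collapse the awkward $L^\dagger(LL^\dagger)^{-1}L$ factor inside $y_l y_m$. Specifically,
\begin{equation*}
y_l y_m=(LL^\dagger)^{-1}\bigl[L\rho(\upsilon_l)L^\dagger(LL^\dagger)^{-1}L\bigr]\rho(\upsilon_m)L^\dagger=(LL^\dagger)^{-1}L\rho(\upsilon_l)\rho(\upsilon_m)L^\dagger,
\end{equation*}
so that the commutator $[y_l,y_m]$ reduces cleanly to $(LL^\dagger)^{-1}L[\rho(\upsilon_l),\rho(\upsilon_m)]L^\dagger$. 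Since $\rho$ is itself a Lie algebra homomorphism into $\mathfrak{so}(k)$, this equals $\sum_p\varepsilon_{lmp}(LL^\dagger)^{-1}L\rho(\upsilon_p)L^\dagger=\sum_p\varepsilon_{lmp}y_p$, which together with the standard commutation relations of $\upsilon_1,\upsilon_2,\upsilon_3$ (Definition~\ref{def:stdbasissp1}) gives the desired homomorphism property; $\mathbb{H}$-linearity on the right is automatic from the defining formula.

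The main obstacle, such as it is, is keeping track of which of the two isoclinic conditions does which job: the bottom row of \eqref{eq:isoclinicsphericalsymdiff} (together with Corollary~\ref{cor:IsomS}) supplies the commutativity that makes $y_i$ land in $\mathfrak{sp}(n)$, while the top row supplies the projection identity that lets $y_l y_m$ collapse without any leftover $L^\dagger(LL^\dagger)^{-1}L$ factor. Both pieces of \eqref{eq:isoclinicsphericalsymdiff} are genuinely needed, in contrast to the analogous lemma for circular symmetry where only the bottom row and membership in $\mathfrak{sp}(n)$ were discussed, and I would highlight this structural point in the write-up.
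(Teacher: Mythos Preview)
Your proof is correct. For the commutator step you take a slightly different route from the paper: where you invoke the top row of \eqref{eq:isoclinicsphericalsymdiff} to collapse $L\rho(\upsilon_l)L^\dagger(LL^\dagger)^{-1}L$ down to $L\rho(\upsilon_l)$, the paper instead uses the commutativity $[L^\dagger L,\rho(\upsilon_i)]=0$ --- already established in your first step --- to rewrite $\rho(\upsilon_i)L^\dagger L\rho(\upsilon_j)-\rho(\upsilon_j)L^\dagger L\rho(\upsilon_i)$ directly as $L^\dagger L[\rho(\upsilon_i),\rho(\upsilon_j)]$, after which one $LL^\dagger$ cancels against $(LL^\dagger)^{-1}$. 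Both computations are valid and yield the same expression $(LL^\dagger)^{-1}L\rho([\upsilon_i,\upsilon_j])L^\dagger$.

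This comparison also shows that your closing structural remark is slightly off: the top row is not genuinely needed as an independent input. The projection identity you use is in fact a consequence of $[L^\dagger L,\rho(\upsilon)]=0$ together with $[LL^\dagger,L\rho(\upsilon)L^\dagger]=0$, both of which you already extracted from the bottom row and $[R,\rho(\upsilon)]=0$. So the lemma, like its circular analogue, really only requires the bottom row of \eqref{eq:isoclinicsphericalsymdiff} plus Corollary~\ref{cor:IsomS}.
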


\begin{proof}
Let $i,j\in\{1,2,3\}$. By Theorem~\ref{thm:isoclinicsphericalsym}, we know that $[\rho(\upsilon_i),M]=\upsilon_i M$, so
\begin{equation*}
\begin{aligned}
M^\dagger M\rho(\upsilon_i)&=M^\dagger [M,\rho(\upsilon_i)]+M^\dagger \rho(\upsilon_i)M\\
&=-M^\dagger \upsilon_i M+[M^\dagger ,\rho(\upsilon_i)]M+\rho(\upsilon_i)M^\dagger M\\
&=-M^\dagger \upsilon_i M+M^\dagger \upsilon_iM+\rho(\upsilon_i)M^\dagger M\\
&=\rho(\upsilon_i)M^\dagger M.
\end{aligned}
\end{equation*}
Thus, $[M^\dagger M,\rho(\upsilon_i)]=0$. Also, from Theorem~\ref{thm:isoclinicsphericalsym}, $[\rho(\upsilon_i),R]=0$. Hence, $[L^\dagger L,\rho(\upsilon_i)]=0$. Thus,
\begin{equation*}
LL^\dagger L\rho(\upsilon_i)L^\dagger=L\rho(\upsilon_i)L^\dagger LL^\dagger.
\end{equation*}
Thus, $[LL^\dagger,L\rho(\upsilon_i)L^\dagger]=0$, so $y_i\in\mathfrak{sp}(n)$. Therefore, we see
\begin{equation*}
\begin{aligned}
[y_i,y_j]&=(LL^\dagger)^{-1}L(\rho(\upsilon_i)L^\dagger L\rho(\upsilon_j)-\rho(\upsilon_j)L^\dagger L\rho(\upsilon_i))L^\dagger (LL^\dagger)^{-1}\\
&=(LL^\dagger)^{-1}L\rho([\upsilon_i,\upsilon_j])L^\dagger.
\end{aligned}
\end{equation*}
Thus, the $y_i$ satisfy the correct commutation relations.
\end{proof}

We use $(W,\lambda)$ to determine what $\mathrm{Sp}(n)$ structure groups are possible given $\rho$. Using $W=\mathbb{H}^n\simeq \mathbb{C}^{2n}$, we can restrict the scalars from $\mathbb{H}$ to $\mathbb{C}$. In doing so, the generators $y_i\in\mathfrak{sp}(n)$ correspond to elements of $\mathfrak{su}(2n)$. Thus, the induced complex representation is a $2n$-representation. Given an irreducible, quaternionic representation, the complex representation obtained by restricting scalars is either isomorphic to $(V_a,\rho_a)$ for some $a$ even or $(V_a,\rho_a)^{\oplus 2}$ for some $a$ odd. 
\begin{definition}
Given a real representation $(V,\rho)$ of $\mathfrak{sp}(1)\oplus\mathfrak{sp}(1)$, with induced representation $(W,\lambda)$ as defined above, we define the quaternionic representation 
\begin{equation}
(\hat{W},\hat{\lambda}):=(W,\lambda)\otimes_\mathbb{R} (V^*,\rho^*).
\end{equation}
\end{definition}

Unravelling how $\hat{\lambda}$ acts on $\hat{W}=\mathrm{Mat}(n,k,\mathbb{H})$, let $\upsilon\in\mathfrak{sp}(1)$ and $A\in\hat{W}$. Then
\begin{equation}
\hat{\lambda}(\upsilon)\left(A\right)=\lambda(\upsilon)A-A\rho(\upsilon).
\end{equation}

The definition of $(\hat{W},\hat{\lambda})$ is well-motivated. First, note that $L\in\hat{W}$. Moreover, the following lemma tells us exactly where $L$ lives in $\hat{W}$.
\begin{lemma}
Let $\hat{M}\in\mathcal{M}_{n,k}$ have isoclinic spherical symmetry. Let $(V,\rho)$, $(W,\lambda)$ and $(\hat{W},\hat{\lambda})$ be as above. Then $\hat{\lambda}(\upsilon)(L)=0$ for all $\upsilon\in\mathfrak{sp}(1)$.
\end{lemma}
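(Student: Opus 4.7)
The plan is to unpack the definition of $\hat{\lambda}(\upsilon)(L)$ and reduce the desired vanishing to a commutation relation that has already been established in the preceding lemma. Writing out the action, we have
\begin{equation*}
\hat{\lambda}(\upsilon)(L) = \lambda(\upsilon)L - L\rho(\upsilon),
\end{equation*}
so by linearity in $\upsilon$ it suffices to verify this for the basis elements $\upsilon_i$, where $\lambda(\upsilon_i) = y_i = (LL^\dagger)^{-1}L\rho(\upsilon_i)L^\dagger$. Substituting and left-multiplying by $LL^\dagger$ (which is invertible since $LL^\dagger$ is positive definite for $\hat{M}\in\mathcal{M}_{n,k}$), the problem reduces to showing
\begin{equation*}
L\rho(\upsilon_i)L^\dagger L \;=\; LL^\dagger L\rho(\upsilon_i).
\end{equation*}

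The key observation is that this identity follows immediately if $[L^\dagger L, \rho(\upsilon_i)] = 0$. But this commutator was established in the course of proving $y_i \in \mathfrak{sp}(n)$ in the preceding lemma: starting from $[\rho(\upsilon_i), M] = \upsilon_i M$ one deduces $[M^\dagger M, \rho(\upsilon_i)] = 0$, and combining with $[\rho(\upsilon_i), R] = 0$ from Theorem~\ref{thm:isoclinicsphericalsym} (and the reality of $R$) yields $[L^\dagger L, \rho(\upsilon_i)] = 0$. Multiplying on the left by $L$ and on the right by nothing gives the required equality.

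Since the identity holds for each of the generators $\upsilon_1, \upsilon_2, \upsilon_3$ of $\mathfrak{sp}(1)$ and both sides are linear in $\upsilon$, we conclude $\hat{\lambda}(\upsilon)(L) = 0$ for all $\upsilon \in \mathfrak{sp}(1)$. There is no real obstacle here; the only subtlety worth flagging is the bookkeeping that the commutation $[L^\dagger L, \rho(\upsilon)] = 0$ is already available for free from the isoclinic spherical symmetry condition, so no new computation on $M$ or $R$ is needed.
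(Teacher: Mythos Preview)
Your proof is correct and essentially follows the same route as the paper. The paper's one-line argument cites the top row of \eqref{eq:isoclinicsphericalsymdiff} directly, which reads $L\rho(\upsilon)L^\dagger(LL^\dagger)^{-1}L - L\rho(\upsilon) = 0$; you instead reduce to the commutator $[L^\dagger L,\rho(\upsilon_i)]=0$ established in the preceding lemma, but these are equivalent reformulations of the same identity and rely on the same ingredients.
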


\begin{proof}
As $\hat{M}$ has isoclinic symmetry, we have \eqref{eq:isoclinicsphericalsymdiff}. Then we see that for $\upsilon\in\mathfrak{sp}(1)$,
\begin{equation*}
\hat{\lambda}(\upsilon)(L)=\lambda(\upsilon)L-L\rho(\upsilon)=0.\qedhere
\end{equation*}
\end{proof}

\begin{cor}
Let $\hat{M}\in\mathcal{M}_{n,k}$ have isoclinic spherical symmetry. Then $(\hat{W},\hat{\lambda})$ must have trivial summands and $L$ must live in the direct sum of these summands.
\end{cor}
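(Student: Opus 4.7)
The plan is to invoke the preceding lemma together with semisimplicity of $\mathfrak{sp}(1)$. The lemma shows that $\hat{\lambda}(\upsilon)(L)=0$ for all $\upsilon\in\mathfrak{sp}(1)$, so $L$ is annihilated by the entire representation. The only step beyond this observation is to rule out $L=0$ and then to invoke complete reducibility.

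First, I would argue that $L\neq 0$. This follows from the definition of $\mathcal{M}_{n,k}$ in Definition~\ref{def:Mstd}: the matrix $LL^\dagger$ is required to be positive definite, which is impossible if $L$ were the zero matrix. Hence $\mathrm{span}(L)\subseteq \hat{W}$ is a genuine one-dimensional subspace. By the preceding lemma, this subspace is invariant under $\hat{\lambda}$ and the action on it is trivial, so $(\mathrm{span}(L),0)$ is a trivial subrepresentation of $(\hat{W},\hat{\lambda})$.

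Next, I would use that $\mathfrak{sp}(1)$ is semisimple (equivalently, its compact form has a fully reducible representation theory), so $(\hat{W},\hat{\lambda})$ decomposes as a direct sum of irreducible summands. Because $(\mathrm{span}(L),0)$ is a trivial subrepresentation, this decomposition must contain at least one trivial irreducible summand, which gives the first claim. For the second claim, let $\hat{W}^{\mathfrak{sp}(1)}$ denote the subspace of $\hat{\lambda}$-invariants; this is exactly the direct sum of the trivial summands in any irreducible decomposition (it is independent of the decomposition). Since $L\in \hat{W}^{\mathfrak{sp}(1)}$ by the lemma, $L$ lies in the direct sum of the trivial summands, as required.

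I do not anticipate any real obstacle here; the corollary is essentially a repackaging of the preceding lemma using the standard fact that a nonzero invariant vector forces a trivial summand in a semisimple representation. The only subtle point worth stating explicitly is the non-vanishing of $L$, which relies on the positive definiteness assumption built into $\mathcal{M}_{n,k}$ and should be mentioned to avoid a vacuous conclusion.
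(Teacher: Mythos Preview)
Your proposal is correct and follows essentially the same approach as the paper: both use the preceding lemma to conclude that $\mathrm{span}(L)$ is a one-dimensional trivial subrepresentation, forcing a trivial summand in which $L$ lives. Your write-up is more detailed than the paper's (which is a single sentence), explicitly justifying $L\neq 0$ via positive definiteness of $LL^\dagger$ and invoking semisimplicity for complete reducibility, but the underlying argument is the same.
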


\begin{proof}
As $L\neq 0$, we know $\mathrm{span}(L)\subseteq \hat{W}$ is an invariant 1-dimensional subspace, so $(\mathrm{span}(L),0)$ is a trivial summand of the representation $(\hat{W},\hat{\lambda})$.
\end{proof}

Therefore, we know that $(\hat{W},\hat{\lambda})$ must have trivial summands, which narrows the possibilities for the structure group. Just like in the hyperbolic monopole case, we have straightforward restrictions for high dimensional representations in addition to more subtle restrictions for low dimensional representations. For instance, in Proposition~\ref{prop:IsoEx}, we examine isoclinic symmetric instantons generated by $(V,\rho)\simeq (\mathbb{R}^4,\varrho_{4})\oplus (\mathbb{R}^3,\varrho_3)$. This representation generated instantons with isoclinic spherical symmetry and structure group $\mathrm{Sp}(3)$. However, this representation cannot generate instantons with isoclinic spherical symmetry and lower rank structure groups.
\begin{prop}
Consider Proposition~\ref{prop:IsoEx}, with $(V,\rho)\simeq (\mathbb{R}^4,\varrho_{4})\oplus (\mathbb{R}^3,\varrho_3)$. This representation does not generate an instanton with isoclinic spherical symmetry and structure group $\mathrm{Sp}(1)$ or $\mathrm{Sp}(2)$. 
\end{prop}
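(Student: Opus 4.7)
My plan is to analyze each possible induced quaternionic representation $(W, \lambda)$ of $\mathfrak{sp}(1)$ on $\mathbb{H}^n$, defined by $\lambda(\upsilon_i) := (LL^\dagger)^{-1} L \rho(\upsilon_i) L^\dagger$, and check whether any yields a valid $\hat{M} \in \mathcal{M}_{n, 7}$ using the requirement that $L$ lie in the trivial summands of $(\hat{W}, \hat{\lambda}) = (W, \lambda) \otimes_\mathbb{R} (V^*, \rho^*)$, where $V \cong (\mathbb{R}^4, \varrho_4) \oplus (\mathbb{R}^3, \varrho_3)$. Up to conjugation in $\mathrm{Sp}(n)$, the quaternionic representations on $\mathbb{H}^n$ are classified by their complexifications: for $n = 1$ they are either trivial or the standard embedding $V_2$, and for $n = 2$ they are $V_4$, $V_2^{\oplus 2}$, $V_2 \oplus V_1^{\oplus 2}$, or the trivial $V_1^{\oplus 4}$.

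The first step applies Clebsch--Gordan to the complexifications. When $W \in \{\text{trivial},\, V_4,\, V_1^{\oplus 4}\}$, no trivial summand appears in $\hat{W}$, so equivariance forces $L = 0$, contradicting $LL^\dagger > 0$. For $W = V_2 \oplus V_1^{\oplus 2}$ with $n = 2$, the row of $L$ corresponding to the trivial $V_1^{\oplus 2}$-summand must equivariantly map $V$ into its trivial sub-representation, which is zero; this row vanishes and $\mathrm{rank}(LL^\dagger) < 2$, again contradicting positive-definiteness. The remaining cases---$n = 1$ with standard $\lambda$, and $n = 2$ with $W = V_2^{\oplus 2}$---both force each row of $L$ to take the form $[q_r,\, iq_r,\, jq_r,\, kq_r,\, 0,\, 0,\, 0]$ for some $q_r \in \mathbb{H}$, where the $q_r$ must be $\mathbb{R}$-linearly independent for $LL^\dagger$ to be positive-definite.

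For these remaining cases, I exploit that $R = L^\dagger L + M^\dagger M$ must be real, symmetric, and $\rho$-invariant: since the real commutant of $\varrho_4$ consists of right-multiplications by quaternions (restricted to real scalars by symmetry), and the real commutant of $\varrho_3$ is just $\mathbb{R}$, one concludes $R = \alpha I_4 \oplus \mu I_3$. Together with the block form $M = \begin{bmatrix} 0 & X \\ X^T & 0 \end{bmatrix}$ dictated by the Isoclinic Spherical Structure Theorem, the equation $L^\dagger L + M^\dagger M = R$ splits blockwise into $X^\dagger X = \mu I_3$ (forcing $\mathrm{rank}(X) = 3$ and, by trace matching, fixing $\mu$ in terms of the $q_r$) and $\bar{X} X^T = \alpha I_4 - (L^\dagger L)_{\mathrm{TL}}$. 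The main obstacle is to show that no $X$ in the four-real-dimensional space spanned by the quadruples of \eqref{eq:Dmu2} satisfies both conditions for any admissible $(q_r)$: this requires computing $\bar{X}^{(a)} (X^{(b)})^T$ for the four basis matrices $X^{(a)}$ and verifying that no real linear combination reproduces the required off-diagonal imaginary entries $-\sum_r \bar{q}_r \bar{e}_a e_b q_r$ of $(L^\dagger L)_{\mathrm{TL}}$.
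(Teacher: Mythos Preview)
Your overall strategy matches the paper's: classify the possible quaternionic representations $(W,\lambda)$ on $\mathbb{H}^n$ for $n\in\{1,2\}$, discard those for which $(\hat W,\hat\lambda)$ has no trivial summand (forcing $L=0$), and eliminate $V_2\oplus V_1^{\oplus 2}$ via the vanishing row of $L$. These reductions are correct and essentially identical to the paper's.

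The gap is in the remaining cases $(W,\lambda)\cong V_2$ for $n=1$ and $(W,\lambda)\cong V_2^{\oplus2}$ for $n=2$, where you only describe what must be verified rather than carrying it out. Your plan---parametrize the full four-real-dimensional family of off-diagonal blocks $X$ together with the $q_r$'s, then check that the two block equations $X^\dagger X=\mu I_3$ and $\bar X X^T=\alpha I_4-(L^\dagger L)_{\mathrm{TL}}$ admit no simultaneous solution---is in principle sound but unnecessarily laborious, and you do not execute it.

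The paper's route here is much shorter: invoke the $\rho$-invariant gauge freedom built into the Isoclinic Spherical Structure Theorem to normalize $M$ to the single one-parameter form of Proposition~\ref{prop:IsoEx} (with scalar $\lambda$). With $M$ fixed, one computes $R=L^\dagger L+M^\dagger M$ explicitly; imposing that $R$ be real then forces the three equations
\[
a^\dagger i a=3\lambda^2 i,\qquad a^\dagger j a=3\lambda^2 j,\qquad a^\dagger k a=-3\lambda^2 k
\]
on the quaternionic parameter $a$ appearing in $L=[1,-i,-j,k,0,0,0]\,a$. These say conjugation by $a$ fixes $i$ and $j$ but sends $k\mapsto -k$, i.e.\ acts as an orientation-reversing map on $\mathfrak{sp}(1)$, which is impossible for nonzero $a$; hence $a=0$ and $L=0$. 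Using the gauge to collapse your four $X$-parameters to one is the key simplification you are missing, and it turns the ``main obstacle'' you flag into a two-line contradiction.
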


\begin{proof}
Suppose $(V,\rho)\simeq (\mathbb{R}^4,\varrho_{4})\oplus (\mathbb{R}^3,\varrho_3)$ generates an instanton with isoclinic spherical symmetry. We know that $(\hat{W},\hat{\lambda})$ has a trivial summand. If we have a $\mathrm{Sp}(1)$ instanton, then we have that the representation obtained by restricting the scalars of $(W,\lambda)$ is isomorphic, as a complex representation, to $(V_1,\rho_1)^{\oplus 2}$ or $(V_2,\rho_2)$. The former does not give $(\hat{W},\hat{\lambda})$ any trivial summands. Thus we look to the latter, which gives two trivial complex summands. 

If the representation obtained by restricting the scalars of $(W,\lambda)$ is given by $(V_2,\rho_2)$, then there is some $q\in\mathrm{Sp}(1)$ such that $y_i=q\upsilon_i q^\dagger$. Taking $\tilde{L}:=q^\dagger L$ and dropping the tilde, we have 
\begin{equation*}
\upsilon_i L=L\rho(\upsilon_i).
\end{equation*}
Solving these equations, there is some $a\in\mathbb{H}$ such that $L=\begin{bmatrix}
1 & -i & -j & k & 0 & 0 & 0
\end{bmatrix}a$. But we know that as, up to a $\rho$-invariant gauge,
\begin{equation*}
M=\lambda\begin{bmatrix}
0 & 0 & 0 & 0 & 1 & -i & j \\
0 & 0 & 0 & 0 & i & 1 & -k \\
0 & 0 & 0 & 0 & j & -k & -1 \\
0 & 0 & 0 & 0 & k & j & i \\
1 & i & j & k & 0 & 0 & 0 \\
-i & 1 & -k & j & 0 & 0 & 0 \\
j & -k & -1 & i & 0 & 0 & 0 
\end{bmatrix},
\end{equation*}
we have $R:=L^\dagger L+M^\dagger M$ is given by
\begin{equation*}
R=a^\dagger\begin{bmatrix}
1 & -i & -j & k & 0 & 0 & 0 \\
i & 1 & -k & -j & 0 & 0 & 0 \\
j & k & 1 & i & 0 & 0 & 0 \\
-k & j & -i & 1 & 0 & 0 & 0 \\
0 & 0 & 0 & 0 & 0 & 0 & 0 \\
0 & 0 & 0 & 0 & 0 & 0 & 0
\end{bmatrix}a+\lambda^2\begin{bmatrix}
3 & 3i & 3j & 3k & 0 & 0 & 0 \\
-3i & 3 & -3k & 3j & 0 & 0 & 0 \\
-3j & 3k & 3 & -3i & 0 & 0 & 0 \\
-3k & -3j & 3i & 3 & 0 & 0 & 0 \\
0 & 0 & 0 & 0 & 4 & 0 & 0 \\
0 & 0 & 0 & 0 & 0 & 4 & 0 \\
0 & 0 & 0 & 0 & 0 & 0 & 4
\end{bmatrix}
\end{equation*}
As we must have $R$ real and commuting with $\rho$, we must have $a^\dagger ia=3\lambda^2i$, $a^\dagger ja=3\lambda^2j$, and $a^\dagger ka=-3\lambda^2k$. Solving these, we have $a=0$. But then $L=0$, contradiction! Therefore, we cannot have a $\mathrm{Sp}(1)$ instanton from this representation $(V,\rho)$.

Instead, if we have a $\mathrm{Sp}(2)$ instanton, then we have that the representation obtained by restricting the scalars of $(W,\lambda)$ is some combination of the previously discussed $2$-representations or $(V_4,\rho_4)$. However, the only choices for the representation obtained by restricting the scalars of $(W,\lambda)$ that create a trivial summand in $(\hat{W},\hat{\lambda})$ include $(V_2,\rho_2)$. But we already showed that these representations do not generate an instanton with isoclinic spherical symmetry. Thus, this representation does not generate a $\mathrm{Sp}(2)$ instanton with isoclinic spherical symmetry either.
\end{proof}

\subsubsection{Non-symmetric singular monopoles}

In this section, we discuss the connection between instantons with isoclinic spherical symmetry and singular monopoles with Dirac type singularities and no continuous symmetries. 

Note that the isoclinic spherical subgroup of $\mathrm{Sp}(2)$ contains $R_0$ as a subgroup, whereas the simple and conformal spherical subgroups do not even contain a subgroup conjugate to $R_0$. Hence, we can associate to an instanton with isoclinic spherical symmetry a singular monopole with Dirac type singularities. Unlike toral symmetry, isoclinic spherical symmetry does not impart extra symmetry to singular monopoles, as the rest of the isoclinic spherical symmetry does not commute with the $R_0$ subgroup. The lack of commuting actions follows from Lemma~\ref{lemma:symsingularmonopoles} and the work in Section~\ref{subsubsec:AxialSymSingularMono}, which tell us that any symmetry on the singular monopole corresponds to isoclinic symmetry of the form $\mathrm{diag}(1,p)$, with $p\in\mathrm{Sp}(1)$.

Note that an instanton with isoclinic spherical symmetry can possess more symmetry, in which case the singular monopole may have a continuous symmetry. In Section~\ref{subsubsec:SpherSymSingularMono}, we discuss the connection between instantons with isoclinic superspherical symmetry and singular monopoles with different symmetries. If the instanton has isoclinic spherical symmetry and does not have isoclinic superspherical symmetry, then the corresponding singular monopole has no continuous symmetries. 

\subsubsection{Hyperbolic analogue to Nahm data}\label{subsubsec:HyperbolicNahm}

In this section, we discuss the connection between instantons with isoclinic spherical symmetry and a hyperbolic analogue to Nahm data. 

Just as Euclidean monopoles and Higgs bundles, Nahm data are a dimensional reduction of instantons. However, while Euclidean monopoles are invariant under translation along one axis and Higgs bundles are invariant under translations along two axes, Nahm data are invariant under translations along three axes. Recall from Section~\ref{subsubsec:HyperbolicHiggsBundles}, for $d\in\{0,1,2\}$ we have the conformal equivalence $S^4\setminus S^d\equiv H^{d+1}\times S^{4-d-1}$. Note that we can replace $S^4\setminus S^d$ with $\mathbb{R}^4\setminus \mathbb{R}^d$.

The fixed point set of the isoclinic spherical action on $\mathbb{R}^4$ is $\{0\}$. Moreover, for $x\neq 0$, we have that the orbit of $x$ under the isoclinic spherical action is $S^3$. Indeed, $\mathrm{Sp}(1)\simeq S^3$ acts freely on $\mathbb{R}^4\setminus\{0\}$. 

On $\mathbb{R}^4\setminus\{0\}$, we can use four dimensional spherical coordinates $(r,\theta,\phi,\psi)$. Doing so, the metric on $\mathbb{R}^4\setminus\{0\}$ is given by
\begin{equation*}
ds^2=dr^2+r^2(d\theta^2+\sin^2\theta d\phi^2+\sin^2\theta\sin^2\phi d\psi^2).
\end{equation*}
As $r\neq 0$ on this set, we can divide by $r^2$, obtaining the metric of $H^1\times S^3$. Specifically, we note that we obtain the upper-half space model of $H^1$.

Therefore, just as in the case of hyperbolic monopoles and hyperbolic analogue to Higgs bundles, an instanton with isoclinic spherical symmetry corresponds to a hyperbolic analogue to Nahm data and vice versa. 

\subsection{Conformal spherical symmetry}

In this section, we find an equation describing all instantons with conformal spherical symmetry, as given in Table~\ref{table:conformalsubgroups} as well as below. We then prove that we already know all examples of such instantons. We also discuss instantons with full symmetry.

Recall the Lie algebra $\mathfrak{sp}(1)\simeq\mathfrak{h}_5\subseteq\mathfrak{sp}(2)$, defined in \eqref{eq:sp1subalgebras}. Let $\tau\colon\mathfrak{sp}(1)\rightarrow\mathfrak{h}_5$ be a Lie algebra isomorphism. An instanton has conformal spherical symmetry if and only if it is equivariant under the unique connected Lie subgroup of $\mathrm{Sp}(2)$ with Lie algebra $\mathfrak{h}_5=\tau(\mathfrak{sp}(1))\subseteq\mathfrak{sp}(2)$. Recall the notation $\hat{M}_\Upsilon$ and $U_\Upsilon$, for $\Upsilon\in\mathfrak{sp}(2)$, introduced in Lemma~\ref{lemma:conformalaction}.
\begin{theorem}
Let $\hat{M}\in\mathcal{M}_{n,k}$. Then $\hat{M}$ has conformal spherical symmetry if and only if there exists a representation $\rho\colon\mathfrak{sp}(1)\rightarrow\mathfrak{sp}(n+k)$ such that for all $\upsilon\in\mathfrak{sp}(1)$,
\label{thm:conformalsphericalsym}
\begin{align}
\rho(\upsilon)\hat{M}-\hat{M}_{\tau(\upsilon)} -\hat{M}U^T\rho(\upsilon) U+\hat{M}U^T U_{\tau(\upsilon)}&=0;\label{eq:conformalsphericalsym1}\\
\rho(\upsilon)U-U_{\tau(\upsilon)} -UU^T\rho(\upsilon)U+UU^T U_{\tau(\upsilon)}&=0.\label{eq:conformalsphericalsym2}
\end{align}
Additionally, we must have that $\rho$ induces a real representation $\lambda\colon\mathfrak{sp}(1)\rightarrow \mathfrak{so}(k)$ given by
\begin{equation}
\lambda(\upsilon):=U^T\rho(\upsilon)U-U^TU_{\tau(\upsilon)}.\label{eq:conformalsphericalsymrep2}
\end{equation}
\end{theorem}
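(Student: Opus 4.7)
The plan is to follow the proof strategies of Theorem~\ref{thm:simplesphericalsym} and Theorem~\ref{thm:isoclinicsphericalsym}, applying Theorem~\ref{thm:mainthm} to the conformal spherical case. The essential new wrinkle is that the connected Lie subgroup $G_5\subseteq\mathrm{Sp}(2)$ with Lie algebra $\mathfrak{h}_5$ does not act by isometries on $\mathbb{R}^4$, so Corollary~\ref{cor:IsomS} is unavailable. Instead, I will use Lemma~\ref{lemma:KfromQ}, which determines $K$ smoothly from $Q$ and the conformal transformation, to reduce the gauge group to the compact group $\mathrm{Sp}(n+k)$.

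First I would define the stabilizer group
\[
S:=\{(g,Q)\in G_5\times\mathrm{Sp}(n+k)\mid g.(Q,K(g,Q)).(\hat{M},U)=(\hat{M},U)\},
\]
where $K(g,Q):=U^TQ(UA-\hat{M}C)$ for $g=\begin{bmatrix}A & B \\ C & D\end{bmatrix}$. As in the proof of Theorem~\ref{thm:simplesphericalsym}, $S$ is a closed subgroup of the compact Lie group $G_5\times\mathrm{Sp}(n+k)$. Applying Theorem~\ref{thm:mainthm} produces a Lie algebra homomorphism $\tilde{\rho}\colon\mathfrak{h}_5\to\mathfrak{sp}(n+k)$, and differentiating $K(g,Q)$ along one-parameter subgroups at the identity yields an associated $\lambda\colon\mathfrak{h}_5\to\mathfrak{gl}(k,\mathbb{R})$. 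Composing with $\tau$ gives $\rho:=\tilde{\rho}\circ\tau\colon\mathfrak{sp}(1)\to\mathfrak{sp}(n+k)$.

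The two stated equations then come out by unpacking $\tau(\upsilon).(\hat{M},U)=(\rho(\upsilon),\lambda(\upsilon)).(\hat{M},U)$ using Lemma~\ref{lemma:conformalaction}. Specifically, the $U$-component reads $\rho(\upsilon)U-U_{\tau(\upsilon)}=U\lambda(\upsilon)$, and left-multiplying by $U^T$ (using $U^TU=I_k$) produces the claimed formula $\lambda(\upsilon)=U^T\rho(\upsilon)U-U^TU_{\tau(\upsilon)}$; substituting back into both components recovers \eqref{eq:conformalsphericalsym1} and \eqref{eq:conformalsphericalsym2}. The hard part will be verifying that $\lambda$ actually lands in $\mathfrak{so}(k)$ rather than merely $\mathfrak{gl}(k,\mathbb{R})$. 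Here I would exploit compactness and connectedness of $G_5\simeq\mathrm{Sp}(1)$: the image of the induced smooth homomorphism $G_5\to\mathrm{GL}(k,\mathbb{R})$ is a compact connected subgroup containing the identity, which forces it into a conjugate of $\mathrm{SO}(k)$; combining this with a careful use of the gauge freedom available in constructing standard form (Lemma~\ref{lemma:stdform}) brings $\lambda$ itself into $\mathfrak{so}(k)$.

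For the converse, given $\rho$ satisfying the two stated equations, I would define $Q(\theta):=e^{\theta\rho(\upsilon)}$ and $K(\theta):=U^TQ(\theta)(UA(\theta)-\hat{M}C(\theta))$ where $e^{\theta\tau(\upsilon)}=\begin{bmatrix}A(\theta) & B(\theta) \\ C(\theta) & D(\theta)\end{bmatrix}$, then mimic the integration argument of Theorem~\ref{thm:circularsym} by verifying that $(Q(\theta),K(\theta)).e^{\theta\tau(\upsilon)}.(\hat{M},U)$ has vanishing $\theta$-derivative (which is exactly what \eqref{eq:conformalsphericalsym1} and \eqref{eq:conformalsphericalsym2} encode) and equals $(\hat{M},U)$ at $\theta=0$.
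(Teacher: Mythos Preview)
Your overall strategy matches the paper's: reduce the gauge group to the compact $\mathrm{Sp}(n+k)$ via Lemma~\ref{lemma:KfromQ}, form the stabilizer $S\subseteq G_5\times\mathrm{Sp}(n+k)$, and apply Theorem~\ref{thm:mainthm}. Three points where your outline diverges from the paper are worth flagging.

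First, you wave through ``$S$ is a closed subgroup\ldots as in Theorem~\ref{thm:simplesphericalsym}''. In the isometry cases Corollary~\ref{cor:IsomS} forces $K\in\mathrm{O}(k)$ automatically; here $K(g,Q)=U^TQ(UA-\hat{M}C)$ is an $\mathbb{H}$-valued matrix that need not be real or invertible for arbitrary $(g,Q)$. The paper addresses this by writing $S=f_1^{-1}(0)\cap f_2^{-1}(0)$ for explicit smooth maps (one encoding the stabilizer condition, one encoding reality of $K$), then checking that the stabilizer condition forces invertibility. It also verifies that $K$ is a genuine group homomorphism on $S$. This is the step that actually needs work in the conformal case.

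Second, your plan for $\lambda\in\mathfrak{so}(k)$ via ``compact image lies in a conjugate of $\mathrm{SO}(k)$, then use gauge freedom from Lemma~\ref{lemma:stdform}'' is both more complicated than needed and slightly dangerous: gauging changes $\hat{M}$, but the statement is for a fixed $\hat{M}$. The paper simply observes that $K\circ\Psi\colon\mathrm{Sp}(1)\to\mathrm{GL}(k,\mathbb{R})$ is a homomorphism from a compact group, hence the associated Lie algebra map lands in the skew-symmetric matrices.

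Third, your proposed converse via integration is unnecessary. In the isometry theorems only the bottom block of the infinitesimal equation was retained, so a separate converse argument was required; here equations~\eqref{eq:conformalsphericalsym1}--\eqref{eq:conformalsphericalsym2} are the \emph{full} infinitesimal condition from Theorem~\ref{thm:mainthm}, and the paper explicitly notes that the converse therefore comes for free.
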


\begin{proof}
We use the same notation as introduced in the beginning of the proof of Theorem~\ref{thm:circularsym}. We follow the proof of Theorem~\ref{thm:mainthm} after setting the scene.

Suppose that $\hat{M}$ has conformal spherical symmetry. Let $S\subseteq \mathrm{Sp}(1)\times\mathrm{Sp}(n+k)$ be the stabilizer group of $(\hat{M},U)$ restricted to conformal transformations in $\mathrm{Sp}(1)\simeq e^{\mathfrak{h}_5}\subseteq\mathrm{Sp}(2)$. Recalling that for $A\in\mathrm{Sp}(2)$, $(\hat{M}_A,U_A):=A.(\hat{M},U)$, we have that
\begin{equation}
S:=\{(A,Q)\mid K(A,Q):=U^TQU_{A^\dagger}\in\mathrm{GL}(k,\mathbb{R})\textrm{, }(Q,K(A,Q)).A^\dagger.(\hat{M},U)
=(\hat{M},U)\}.\label{eq:conformalsphericalsymmetryS}
\end{equation}%***CHECK HERE FOR PUBLICATION, margins may change this to multline
Indeed, if $(Q,K)\in\mathrm{Sp}(n+k)\times\mathrm{GL}(k,\mathbb{R})$ such that $A^\dagger.(Q,K).(\hat{M},U)=(\hat{M},U)$, then Lemma~\ref{lemma:KfromQ} tells us that $K=U^TQU_{A^\dagger}$. Note that as $(Q,K).(\hat{M},U)=A.(\hat{M},U)$, we know that $K^{-1}=U^TQ^\dagger U_A$. Therefore, the pairs in $S$ encapsulate all the conformal spherical symmetry of the instanton. The condition $K(A,Q)\in\mathrm{GL}(k,\mathbb{R})$ makes proving $S$ is a closed subgroup more challenging. However, we now prove that $S$ is, in fact, a closed subgroup, hence a compact Lie subgroup. In doing so, we prove that $K\colon S\rightarrow\mathrm{GL}(k,\mathbb{R})$ is a Lie group homomorphism.

Let $f_1\colon\mathrm{Sp}(1)\times\mathrm{Sp}(n+k)\rightarrow\mathcal{X}$ be the smooth map 
\begin{equation*}
f_1(A,Q):=(Q^\dagger\hat{M} K(A,Q),Q^\dagger UK(A,Q))-A^\dagger.(\hat{M},U).
\end{equation*}
Additionally, let $f_2\colon\mathrm{Sp}(1)\times\mathrm{Sp}(n+k)\rightarrow\mathrm{Mat}(k,k,\mathbb{H})$ be the smooth map
\begin{equation*}
f_2(A,Q):=K(A,Q)-\overline{K(A,Q)}.
\end{equation*}
It turns out that $S=f_1^{-1}(0)\cap f_2^{-1}(0)$. Indeed, the inclusion $S\subseteq f_1^{-1}(0)\cap f_2^{-1}(0)$ follows from the definition of $S$. Suppose that $(A,Q)\in f_1^{-1}(0)\cap f_2^{-1}(0)$. Then, in particular, $Q^\dagger UK(A,Q)=U_{A^\dagger}$, so $UK(A,Q)=QU_{A^\dagger}$. As $U_{A^\dagger}$ has rank $k$ and is a $(n+k)\times k$ quaternionic matrix, we know there exists a $k\times (n+k)$ quaternionic matrix $W$ such that $WU_{A^\dagger}=I_k$. Thus, $WQ^\dagger UK(A,Q)=I_k$, so $K$ is invertible. As $f_2(A,Q)=0$, $K(A,Q)$ is real, so belongs in $\mathrm{GL}(k,\mathbb{R})$. Finally, as $f_1(A,Q)=0$, $(A,Q)\in S$. Hence, $S$ is closed. We now show that $S$ is a group.

We see that $(I,I)\in S$. Additionally, suppose that $(A,Q)\in S$. Then 
\begin{equation*}
(Q,K(A,Q)).A^\dagger.(\hat{M},U)=(\hat{M},U).
\end{equation*} 
Hence, $(Q^\dagger,K(A,Q)^{-1}).A.(\hat{M},U)=(\hat{M},U)$. As $QUK(A,Q)^{-1}=U_A$, 
\begin{equation*}
K(A,Q)^{-1}=U^T Q^\dagger U_A=K(A^\dagger,Q^\dagger).
\end{equation*} 
Hence, $K(A^\dagger,Q^\dagger)\in\mathrm{GL}(k,\mathbb{R})$ and $(Q^\dagger,K(A^\dagger,Q^\dagger)).A.(\hat{M},U)=(\hat{M},U)$, so $(A^\dagger,Q^\dagger)\in S$. Finally, suppose that $(A_1,Q_1),(A_2,Q_2)\in S$. Then as the conformal action is a right Lie group action,
\begin{equation*}
\begin{aligned}
(Q_1Q_2,K(A_1,Q_1)K(A_2,Q_2)).(A_1A_2)^\dagger.(\hat{M},U)&=(Q_1,K(A_1,Q_1)).(Q_2,K(A_2,Q_2))\\
&\phantom{=}.A_1^\dagger.A_2^\dagger.(\hat{M},U)\\
&=(\hat{M},U),
\end{aligned}
\end{equation*}
as the conformal and gauge actions commute. Thus, 
\begin{equation*}
((Q_1Q_2)^{-1},(K(A_1,Q_1)K(A_2,Q_2))^{-1}).(\hat{M},U)=(A_1A_2)^\dagger.(\hat{M},U).
\end{equation*}
Hence,
\begin{equation*}
(Q_1Q_2)^{-1}UK(A_1,Q_1)K(A_2,Q_2)=U_{(A_1A_2)^\dagger},
\end{equation*}
so
\begin{equation*}
K(A_1,Q_1)K(A_2,Q_2)=U^TQ_1Q_2U_{(A_1A_2)^\dagger}=K(A_1A_2,Q_1Q_2).
\end{equation*}
Thus, $K(A_1A_2,Q_1Q_2)\in\mathrm{GL}(k,\mathbb{R})$ and $(Q_1Q_2,K(A_1A_2,Q_1Q_2)).(A_1A_2)^\dagger.(\hat{M},U)=(\hat{M},U)$. Hence, $(A_1A_2,Q_1Q_2)\in S$, so $S$ is a group and $K$ is a Lie group homomorphism.

Just like our previous symmetries, other than circular $t$-symmetry, the stabilizer group is a subgroup of a compact group, so we proceed as in the proof of Theorem~\ref{thm:mainthm}~\cite[Theorem~1.1]{lang_moduli_2024}. Though in this case, the compact subgroup is $\mathrm{Sp}(n+k)$ instead of $\mathrm{O}(k)$. In particular, we find that $\hat{M}$ has conformal spherical symmetry if and only if there is a Lie algebra homomorphism $\rho\colon\mathfrak{sp}(1)\rightarrow\mathfrak{sp}(n+k)$ such that for all $\upsilon\in\mathfrak{sp}(1)$, we have
\begin{align*}
\rho(\upsilon)\hat{M}U^TU-\hat{M}_{\tau(\upsilon)} U^TU-\hat{M}U^T\rho(\upsilon) U+\hat{M}U^T U_{\tau(\upsilon)}&=0;\\
\rho(\upsilon)U U^TU-U_{\tau(\upsilon)} U^T U-UU^T\rho(\upsilon)U+UU^T U_{\tau(\upsilon)}&=0.
\end{align*}
Noting that $U^TU=I_k$, we obtain \eqref{eq:conformalsphericalsym1} and \eqref{eq:conformalsphericalsym2}.

Additionally, if $\hat{M}$ has conformal spherical symmetry, then as $K\circ \Psi\colon \mathrm{Sp}(1)\rightarrow \mathrm{GL}(k,\mathbb{R})$ is a Lie group homomorphism, it gives rise to a Lie algebra homomorphism $\lambda\colon\mathfrak{sp}(1)\rightarrow\mathfrak{gl}(k)$. As $\mathrm{Sp}(1)$ is compact, all representations are unitary. In the Lie algebra realm, this means that all representations are skew-Hermitian, so $\lambda\colon\mathfrak{sp}(1)\rightarrow\mathfrak{so}(k)$. As $\mathfrak{so}(k)$ acts naturally on $\mathbb{R}^k$, $\left(\mathbb{R}^k,\lambda\right)$ is a real $k$-representation of $\mathfrak{sp}(1)$. In particular, as $K(\Psi(e^{\theta\upsilon}))=U^Te^{\theta\rho(\upsilon)}U_{e^{-\theta\tau(\upsilon)}}$, we have $\lambda(\upsilon)=U^T\rho(\upsilon)U-U^TU_{\tau(\upsilon)}$.

Unlike the previous symmetries, we do not need to consider the converse direction, as we are considering the full equations of symmetry, not just the bottom component.
\end{proof}

Using Theorem~\ref{thm:conformalsphericalsym}, we find that we already know all instantons with conformal spherical symmetry.
\begin{theorem}
Let $\hat{M}\in\mathcal{M}_{n,k}$. Then $\hat{M}$ has conformal spherical symmetry if and only if $n=k$ and $(\hat{M},U)$ is gauge equivalent to $\left(\begin{bmatrix}
I_k \\ 0
\end{bmatrix}, U\right)$. 
\label{thm:knowconfsphersym}
\end{theorem}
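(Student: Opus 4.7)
The plan is to prove the two directions separately. For the \emph{if} direction, take $(\hat{M},U)=\left(\begin{bmatrix}I_k\\0\end{bmatrix},\begin{bmatrix}0\\I_k\end{bmatrix}\right)$ and $\tilde A=\begin{bmatrix}A&B\\C&D\end{bmatrix}\in\mathrm{Sp}(2)$. The conformal action yields $\tilde A.(\hat M,U)=\left(\begin{bmatrix}DI_k\\-BI_k\end{bmatrix},\begin{bmatrix}-CI_k\\AI_k\end{bmatrix}\right)$, which equals $(Q,I_k).(\hat M,U)$ for $Q=\begin{bmatrix}DI_k&-CI_k\\-BI_k&AI_k\end{bmatrix}$. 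The identity $\tilde A\tilde A^\dagger=I_2$ is exactly what makes $QQ^\dagger=I_{2k}$, so $Q\in\mathrm{Sp}(2k)$ and the standard data is $\mathrm{Sp}(2)$-equivariant; restricting to $G_5\subseteq\mathrm{Sp}(2)$ gives conformal spherical symmetry. This mirrors Proposition~\ref{prop:basicsymmetries} tensored with $I_k$.

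For the \emph{only if} direction, assume $\hat M\in\mathcal{M}_{n,k}$ has conformal spherical symmetry, and let $\rho\colon\mf{sp}(1)\to\mf{sp}(n+k)$ be the representation furnished by Theorem~\ref{thm:conformalsphericalsym}. Write $\rho(\upsilon)=\begin{bmatrix}\rho_{11}&\rho_{12}\\ \rho_{21}&\rho_{22}\end{bmatrix}$ in $(n,k)$-block form and $\tau(\upsilon)=\begin{bmatrix}\alpha(\upsilon)&\beta(\upsilon)\\-\overline{\beta(\upsilon)}&\delta(\upsilon)\end{bmatrix}$ with $\alpha,\delta\colon\mf{sp}(1)\to\mf{sp}(1)$ and $\beta\colon\mf{sp}(1)\to\mathbb{H}$ linear. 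Using Lemma~\ref{lemma:conformalaction} to compute $\hat M_{\tau(\upsilon)}=\begin{bmatrix}L\delta(\upsilon)\\M\delta(\upsilon)-\beta(\upsilon)I_k\end{bmatrix}$ and $U_{\tau(\upsilon)}=\begin{bmatrix}L\overline{\beta(\upsilon)}\\\alpha(\upsilon)I_k+M\overline{\beta(\upsilon)}\end{bmatrix}$, the block expansion of \eqref{eq:conformalsphericalsym2} reduces to the single relation $\rho_{12}(\upsilon)=L\overline{\beta(\upsilon)}$, and skew-Hermiticity of $\rho(\upsilon)$ yields $\rho_{21}(\upsilon)=-\beta(\upsilon)L^\dagger$. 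Substituting these into the bottom block of \eqref{eq:conformalsphericalsym1} produces
\begin{equation*}
\beta(\upsilon)\bigl(I_k-L^\dagger L\bigr)+[\rho_{22}(\upsilon),M]+M\bigl(\alpha(\upsilon)-\delta(\upsilon)\bigr)+M^2\overline{\beta(\upsilon)}=0
\end{equation*}
for every $\upsilon\in\mf{sp}(1)$.

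The main obstacle is extracting $M=0$ and $L^\dagger L=I_k$ from the preceding displayed equation. The distinguishing feature of $\mf{h}_5$, as opposed to $\mf{h}_{3,1,1}$ or $\mf{h}_{4,1}$, is that it is the principal embedding of $\mf{sp}(1)$ into $\mf{sp}(2)\simeq\mf{spin}(5)$ whose induced action on $\mathbb{R}^5$ is irreducible; in particular $\beta\colon\mf{sp}(1)\to\mathbb{H}$ is not identically zero. I would decompose $\mathbb{R}^k$ via the induced real representation $\lambda$ from \eqref{eq:conformalsphericalsymrep2} into $\mf{sp}(1)$-isotypic components and project the displayed identity onto each summand. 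The coefficient of $\beta(\upsilon)$ and $\overline{\beta(\upsilon)}$ contributes real and pure-imaginary-quaternion scalar multiples of $I_k$ and $M^2$, whereas $[\rho_{22}(\upsilon),M]$ and $M(\alpha(\upsilon)-\delta(\upsilon))$ sit in strictly lower-weight pieces of $(\mathbb{R}^k,\lambda)\otimes_\mathbb{R}((\mathbb{R}^k)^*,\lambda^*)\otimes_\mathbb{R}(\mathbb{H},\mathrm{ad})$. Comparing coefficients of $\beta$, $\bar\beta$, and $1$ across the $i,j,k$-components of the quaternion-valued identity on each isotypic summand should separate the equation into independent constraints forcing $M^2=0$ and $L^\dagger L=I_k$; feeding these back into the displayed equation leaves $[\rho_{22}(\upsilon),M]+M(\alpha(\upsilon)-\delta(\upsilon))=0$, and a final argument using that $M$ is symmetric by Lemma~\ref{lemma:ADHMproperties} together with the Hermitian positive-semidefiniteness of $M^\dagger M$ will yield $M=0$. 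The accounting parallels Theorem~\ref{thm:simplestructure}, but with the additional twist coming from the non-zero $\beta$.

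Once $M=0$, the displayed equation collapses to $\beta(\upsilon)(I_k-L^\dagger L)=0$ for every $\upsilon$, and non-vanishing of $\beta$ forces $L^\dagger L=I_k$. Combined with the positive-definiteness of $LL^\dagger$ from Definition~\ref{def:Mstd}, this yields $n=k$ and $L\in\mathrm{Sp}(k)$. The gauge transformation $(\mathrm{diag}(L^\dagger,I_k),I_k)\in\mathrm{Sp}(2k)\times\mathrm{GL}(k,\mathbb{R})$ then brings $(\hat M,U)$ to the claimed standard form $\left(\begin{bmatrix}I_k\\0\end{bmatrix},U\right)$, completing the proof.
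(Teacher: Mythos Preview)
Your ``if'' direction and the block computation leading to the displayed identity
\[
\beta(\upsilon)\bigl(I_k-L^\dagger L\bigr)+[\rho_{22}(\upsilon),M]+M\bigl(\alpha(\upsilon)-\delta(\upsilon)\bigr)+M^2\overline{\beta(\upsilon)}=0
\]
are both correct. The gap is the paragraph that follows: you assert that an isotypic decomposition with respect to $\lambda$ lets you ``compare coefficients of $\beta$, $\bar\beta$, and $1$'' and that $[\rho_{22}(\upsilon),M]$ and $M(\alpha-\delta)$ lie in ``strictly lower-weight pieces.'' Neither claim is justified. The block $\rho_{22}$ is not itself a representation of $\mf{sp}(1)$ (only the full $\rho$ is), so it is unclear what weight grading on $\mathrm{Mat}(k,k,\mathbb{H})$ you are invoking, or why the four terms should decouple at all. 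The subsequent step---deducing $M=0$ from $M^2=0$, symmetry of $M$, and $[\rho_{22}(\upsilon),M]+M(\alpha-\delta)=0$---is also left as a promise. As written, the heart of the argument is a sketch, not a proof.

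The paper avoids these difficulties by exploiting two concrete features of $\mf{h}_5$ that you do not use. First, Theorem~\ref{thm:conformalsphericalsym} guarantees that $\lambda$ lands in $\mf{so}(k)$, so $\lambda(\upsilon)^\dagger=-\lambda(\upsilon)$; expanding this with $\lambda(\upsilon)=U^T\rho(\upsilon)U-U^TU_{\tau(\upsilon)}$ at $\upsilon_2$ and $\upsilon_3$ yields directly $M+M^\dagger=0$ and $Mi+iM=0$, forcing $M=M_1i$ with $M_1$ real symmetric. Second, since $\tau(\upsilon_1)=\mathrm{diag}(i/2,3i/2)$ is diagonal, conformal spherical symmetry contains the isometric circle action $\mathrm{diag}(e^{i\theta/2},e^{3i\theta/2})$; the methods of Theorem~\ref{thm:circularsym} then give $\tilde\rho\in\mf{so}(k)$ with $3Mi-iM+[\tilde\rho,M]=0$. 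Writing $M=M_1i$ with $M_1$ real, the real part of this equation is $-2M_1=0$, so $M=0$. Only after $M=0$ is established does the paper return to \eqref{eq:conformalsphericalsym1}--\eqref{eq:conformalsphericalsym2} at $\upsilon_2$ to read off $L^\dagger L=I_k$, essentially as in your final paragraph. The moral: the reality constraint on $\lambda$ and the embedded diagonal circle are what make the argument go through cleanly; your displayed equation alone, without these, is not enough.
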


\begin{proof}
We know that all instantons with ADHM data of this form have full symmetry and therefore have conformal spherical symmetry. It remains to show that all such instantons have ADHM data of this form.

Suppose that $\hat{M}$ has conformal spherical symmetry. Consider the Lie algebra isomorphism $\tau\colon\mathfrak{sp}(1)\rightarrow\mathfrak{h}_5$ given by
\begin{equation*}
\tau(\upsilon_1):=\begin{bmatrix}
i/2 & 0 \\ 0 & 3i/2
\end{bmatrix}, \quad
\tau(\upsilon_1):=\begin{bmatrix}
j & \sqrt{3}/2 \\ -\sqrt{3}/2 & 0 
\end{bmatrix}, \quad
\tau(\upsilon_1):=\begin{bmatrix}
k & -\sqrt{3}i/2 \\ -\sqrt{3}i/2 & 0
\end{bmatrix}.
\end{equation*}

From Lemma~\ref{lemma:conformalaction}, we have that
\begin{align*}
\left(\hat{M}_{\tau(\upsilon_1)},U_{\tau(\upsilon_1)}\right)&=\left(\frac{3}{2}\hat{M}i,\frac{i}{2}U\right),\\
\left(\hat{M}_{\tau(\upsilon_2)},U_{\tau(\upsilon_2)}\right)&=\left(-\frac{\sqrt{3}}{2}U,\frac{\sqrt{3}}{2}\hat{M}+Uj\right),\\
\left(\hat{M}_{\tau(\upsilon_3)},U_{\tau(\upsilon_3)}\right)&=\left(\frac{\sqrt{3}}{2}Ui,\frac{\sqrt{3}}{2}\hat{M}i+Uk\right).
\end{align*}

By Theorem~\ref{thm:conformalsphericalsym}, we know there is some $\rho\colon\mathfrak{sp}(1)\rightarrow\mathfrak{sp}(n+k)$ such that $\lambda\colon\mathfrak{sp}(1)\rightarrow\mathfrak{so}(k)$, given by $\lambda(\upsilon):=U^T\rho(\upsilon)U-U^TU_{\tau(\upsilon)}$, is a real representation. Thus, we see that
\begin{equation*}
-U^T\rho(\upsilon)U-\left(U_{\tau(\upsilon)}\right)^\dagger U=\lambda(\upsilon)^\dagger=-\lambda(\upsilon)=-U^T\rho(\upsilon)U+U^TU_{\tau(\upsilon)}.
\end{equation*}
Evaluating at $\upsilon=\upsilon_2$, we see that
\begin{equation*}
\frac{\sqrt{3}}{2}U^T \hat{M}+U^TUj+\frac{\sqrt{3}}{2}\hat{M}^\dagger U-jU^TU=0.
\end{equation*}
As $M$ is symmetric, we have that $M_0=0$. Evaluating at $\upsilon=\upsilon_3$, we see that 
\begin{equation*}
\frac{\sqrt{3}}{2}U^T \hat{M}i+U^T Uk-\frac{\sqrt{3}}{2}i\hat{M}^\dagger U-kU^TU=0.
\end{equation*}
Simplifying, noting that $M^\dagger=-M$, as $M_0=0$, we have that $Mi+iM=0$. Thus, $M=M_1i$. 

As $\hat{M}$ has conformal spherical symmetry, it is $\mathrm{diag}(e^{i\theta/2},e^{3i\theta/2})$-equivariant for all $\theta\in\mathbb{R}$. Using the methods from Theorem~\ref{thm:circularsym}, we have that there is some $\tilde{\rho}\in\mathfrak{so}(k)$ such that 
\begin{equation*}
3Mi-iM+[\tilde{\rho},M]=0.
\end{equation*}
As $M$ is purely imaginary, we can separate the real and imaginary parts of this equation, finding $M=0$. By Proposition~\ref{prop:M=0}, we know that $n=k$ and $L$ is positive definite and diagonal in some gauge.

Finally, we have that $\rho$ satisfies \eqref{eq:conformalsphericalsym1} and \eqref{eq:conformalsphericalsym2}. There exists $A,C\in\mathfrak{sp}(k)$ and $B\in\mathrm{Mat}(k,k,\mathbb{H})$ such that $\rho(\upsilon_2)=\begin{bmatrix}
A & B \\ -B^\dagger & C
\end{bmatrix}$. Evaluating \eqref{eq:conformalsphericalsym2} at $\upsilon=\upsilon_2$, we see that
\begin{equation*}
\begin{bmatrix}
B \\ C
\end{bmatrix}-\frac{\sqrt{3}}{2}\begin{bmatrix}
L \\ 0
\end{bmatrix}-\begin{bmatrix} 0 \\ I_kj\end{bmatrix}-\begin{bmatrix}
0 \\ \lambda(\upsilon_2)
\end{bmatrix}=0.
\end{equation*}
Thus, $B=\frac{\sqrt{3}}{2}L$. Evaluating \eqref{eq:conformalsphericalsym1} at $\upsilon=\upsilon_2$, we see that
\begin{equation*}
\begin{bmatrix}
AL \\ -B^\dagger L
\end{bmatrix}+\frac{\sqrt{3}}{2}\begin{bmatrix}
0 \\ I_k
\end{bmatrix}-\begin{bmatrix}
L\lambda(\upsilon_2) \\ 0
\end{bmatrix}=0.
\end{equation*}
Thus, $B^\dagger L=\frac{\sqrt{3}}{2}I_k$. As $B=\frac{\sqrt{3}}{2}L$, we have that $L^\dagger L=I_k$. Therefore, $L\in\mathrm{Sp}(k)$, proving the theorem.
\end{proof}

The only connected Lie subgroup of $\mathrm{Sp}(2)$ containing the conformal spherical subgroup, other than itself, is the group $\mathrm{Sp}(2)$ itself. Theorem~\ref{thm:knowconfsphersym} tells us about instantons with full symmetry as well. Note that an instanton has full symmetry if and only if it is equivariant under every element of $\mathrm{Sp}(2)$.
\begin{cor}
Let $\hat{M}\in\mathcal{M}_{n,k}$. Then $\hat{M}$ has full symmetry if and only if $n=k$ and $(\hat{M},U)$ is gauge equivalent to $\left(\begin{bmatrix}
I_k \\ 0
\end{bmatrix}, U\right)$. 
\label{cor:knowfullsym}
\end{cor}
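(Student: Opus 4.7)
The plan is to leverage Theorem~\ref{thm:knowconfsphersym} for the forward direction and to give a direct verification for the backward direction, mirroring the proof of Proposition~\ref{prop:basicsymmetries}.

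For the forward direction, I would observe that the conformal spherical subgroup, with Lie algebra $\mathfrak{h}_5$, is a connected Lie subgroup of $\mathrm{Sp}(2)$. If $\hat{M}$ has full symmetry, it is in particular equivariant under every element of this subgroup, so it has conformal spherical symmetry. Theorem~\ref{thm:knowconfsphersym} then immediately yields $n = k$ together with the stated gauge-equivalence.

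For the backward direction, the task is to check that
\begin{equation*}
(\hat{M}, U) := \left(\begin{bmatrix} I_k \\ 0 \end{bmatrix}, \begin{bmatrix} 0 \\ I_k \end{bmatrix}\right)
\end{equation*}
is equivariant under every $A = \begin{bmatrix} \alpha & \beta \\ \gamma & \delta \end{bmatrix} \in \mathrm{Sp}(2)$. Following the template of Proposition~\ref{prop:basicsymmetries}, I would set
\begin{equation*}
Q := \begin{bmatrix} \delta^\dagger I_k & -\beta^\dagger I_k \\ -\gamma^\dagger I_k & \alpha^\dagger I_k \end{bmatrix}, \qquad K := I_k,
\end{equation*}
and then verify directly, using both $A A^\dagger = I_2$ and $A^\dagger A = I_2$, that $Q \in \mathrm{Sp}(2k)$ and $(Q, K).A.(\hat{M}, U) = (\hat{M}, U)$. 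Since $A$ was arbitrary, $\hat{M}$ then has full symmetry.

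No significant obstacle is expected: the calculation is a block-scalar echo of the $k=1$ situation already handled in Proposition~\ref{prop:basicsymmetries}, and the two symplectic identities for the $2\times 2$ quaternionic matrix $A$ transfer straightforwardly to the symplectic condition on $Q$ by virtue of its block-scalar form. Alternatively, one can simply invoke the remark accompanying Proposition~\ref{prop:M=0} that the class with $L \in \mathrm{Sp}(k)$ has full symmetry, applied to $L = I_k$.
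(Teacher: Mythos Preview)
Your proposal is correct and follows essentially the same approach as the paper: the forward direction is identical (full symmetry $\Rightarrow$ conformal spherical symmetry $\Rightarrow$ apply Theorem~\ref{thm:knowconfsphersym}), and for the backward direction the paper simply invokes the already-established fact that such ADHM data has full symmetry, which is precisely what your explicit $Q$-construction (or your alternative appeal to the remark after Proposition~\ref{prop:M=0}) spells out.
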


\begin{proof}
We know that such instantons have full symmetry. Conversely, if an instanton has full symmetry, then it has conformal spherical symmetry. Theorem~\ref{thm:knowconfsphersym} tells us the form that the ADHM data must take.
\end{proof}

\section{Superspherical symmetry}\label{subsec:SuperSphericalSymmetry}

In this section, we find equations describing all instantons with two kinds of superspherical symmetry, as given in Table~\ref{table:conformalsubgroups}. We separate the cases of isoclinic and conformal superspherical symmetry. We also discuss the connections between these symmetric instantons and symmetric hyperbolic monopoles, singular monopoles, and hyperbolic analogues to Higgs bundles.

\begin{note}
All connected Lie subgroups of $\mathrm{Sp}(2)$ with Lie algebra $\mathfrak{sp}(1)\oplus\mathbb{R}$ are conjugate to the connected Lie subgroups $G_{3,1,1}$ or $G_{4,1}$ induced by $\mathfrak{p}_{3,1,1}$ and $\mathfrak{p}_{4,1}$, respectively, defined in \eqref{eq:sp1s1subalgebras}. That is, there is some $A\in\mathrm{Sp}(2)$ and $G$ one of the subgroups $G_{3,1,1}$ or $G_{4,1}$ such that the Lie group is of the form $AGA^\dagger$. Instantons equivariant under this group are of the form $A^\dagger.(\hat{M},U)$, where $\hat{M}$ is equivariant under $G$.
\end{note}

\subsection{Isoclinic superspherical symmetry}

In this section, we find an equation describing all instantons with isoclinic superspherical symmetry, as given in Table~\ref{table:conformalsubgroups} as well as below. We also discuss the connection between these instantons and symmetric singular monopoles. Note that a representation $\rho\colon\mathfrak{sp}(1)\oplus\mathbb{R}\rightarrow\mathfrak{so}(k)$ corresponds to a representation of $\mathrm{Sp}(1)\times S^1$ if and only if $\mathrm{exp}(\rho(0,2\pi))=I_k$. 

First, we introduce the notion of isoclinic superspherical symmetry. An instanton is said to have isoclinic superspherical symmetry if it is equivariant under $\mathrm{diag}(p,e^{it})$ for all $p\in\mathrm{Sp}(1)$ and $t\in\mathbb{R}$.
\begin{theorem}
Let $\hat{M}\in\mathcal{M}_{n,k}$. Then $\hat{M}$ has isoclinic superspherical symmetry if and only if there exists a real representation $\rho\colon\mathfrak{sp}(1)\oplus\mathbb{R}\rightarrow\mathfrak{so}(k)$, corresponding to a representation of $\mathrm{Sp}(1)\times S^1$, such that for all $\upsilon\in\mathfrak{sp}(1)$ and $t\in\mathbb{R}$,
\label{thm:isoclinicsupersphericalsym}
\begin{align}
\upsilon M+[\rho(\upsilon,t),M]-2tMi=0,\label{eq:isoclinicsupersphericalsym1}\\
[\rho(\upsilon,t),R]=0.\label{eq:isoclinicsupersphericalsym2}
\end{align}
\end{theorem}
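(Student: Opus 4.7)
The plan is to mirror the template of Theorem~\ref{thm:isoclinicsphericalsym} (for the $\mathfrak{sp}(1)$ factor) fused with the toral argument of Theorem~\ref{thm:toralsym} (for the $\mathbb{R}$ factor), exploiting that $\mathrm{Sp}(1)\times S^1$ is a compact, connected Lie group commuting with the gauge action. Once the gauge group has been reduced to a compact subgroup via Corollary~\ref{cor:IsomS}, Theorem~\ref{thm:mainthm} applies directly.

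First, I would set up the stabilizer $S\subseteq\mathrm{Sp}(1)\times S^1\times\mathrm{O}(k)$ of $(\hat{M},U)$, consisting of triples $((p,e^{it}),K)$ for which the unique gauge transformation produced by Corollary~\ref{cor:IsomS}---with $Q=\mathrm{diag}(Le^{-it}KL^\dagger(LL^\dagger)^{-1},p^\dagger K)$ and the automatic constraints $K\in\mathrm{O}(k)$ and $[R,K]=0$---realises the equivariance under $\mathrm{diag}(p,e^{it})$. As in the proofs of Theorems~\ref{thm:simplesphericalsym} and~\ref{thm:isoclinicsphericalsym}, $S$ is the preimage of a point under a smooth map, hence a closed, compact Lie subgroup of the compact ambient group.

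Next I would invoke Theorem~\ref{thm:mainthm} with $\mathcal{S}=\mathrm{Sp}(1)\times S^1$ and the reduced compact gauge group $\mathrm{O}(k)$, obtaining a Lie algebra homomorphism $\rho\colon\mathfrak{sp}(1)\oplus\mathbb{R}\to\mathfrak{so}(k)$. Because Theorem~\ref{thm:mainthm} produces $\rho$ by differentiating an honest Lie group homomorphism out of the compact $S$, it integrates to a representation of $\mathrm{Sp}(1)\times S^1$, yielding exactly the periodicity $\exp\rho(0,2\pi)=I_k$ asserted in the statement. I would then unpack the abstract equation using Lemma~\ref{lemma:conformalaction} for the infinitesimal conformal action on the bottom block, together with the differentiated form of Corollary~\ref{cor:IsomS} for the induced gauge generator; after rearrangement, the bottom-row identity becomes~\eqref{eq:isoclinicsupersphericalsym1}, and the commutation $[R,K(\theta)]=0$ differentiates to~\eqref{eq:isoclinicsupersphericalsym2}, with the top-row identity automatic given these two.

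For the converse, I would mirror the closing construction of the proof of Theorem~\ref{thm:circularsym}. Given $\rho$ satisfying both equations, set $K(\theta):=e^{\theta\rho(\upsilon,t)}\in\mathrm{O}(k)$ and form
\begin{equation*}
A(\theta):=e^{-\theta\upsilon}K(\theta)MK(\theta)^Te^{-2it\theta};
\end{equation*}
equation~\eqref{eq:isoclinicsupersphericalsym1} forces $A'(\theta)=0$, so $A\equiv M$, while~\eqref{eq:isoclinicsupersphericalsym2} gives $[K(\theta),R]=0$. Lemma~\ref{lemma:focusonY} then supplies the explicit gauge transformation realising $\mathrm{diag}(e^{\theta\upsilon},e^{2it\theta})$-equivariance at each $\theta$, and sweeping over $(\upsilon,t)\in\mathfrak{sp}(1)\oplus\mathbb{R}$ yields the full isoclinic superspherical equivariance. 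The main obstacle I expect is tracking the factor of $2$ in the term $-2tMi$, whose origin is the specific embedding $\mathfrak{p}_{4,1}\hookrightarrow\mathfrak{sp}(2)$ fixed in Appendix~\ref{appendix:contsubgroups}; every conformal generator, every form of $Q$ supplied by Corollary~\ref{cor:IsomS}, and every exponent appearing in $A(\theta)$ must be kept consistent with that embedding and with the paper's mixed left/right action conventions. Once those conventions are pinned down, the rest of the argument is a mechanical fusion of the spherical- and toral-symmetry templates already established.
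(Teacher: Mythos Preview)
Your proposal follows the paper's template closely and the overall architecture---reduce to the compact gauge group $\mathrm{O}(k)$ via Corollary~\ref{cor:IsomS}, invoke Theorem~\ref{thm:mainthm}, read off the bottom-row equation and the commutator with $R$, and run the exponential argument of Theorem~\ref{thm:circularsym} for the converse---is correct and matches the paper.

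There is, however, one genuine gap. You assert that ``Theorem~\ref{thm:mainthm} produces $\rho$ by differentiating an honest Lie group homomorphism out of the compact $S$, [so] it integrates to a representation of $\mathrm{Sp}(1)\times S^1$.'' Theorem~\ref{thm:mainthm} only guarantees a \emph{Lie algebra} homomorphism $\rho\colon\mathfrak{sp}(1)\oplus\mathbb{R}\to\mathfrak{so}(k)$; it does not claim that $\rho$ arises from a group homomorphism on $\mathcal{S}$. Concretely, the stabilizer $S$ yields a surjection $\pi_1\colon S\to\mathcal{S}$, and a Lie algebra splitting of $d\pi_1$ exists because $\mathrm{Lie}(S)$ is reductive, but the corresponding connected subgroup of $S$ only covers $\mathcal{S}$; since $\mathrm{Sp}(1)\times S^1$ is not simply connected, there is no reason the splitting descends to a group section. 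In particular, nothing forces $\exp\rho(0,2\pi)=I_k$.

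The paper handles this exactly as it did in the toral and circular cases: after obtaining the Lie algebra homomorphism from Theorem~\ref{thm:mainthm}, it explicitly invokes the modification argument of Propositions~\ref{prop:circulartsymrep} and~\ref{prop:toralsymrep} on the $\mathbb{R}$ factor $t\mapsto\rho(0,t)$ to replace $\rho$ by one satisfying the periodicity condition while preserving \eqref{eq:isoclinicsupersphericalsym1} and \eqref{eq:isoclinicsupersphericalsym2}. You should cite that step rather than claim automatic integrability.
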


\begin{definition}
We call $\rho$ the \textbf{generating representation} of the isoclinic superspherical symmetry of $\hat{M}$.
\end{definition}

\begin{proof}
We use the same notation as introduced in the beginning of the proof of Theorem~\ref{thm:circularsym}. We follow the proof of Theorem~\ref{thm:mainthm} after setting the scene.

Suppose that $\hat{M}$ has isoclinic superspherical symmetry. Let $S\subseteq \mathrm{Sp}(1)\times S^1\times\mathrm{O}(k)$ be the stabilizer group of $(\hat{M},U)$ restricted to rotations in $\mathrm{Sp}(1)\times S^1\subseteq\mathrm{Sp}(2)$. That is
\begin{multline}
S:=\{(\mathrm{diag}(p,e^{i\theta}),K)\mid (\mathrm{diag}(Le^{2i\theta}KL^\dagger (LL^\dagger)^{-1},p K),K).\mathrm{diag}(p^\dagger,e^{-2i\theta}).(\hat{M},U)\\
=(\hat{M},U)\}.\label{eq:isoclinicsupersphericalsymS}
\end{multline}
Indeed, if $(Q,K)\in\mathrm{Sp}(n+k)\times\mathrm{GL}(k,\mathbb{R})$ such that $\mathrm{diag}(p^\dagger,e^{-2i\theta}).(Q,K).(\hat{M},U)=(\hat{M},U)$, then Corollary~\ref{cor:IsomS} tells us that $K\in\mathrm{O}(k)$, $[R,K]=0$, and $Q$ is given by the diagonal matrix in \eqref{eq:isoclinicsupersphericalsymS}. Therefore, the pairs in $S$ encapsulate all the isoclinic superspherical symmetry of the instanton. That $S$ is a group follows from this fact as well as the fact that the conformal action is a right Lie group action, just as in the proof of Theorem~\ref{thm:simplesphericalsym}.

Just like our previous symmetries, other than circular $t$-symmetry, the stabilizer group is a subgroup of a compact group, so we proceed as in the proof of Theorem~\ref{thm:mainthm}~\cite[Theorem~1.1]{lang_moduli_2024}. In particular, we find that $\hat{M}$ has isoclinic superspherical symmetry if and only if there is a Lie algebra homomorphism $\rho\colon\mathfrak{sp}(1)\oplus\mathbb{R}\rightarrow\mathfrak{so}(k)$ such that for all $\upsilon\in\mathfrak{sp}(1)$ and $t\in\mathbb{R}$, we have
\begin{equation}
\left(\begin{bmatrix}
2tLiL^\dagger (LL^\dagger)^{-1}L+L\rho(\upsilon,t)L^\dagger (LL^\dagger)^{-1}L-L\rho(\upsilon,t)-2tLi \\
\upsilon M+[\rho(\upsilon,t),M]-2tMi
\end{bmatrix},0\right)=(0,0).\label{eq:isoclinicsupersphericalsymdiff}
\end{equation}

We can simplify these constraints and be more specific with our homomorphism. In particular, suppose $\hat{M}$ has isoclinic superspherical symmetry. Focusing on the bottom row, we see that $\upsilon M+[\rho(\upsilon,t),M]-2tMi=0$. Furthermore, as mentioned above, Corollary~\ref{cor:IsomS} tells us that $[R,e^{\theta\rho(\upsilon,t)}]=0$, for all $\theta\in\mathbb{R}$. Differentiating and evaluating at $\theta=0$, we have $[R,\rho(\upsilon,t)]=0$. 

Moreover, looking at the map $t\mapsto \rho(0,t)$, we can proceed as in Proposition~\ref{prop:circulartsymrep} and Proposition~\ref{prop:toralsymrep} to find that without loss of generality, we may assume that $\rho$ corresponds to a representation of $\mathrm{Sp}(1)\times S^1$.

We can use the same method as in the proof of Theorem~\ref{thm:circularsym} to prove the converse.
\end{proof}

Above, we see that isoclinic superspherical symmetry is generated by a representation $(\mathbb{R}^k,\rho)$, corresponding to a representation of $\mathrm{Sp}(1)\times S^1$. This representation was obtained by examining the bottom of \eqref{eq:isoclinicsupersphericalsymdiff}. By focusing instead on the top of this equation, we obtain an induced representation.
\begin{lemma}
Suppose that $\hat{M}\in\mathcal{M}_{n,k}$ has isoclinic superspherical symmetry generated by a representation $(\mathbb{R}^k,\rho)$, which corresponds to a representation of $\mathrm{Sp}(1)\times S^1$. Let $\lambda\colon\mathfrak{sp}(1)\oplus\mathbb{R}\rightarrow\mathfrak{sp}(n)$ be defined by $\lambda(\upsilon,t):=L\left(\rho(\upsilon,t)+2ti\right)L^\dagger (LL^\dagger)^{-1}$. We have $(\mathbb{H}^n,\lambda)$ is a quaternionic representation of $\mathfrak{sp}(1)\oplus\mathbb{R}$, corresponding to a representation of $\mathrm{Sp}(1)\times S^1$.
\end{lemma}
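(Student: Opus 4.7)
The plan is to check three things in order: (i) $\lambda(\upsilon,t)\in\mathfrak{sp}(n)$, so $(\mathbb{H}^n,\lambda)$ is a well-defined linear map into the symplectic algebra; (ii) $\lambda$ respects the bracket on $\mathfrak{sp}(1)\oplus\mathbb{R}$; and (iii) its restriction to the $\mathbb{R}$-summand exponentiates with the correct period, so that together with the simple connectedness of $\mathrm{Sp}(1)$ we obtain an $\mathrm{Sp}(1)\times S^1$-representation. Throughout I would abbreviate $\sigma(\upsilon,t):=\rho(\upsilon,t)+2ti$, so $\lambda(\upsilon,t)=L\sigma(\upsilon,t)L^\dagger(LL^\dagger)^{-1}$; linearity of $\lambda$ in $(\upsilon,t)$ is then immediate, and since $\rho(\upsilon,t)\in\mathfrak{so}(k)$ and $i^\dagger=-i$, one has $\sigma(\upsilon,t)^\dagger=-\sigma(\upsilon,t)$.

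The central algebraic step, mirroring the calculation in the preceding toral lemma, is to establish $[L^\dagger L,\sigma(\upsilon,t)]=0$. First I would take the Hermitian adjoint of the bottom-row equation $[\rho(\upsilon,t),M]=2tMi-\upsilon M$ from \eqref{eq:isoclinicsupersphericalsym1} to obtain $[\rho(\upsilon,t),M^\dagger]=-2tiM^\dagger+M^\dagger\upsilon$, then expand $M^\dagger M\rho-\rho M^\dagger M$ using both identities. The $\upsilon$-contributions cancel, leaving $-2t[M^\dagger M,i]$, which vanishes because $M^\dagger M$ is real. Combined with $[R,\rho(\upsilon,t)]=0$ from \eqref{eq:isoclinicsupersphericalsym2} and the reality of $L^\dagger L$, this yields $[L^\dagger L,\sigma(\upsilon,t)]=0$, and hence $[LL^\dagger,L\sigma(\upsilon,t)L^\dagger]=0$. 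Step (i) now follows: this commutator lets me rewrite $\lambda=L\sigma L^\dagger(LL^\dagger)^{-1}=(LL^\dagger)^{-1}L\sigma L^\dagger$, and $\sigma^\dagger=-\sigma$ then gives $\lambda^\dagger=-\lambda$.

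For step (ii), writing $\sigma_a:=\sigma(\upsilon_a,t_a)$ and $\lambda_a:=\lambda(\upsilon_a,t_a)$, the hard part is to collapse $\lambda_1\lambda_2$. I would introduce the self-adjoint projection $P:=L^\dagger(LL^\dagger)^{-1}L$, which satisfies $PL^\dagger=L^\dagger$; since $\sigma_a$ commutes with $L^\dagger L$ it preserves $\ker L=\ker L^\dagger L$, and its anti-Hermiticity then forces preservation of $(\ker L)^\perp$, so $\sigma_a$ commutes with $P$. Thus $\lambda_1\lambda_2=L\sigma_1P\sigma_2L^\dagger(LL^\dagger)^{-1}=L\sigma_1\sigma_2L^\dagger(LL^\dagger)^{-1}$, whence $[\lambda_1,\lambda_2]=L[\sigma_1,\sigma_2]L^\dagger(LL^\dagger)^{-1}$. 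Because $\rho$ is real-valued it commutes with the scalar $i$, so $[\sigma_1,\sigma_2]=[\rho_1,\rho_2]=\rho([\upsilon_1,\upsilon_2],0)$; the $\mathbb{R}$-component of the $\mathfrak{sp}(1)\oplus\mathbb{R}$ bracket is zero, so this matches $\lambda([(\upsilon_1,t_1),(\upsilon_2,t_2)])$ exactly.

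For step (iii), simple connectedness of $\mathrm{Sp}(1)$ reduces the periodicity check to the $S^1$ factor. The commutation of $LL^\dagger$ with $L\sigma L^\dagger$ gives $e^{\lambda(0,t)}=Le^{\rho(0,t)+2ti}L^\dagger(LL^\dagger)^{-1}$; since $\rho(0,t)$ is real it commutes with $i$, so $e^{\rho(0,2\pi)+4\pi i}=e^{\rho(0,2\pi)}e^{4\pi i}=I_k\cdot 1=I_k$, using the hypothesis that $\rho$ corresponds to a representation of $\mathrm{Sp}(1)\times S^1$. Hence $e^{\lambda(0,2\pi)}=LL^\dagger(LL^\dagger)^{-1}=I_n$. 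The main obstacle is step (ii): the inhomogeneous $2ti$ correction to $\sigma$ mixes nontrivially with $\rho$ inside products, and only after passing through the projection $P$ do the cross-terms collapse so that the bracket cleanly reduces to the commutator of the $\rho$'s.
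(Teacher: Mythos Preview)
Your overall strategy matches the paper's proof closely: establish $[L^\dagger L,\sigma(\upsilon,t)]=0$ via $[M^\dagger M,\sigma]=0$ and $[R,\sigma]=0$, deduce $\lambda\in\mathfrak{sp}(n)$, and then exponentiate to check the $S^1$-periodicity. Your step~(ii) is more explicit than the paper, which leaves the bracket check implicit here (it is spelled out in the analogous isoclinic spherical lemma using $[L^\dagger L,\rho]=0$ directly rather than your projection $P$; the two arguments are equivalent).

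There is one genuine slip in step~(i). You compute $[M^\dagger M,\rho(\upsilon,t)]=-2t[M^\dagger M,i]$ and then assert this vanishes ``because $M^\dagger M$ is real.'' That claim is unjustified: for quaternionic $M$ the matrix $M^\dagger M$ need not be real (only $R=L^\dagger L+M^\dagger M$ is real by hypothesis, and $L^\dagger L$ need not be real either). Fortunately you do not need it. Your own identity already says
\[
[M^\dagger M,\sigma(\upsilon,t)]=[M^\dagger M,\rho(\upsilon,t)]+2t[M^\dagger M,i]=-2t[M^\dagger M,i]+2t[M^\dagger M,i]=0,
\]
which is exactly the paper's conclusion $[M^\dagger M,\rho(\upsilon,t)+2ti]=0$. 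So drop the ``$M^\dagger M$ is real'' line and conclude $[M^\dagger M,\sigma]=0$ directly; the rest of your argument then goes through unchanged.
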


\begin{proof}
By Theorem~\ref{thm:isoclinicsupersphericalsym}, we know that for all $t\in\mathbb{R}$ and $\upsilon\in\mathfrak{sp}(1)$, \eqref{eq:isoclinicsupersphericalsym1} and \eqref{eq:isoclinicsupersphericalsym2} hold. Using the former,
\begin{equation*}
\begin{aligned}
M^\dagger M\left(\rho(\upsilon,t)+2ti\right)&=M^\dagger[M,\rho(\upsilon,t)]+M^\dagger \rho(\upsilon,t) M+2tM^\dagger Mi\\
&=M^\dagger \rho(\upsilon,t) M+ M^\dagger \upsilon M\\
&=[M^\dagger,\rho(\upsilon,t)]M+\rho(\upsilon,t) M^\dagger M+M^\dagger \upsilon M\\
&=\left(\rho(\upsilon,t)+2ti\right)M^\dagger M.
\end{aligned}
\end{equation*}
Hence, $\left[M^\dagger M,\rho(\upsilon,t)+2ti\right]=0$. Then, as $[\rho(\upsilon,t),R]=0$, by \eqref{eq:isoclinicsupersphericalsym2}, and $R$ is real, we have $\left[\rho(\upsilon,t)+2ti,R\right]=0$, so $\left[L^\dagger L,\rho(\upsilon,t)+2ti\right]=0$. Thus, we see that 
\begin{equation*}
\left[LL^\dagger ,L\left(\rho(\upsilon,t)+2ti\right)L^\dagger\right]=0.
\end{equation*} 
Therefore, $\lambda(\upsilon,t)\in\mathfrak{sp}(n)$. 

Just as in the toral symmetry case, we have that 
\begin{equation*}
e^{\lambda(\upsilon,t)}=Le^{\rho(\upsilon,t)+2ti}L^\dagger (LL^\dagger)^{-1}.
\end{equation*}
As $\rho$ and $i$ commute, we have that as $(\mathbb{R}^k,\rho)$ corresponds to a representation of $\mathrm{Sp}(1)\times S^1$, $e^{\lambda(0,2\pi)}=I_n$. Therefore, $(\mathbb{H}^n,\lambda)$ corresponds to a representation of $\mathrm{Sp}(1)\times S^1$.
\end{proof}

We know that this induced representation satisfies the following equation.
\begin{cor}
If $\hat{M}\in\mathcal{M}_{n,k}$ has isoclinic superspherical symmetry generated by $(\mathbb{R}^k,\rho)$, then $\lambda(\upsilon,t)L-L\rho(\upsilon,t)-2tLi=0$, for all $t\in\mathbb{R}$ and $\upsilon\in\mathfrak{sp}(1)$.
\end{cor}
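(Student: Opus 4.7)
The plan is to simply unpack the top component of the equation \eqref{eq:isoclinicsupersphericalsymdiff} that underlies Theorem~\ref{thm:isoclinicsupersphericalsym} and recognize $\lambda$ inside it. Recall that this equation reads
\begin{equation*}
2tLiL^\dagger (LL^\dagger)^{-1}L+L\rho(\upsilon,t)L^\dagger (LL^\dagger)^{-1}L-L\rho(\upsilon,t)-2tLi=0,
\end{equation*}
for every $t\in\mathbb{R}$ and $\upsilon\in\mathfrak{sp}(1)$, provided $\hat{M}$ has isoclinic superspherical symmetry with generating representation $\rho$.

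First I would collect the two terms that end in $L$ on the left-hand side and factor on the right to obtain
\begin{equation*}
\left(L\rho(\upsilon,t)L^\dagger (LL^\dagger)^{-1}+2tLiL^\dagger (LL^\dagger)^{-1}\right)L=L\rho(\upsilon,t)+2tLi.
\end{equation*}
Next I would combine the two summands in the parenthesis by pulling out $L$ on the left and $L^\dagger (LL^\dagger)^{-1}$ on the right, rewriting the coefficient as $L(\rho(\upsilon,t)+2ti)L^\dagger (LL^\dagger)^{-1}$. By the very definition of $\lambda$ in the preceding lemma, this coefficient is exactly $\lambda(\upsilon,t)$, and so
\begin{equation*}
\lambda(\upsilon,t)L=L\rho(\upsilon,t)+2tLi,
\end{equation*}
which is the claimed identity after transposition of terms.

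There is no meaningful obstacle: the argument is a one-line rearrangement combined with pattern-matching against the definition of $\lambda$. The only point worth flagging is that one must be consistent with the sign convention chosen in the definition of $\lambda$ (the $+2ti$, as opposed to the $-2ti$ that appears in the symmetry equation for $M$), but this is already baked into the statement of the preceding lemma.
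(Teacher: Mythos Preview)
Your proposal is correct and takes essentially the same approach as the paper: the paper's proof simply states that the result follows from the top component of \eqref{eq:isoclinicsupersphericalsymdiff}, and you have spelled out the one-line algebraic rearrangement that makes this explicit, together with the identification of the coefficient as $\lambda(\upsilon,t)$.
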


\begin{proof}
This result follows from the top component of \eqref{eq:isoclinicsupersphericalsymdiff}.
\end{proof}

\begin{note}
Just as with our previous symmetries, we need not check the final condition of $\mathcal{M}_{n,k}$ in Definition~\ref{def:Mstd} everywhere. Indeed, if $\hat{M}$ satisfies \eqref{eq:isoclinicsupersphericalsym1} and \eqref{eq:isoclinicsupersphericalsym2} for some real representation $(\mathbb{R}^k,\rho)$, then it satisfies \eqref{eq:isoclinicsphericalsym1} and \eqref{eq:isoclinicsphericalsym2} for some real representation of $\mathfrak{sp}(1)$. Thus, just as in the isoclinic spherical symmetry case, we need only check that the final condition of $\mathcal{M}_{n,k}$ in Definition~\ref{def:Mstd} is satisfied for all $x=x_0\in\mathbb{H}$ with $x_0\geq 0$. 
\end{note}

\begin{note}
The Lie algebra of isoclinic superspherical symmetry is comprised of the Lie algebras for isoclinic spherical symmetry and the Lie algebra of a commuting isoclinic circular $0$-symmetry. Therefore, an instanton has isoclinic superspherical symmetry if and only if it has isoclinic spherical symmetry and it is symmetric under this additional commuting circle action. This viewpoint is beneficial, as we can use the work done in Section~\ref{subsec:IsoclinicSpherical} to narrow down our search for isoclinic superspherical symmetry.
\end{note}

\begin{cor}
Let $\hat{M}\in\mathcal{M}_{n,k}$. Then $\hat{M}$ has isoclinic superspherical symmetry if and only if there is some $\rho\colon\mathfrak{sp}(1)\rightarrow \mathfrak{so}(k)$ and $\tau\in\mathfrak{so}(k)$ such that for all $\upsilon\in\mathfrak{sp}(1)$,
\begin{align}
\upsilon M+[\rho(\upsilon),M]&=0,\\
-Mi+[\tau,M]&=0, \\
[\rho(\upsilon),\tau]=[\rho(\upsilon),R]=[\tau,R]&=0.
\end{align}
\end{cor}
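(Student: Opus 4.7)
The plan is to recognize this corollary as a reformulation of Theorem~\ref{thm:isoclinicsupersphericalsym} that exploits the direct-sum decomposition $\mathfrak{sp}(1) \oplus \mathbb{R}$ of the Lie algebra of the isoclinic superspherical group. The bookkeeping is essentially just projecting a single representation onto its two summands.

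For the forward direction, I would invoke Theorem~\ref{thm:isoclinicsupersphericalsym} to obtain a representation $\rho' \colon \mathfrak{sp}(1) \oplus \mathbb{R} \to \mathfrak{so}(k)$ satisfying \eqref{eq:isoclinicsupersphericalsym1} and \eqref{eq:isoclinicsupersphericalsym2}, and then set $\rho(\upsilon) := \rho'(\upsilon, 0)$ and $\tau := \tfrac{1}{2}\rho'(0, 1)$. Evaluating \eqref{eq:isoclinicsupersphericalsym1} at $t = 0$ yields $\upsilon M + [\rho(\upsilon), M] = 0$, while evaluating at $\upsilon = 0$, $t = 1$ yields $[\rho'(0, 1), M] = 2Mi$, which rearranges to $-Mi + [\tau, M] = 0$. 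Equation \eqref{eq:isoclinicsupersphericalsym2} similarly separates into $[\rho(\upsilon), R] = 0$ and $[\tau, R] = 0$. Finally, $[\rho(\upsilon), \tau] = \tfrac{1}{2}\rho'([(\upsilon, 0), (0, 1)]) = 0$ since $[(\upsilon, 0), (0, 1)] = 0$ in $\mathfrak{sp}(1) \oplus \mathbb{R}$.

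For the backward direction, given $\rho$ and $\tau$ as in the corollary, I would define $\rho'(\upsilon, t) := \rho(\upsilon) + 2t\tau$ and verify that this is a Lie algebra homomorphism. Linearity is immediate, and the bracket condition reduces to $[\rho(\upsilon_1), \rho(\upsilon_2)] = \rho([\upsilon_1, \upsilon_2])$ once the cross-terms are killed using $[\rho(\upsilon), \tau] = 0$. Substituting into the left-hand side of \eqref{eq:isoclinicsupersphericalsym1} gives $\upsilon M + [\rho(\upsilon), M] + 2t[\tau, M] - 2tMi$, which vanishes by the first two equations of the corollary; likewise \eqref{eq:isoclinicsupersphericalsym2} follows from $[\rho(\upsilon), R] = [\tau, R] = 0$. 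Theorem~\ref{thm:isoclinicsupersphericalsym} then delivers the desired symmetry.

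The only subtlety, and the main thing worth flagging, is that Theorem~\ref{thm:isoclinicsupersphericalsym} asks for $\rho'$ to correspond to a representation of $\mathrm{Sp}(1) \times S^1$, whereas the $\tau$ furnished by the corollary need not exponentiate to a bona fide circle action. However, inspection of the proof of Theorem~\ref{thm:isoclinicsupersphericalsym} shows that the $\mathrm{Sp}(1) \times S^1$ correspondence is a normalization achieved after the equivariance has already been established (via the techniques of Propositions~\ref{prop:circulartsymrep} and~\ref{prop:toralsymrep}) and is not required in the converse direction---exactly as in the proof of Theorem~\ref{thm:circularsym}, the equivariance can be recovered from the linear equations alone. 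Once this observation is made, no genuine obstacle remains.
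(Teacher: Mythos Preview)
Your proposal is correct and matches the paper's approach exactly: the paper's proof is the single sentence ``This result follows from Theorem~\ref{thm:isoclinicsupersphericalsym}, as we have merely decomposed the representation of $\mathfrak{sp}(1)\oplus\mathbb{R}$,'' and you have simply unpacked what that decomposition entails. Your observation about the $\mathrm{Sp}(1)\times S^1$ normalization being unnecessary in the converse direction is accurate and handles the only point requiring care.
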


\begin{proof}
This result follows from Theorem~\ref{thm:isoclinicsupersphericalsym}, as we have merely decomposed the representation of $\mathfrak{sp}(1)\oplus\mathbb{R}$.
\end{proof}

Proposition~\ref{prop:RotInstEx} provides us with an example of an instanton with rotational symmetry. Such an instanton also possesses isoclinic superspherical symmetry.

\subsubsection{Spherically symmetric singular monopoles}\label{subsubsec:SpherSymSingularMono}

In this section, we discuss the connections between instantons with isoclinic superspherical symmetry and spherically symmetric singular monopoles with Dirac type singularities. Note that the conformal superspherical subgroup contains no subgroup conjugate to $R_0$, so it does not lead to any singular monopoles.

\begin{theorem}
A $\mathrm{Sp}(n)$ singular monopole with Dirac type singularities is spherically symmetric if and only if its ADHM data has isoclinic superspherical symmetry.
\end{theorem}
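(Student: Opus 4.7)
The plan is to adapt the argument of Proposition~\ref{prop:axialsymsingularmono} from axial to spherical symmetry, using the isoclinic superspherical subgroup $\{\mathrm{diag}(e^{i\theta},p)\mid \theta\in\mathbb{R}, p\in\mathrm{Sp}(1)\}\subseteq\mathrm{Sp}(2)$ as the natural lift of $\mathrm{SO}(3)$-symmetry across the Hopf fibration. Observe that under the conformal action of Definition~\ref{def:conformalactionR4}, $\mathrm{diag}(e^{i\theta},p)$ acts on $\mathbb{H}\setminus\{0\}$ by $f_{\theta,p}(x):=e^{i\theta}xp^\dagger$. This is the correct family because its $\theta=0$ slice is the $\mathrm{Sp}(1)$ factor whose image under the Hopf map is exactly the rotation $R_p$ by Lemma~\ref{lemma:symsingularmonopoles}, while its $p=1$ slice is precisely the $R_0$ circle action of Section~\ref{subsubsec:SingularMonopoles}.

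The two technical facts I will need are (i) $f_{\theta,p}^*\xi=\xi$, where $\xi$ is the one-form of \eqref{eq:defxi}, and (ii) $\pi\circ f_{\theta,p}=R_p\circ\pi$. Fact (ii) is immediate from Lemma~\ref{lemma:symsingularmonopoles}. For fact (i), I will note that $f_{\theta,p}$ is an isometry of the flat metric on $\mathbb{H}\setminus\{0\}$ (it is a unit quaternion acting by multiplication on either side), and that its differential sends $\frac{\partial}{\partial\theta}\bigr|_x$ to $\frac{\partial}{\partial\theta}\bigr|_{f_{\theta,p}(x)}$ because left multiplication by $e^{i\phi}$ commutes with $f_{\theta,p}$; combining these with the identity $\xi(V)=2\langle V,\tfrac{\partial}{\partial\theta}\rangle$ from the proof of Proposition~\ref{prop:singularmonoconst} gives invariance.

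Given these facts, I would run the equivalence in both directions exactly as in the proof of Proposition~\ref{prop:axialsymsingularmono}. For the forward direction, assume $(\Phi,A)$ is spherically symmetric, so that for each $p\in\mathrm{Sp}(1)$ there is a gauge $g_p$ with $g_p.\Phi=R_p^*\Phi$ and $g_p.A=R_p^*A$; then using $\mathbb{A}=\pi^*A+\pi^*\Phi\otimes\xi$ together with (i) and (ii) shows $f_{\theta,p}^*\mathbb{A}=\pi^*(g_p).\mathbb{A}$, so $\mathbb{A}$ is equivariant under every $\mathrm{diag}(e^{i\theta},p)$, i.e.\ has isoclinic superspherical symmetry, and by Theorem~\ref{thm:isoclinicsupersphericalsym} the ADHM data does too. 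For the converse, equivariance of $\mathbb{A}$ under $f_{\theta,p}$ gives, by evaluating on the horizontal lifts $\pi^*\partial_l$ and using $\xi(\pi^*\partial_l)=0$, the two separate equivariances $R_p^*A=g_p.A$ and $R_p^*\Phi=g_p.\Phi$, where the gauge descends because everything except $g_p$ is already manifestly $R_0$-invariant by the monopole/instanton correspondence.

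The main obstacle, in the sense of needing care rather than new ideas, is the gauge descent in the converse direction: one must verify that the gauge transformation implementing $f_{\theta,p}^*\mathbb{A}=g.\mathbb{A}$ is independent of $\theta$ (so that it pulls back from $\mathfrak{sp}(1)\setminus\{0\}$) and depends smoothly on $p$. This is handled exactly as in Proposition~\ref{prop:axialsymsingularmono}: the $\theta$-direction of the equivariance equation already encodes the $R_0$-invariance of $\mathbb{A}$, while the horizontal components decouple because $\xi\wedge\pi^*\omega$ and $\pi^*\omega'$ are linearly independent for any one-forms $\omega,\omega'$ on $\mathfrak{sp}(1)\setminus\{0\}$. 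Once this is done, the theorem follows, and I will remark that conformal superspherical symmetry does not enter since, as noted in the paper, its Lie algebra $\mathfrak{p}_{3,1,1}$ contains no subgroup conjugate to $R_0$ and hence does not correspond to any singular monopole.
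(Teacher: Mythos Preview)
Your proposal is correct and follows exactly the approach the paper takes: the paper's proof is the single sentence ``The proof follows from the proof of Proposition~\ref{prop:axialsymsingularmono}, changing from $e^{\upsilon\phi}$ to $p$,'' and you have faithfully expanded this by replacing the one-parameter family $e^{\upsilon\phi}$ with all of $\mathrm{Sp}(1)$ and rerunning the argument using Lemma~\ref{lemma:symsingularmonopoles} and the invariance of $\xi$. One small remark: the subgroup $\{\mathrm{diag}(e^{i\theta},p)\}$ you work with is not the paper's standard representative of the isoclinic superspherical class (which is $\{\mathrm{diag}(p,e^{it})\}$, see $\mathfrak{p}_{4,1}$ in \eqref{eq:sp1s1subalgebras}), but rather a conjugate of it---this is fine since the paper defines symmetry up to conjugacy, and is exactly the conjugate obtained by following Proposition~\ref{prop:axialsymsingularmono} verbatim.
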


\begin{proof}
The proof follows from the proof of Proposition~\ref{prop:axialsymsingularmono}, changing from $e^{\upsilon\phi}$ to $p$.
\end{proof}

\begin{note}
We might expect a similar result regarding hyperbolic analogues to Nahm data, where the extra symmetry of isoclinic superspherical symmetry descends to a symmetry on these analogues. However, no such result exists. Indeed, recall that we use the conformal equivalence $\mathbb{R}^4\setminus\{0\}\equiv H^1\times S^3$ to relate instantons with isoclinic spherical symmetry to hyperbolic analogues to Nahm data. In particular, using four-dimensional spherical coordinates $(r,\theta,\phi,\psi)$ on $\mathbb{R}^4\setminus\{0\}$, the hyperbolic coordinate is $r$. However, $r$ is invariant under the entire isoclinic superspherical action. Thus, the extra symmetry of the instanton does not impart any extra symmetry to the hyperbolic analogues to Nahm data.
\end{note}

\subsection{Conformal superspherical symmetry}\label{subsubsec:conformalsupersphericalsym}

In this section, we find an equation describing all instantons with conformal superspherical symmetry, as given in Table~\ref{table:conformalsubgroups} as well as below. We also discuss the connection between these instantons and symmetric hyperbolic monopoles and hyperbolic analogues to Higgs bundles.

First, we introduce the notion of conformal superspherical symmetry. An instanton is said to have conformal superspherical symmetry if it is equivariant under the connected Lie group corresponding to $\mathfrak{p}_{3,1,1}\subseteq\mathfrak{sp}(2)$, defined in \eqref{eq:sp1s1subalgebras}. Recall the notation $\hat{M}_\Upsilon$ and $U_\Upsilon$, for $\Upsilon\in\mathfrak{sp}(2)$, introduced in Lemma~\ref{lemma:conformalaction}.
\begin{theorem}
Let $X:=\begin{bmatrix}
0 & 1 \\ -1 & 0
\end{bmatrix}\in\mathfrak{sp}(2)$. Let $\hat{M}\in\mathcal{M}_{n,k}$. Then $\hat{M}$ has conformal superspherical symmetry if and only if there exists a representation $\rho\colon\mathfrak{sp}(1)\oplus\mathbb{R}\rightarrow\mathfrak{sp}(n+k)$ such that for all $\upsilon\in\mathfrak{sp}(1)$ and $t\in\mathbb{R}$,
\begin{align}
\rho(\upsilon,t)\hat{M}-\hat{M}_{\upsilon I_2+tX} -\hat{M}U^T\rho(\upsilon,t) U+\hat{M}U^T U_{\upsilon I_2+tX}&=0;\label{eq:conformalsupersphericalsym1}\\
\rho(\upsilon,t)U-U_{\upsilon I_2+tX} -UU^T\rho(\upsilon,t)U+UU^T U_{\upsilon I_2+tX}&=0.\label{eq:conformalsupersphericalsym2}
\end{align}
Additionally, we must have that $\rho$ induces a real representation $\lambda\colon\mathfrak{sp}(1)\oplus\mathbb{R}\rightarrow \mathfrak{gl}(k)$ given by
\label{thm:conformalsupersphericalsym}
\begin{equation}
\lambda(\upsilon,t):=U^T\rho(\upsilon,t)U-U^TU_{\upsilon I_2+tX}.\label{eq:conformalsupersphericalrep2}
\end{equation}
\end{theorem}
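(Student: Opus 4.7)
The plan is to mirror the proof of Theorem~\ref{thm:conformalsphericalsym} almost verbatim, after identifying the Lie algebra $\mathfrak{p}_{3,1,1}$ with $\mathfrak{sp}(1)\oplus\mathbb{R}$ via $(\upsilon,t)\mapsto \upsilon I_2+tX$. Note that $[\upsilon I_2,tX]=0$, so the map is a well-defined Lie algebra isomorphism onto $\mathfrak{p}_{3,1,1}\subseteq\mathfrak{sp}(2)$. Let $G_{3,1,1}\subseteq\mathrm{Sp}(2)$ be the connected Lie subgroup integrating $\mathfrak{p}_{3,1,1}$; by Table~\ref{table:conformalsubgroups} this is the compact group $(S^1\times\mathrm{Sp}(1))/\{\pm(1,1)\}$, which is the crucial point for what follows.

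The first step is to define the stabilizer
\begin{equation*}
S:=\{(A,Q)\in G_{3,1,1}\times\mathrm{Sp}(n+k)\mid K(A,Q):=U^TQU_{A^\dagger}\in\mathrm{GL}(k,\mathbb{R}),\ (Q,K(A,Q)).A^\dagger.(\hat{M},U)=(\hat{M},U)\},
\end{equation*}
and verify that $S$ is a closed subgroup capturing every conformal superspherical symmetry of $\hat{M}$. The inclusion of every symmetry uses Lemma~\ref{lemma:KfromQ}, which forces $K=U^TQU_{A^\dagger}$ whenever $A.[(\hat{M},U)]=[(\hat{M},U)]$. Closedness of $S$ is obtained exactly as in Theorem~\ref{thm:conformalsphericalsym}: write $S=f_1^{-1}(0)\cap f_2^{-1}(0)$ for the smooth maps $f_1(A,Q):=(Q^\dagger\hat{M}K(A,Q),Q^\dagger UK(A,Q))-A^\dagger.(\hat{M},U)$ and $f_2(A,Q):=K(A,Q)-\overline{K(A,Q)}$, noting that the $\mathrm{GL}(k,\mathbb{R})$ condition is built into the normalization $U^TU=I_k$ and the fact that the columns of $U_{A^\dagger}$ are linearly independent. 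Subgroup closure is then checked in three steps (identity, inversion, multiplication) using the commutativity of the gauge and conformal actions, along with the multiplicativity identity
\begin{equation*}
K(A_1A_2,Q_1Q_2)=K(A_1,Q_1)K(A_2,Q_2),
\end{equation*}
which additionally proves that $K\colon S\rightarrow\mathrm{GL}(k,\mathbb{R})$ is a Lie group homomorphism.

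With $S$ a closed (hence compact) subgroup of $G_{3,1,1}\times\mathrm{Sp}(n+k)$, we are in the compact setting of Theorem~\ref{thm:mainthm}. Invoking the argument in its proof yields a Lie algebra homomorphism $\rho\colon\mathfrak{sp}(1)\oplus\mathbb{R}\rightarrow\mathfrak{sp}(n+k)$ such that for all $(\upsilon,t)$,
\begin{equation*}
(\upsilon,t).(\hat{M},U)-\rho(\upsilon,t).(\hat{M},U)=0.
\end{equation*}
Expanding the first term via Lemma~\ref{lemma:conformalaction} with the infinitesimal generator $\upsilon I_2+tX$, and then right-multiplying the resulting pair of identities by $U^TU=I_k$ (precisely the simplification used in Theorem~\ref{thm:conformalsphericalsym}), gives \eqref{eq:conformalsupersphericalsym1} and \eqref{eq:conformalsupersphericalsym2}. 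The induced representation $\lambda$ arises by differentiating the Lie group homomorphism $K\circ\Psi\colon G_{3,1,1}\rightarrow\mathrm{GL}(k,\mathbb{R})$ at the identity: for $(\upsilon,t)$ with lift $\Psi(e^{\theta(\upsilon,t)})$, we have $K(\Psi(e^{\theta(\upsilon,t)}))=U^Te^{\theta\rho(\upsilon,t)}U_{e^{-\theta(\upsilon I_2+tX)}}$, and differentiating at $\theta=0$ produces \eqref{eq:conformalsupersphericalrep2}. The converse direction requires no separate argument because, as noted in the proof of Theorem~\ref{thm:conformalsphericalsym}, these equations encode the full equivariance condition at the Lie algebra level, not merely a bottom-block fragment.

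The main obstacle is the same as in Theorem~\ref{thm:conformalsphericalsym}: verifying that the extra constraint $K(A,Q)\in\mathrm{GL}(k,\mathbb{R})$ is preserved under the group operations and, in particular, that $S$ is genuinely closed. Once the intersection-of-preimages trick and the multiplicativity of $K$ are in hand, the rest of the argument is a formal translation. There is no analogue here of the delicate extra work needed for circular $t$-symmetry (Proposition~\ref{prop:R}) because $G_{3,1,1}$ is compact, so Theorem~\ref{thm:mainthm} applies directly to the compact gauge subgroup $\mathrm{Sp}(n+k)$ without needing to reduce further to $\mathrm{O}(k)$.
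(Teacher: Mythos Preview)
Your proposal is correct and follows essentially the same route as the paper's own proof: define the stabilizer $S\subseteq G_{3,1,1}\times\mathrm{Sp}(n+k)$, show it is a closed subgroup by the same preimage and multiplicativity argument used in Theorem~\ref{thm:conformalsphericalsym}, apply the compact-group machinery of Theorem~\ref{thm:mainthm}, and simplify via $U^TU=I_k$. The paper merely compresses the closedness and subgroup verification into a one-line reference to Theorem~\ref{thm:conformalsphericalsym}, whereas you have spelled those steps out.
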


\begin{definition}
We call $\rho$ the \textbf{generating representation} of the conformal superspherical symmetry of $\hat{M}$.
\end{definition}

\begin{note}
In contrast to the cases of conformal spherical and full symmetry, $\lambda$ is allowed to take values in $\mathfrak{gl}(k)$ outside of $\mathfrak{so}(k)$. This phenomenon is possible because the universal cover of this symmetry group is $\mathrm{Sp}(1)\times \mathbb{R}$, which is not compact.
\end{note}

\begin{proof}
We use the same notation as introduced in the beginning of the proof of Theorem~\ref{thm:circularsym}. We follow the proof of Theorem~\ref{thm:mainthm} after setting the scene.

Suppose that $\hat{M}$ has conformal superspherical symmetry. Let 
\begin{equation*}
S\subseteq ((\mathrm{Sp}(1)\times S^1)/\{\pm(1,1)\})\times\mathrm{Sp}(n+k)
\end{equation*} 
be the stabilizer group of $(\hat{M},U)$ restricted to conformal superspherical transformations. Recall that for $A\in\mathrm{Sp}(2)$, $(\hat{M}_A,U_A):=A.(\hat{M},U)$. Moreover, note that $e^{\theta X}=\begin{bmatrix}
\cos\theta & \sin\theta \\ -\sin\theta & \cos\theta
\end{bmatrix}$. We have
\begin{multline}
S:=\{([(p,e^{i\theta})],Q)\mid K([(p,e^{i\theta})],Q):=U^TQU_{p^\dagger e^{-\theta X}}\in\mathrm{GL}(k,\mathbb{R})\textrm{, }\\
(Q,K([(p,e^{i\theta})],Q)).p^\dagger e^{-\theta X}.(\hat{M},U)=(\hat{M},U)\}.\label{eq:conformalsupersphericalsymS}
\end{multline}
Just as in Theorem~\ref{thm:conformalsphericalsym}, noting that $(\mathrm{Sp}(1)\times S^1)/\{\pm (1,1)\}$ is compact, we have that $S$ is a compact Lie subgroup, encapsulating all the conformal superspherical symmetry of the instanton.

Just like our previous symmetries, other than circular $t$-symmetry, the stabilizer group is a subgroup of a compact group, so we proceed as in the proof of Theorem~\ref{thm:mainthm}~\cite[Theorem~1.1]{lang_moduli_2024}. Though in this case, the compact subgroup is $\mathrm{Sp}(n+k)$ instead of $\mathrm{O}(k)$. In particular, we find that $\hat{M}$ has conformal superspherical symmetry if and only if there is a Lie algebra homomorphism $\rho\colon\mathfrak{sp}(1)\oplus\mathbb{R}\rightarrow\mathfrak{sp}(n+k)$ such that for all $\upsilon\in\mathfrak{sp}(1)$ and $t\in\mathbb{R}$, we have
\begin{align*}
\rho(\upsilon,t)\hat{M}U^TU-\hat{M}_{\upsilon I_2+tX} U^TU-\hat{M}U^T\rho(\upsilon,t) U+\hat{M}U^T U_{\upsilon I_2+tX}&=0;\\
\rho(\upsilon,t)U U^TU-U_{\upsilon I_2+tX} U^T U-UU^T\rho(\upsilon,t)U+UU^T U_{\upsilon I_2+tX}&=0.
\end{align*}
Noting that $U^TU=I_k$, we obtain \eqref{eq:conformalsupersphericalsym1} and \eqref{eq:conformalsupersphericalsym2}.

As $K\circ \Psi\circ G\colon \mathrm{Sp}(1)\times\mathbb{R}\rightarrow \mathrm{GL}(k,\mathbb{R})$ is a Lie group homomorphism, it gives rise to a Lie algebra homomorphism $\lambda\colon\mathfrak{sp}(1)\oplus\mathbb{R}\rightarrow\mathfrak{gl}(k)$. As $\mathfrak{gl}(k)$ acts naturally on $\mathbb{R}^k$, $\left(\mathbb{R}^k,\lambda\right)$ is a real $k$-representation of $\mathfrak{sp}(1)\oplus\mathbb{R}$. In particular, as $K(\Psi(G(e^{\theta\upsilon}e^{\theta tX})))=U^Te^{\theta\rho(\upsilon,t)}U_{e^{-\theta\upsilon}e^{-\theta tX}}$, we have $\lambda(\upsilon,t)=U^T\rho(\upsilon,t)U-U^TU_{\upsilon I_2+tX}$.

Just like in the proof of Theorem~\ref{thm:conformalsphericalsym}, we do not need to consider the converse direction, as we are considering the full equations of symmetry, not just the bottom component.
\end{proof}

\begin{note}
Just as with our previous symmetries, we need not check the final condition of $\mathcal{M}_{n,k}$ in Definition~\ref{def:Mstd} everywhere. Indeed, if $\hat{M}$ has conformal superspherical symmetry, then it has simple spherical symmetry. Thus, just as in the simple spherical symmetry case, we need only check that the final condition of $\mathcal{M}_{n,k}$ in Definition~\ref{def:Mstd} is satisfied for all $x=x_0+x_1i\in\mathbb{H}$ with $x_1\geq 0$. 
\end{note}

\begin{note}
The Lie algebra of conformal superspherical symmetry is comprised of the Lie algebras for simple spherical symmetry and symmetry under Manton--Sutcliffe's conformal circle action. Therefore, an instanton has conformal superspherical symmetry if and only if it has simple spherical symmetry and is symmetric under Manton--Sutcliffe's conformal circle action. This viewpoint is beneficial as we can use the work done in Section~\ref{subsubsec:SimpleSphericalSym} to narrow down our search for conformal superspherical symmetry.\label{note:twoactions}
\end{note}

\begin{cor}
Let $\hat{M}\in\mathcal{M}_{n,k}$. Then $\hat{M}$ has conformal superspherical symmetry if and only if it has simple spherical symmetry and there exists $\rho\in\mathfrak{sp}(n+k)$ such that $U^T\rho U-U^T\hat{M}$ is real and\label{cor:conformalsupersphericalsym}
\begin{align}
\rho\hat{M}+U-\hat{M}U^T\rho U+\hat{M}U^T\hat{M}&=0;\label{eq:MS1}\\
\rho U-\hat{M}-UU^T\rho U+UU^T \hat{M}&=0.\label{eq:MS2}
\end{align}
\end{cor}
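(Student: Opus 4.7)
The approach is to use Note~\ref{note:twoactions}: the Lie algebra $\mathfrak{p}_{3,1,1}$ of conformal superspherical symmetry decomposes as a direct sum $\mathfrak{h}_{3,1,1}\oplus\mathbb{R} X$, where $\mathfrak{h}_{3,1,1}$ generates the simple spherical subgroup and $X$ generates Manton--Sutcliffe's conformal circle subgroup. Since the two summands commute in $\mathfrak{sp}(2)$, the connected Lie subgroup $G_{3,1,1}\subseteq\mathrm{Sp}(2)$ is the direct product of these two subgroups. Hence $\hat{M}$ has conformal superspherical symmetry if and only if it has simple spherical symmetry and is equivariant under $e^{\theta X}$ for every $\theta\in\mathbb{R}$. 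This reduces the corollary to showing that the second (circle-symmetry) condition is equivalent to the existence of a single $\rho\in\mathfrak{sp}(n+k)$ satisfying the stated properties.

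The key explicit calculation uses Lemma~\ref{lemma:conformalaction} with $\Upsilon=X$: from $e^{\theta X}.(\hat{M},U)=(\hat{M}\cos\theta-U\sin\theta,\,U\cos\theta+\hat{M}\sin\theta)$, differentiation at $\theta=0$ gives $\hat{M}_X=-U$ and $U_X=\hat{M}$. For the forward direction, Theorem~\ref{thm:conformalsupersphericalsym} produces a representation $\mathfrak{sp}(1)\oplus\mathbb{R}\to\mathfrak{sp}(n+k)$; evaluating it at $(\upsilon,t)=(0,1)$ to obtain a single $\rho\in\mathfrak{sp}(n+k)$ and substituting the values of $\hat{M}_X$ and $U_X$ into \eqref{eq:conformalsupersphericalsym1}--\eqref{eq:conformalsupersphericalsym2} reproduces exactly \eqref{eq:MS1}--\eqref{eq:MS2}. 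The reality of $U^T\rho U-U^T\hat{M}=\lambda(0,1)$ follows because $\lambda$ takes values in $\mathfrak{gl}(k,\mathbb{R})$ by the theorem.

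For the reverse direction, given $\rho$ satisfying the stated hypotheses, set $\kappa:=U^T\rho U-U^T\hat{M}$ (real by assumption) and consider the smooth curve $\theta\mapsto(e^{\theta\rho},e^{\theta\kappa}).e^{-\theta X}.(\hat{M},U)$. Using the commutativity of the gauge and conformal actions, this curve starts at $(\hat{M},U)$, and its derivative at any $\theta_0$ is controlled by the Lie-algebra-level equations $\rho\hat{M}-\hat{M}\kappa=-U$ and $\rho U-U\kappa=\hat{M}$, which are equivalent to \eqref{eq:MS1}--\eqref{eq:MS2} via the chosen $\kappa$. A standard Lie-theoretic integration (mirroring the proof of Proposition~\ref{prop:R}) then shows the curve is constant, yielding symmetry under $\{e^{\theta X}\}$; combined with simple spherical symmetry, this delivers the full conformal superspherical symmetry. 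The main obstacle lies in this integration step, which requires care with the left--right swap from Proposition~\ref{prop:R}'s stated form, but the argument is direct once the Lie algebra equations are in place.
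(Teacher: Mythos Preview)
Your proposal is correct and follows essentially the same route as the paper: both use Note~\ref{note:twoactions} to split conformal superspherical symmetry into simple spherical symmetry plus invariance under the Manton--Sutcliffe circle $\{e^{\theta X}\}$, compute $\hat{M}_X=-U$ and $U_X=\hat{M}$ via Lemma~\ref{lemma:conformalaction}, and then reduce the circle condition to the stated single-$\rho$ equations. The only cosmetic difference is that for the forward direction you evaluate the representation from Theorem~\ref{thm:conformalsupersphericalsym} at $(0,1)$, whereas the paper invokes the methods of the proof of Theorem~\ref{thm:conformalsphericalsym} (i.e.\ Proposition~\ref{prop:R} together with Lemma~\ref{lemma:KfromQ}) directly on the one-dimensional circle; your explicit integration for the converse is exactly what those methods amount to.
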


\begin{proof}
Suppose $\hat{M}\in\mathcal{M}_{n,k}$. Let $\Upsilon:=\begin{bmatrix}
0 & 1 \\ -1 & 0
\end{bmatrix}\in\mathfrak{sp}(2)$ be the generator of Manton--Sutcliffe's conformal circle action. Using Lemma~\ref{lemma:conformalaction}, we see that $\hat{M}_\Upsilon=-U$ and $U_\Upsilon=\hat{M}$. Using methods from the proof of Theorem~\ref{thm:conformalsphericalsym}, we find that $\hat{M}$ is symmetric under Manton--Sutcliffe's conformal circle action if and only if there exists $\rho\in\mathfrak{sp}(n+k)$ such that \eqref{eq:MS1} and \eqref{eq:MS2} hold and $U^T\rho U-U^T\hat{M}$ is real.

Note~\ref{note:twoactions} tells us that an instanton has conformal superspherical symmetry if and only if it has simple spherical symmetry and is symmetric under Manton--Sutcliffe's conformal circle action, proving the result.
\end{proof}

\begin{note}
Because the simple spherical symmetry and the additional circular $1$-symmetry commute, $\rho$ must commute with the matrix $Q$ used in simple spherical symmetry, given in the proof of Theorem~\ref{thm:simplesphericalsym}.
\end{note}

\subsubsection{Novel example of conformal superspherical symmetry}\label{subsubsec:NotinMSset}

In this section, we identify a novel example of an instanton with conformal superspherical symmetry. Note that several examples were identified in previous work, though there they are viewed through the lens of hyperbolic monopoles~\cite[Propositions~5,~6~\&~7]{lang_hyperbolic_2023}. In fact, the ADHM data for any spherically symmetric hyperbolic monopole in previous work has conformal superspherical symmetry.

More generally, the ADHM data for any hyperbolic monopole in previous work is symmetric under Manton--Sutcliffe's conformal circle action, by design. We identify the condition that forces the ADHM data to take the form in previous work on hyperbolic monopoles~\cite[Definition~1]{lang_hyperbolic_2023}.
\begin{prop}
Consider $\hat{M}\in\mathcal{M}_{n,k}$. Then $\hat{M}$ is symmetric under Manton--Sutcliffe's conformal circle action and $M_0=0$ if and only if $\hat{M}$ is in the generalization of Manton--Sutcliffe's set~\cite[Definition~1]{lang_hyperbolic_2023}. That is, $M$ is pure and $R:=L^\dagger L+M^\dagger M=I_k$. \label{prop:WheninMSset}
\end{prop}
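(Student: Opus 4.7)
The plan is to use Corollary~\ref{cor:conformalsupersphericalsym}'s infinitesimal characterization: $\hat{M}$ is symmetric under the Manton--Sutcliffe conformal circle action if and only if there exists $\rho\in\mathfrak{sp}(n+k)$ satisfying \eqref{eq:MS1} and \eqref{eq:MS2} with $C-M$ real, where $C:=U^T\rho U$. Write $\rho$ in block form with $n\times n$ block $A$, $k\times k$ block $C$, and off-diagonal $n\times k$ block $B$. First I will expand the equations block by block. Equation \eqref{eq:MS2} collapses to the single identity $B=L$. The top block of \eqref{eq:MS1} yields $AL=L(C-2M)$, and its bottom block, under the hypothesis $M_0=0$ (so $M^\dagger=-M$ and $M^\dagger M=-M^2$), simplifies to $R=I_k+[C,M]$.

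For the forward direction, $C\in\mathfrak{sp}(k)$ together with $C-M$ real and $M^\dagger=-M$ forces $C=C_0+M$ with $C_0$ a real antisymmetric $k\times k$ matrix. Then $[C,M]=[C_0,M]$, and since $C_0$ is real while $M$ is purely imaginary, this commutator has purely imaginary entries. Because $R$ is real by membership in $\mathcal{M}_{n,k}$, the identity $R-I_k=[C_0,M]$ equates a real matrix with a purely imaginary one, forcing both sides to vanish. Hence $R=I_k$.

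For the backward direction, given $M_0=0$ and $R=I_k$ (equivalently $L^\dagger L=I_k+M^2$), I will exhibit an explicit $\rho$ with $C=M$, $B=L$, and $A=-LML^\dagger(LL^\dagger)^{-1}$. The condition $C-M=0$ is vacuously real; the bottom block of \eqref{eq:MS1} becomes $R=I_k+[M,M]=I_k$, true by hypothesis; and $A\in\mathfrak{sp}(n)$ follows from the identity $[LL^\dagger,LML^\dagger]=0$, which holds because both products expand to $L(M+M^3)L^\dagger$ via $L^\dagger L=I_k+M^2$.

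The main obstacle is verifying the top-block equation $AL=-LM$. Writing $AL=-LM\cdot P$ with $P:=L^\dagger(LL^\dagger)^{-1}L$ the orthogonal projection onto the row space of $L$, this reduces to $LM(I_k-P)=0$, i.e.\ to $M$ preserving $\ker L$. Here is the delicate step: for $u\in\ker L$, $L^\dagger Lu=0$ gives $(I_k+M^2)u=0$, so $M^2u=-u$ and $M^4u=u$. A direct computation then yields $|L(Mu)|^2=(Mu)^\dagger(I_k+M^2)(Mu)=-u^\dagger M^2u-u^\dagger M^4u=u^\dagger u-u^\dagger u=0$, so $Mu\in\ker L$. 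Thus $M$ acts as an almost-complex structure on $\ker L$, which is precisely what allows $L$ to be rank-deficient ($n<k$) while still admitting the Manton--Sutcliffe symmetry.
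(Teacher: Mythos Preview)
Your forward direction is essentially identical to the paper's: both write $\rho$ in block form, extract $B=L$ from \eqref{eq:MS2}, obtain $R-I_k=[C,M]$ from the bottom block of \eqref{eq:MS1}, observe that $C-M$ real and skew-Hermitian forces $C=C_0+M$ with $C_0\in\mathfrak{so}(k)$, and conclude by comparing real and purely imaginary parts.

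Where you genuinely differ is the backward direction. The paper simply cites Manton--Sutcliffe's original argument that data in their set is symmetric under the circle action. You instead construct $\rho$ explicitly with $C=M$, $B=L$, $A=-LML^\dagger(LL^\dagger)^{-1}$, and verify \eqref{eq:MS1} and \eqref{eq:MS2} directly. Your check that $A\in\mathfrak{sp}(n)$ via $[LL^\dagger,LML^\dagger]=0$ is clean, and the verification of the top block $AL=-LM$ via the projection $P=L^\dagger(LL^\dagger)^{-1}L$ and the observation that $M$ preserves $\ker L$ (because $L^\dagger L=I_k+M^2$ forces $M^2u=-u$ on $\ker L$, whence $|LMu|^2=-u^\dagger M^2u-u^\dagger M^4u=0$) is correct and rather elegant. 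This makes your proof self-contained, whereas the paper relies on the external reference; the trade-off is that your argument is longer and requires the projection/kernel analysis that the citation sidesteps.
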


\begin{proof}
Manton--Sutcliffe prove that data in their set is symmetric under their conformal circle action~\cite[\S4]{manton_platonic_2014}. We prove the converse. Suppose that $\hat{M}$ is symmetric under Manton--Sutcliffe's conformal circle action and $M_0=0$. 

From the proof of Corollary~\ref{cor:conformalsupersphericalsym}, we have that there is some $\rho\in\mathfrak{sp}(n+k)$ such that $U^T\rho U-U^T\hat{M}$ is real and \eqref{eq:MS1} and \eqref{eq:MS2} hold.

Write $\rho:=\begin{bmatrix}
X & Y \\ -Y^\dagger & Z
\end{bmatrix}$, so $X\in\mathfrak{sp}(n)$, $Y\in\mathrm{Mat}(n,k,\mathbb{H})$, and $Z\in\mathfrak{sp}(k)$. We then see that $Z-M$ must be real. Thus, there exists $P\in\mathfrak{so}(k)$ such that $Z=M+P$. Substituting and simplifying \eqref{eq:MS2}, we see that $Y=L$. Substituting and simplifying \eqref{eq:MS1}, we see that 
\begin{equation*}
\begin{bmatrix}
XL+LM-LZ+LM\\
-L^\dagger L+ZM+I_k-MZ+M^2
\end{bmatrix}=0.
\end{equation*}
Note that as $M$ is pure, $M^\dagger M=-M^2$, so the bottom row is $(I_k-R)+[Z,M]=0$. As $Z=M+P$ and $M$ is pure, $[Z,M]$ is pure. As $I_k-R$ is real, we have that $I_k-R=0=[Z,M]$. So, in particular, $R=I_k$. Therefore, the ADHM data is in the generalization of Manton--Sutcliffe's set~\cite[Definition~1]{lang_hyperbolic_2023}.
\end{proof}

The preceding proposition tells us that if we are to find ADHM symmetric under Manton--Sutcliffe's conformal circle action but not in the previously considered set, it must have $M_0\neq 0$. We find exactly such an instanton. The following is not the first example of such an instanton, but it is the first with a higher rank structure group~\cite[(7.1)]{manton_platonic_2014}.
\begin{prop}
For $B\in\left(0,\frac{2}{3}\sqrt{6}\right)$, let 
\begin{equation}
\begin{aligned}
A&:=\frac{\sqrt{12 - 15B^2 + 12\sqrt{B^4 - B^2 + 1}}}{3},\quad\textrm{and}\\ 
a&:=\frac{2B^2 - \sqrt{B^4 - B^2 + 1} - 1}{\sqrt{-3B^2 + 6 + 6\sqrt{B^4 - B^2 + 1}}}.
\end{aligned}
\end{equation} 
Note that for all $B$ in our range, $A>0$ and $a\in\mathbb{R}$. Then consider 
\begin{equation}
\hat{M}:=\begin{bmatrix}
A & 0 \\ 
0 & B \\
0 & a \\
a & 0
\end{bmatrix}.
\end{equation}
We have that $\hat{M}\in\mathcal{M}_{2,2}$ and has conformal superspherical symmetry and instanton number $2$. Moreover, when $B\neq 1$, the ADHM data is not in the generalization of Manton--Sutcliffe's set~\cite[Definition~1]{lang_hyperbolic_2023}.\label{prop:notinMSset}
\end{prop}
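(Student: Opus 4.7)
The plan is to verify the four defining conditions for $\mathcal{M}_{2,2}$ (Definition~\ref{def:Mstd}), then establish conformal superspherical symmetry via Corollary~\ref{cor:conformalsupersphericalsym}, and finally invoke Proposition~\ref{prop:WheninMSset} to show non-membership in the generalized Manton--Sutcliffe set when $B\neq 1$. The instanton number is immediate from $k=2$. The entire argument rests on two algebraic identities satisfied by the given formulas for $A$ and $a$, namely
\begin{equation*}
A^2+B^2 = 2(1+a^2), \qquad (A^2-B^2)^2 = 16\, a^2(1+a^2),
\end{equation*}
which I would verify by squaring $A$ and $a$ and manipulating $\sqrt{B^4-B^2+1}$.

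For the $\mathcal{M}_{2,2}$ conditions, $M$ is symmetric by inspection, $LL^\dagger=\mathrm{diag}(A^2,B^2)$ is positive definite since $A,B>0$ on the given range, and $R=L^\dagger L+M^\dagger M=\mathrm{diag}(A^2+a^2,B^2+a^2)$ is real and non-singular. Because $M$ has real entries, a direct computation yields $\Delta(x)^\dagger\Delta(x)=R+|x|^2 I_2-2x_0 M$, a real symmetric matrix whose determinant is a quadratic in $r:=|x|^2$. Maximising $x_0^2$ over the constraint $x_0^2\leq r$ (worst case $x_0^2=r$) reduces this to
\begin{equation*}
r^2+(A^2+B^2-2a^2)\,r+(A^2+a^2)(B^2+a^2)-4ra^2.
\end{equation*}
Applying the two identities above, its discriminant vanishes and the unique double root sits at $r=-1$, so this quadratic is strictly positive on $r\geq 0$; hence $\Delta(x)^\dagger\Delta(x)$ is non-singular everywhere.

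To establish conformal superspherical symmetry, Corollary~\ref{cor:conformalsupersphericalsym} reduces the task to verifying simple spherical symmetry together with symmetry under Manton--Sutcliffe's conformal circle action. Simple spherical symmetry is immediate from Theorem~\ref{thm:simplesphericalsym} with the trivial representation $\rho=0$, since $M$ has real entries and so $[\upsilon,M]=0$ for all $\upsilon\in\mathfrak{sp}(1)$. For the Manton--Sutcliffe part I exhibit an explicit
\begin{equation*}
\rho=\begin{bmatrix} X & L \\ -L & Z \end{bmatrix}\in\mathfrak{sp}(4),\qquad Z=\mu J,\quad J=\begin{bmatrix} 0 & 1 \\ -1 & 0 \end{bmatrix},\quad \mu=\tfrac{A^2-B^2}{4a},
\end{equation*}
with $X:=L(Z-2M)L^{-1}$. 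The bottom block of \eqref{eq:MS2} forces $Y=L$; the bottom block of \eqref{eq:MS1} collapses to the pair $A^2+B^2=2(1+a^2)$ and $A^2-B^2=4\mu a$, both of which hold by construction; and the antisymmetry $X^T=-X$ reduces to $(A^2-B^2)^2=16a^2(1+a^2)$. The condition that $U^T\rho U-U^T\hat{M}=Z-M$ is real holds because $Z$ and $M$ are both real. The degenerate case $B=1$ gives $a=0$ and $A=1$, so $\hat{M}$ is, up to gauge, a doubled basic instanton and has full symmetry by Corollary~\ref{cor:knowfullsym}.

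Finally, $a=0$ forces $\sqrt{B^4-B^2+1}=2B^2-1$, whose only positive solution is $B=1$; thus $B\neq 1$ implies $a\neq 0$, so $M$ has nonzero real entries and hence $M_0\neq 0$, meaning $M$ is not pure. Proposition~\ref{prop:WheninMSset} then yields that $\hat{M}$ does not lie in the generalization of Manton--Sutcliffe's set. The main obstacle throughout is the symbolic verification of the two key identities linking $A$, $a$, and $B$; once those are established, every downstream step is a short matrix-level check.
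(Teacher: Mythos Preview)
Your proposal is correct and follows essentially the same route as the paper: simple spherical symmetry via the trivial representation (since $M$ is real), the Manton--Sutcliffe circle action via an explicit $\rho\in\mathfrak{sp}(4)$ with $Y=L$, and Proposition~\ref{prop:WheninMSset} for the final claim. Your non-singularity check is in fact a bit cleaner than the paper's eigenvalue computation---exploiting that $\Delta(x)^\dagger\Delta(x)=R+|x|^2 I_2-2x_0 M$ is real and that its worst-case determinant factors as $(r+1)^2$---and your organizing identities $A^2+B^2=2(1+a^2)$ and $(A^2-B^2)^2=16a^2(1+a^2)$ are exactly what make the paper's explicit constants $b,c$ work, so the two arguments produce the same $\rho$.
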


\begin{proof}
First we show that $\hat{M}$ satisfies the equations for simple spherical symmetry. Indeed, consider $(\mathbb{R}^2,\rho):=(\mathbb{R},0)^{\oplus 2}$. As $a\in\mathbb{R}$ for all $B\in\left(0,\frac{2}{3}\sqrt{6}\right)$, we have that $M$ is a real matrix. As $\rho(\upsilon)=0$ for all $\upsilon\in\mathfrak{sp}(1)$, we have that \eqref{eq:simplesphericalsym1} and \eqref{eq:simplesphericalsym2} are satisfied for all $\upsilon\in\mathfrak{sp}(1)$. Hence, if $\hat{M}\in\mathcal{M}_{2,2}$, then it has simple spherical symmetry.

Next we show that $\hat{M}$ satisfies the equations for symmetry under Manton--Sutcliffe's conformal circle action. Let 
\begin{equation}
\begin{aligned}
b&:=-\frac{\sqrt{12 - 15B^2 + 12\sqrt{B^4 - B^2 + 1}}}{\sqrt{-3B^2 + 6 + 6\sqrt{B^4 - B^2 + 1}}}B,\quad\textrm{and}\\ 
c&:=-\frac{\sqrt{-3B^2 + 6 + 6\sqrt{B^4 - B^2 + 1}}}{3}.
\end{aligned}
\end{equation} 
Then consider $\tilde{\rho}\in\mathfrak{sp}(4)$ given by 
\begin{equation}
\tilde{\rho}:=\begin{bmatrix}
0 & b & A & 0 \\
-b & 0 & 0 & B \\
-A & 0 & 0 & c \\
0 & -B & -c & 0
\end{bmatrix}.
\end{equation}
Then \eqref{eq:MS1} and \eqref{eq:MS2} are satisfied, along with the condition that $U^T \tilde{\rho}U-U^T\hat{M}$ is real. Hence, if $\hat{M}\in\mathcal{M}_{2,2}$, then it has conformal superspherical symmetry. In particular, it is symmetric under Manton--Sutcliffe's conformal circle action.

Note that if $B\neq 1$, then $a\neq 0$. Thus, the ADHM data is not in the generalization of Manton--Sutcliffe's set. It remains to show that $\hat{M}\in\mathcal{M}_{2,2}$. In particular, we note that $M$ is symmetric and $LL^\dagger$ is always positive-definite.  Additionally, $R=L^\dagger L+M^\dagger M$ is real and non-singular. It remains to check the final condition in Definition~\ref{def:Mstd}. Due to the aforementioned symmetry, we need only check that this condition is satisfied for all $x=x_0+x_1i$ with $x_1\geq 0$. The eigenvalues of $\Delta(x)^\dagger \Delta(x)$ for such $x$ are
\begin{multline*}
-\frac{2}{3}B^2+|x|^2+\frac{4}{3}\sqrt{B^4-B^2+1}+\frac{1}{3}\\
\pm\frac{2}{3}\sqrt{(-4B^2 + 6x_0^2 + 2)\sqrt{B^4 - B^2 + 1} - (3B^2 + 3)x_0^2 + 5B^4 - 5B^2 + 2}.
\end{multline*}
Letting $I:=-2B^2+3x_0^2+4\sqrt{B^4-B^2+1}+1>0$, we have that the eigenvalues can be written as 
\begin{equation*}
\frac{I+3x_1^2\pm \sqrt{I^2-9(x_0^2+1)^2}}{3}>0.
\end{equation*}
Therefore, we see that $\Delta(x)^\dagger\Delta(x)$ is non-singular everywhere, so $\hat{M}\in\mathcal{M}_{2,2}$. 
\end{proof}

\begin{note}
In the previous example, we have that $t\mapsto \lambda(0,t)$ does not correspond to a representation of $S^1$ when $B\neq 1$, as $\lambda(0,t)$ is not skew-symmetric. This fact seems to contradict the result in Proposition~\ref{prop:circulartsymrep}. However, this apparent contradiction is only because we are using a non-standard circle action, conjugate to the usual $R_1$ action. In particular, we are using a conformal circle action.

The usual $R_1$ circle action and Manton--Sutcliffe's are related, for all $\theta\in\mathbb{R}$, via
\begin{equation*}
\frac{1}{2}\begin{bmatrix}
-k & -j \\ 
i & 1
\end{bmatrix}\mathrm{exp}\left(\theta\begin{bmatrix}
0 & 1 \\ -1 & 0
\end{bmatrix}\right)\begin{bmatrix}
k & -i \\ j & 1
\end{bmatrix}=\mathrm{exp}\left(\theta \begin{bmatrix}
i & 0 \\ 0 & i
\end{bmatrix}\right).
\end{equation*}

So, taking the ADHM data $(\hat{M},U)\in\mathcal{N}_{2,2}$ equivariant under Manton--Sutcliffe's conformal circle action, consider the ADHM data $(\hat{M}',U')\in\mathcal{N}_{2,2}$ defined by
\begin{equation*}
(\hat{M}',U'):=\frac{1}{\sqrt{2}}\begin{bmatrix}
k & -i \\
j & 1
\end{bmatrix}.(\hat{M},U)=\left(\frac{1}{\sqrt{2}}\begin{bmatrix}
A & 0 \\
0 & B \\
i & a \\
a & i
\end{bmatrix},\begin{bmatrix}
-Aj & 0 \\
0 & -Bj \\
k & -aj \\
-aj & k
\end{bmatrix}\right).
\end{equation*}
Note that the symmetry under Manton--Sutcliffe's conformal circle action in $\hat{M}$ corresponds to symmetry under the usual $R_1$ circle action for $(\hat{M}',U')$. 

This data is not in the standard form. Let
\begin{align*}
Q&:=\begin{bmatrix}
\sqrt{\frac{a^2+1}{A^2+a^2+1}} & 0 & -\frac{Ai}{\sqrt{(A^2+a^2+1)(a^2+1)}} & -\frac{Aa}{\sqrt{(A^2+a^2+1)(a^2+1)}} \\
0 & \sqrt{\frac{a^2+1}{A^2+a^2+1}} & -\frac{Ba}{\sqrt{(B^2+a^2+1)(a^2+1)}} & -\frac{Bi}{\sqrt{(B^2+a^2+1)(a^2+1)}} \\
\frac{Aj}{\sqrt{A^2+a^2+1}} & 0 & -\frac{k}{\sqrt{A^2+a^2+1}} & \frac{aj}{\sqrt{A^2+a^2+1}} \\
0 & \frac{Bj}{\sqrt{B^2+a^2+1}} & \frac{aj}{\sqrt{B^2+a^2+1}} & -\frac{k}{\sqrt{B^2+a^2+1}}
\end{bmatrix}\in\mathrm{Sp}(4) \quad\textrm{and}\\
K&:=\mathrm{diag}\left(\sqrt{\frac{A^2+a^2+1}{2}},\sqrt{\frac{B^2+a^2+1}{2}}\right)\in\mathrm{GL}(2,\mathbb{R}).
\end{align*}
Gauging by $(Q,K)$, we get something in standard form: $(\hat{M}'',U):=(Q,K).(\hat{M}',U')\in\mathcal{N}_{2,2}$. 

Then, given $\rho=\begin{bmatrix}
0 & b \\ -b & 0
\end{bmatrix}\in\mathfrak{so}(2)$, we have that $\rho$ generates the circular $1$-symmetry of $\hat{M}''$ if and only if $b=-1$. Thus, the symmetry is generated by a representation corresponding to one of $S^1$.
\end{note}

\subsubsection{Spherically symmetric hyperbolic monopoles}\label{subsubsec:SpherSymHyperMono}

In this section, we discuss the connection between superspherical symmetric instantons and symmetric hyperbolic monopoles. 

Note that both the conformal and isoclinic superspherical subgroups of $\mathrm{Sp}(2)$ contain subgroups conjugate to $R_1$. Hence, we can associate to an instanton with conformal or isoclinic superspherical symmetry a hyperbolic monopole with integral mass. 

Just like instantons with toral symmetry, the conformal and isoclinic superspherical subgroups impart additional symmetry to the hyperbolic monopole. In the case of the isoclinic superspherical subgroup, there is a $S^1\times S^1$ subgroup of $R_1$ circle actions. Thus, such monopoles are axially symmetric. As the rest of the isoclinic subgroup does not commute with either of the $R_1$ circle actions, it does not impart any additional symmetry to the hyperbolic monopole.

The case of conformal superspherical symmetry imparts additional symmetry, as one of the $R_1$ circle actions commutes with more than just an $S^1$ subgroup. However, note that if we choose a different $R_1$ circle action, then we obtain an hyperbolic monopole with axial symmetry, not spherical symmetry.
\begin{prop}
A hyperbolic $\mathrm{Sp}(n)$-monopole with integral mass is spherically symmetric if and only if its ADHM data has conformal superspherical symmetry. 
\end{prop}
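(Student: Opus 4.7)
The strategy mirrors the proof of Proposition~\ref{prop:axialhypermono}, with toral symmetry replaced by conformal superspherical symmetry and the half-space model of $H^3$ replaced by the ball model. The key conceptual input is Note~\ref{note:twoactions}, which identifies conformal superspherical symmetry as the commuting sum of simple spherical symmetry and Manton--Sutcliffe's conformal circle action. Since the latter is the circle action used to realize $S^4\setminus S^2 \equiv H^3\times S^1$ via the ball model (as in Proposition~\ref{prop:palindromic}), the simple spherical factor should descend to spherical rotations of $H^3$, which is exactly what we want.

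First I would verify the underlying geometric statement: in the ball model arising from Manton--Sutcliffe's circle action, the simple spherical subgroup $\{\mathrm{diag}(p,p)\mid p\in\mathrm{Sp}(1)\}$ of $\mathrm{Sp}(2)$ acts on $x\in\mathbb{H}$ by $x\mapsto pxp^\dagger$ and restricts to the standard $\mathrm{SO}(3)$-rotation $X\mapsto pXp^\dagger$ on $H^3\subseteq \mathfrak{sp}(1)$. This requires two checks: (i) the simple spherical action preserves the ball $\{|X|<1\}$, which is immediate since $|pxp^\dagger|=|x|$; and (ii) it commutes with Manton--Sutcliffe's action, which follows from the Lie-algebra computation $\left[\mathrm{diag}(\upsilon,\upsilon),\,\begin{bmatrix}0 & 1\\ -1 & 0\end{bmatrix}\right]=0$ for all $\upsilon\in\mathfrak{sp}(1)$. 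Together these show the simple spherical action descends to a well-defined action on the quotient $H^3 = (S^4\setminus S^2)/S^1$, and that the descended action is precisely spherical rotation.

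With this geometry established, the forward direction copies the argument of Proposition~\ref{prop:axialhypermono} almost verbatim: starting from a spherically symmetric monopole $(\Phi,A)$ with integral mass and using the decomposition $\mathbb{A}=\pi_1^*A+\pi_1^*\Phi\, d\theta$, the orthogonality of $\pi_1^*A$ and $d\theta$ together with the gauge transformations realizing the spherical symmetry of $(\Phi,A)$ lift through $\pi_1$ to realize equivariance of $\mathbb{A}$ under every $\mathrm{diag}(p,p)$; combined with the pre-existing Manton--Sutcliffe symmetry coming from integral mass, we obtain conformal superspherical symmetry. For the converse, Note~\ref{note:twoactions} decomposes conformal superspherical symmetry into Manton--Sutcliffe symmetry (yielding a hyperbolic monopole with integral mass by the usual correspondence) plus simple spherical symmetry; reversing the descent argument then shows this second factor imparts spherical symmetry on the resulting monopole. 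The main obstacle is the first geometric step: confirming rigorously that $\mathrm{diag}(p,p)$ acts on $S^4\setminus S^2$ fibrewise over the spherical rotation of $H^3$, i.e.\ that it sends Manton--Sutcliffe orbits through $X\in H^3$ to orbits through $pXp^\dagger$ without mixing the $H^3$ and $S^1$ factors. This is essentially a coordinate computation on $\mathbb{H}$, but it must be done carefully because $\mathrm{diag}(p,p)$ is defined on all of $\mathbb{H}$, not just on the $\mathfrak{sp}(1)$ slice, so the argument needs to track how the action interacts with the fibration $\mathbb{H}\setminus S^2 \to H^3$ induced by the Manton--Sutcliffe circle.
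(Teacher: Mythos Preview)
Your proposal is correct and follows essentially the same route as the paper: work in the ball model arising from Manton--Sutcliffe's circle action, decompose conformal superspherical symmetry as simple spherical symmetry plus that circle action, verify that the simple spherical factor descends to the standard $\mathrm{SO}(3)$ rotation on $H^3$ while the circle factor acts only on $S^1$, and then run the argument of Proposition~\ref{prop:axialhypermono}. The ``main obstacle'' you flag---that $\mathrm{diag}(p,p)$ acts fibrewise over the rotation $X\mapsto pXp^\dagger$ without mixing the $H^3$ and $S^1$ factors---is handled in the paper by citing Manton--Sutcliffe~\cite[\S4]{manton_platonic_2014}, where the transformation is shown to send $(x,y,z,\theta)$ to $(R_p(x,y,z),\theta-2\phi)$; your commutativity check $[\mathrm{diag}(\upsilon,\upsilon),X]=0$ is exactly what underlies this.
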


\begin{proof}
Through isometries, we may assume that the hyperbolic model is the ball model. Because of the conformal equivalence of $S^4\setminus S^2\equiv H^3\times S^1$, these isometries on hyperbolic space correspond to conformal maps on $S^4$. Because of the conformal equivalence of $S^4\setminus S^2\equiv H^3\times S^1$, these isometries on hyperbolic space correspond to conformal maps on $S^4$.

The action of $\begin{bmatrix}
\cos\theta & -\sin\theta \\ \sin\theta & \cos\theta
\end{bmatrix}$ induces the ball model of hyperbolic space, with coordinates $(x,y,z)$ satisfying $r:=\sqrt{x^2+y^2+z^2}<1$~\cite[\S4]{manton_platonic_2014}, where $(x,y,z)\in H^3$ corresponds to a point $xi+yj+zk\in\mathfrak{sp}(1)\subseteq\mathbb{H}$. The two-sphere removed is the $S^2\subseteq \mathfrak{sp}(1)$ comprised of all points $xi+yj+zk$ with $r=1$. 

An element of the conformal superspherical subgroup is of the form $p\begin{bmatrix}
\cos\phi & -\sin\phi \\ \sin\phi & \cos\phi
\end{bmatrix}$. Let $R_p\in\mathrm{SO}(3)$ be the rotation generated by $\mathrm{diag}(p,p)$ acting on $\mathfrak{sp}(1)$. The transformation sends $(x,y,z,\theta)$ to $(x',y',z',\theta-2\phi)$, where $(x',y',z'):=R_p(x,y,z)$~\cite[\S4]{manton_platonic_2014}. Thus, we see that $p$ part of the transformation just rotates the $H^3$ part and the $\phi$ part rotates the $S^1$ part. Note that $p\mapsto R_p$ is the double cover $\mathrm{Sp}(1)\rightarrow\mathrm{SO}(3)$. 

Therefore, just as in Proposition~\ref{prop:axialhypermono}, a hyperbolic monopole with integral mass has spherical symmetry if and only if its corresponding ADHM data has conformal superspherical symmetry.
\end{proof}

\begin{note}
A hyperbolic monopole with integral mass corresponds with a circle-invariant instanton. Let $\mathfrak{g}\subseteq\mathfrak{sp}(2)$ be the Lie subalgebra generating the circle action. From the above proposition, we know that a hyperbolic monopole with integral mass is spherically symmetric if and only if $\mathfrak{g}$ commutes with a Lie subalgebra $\mathfrak{h}\subseteq\mathfrak{sp}(2)$ isomorphic to $\mathfrak{sp}(1)$. 
\end{note}

\begin{note}
The conformal superspherical symmetry subgroup that corresponds to spherically symmetric hyperbolic monopoles using the Braam--Austin circle action has Lie algebra
\begin{equation}
\left\langle\begin{bmatrix}
i & 0 \\ 0 & -i
\end{bmatrix},\frac{1}{2}\begin{bmatrix}
i & 0 \\ 0 & i
\end{bmatrix},\frac{1}{2}\begin{bmatrix}
0 & j \\ j & 0
\end{bmatrix},\frac{1}{2}\begin{bmatrix}
0 & k \\ k & 0
\end{bmatrix}\right\rangle\simeq\mathbb{R}\oplus\mathfrak{sp}(1).
\end{equation}
\end{note}

Toral symmetry is generated by two circle actions. We can take both circle actions to be conjugate to $R_1$. Indeed, we have the commuting circle actions of $\mathrm{diag}(e^{i\theta},e^{-i\theta})$ and $\mathrm{diag}(e^{i\theta},e^{i\theta})$. Using either one of these circle actions, we can transform an instanton with toral symmetry into an axially symmetric hyperbolic monopole. However, these monopoles need not be gauge equivalent.
\begin{prop}
Consider the basic instanton, with ADHM data $\hat{M}=\begin{bmatrix}
1 & 0 
\end{bmatrix}^T\in\mathcal{M}_{1,1}$. In Proposition~\ref{prop:basicsymmetries}, we see that the group of symmetries of this instanton is $\mathrm{Sp}(2)$. The hyperbolic monopoles obtained using two commuting circle actions conjugate to $R_1$ are not gauge equivalent, they have different symmetries.\label{prop:notgauge}
\end{prop}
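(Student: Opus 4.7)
The plan is to realize the two commuting circle actions as explicit subgroups of the standard maximal torus of $\mathrm{Sp}(2)$, compute the resulting hyperbolic monopole in each case using Proposition~\ref{prop:makehypermono} (and the analogue adapted to the conjugate circle), and observe that the two Higgs fields have their unique zero at different points of the same $H^3$, so that no gauge transformation can intertwine them.

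First I would exhibit the two circle actions as $C_1 := R_1 = \{\mathrm{diag}(e^{i\theta},e^{i\theta})\}$ and $C_2 := \{\mathrm{diag}(e^{i\theta},e^{-i\theta})\}$, both sitting inside the standard torus. Both are conjugate to $R_1$ in $\mathrm{Sp}(2)$; for $C_2$ this follows from the computation $\mathrm{diag}(1,-j)\cdot C_1\cdot\mathrm{diag}(1,j) = C_2$, which reduces to the quaternionic identity $je^{i\theta} = e^{-i\theta}j$. The explicit intertwining element $B = \mathrm{diag}(1,j)$ will be used later to relate the two constructions.

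Next I would apply Proposition~\ref{prop:makehypermono} directly to the basic instanton, with $L=1$, $M=0$, $\rho=0$. Taking $V(X) = \frac{1}{\sqrt{1+|X|^2}}[X^\dagger,1]^T$, a short quaternionic computation yields the $C_1$-Higgs field
\begin{equation*}
\Phi_1(X) \;=\; \frac{X i X^\dagger + i}{2(1+|X|^2)},
\end{equation*}
whose unique zero (using $X^\dagger = \bar X$ and $jij = i$) is at $X = j$, i.e.\ at the interior point $(0,0,1)$ of the $C_1$-half-space, and whose residual $\mathrm{SO}(2)$ axial symmetry, inherited from the commuting $C_2$-action, is rotation about the vertical axis $\{(0,0,r):r>0\}$. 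I would then derive the $C_2$-procedure in the same spirit, either by repeating the proof of Proposition~\ref{prop:makehypermono} with the gauge transformation $Q(\theta) = \mathrm{diag}(e^{-i\theta/2},e^{i\theta/2})$ that undoes the $C_2$-action on $(\hat{M},U)$, or equivalently by conjugating by $B$ and applying the original formula. This yields a Higgs field with its unique zero at a different interior point and axial symmetry about a different line.

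Finally, identifying the two half-space models with a single abstract $H^3$ in the natural way, the zero loci and the axes of residual axial symmetry of the two monopoles are distinct lines in $H^3$. Since gauge transformations are fibrewise and in particular cannot translate the zero of $\Phi$ or rotate an axis of symmetry, the two monopoles are not gauge equivalent, and their residual symmetry subgroups inside $\mathrm{Isom}(H^3)$ are conjugate but distinct. The main obstacle is careful bookkeeping of the quaternionic algebra (particularly $jij = i$ and $je^{i\theta} = e^{-i\theta}j$) and keeping consistent with the two different half-space coordinate systems on $S^4$ when deriving the $C_2$ analogue of Proposition~\ref{prop:makehypermono}.
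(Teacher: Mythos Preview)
Your approach is quite different from the paper's, and there is a genuine gap in the argument.

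The core problem is that the two circle actions $C_1$ and $C_2$ have \emph{different} fixed two-spheres in $S^4$ (for $C_1=R_1$ it is $\mathbb{C}\cup\{\infty\}$, for $C_2$ it is $j\mathbb{C}\cup\{\infty\}$), so the two quotient $H^3$'s are genuinely different copies of hyperbolic space. Your phrase ``identifying the two half-space models with a single abstract $H^3$ in the natural way'' papers over this: there is no canonical identification, and if one allows an arbitrary isometry when making the identification --- which is the only reasonable thing to do when asking whether two monopoles on abstract $H^3$ are ``the same'' --- then having the zeros of $\Phi$ at different points proves nothing, since an isometry can carry one zero to the other. Your claim that the residual symmetry groups are ``conjugate but distinct'' has the same defect: conjugate subgroups of $\mathrm{Isom}(H^3)$ become identical after applying the conjugating isometry, so this does not obstruct equivalence. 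You also assert that the residual symmetry is only $\mathrm{SO}(2)$-axial without checking that there is no larger symmetry; that check is exactly where the content lies.

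The paper instead argues at the level of the \emph{isomorphism type} of the symmetry group, which is invariant under both gauge transformations and isometries of $H^3$. It also chooses a different pair of commuting circles: Manton--Sutcliffe's conformal circle $\begin{bmatrix}\cos\theta & -\sin\theta\\ \sin\theta & \cos\theta\end{bmatrix}$ together with $R_1$. The monopole from the Manton--Sutcliffe circle is spherically symmetric (by citation to earlier work), while for the $R_1$-monopole the paper argues that spherical symmetry would force the existence of an $\mathfrak{sp}(1)$ subalgebra of $\mathfrak{sp}(2)$ commuting with $\mathrm{diag}(i,i)$, and asserts that none exists. A difference of isomorphism type (spherical versus merely axial) is a genuine obstruction; the location of a zero is not.

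If you want to rescue the computational route, you would need a quantity that is invariant under both gauge transformations and isometries of $H^3$ --- for instance the mass, or the full profile of $|\Phi|$ as a function of hyperbolic distance from its zero --- and show that it differs for the two monopoles.
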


\begin{proof}
Using Manton--Sutcliffe's conformal circle action $\begin{bmatrix}
\cos\theta & -\sin\theta \\ \sin\theta & \cos\theta
\end{bmatrix}$, which is conjugate to $R_1$, we obtain a spherically symmetric hyperbolic monopole~\cite[Proposition~3]{lang_hyperbolic_2023}.

The circle action $\mathrm{diag}(e^{i\theta},e^{i\theta})$ commutes with Manton--Sutcliffe's and is exactly the usual $R_1$ circle action. If the hyperbolic monopole obtained using this action were gauge equivalent to the one from above, there would have to be a Lie subgroup of $\mathrm{Sp}(2)$ (the group of symmetries of the instanton), commuting with $R_1$, with Lie algebra isomorphic to $\mathfrak{sp}(1)$. However, no such subgroup exists. Indeed, the Lie algebra isomorphic to $\mathfrak{sp}(1)$ would have to be a Lie subalgebra of $\mathfrak{sp}(2)$ commuting with $\mathrm{diag}(i,i)$, which generates the $R_1$ circle action, and no such subalgebra exists. Therefore, the two hyperbolic monopoles have different symmetries and are therefore not gauge equivalent.
\end{proof}

\begin{note}
This gauge inequivalence is not a problem, as the choice of circle action makes no difference. Either choice obtains all axially symmetric hyperbolic monopoles with integral mass.
\end{note}

\subsubsection{Hyperbolic analogue to Higgs bundles with axial symmetry}

In this section, we discuss the connection between instantons with conformal superspherical symmetry and axially symmetric hyperbolic analogues to Higgs bundles.
\begin{prop}
A hyperbolic $\mathrm{Sp}(n)$ analogue to a Higgs bundle is axially symmetric if and only if its ADHM data has conformal superspherical symmetry.
\end{prop}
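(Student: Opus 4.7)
The plan is to mirror the strategy used in Propositions~\ref{prop:axialsymsingularmono} and \ref{prop:axialhypermono}, adapted to the conformal equivalence $\mathbb{R}^4\setminus\mathbb{R}\equiv H^2\times S^2$ developed in Section~\ref{subsubsec:HyperbolicHiggsBundles}. Hyperbolic analogues to Higgs bundles correspond bijectively to instantons with simple spherical symmetry, so what must be shown is that the additional axial symmetry on the Higgs-bundle side corresponds precisely to the commuting Manton--Sutcliffe conformal circle action that upgrades simple spherical to conformal superspherical symmetry.

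First I would fix the coordinates $(x_0,r,\theta,\phi)$ on $\mathbb{R}^4\setminus\mathbb{R}$ used in Section~\ref{subsubsec:HyperbolicHiggsBundles}, where $x=x_0+\vec{x}$, $r=|\vec{x}|$, and $(\theta,\phi)$ are spherical coordinates on $S^2\subseteq\mathfrak{sp}(1)$. In these coordinates the simple spherical action $\mathrm{diag}(p,p)$ fixes $(x_0,r)$ and rotates $(\theta,\phi)$ through the double cover $\mathrm{Sp}(1)\to\mathrm{SO}(3)$, acting purely on the $S^2$-factor, while Manton--Sutcliffe's conformal circle action commutes with this $\mathrm{Sp}(1)$ by construction of $\mathfrak{p}_{3,1,1}$ in Proposition~\ref{prop:sp1s1subalgebras}. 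Since the Manton--Sutcliffe action commutes with the simple spherical action, it must preserve the $\mathrm{Sp}(1)$-orbits, i.e., the spheres of constant $(x_0,r)$, and hence descends to a conformal map on $H^2$ with $(x_0,r)$-coordinates. A short computation of the induced action on $H^2$ identifies it as rotation about an interior point, which is exactly what ``axial symmetry'' of a hyperbolic Higgs bundle means.

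With this identification in hand, the argument runs as in Proposition~\ref{prop:axialsymsingularmono}. Let $(\Phi,A)$ be a hyperbolic Higgs bundle on $H^2$ and let $\mathbb{A}$ be the corresponding instanton on $\mathbb{R}^4\setminus\mathbb{R}$, obtained from $\pi_1^*A$ together with the Higgs components written against a pair of one-forms on $\mathbb{R}^4\setminus\mathbb{R}$ dual to the two vertical directions over $H^2$. Let $f$ denote the Manton--Sutcliffe transformation and let $\bar f$ be its descent to $H^2$. Because the decomposition of $\mathbb{A}$ into its $H^2$-part and its $S^2$-part is preserved by $f$ (this is where commutativity with the simple spherical action is used), the identity $f^*\mathbb{A}=g.\mathbb{A}$ for some gauge transformation $g$ decomposes cleanly into $\bar f^*A=\pi_1^*(g).A$ and $\bar f^*\Phi=\pi_1^*(g).\Phi$, exhibiting axial equivariance of $(\Phi,A)$; conversely, pulling an axial gauge transformation back along $\pi_1$ produces an equivariance for $\mathbb{A}$ under $f$. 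Combining this equivalence with the correspondence simple spherical $\iff$ hyperbolic Higgs bundle proves both directions of the theorem.

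The main obstacle I expect is the explicit verification that Manton--Sutcliffe's action, restricted to the subgroup commuting with simple spherical, really does descend to an honest isometric axial rotation of $H^2$ rather than a more general conformal self-map. Once that is settled, the splitting/recombination argument in the previous paragraph is essentially formal and parallels the singular-monopole and hyperbolic-monopole cases already treated in the paper.
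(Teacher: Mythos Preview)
Your proposal is correct and follows essentially the same approach as the paper. The paper carries out precisely the computation you flag as the ``main obstacle'': writing the action of $[(p,e^{i\theta})]$ on $\mathbb{R}^4\setminus\mathbb{R}$ as $x\mapsto p(x\cos\theta+\sin\theta)(-x\sin\theta+\cos\theta)^{-1}p^\dagger$ and simplifying to obtain explicit formulas for the induced map on $(x_0,r)$, which it identifies as rotation by $2\theta$ about the interior point $(0,1)$ in the half-space model of $H^2$; the conclusion then follows, as you anticipate, by the same decomposition argument used for Proposition~\ref{prop:axialhypermono}.
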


\begin{proof}
Through isometries, we may assume that the hyperbolic model is the half-space model. Because of the conformal equivalence of $S^4\setminus S^1\equiv H^2\times S^2$, these isometries correspond to conformal maps on $S^4$. 

The simple spherical action induces the half-space model of hyperbolic space, with coordinates $(x_0,r)$ with $r>0$. The circle removed is $\mathbb{R}\cup\{\infty\}$. Letting $\theta,\phi$ be the coordinates of $S^2$, we have that a point $(x_0,r,\theta,\phi)\in H^2\times S^2$ corresponds to a point $x_0+r\sin\theta\cos\phi i+r\sin\theta\sin\phi j+r\cos\theta k\in\mathbb{H}$. 

We now look at how the conformal superspherical action acts on these coordinates. Given $x\in\mathbb{H}$, write it in the above coordinates. Let $[(p,e^{i\theta})]\in (\mathrm{Sp}(1)\times S^1)/\{\pm(1,1)\}$. If $x\notin\mathbb{R}$, then this conformal transformation takes $x$ to another element in $\mathbb{H}$. Specifically, $x\mapsto p(x\cos\theta+\sin\theta)(-x\sin\theta+\cos\theta)^{-1}p^\dagger$. Simplifying, we find that this transformation acts on $H^2$ taking
\begin{equation*}
\begin{aligned}
x_0&\mapsto \frac{x_0\cos(2\theta)+\frac{1-x_0^2-r^2}{2}\sin(2\theta)}{(\cos\theta-x_0\sin\theta)^2+r^2\sin^2\theta} \quad\textrm{and} \\
r&\mapsto \frac{r}{(\cos\theta-x_0\sin\theta)^2+r^2\sin^2\theta}.
\end{aligned}
\end{equation*}
This action corresponds to rotation of $(x_0,r)$ by $2\theta$ about the point $(0,1)$. That is, it is an isometry. 

Therefore, just as in Proposition~\ref{prop:axialhypermono}, a hyperbolic analogue to a Higgs bundle is axially symmetric if and only if its ADHM data has conformal superspherical symmetry.
\end{proof}

\section{Rotational symmetry}\label{subsec:RotationalSymmetry}

In this section, we find an equation describing all instantons with rotational symmetry, as given in Table~\ref{table:conformalsubgroups} as well as below.

First, we introduce the notion of rotational symmetry. An instanton is said to have rotational symmetry if it is equivariant under $\mathrm{diag}(p,q)$ for all $p,q\in\mathrm{Sp}(1)$.  
\begin{theorem}
Let $\hat{M}\in\mathcal{M}_{n,k}$. Then $\hat{M}$ has rotational symmetry if and only if there exists a real representation $\rho\colon\mathfrak{sp}(1)\oplus\mathfrak{sp}(1)\rightarrow\mathfrak{so}(k)$ such that for all $\upsilon,\omega\in\mathfrak{sp}(1)$,
\label{thm:rotationalsym}
\begin{align}
\upsilon M+[\rho(\upsilon,\omega),M]-M\omega=0,\label{eq:rotationalsym1}\\
[\rho(\upsilon,\omega),R]=0.\label{eq:rotationalsym2}
\end{align}
\end{theorem}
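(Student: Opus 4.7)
The plan is to mirror the proofs of Theorems \ref{thm:toralsym} and \ref{thm:simplesphericalsym}: the symmetry group $\mathrm{Sp}(1)\times\mathrm{Sp}(1)$ is compact and connected, so once Corollary \ref{cor:IsomS} reduces the non-compact $\mathrm{GL}(k,\mathbb{R})$ gauge freedom to $\mathrm{O}(k)$, the situation falls squarely within the scope of Theorem \ref{thm:mainthm}. Concretely, I would define the stabilizer
\begin{equation*}
S:=\left\{(\mathrm{diag}(p,q),K)\Bigm| K\in\mathrm{O}(k),\ \left(\mathrm{diag}\bigl(Lq^\dagger KL^\dagger(LL^\dagger)^{-1},p^\dagger K\bigr),K\right).\mathrm{diag}(p,q).(\hat{M},U)=(\hat{M},U)\right\}
\end{equation*}
inside the compact group $\mathrm{Sp}(1)\times\mathrm{Sp}(1)\times\mathrm{O}(k)$. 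Corollary \ref{cor:IsomS} ensures that any $(Q,K)\in\mathrm{Sp}(n+k)\times\mathrm{GL}(k,\mathbb{R})$ realizing $\mathrm{diag}(p,q)$-equivariance must have $K\in\mathrm{O}(k)$, $[R,K]=0$, and $Q$ of the stated diagonal form, so $S$ captures all rotational symmetry of $\hat{M}$. The argument that $S$ is a closed subgroup follows the template of Theorem \ref{thm:simplesphericalsym}, invoking the commutativity of the gauge and conformal actions.

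Next, applying Theorem \ref{thm:mainthm} and Lemma \ref{lemma:conformalaction}, which give $\hat{M}_{\mathrm{diag}(\upsilon,\omega)}=\hat{M}\omega$ and $U_{\mathrm{diag}(\upsilon,\omega)}=U\upsilon$, the fixed-point equation separates into a top block in $L$ and a bottom block in $M$. Reading off the bottom block, together with the constraint $[R,e^{\theta\rho(\upsilon,\omega)}]=0$ obtained from Corollary \ref{cor:IsomS} applied to the one-parameter subgroup $\theta\mapsto\mathrm{diag}(e^{\theta\upsilon},e^{\theta\omega})$ and differentiated at $\theta=0$, produces exactly \eqref{eq:rotationalsym1} and \eqref{eq:rotationalsym2}, with $\rho\colon\mathfrak{sp}(1)\oplus\mathfrak{sp}(1)\rightarrow\mathfrak{so}(k)$ a Lie algebra homomorphism. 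This gives the forward direction.

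For the converse, given $\rho$ satisfying \eqref{eq:rotationalsym1} and \eqref{eq:rotationalsym2}, I would proceed as in the converse portion of Theorem \ref{thm:circularsym}: fix $(\upsilon,\omega)\in\mathfrak{sp}(1)\oplus\mathfrak{sp}(1)$ and $\theta_0\in\mathbb{R}$, set $K(\theta):=e^{\theta\rho(\upsilon,\omega)}\in\mathrm{O}(k)$, and verify (using \eqref{eq:rotationalsym1}) that the $M$-valued function $A(\theta):=e^{-\theta\upsilon}K(\theta)MK(\theta)^Te^{\theta\omega}$ has $A'(\theta)=0$, so $A(\theta_0)=M$. Combined with $[K(\theta_0),R]=0$ from \eqref{eq:rotationalsym2}, Lemma \ref{lemma:focusonY} delivers equivariance under $\mathrm{diag}(e^{\theta_0\upsilon},e^{\theta_0\omega})$. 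Since $\mathrm{Sp}(1)\times\mathrm{Sp}(1)$ is connected, every element factors as a product of such exponentials, and the homomorphism property of $\rho$ ensures these individual equivariances compose to full rotational equivariance.

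The main obstacle I expect is purely bookkeeping: the two $\mathfrak{sp}(1)$ factors must be handled jointly, so one must verify that $\rho$ being a Lie algebra homomorphism forces $\rho(\upsilon,0)$ and $\rho(0,\omega)$ to commute in $\mathfrak{so}(k)$, mirroring the fact that $\mathrm{diag}(\upsilon,0)$ and $\mathrm{diag}(0,\omega)$ commute in $\mathfrak{sp}(2)$. This commutativity is automatic from $\rho$ being a homomorphism and is exactly what is needed for the product-of-exponentials argument to close up to a group homomorphism. No genuinely new ideas beyond those in the preceding theorems are required.
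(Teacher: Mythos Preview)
Your proposal is correct and follows essentially the same route as the paper: define the stabilizer inside the compact group $\mathrm{Sp}(1)\times\mathrm{Sp}(1)\times\mathrm{O}(k)$ via Corollary~\ref{cor:IsomS}, apply Theorem~\ref{thm:mainthm} to obtain the homomorphism $\rho$, read off \eqref{eq:rotationalsym1} from the bottom block and \eqref{eq:rotationalsym2} by differentiating $[R,e^{\theta\rho(\upsilon,\omega)}]=0$, then run the converse as in Theorem~\ref{thm:circularsym}. The only cosmetic difference is that the paper parametrizes the stabilizer by acting with $\mathrm{diag}(p^\dagger,q^\dagger)$ rather than $\mathrm{diag}(p,q)$, which amounts to relabelling.
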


\begin{definition}
We call $\rho$ the \textbf{generating representation} of the rotational symmetry of $\hat{M}$.
\end{definition}

\begin{note}
All connected Lie subgroups of $\mathrm{Sp}(2)$ with Lie algebra $\mathfrak{sp}(1)\oplus\mathfrak{sp}(1)$ are conjugate to $\mathrm{Sp}(1)\times\mathrm{Sp}(1)$. That is, there is some $A\in\mathrm{Sp}(2)$ such that the Lie group is of the form $A(\mathrm{Sp}(1)\times\mathrm{Sp}(1))A^\dagger$. Instantons equivariant under this group are of the form $A^\dagger.(\hat{M},U)$, where $\hat{M}$ is equivariant under $\mathrm{Sp}(1)\times\mathrm{Sp}(1)$.
\end{note}

\begin{proof}
We use the same notation as introduced in the beginning of the proof of Theorem~\ref{thm:circularsym}. We follow the proof of Theorem~\ref{thm:mainthm} after setting the scene.

Suppose that $\hat{M}$ has rotational symmetry. Let $S\subseteq \mathrm{Sp}(1)\times \mathrm{Sp}(1)\times\mathrm{O}(k)$ be the stabilizer group of $(\hat{M},U)$ restricted to rotations in $\mathrm{Sp}(1)\times \mathrm{Sp}(1)\subseteq\mathrm{Sp}(2)$. That is
\begin{equation}
S:=\{(\mathrm{diag}(p,q),K)\mid (\mathrm{diag}(LqKL^\dagger (LL^\dagger)^{-1},p K),K).\mathrm{diag}(p^\dagger,q^\dagger).(\hat{M},U)
=(\hat{M},U)\}.\label{eq:rotationalsymS}
\end{equation}%***CHECK HERE FOR PUBLICATION, margins may change this to multline
Indeed, if $(Q,K)\in\mathrm{Sp}(n+k)\times\mathrm{GL}(k,\mathbb{R})$ such that $\mathrm{diag}(p^\dagger,q^\dagger).(Q,K).(\hat{M},U)=(\hat{M},U)$, then Corollary~\ref{cor:IsomS} tells us that $K\in\mathrm{O}(k)$, $[R,K]=0$, and $Q=\mathrm{diag}(LqKL^\dagger (LL^\dagger)^{-1},p K)$. Therefore, the pairs in $S$ encapsulate all the rotational symmetry of the instanton. That $S$ is a group follows from this fact as well as the fact that the conformal action is a right Lie group action, just as in the proof of Theorem~\ref{thm:simplesphericalsym}.

Just like our previous symmetries, other than circular $t$-symmetry, the stabilizer group is a subgroup of a compact group, so we proceed as in the proof of Theorem~\ref{thm:mainthm}~\cite[Theorem~1.1]{lang_moduli_2024}. In particular, we find that $\hat{M}$ has rotational symmetry if and only if there is a Lie algebra homomorphism $\rho\colon\mathfrak{sp}(1)\oplus\mathfrak{sp}(1)\rightarrow\mathfrak{so}(k)$ such that for all $\upsilon,\omega\in\mathfrak{sp}(1)$, we have
\begin{equation}
\left(\begin{bmatrix}
L\omega L^\dagger (LL^\dagger)^{-1}L+L\rho(\upsilon,\omega)L^\dagger (LL^\dagger)^{-1}L-L\rho(\upsilon,\omega)-L\omega \\
\upsilon M+[\rho(\upsilon,\omega),M]-M\omega
\end{bmatrix},0\right)=(0,0).\label{eq:rotationalsymdiff}
\end{equation}

We can simplify these constraints. In particular, suppose $\hat{M}$ has rotational symmetry. In Section~\ref{subsubsec:RotSymStructureGroups}, we study the top row. For now, focusing on the bottom row, we see that $\upsilon M+[\rho(\upsilon,\omega),M]-M\omega=0$. Furthermore, as mentioned above, Corollary~\ref{cor:IsomS} tells us that $[R,e^{\theta\rho(\upsilon,\omega)}]=0$, for all $\theta\in\mathbb{R}$. Differentiating and evaluating at $\theta=0$, we have $[R,\rho(\upsilon,\omega)]=0$. 

We can use the same method as in the proof of Theorem~\ref{thm:circularsym} to prove the converse.
\end{proof}

\begin{note}
All instantons with rotational symmetry correspond to axially symmetric hyperbolic monopoles because there is a $S^1\times S^1$ subgroup of $\mathrm{Sp}(1)\times\mathrm{Sp}(1)$ comprised of two commuting circle actions conjugate to $R_1$. The rest of $\mathrm{Sp}(1)\times\mathrm{Sp}(1)$ does not commute with either of these circle actions. Thus, the rest of the group does not impart any additional symmetry. While not all axially symmetric hyperbolic monopoles are described by instantons with rotational symmetry, the extra restrictions make it easier to find such monopoles.

Additionally, all instantons with rotational symmetry correspond to spherically symmetric singular monopoles with Dirac type singularities because there is an isoclinic superspherical subgroup of $\mathrm{Sp}(1)\times\mathrm{Sp}(1)$. The rest of $\mathrm{Sp}(1)\times\mathrm{Sp}(1)$ does not impart any additional symmetry. While not all spherically symmetric singular monopoles are described by instantons with rotational symmetry, the extra restrictions make it easier to find such monopoles. 

Unlike with the instantons with isoclinic superspherical symmetry, any choice of subgroup conjugate to $R_0$ leads to a spherically symmetric singular monopole. Finally, any choice of isoclinic spherical symmetry subgroup leads to an axially symmetric hyperbolic Higgs bundle.
\end{note}

\begin{note}
Just as with our previous symmetries, we need not check the final condition of $\mathcal{M}_{n,k}$ in Definition~\ref{def:Mstd} everywhere. Indeed, if $\hat{M}$ satisfies \eqref{eq:rotationalsym1} and \eqref{eq:rotationalsym2} for some real representation $(\mathbb{R}^k,\rho)$, then it satisfies \eqref{eq:isoclinicsphericalsym1} and \eqref{eq:isoclinicsphericalsym2} for some real representation of $\mathfrak{sp}(1)$. Thus, just as in the isoclinic spherical symmetry case, we need only check that the final condition of $\mathcal{M}_{n,k}$ in Definition~\ref{def:Mstd} is satisfied for all $x=x_0\in\mathbb{H}$ with $x_0\geq 0$. 
\end{note}

\subsection{Structure of rotational symmetry}

In light of Theorem~\ref{thm:rotationalsym}, we know that we can find instantons with rotational symmetry. That is, starting with a real representation of $\mathfrak{spin}(4)$, we can narrow down the possible $M$, so we are only left with finding $L$ such that we have $\hat{M}\in\mathcal{M}_{n,k}$, which necessarily has rotational symmetry. In this section, we investigate what representations generate such instantons and what the corresponding ADHM data looks like. 

\begin{note}
The representation theory of $\mathfrak{sp}(1)\oplus\mathfrak{sp}(1)$ is closely related to that of $\mathfrak{sp}(1)$. In particular, for every $m,n\in\mathbb{N}_+$, there is a unique, up to isomorphism, irreducible complex $mn$-representation $(V_{m,n},\rho_{m,n})$ with highest weight $\begin{bmatrix}
\frac{m-1}{2} & \frac{n-1}{2}
\end{bmatrix}$. In particular, recalling that we denote the irreducible complex $n$-representation by $(V_n,\rho_n)$, we can take $V_{m,n}:=V_m\otimes V_n$ and $\rho_{m,n}:=\rho_m\otimes\mathrm{id}_{V_n}+\mathrm{id}_{V_m}\otimes \rho_n$~\cite[Proposition~D.3.8]{lang_thesis_2024}.

Additionally, for every $m,n\in\mathbb{N}_+$, if $n\equiv m\textrm{ (mod 2)}$, then there is a unique, up to isomorphism, irreducible real $mn$-representation $(\mathbb{R}^{mn},\varrho_{m,n})$. However, if $n\not\equiv m\textrm{ (mod 2)}$, then there is a unique, up to isomorphism, irreducible real $2mn$-representation $(\mathbb{R}^{2mn},\varrho_{m,n})$. In the former case, the complexification of this representation is isomorphic to $(V_{m,n},\rho_{m,n})$ and in the latter, the complexification is isomorphic to $(V_{m,n},\rho_{m,n})^{\oplus 2}$.
\end{note}

\begin{definition}
Recall the standard basis of $\mathfrak{sp}(1)$ given in Definition~\ref{def:stdbasissp1}. Let $\upsilon_1,\ldots,\upsilon_6$ be the \textbf{standard basis} for $\mathfrak{sp}(1)\oplus\mathfrak{sp}(1)$. That is, $\upsilon_1,\upsilon_2,\upsilon_3$ give the standard basis for $\mathfrak{sp}(1)\oplus 0$ and $\upsilon_4,\upsilon_5,\upsilon_6$ give the standard basis for $0\oplus\mathfrak{sp}(1)$.
\end{definition}

\begin{definition}
Let $V:=\mathbb{R}^k$. Consider a $k$-representation $(V,\rho)$ of $\mathfrak{sp}(1)\oplus\mathfrak{sp}(1)$. Consider the real representation $(\mathbb{H},\nu)$ of $\mathfrak{sp}(1)\oplus\mathfrak{sp}(1)$ given by $\nu(\upsilon,\omega)(x)=\upsilon x-x\omega$. Define the induced real representation 
\begin{equation}
(\hat{V},\hat{\rho}):=(V,\rho)\otimes_\mathbb{R} (V^*,\rho^*)\otimes_\mathbb{R} (\mathbb{H},\nu).
\end{equation}
Note that $(\mathbb{H},\nu)\simeq (\mathbb{R}^4,\varrho_{2,2})$. Explicitly, the action of $\hat{\rho}(\upsilon,\omega)$ on $A\in\hat{V}=\mathrm{Mat}(k,k,\mathbb{H})$ is 
\begin{equation}
\hat{\rho}(\upsilon,\omega)A= \upsilon A-A\omega+ [\rho(\upsilon,\omega),A].
\end{equation}
\end{definition}

\begin{note}
As $\mathfrak{sp}(1)\oplus\mathfrak{sp}(1)\simeq\mathfrak{spin}(4)$ has Dynkin diagram $D_2$, all representations of $\mathfrak{sp}(1)\oplus\mathfrak{sp}(1)$ are self-dual.
\end{note}

The definition of $(\hat{V},\hat{\rho})$ is well-motivated. Firstly, note that $M\in\hat{V}$. Secondly, given the connection between the action of $\hat{\rho}(\upsilon,\omega)$ and \eqref{eq:rotationalsym1}, we immediately obtain the following corollary.
\begin{cor}
Let $\hat{M}\in\mathcal{M}$. Then $\hat{M}$ has rotational symmetry if and only if there is some real $k$-representation $(V,\rho)$ of $\mathfrak{sp}(1)\oplus\mathfrak{sp}(1)$ such that $[R,\rho(\upsilon,\omega)]=0$ and $\hat{\rho}(\upsilon,\omega)(M)=0$ for all $\upsilon,\omega\in\mathfrak{sp}(1)$.\label{cor:spin4triv}
\end{cor}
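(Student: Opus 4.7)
The plan is to observe that this corollary is essentially a restatement of Theorem~\ref{thm:rotationalsym} using the language of the induced representation $(\hat{V},\hat{\rho})$. The content of the corollary is really just a change of notation, not a new result to prove.

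First I would invoke Theorem~\ref{thm:rotationalsym} directly: $\hat{M}$ has rotational symmetry if and only if there exists a real representation $\rho\colon\mathfrak{sp}(1)\oplus\mathfrak{sp}(1)\rightarrow\mathfrak{so}(k)$ such that \eqref{eq:rotationalsym1} and \eqref{eq:rotationalsym2} hold for all $\upsilon,\omega\in\mathfrak{sp}(1)$. The condition \eqref{eq:rotationalsym2} is verbatim the second hypothesis $[\rho(\upsilon,\omega),R]=0$ of the corollary, so nothing needs to be done there.

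Next I would unpack the definition of the induced representation $(\hat{V},\hat{\rho})$ given just before the corollary. By definition, $\hat{\rho}(\upsilon,\omega)(A) = \upsilon A - A\omega + [\rho(\upsilon,\omega),A]$ for any $A\in\hat{V}=\mathrm{Mat}(k,k,\mathbb{H})$. Applying this with $A = M$ gives exactly the left-hand side of \eqref{eq:rotationalsym1}, so \eqref{eq:rotationalsym1} is equivalent to $\hat{\rho}(\upsilon,\omega)(M)=0$. This establishes the biconditional and completes the proof. Since the argument is a one-line identification of equations via a definition, there is no substantive obstacle; the only thing to be careful about is to note that $M\in\hat{V}$ (which is observed in the discussion preceding the corollary) so that $\hat{\rho}(\upsilon,\omega)(M)$ makes sense as an element of $\hat{V}$.
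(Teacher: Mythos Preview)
Your proposal is correct and matches the paper's approach exactly: the paper remarks that ``given the connection between the action of $\hat{\rho}(\upsilon,\omega)$ and \eqref{eq:rotationalsym1}, we immediately obtain the following corollary,'' and gives no further proof. Your unpacking of the definition of $\hat{\rho}$ to identify \eqref{eq:rotationalsym1} with $\hat{\rho}(\upsilon,\omega)(M)=0$, together with the verbatim match of \eqref{eq:rotationalsym2}, is precisely the intended argument.
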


\begin{note}
Suppose $\hat{M}\in\mathcal{M}$ has rotational symmetry. Corollary~\ref{cor:spin4triv}, tells us that there is some real representation $(V,\rho)$ such that $\mathrm{span}(M)\subseteq \hat{V}$ is an invariant subspace, which is acted on trivially.

As $\mathfrak{sp}(1)\oplus \mathfrak{sp}(1)$ is semi-simple, the representation $(\hat{V},\hat{\rho})$ decomposes into irreducible representations. We have previously explored the $M=0$ case, so suppose $M\neq 0$. As $\mathrm{span}(M)$ is a one-dimensional invariant subspace, acted on trivially, $(\mathrm{span}(M),0)$ is a summand of the representation. So the decomposition of $(\hat{V},\hat{\rho})$ must contain trivial summands. Moreover, $M$ is in the direct sum of these trivial summands. In fact, if $M=0$, then it is in the direct sum of trivial summands as well.
\end{note}

In particular, we note the following representations. The trivial representation $(\mathbb{C},0)$ is the representation $(V_{1,1},\rho_{1,1})$. The real representation $(\mathbb{H},\nu)$, where $\nu(\upsilon,\omega)(x):=\upsilon x-x\omega$, is the representation $(\mathbb{R}^4,\varrho_{2,2})$, whose complexification is $(V_{2,2},\rho_{2,2})$. Consider the representations $(\mathbb{H},\iota_1)$ and $(\mathbb{H},\iota_2)$, where $\iota_1(\upsilon,\omega)(x):=\upsilon x$ and $\iota_2(\upsilon,\omega)(x):=\omega x$. Restricting the scalars to $\mathbb{C}$, these are isomorphic to $(V_{2,1},\rho_{2,1})$ and $(V_{1,2},\rho_{1,2})$, respectively.

When searching for instantons with rotational symmetry, $M$ is found in the trivial summands of $(\hat{V},\hat{\rho})$. The following lemmas tell us where these trivial summands are, given a decomposition of $(\hat{V},\hat{\rho})$. First, we see where the trivial summands are when decomposed as a complex representation, providing us with the information needed for the real decomposition. First, we note that we know how to decompose tensor products of irreducible complex representations of $\mathfrak{sp}(1)\oplus\mathfrak{sp}(1)$~\cite[Proposition~D.3.12]{lang_thesis_2024}: 
\begin{equation}
(V_{a,b},\rho_{a,b})\otimes (V_{c,d},\rho_{c,d})\simeq\bigoplus_{i=1}^{\mathrm{min}(a,c)}\bigoplus_{j=1}^{\mathrm{min}(b,d)}(V_{a+c+1-2i,b+d+1-2j},\rho_{a+c+1-2i,b+d+1-2j}).
\end{equation}
\begin{lemma}
Given $a\geq b\geq 1$ and $m,n\geq 1$, consider $(V_{a,m},\rho_{a,m})\otimes_\mathbb{C} (V_{b,n},\rho_{b,n})\otimes_\mathbb{C} (V_{2,2},\rho_{2,2})$. The product has a single trivial summand when $a=b+1$ and $|n-m|=1$. Otherwise, there are no trivial summands.\label{lemma:spin4complextrivsummands}
\end{lemma}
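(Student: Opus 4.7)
The plan is to apply the Clebsch--Gordan formula stated immediately before the lemma twice: once to understand when a given irreducible $(V_{c,d},\rho_{c,d})$ yields a trivial summand after further tensoring with $(V_{2,2},\rho_{2,2})$, and then once more to count the occurrences of such $(V_{c,d},\rho_{c,d})$ inside $(V_{a,m},\rho_{a,m})\otimes_\mathbb{C} (V_{b,n},\rho_{b,n})$.

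First, I would carry out the auxiliary computation for $(V_{c,d},\rho_{c,d})\otimes_\mathbb{C} (V_{2,2},\rho_{2,2})$. By the Clebsch--Gordan formula, its summands are indexed by $1\leq i\leq\min(c,2)$ and $1\leq j\leq\min(d,2)$, with weights $(c+3-2i, d+3-2j)$. The trivial summand $(V_{1,1},\rho_{1,1})$ appears iff $c+3-2i=1$ and $d+3-2j=1$, i.e. $i=(c+2)/2$ and $j=(d+2)/2$. For $i$ to be a valid integer index in $\{1,\ldots,\min(c,2)\}$ we need $c$ even and $(c+2)/2\leq\min(c,2)$, which forces $c=2$; similarly $d=2$. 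Hence $(V_{c,d},\rho_{c,d})\otimes_\mathbb{C} (V_{2,2},\rho_{2,2})$ contains a (single) trivial summand iff $c=d=2$, and contains none otherwise.

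Consequently, the number of trivial summands in the full triple tensor product equals the multiplicity of $(V_{2,2},\rho_{2,2})$ inside $(V_{a,m},\rho_{a,m})\otimes_\mathbb{C} (V_{b,n},\rho_{b,n})$. Applying the Clebsch--Gordan formula to this double product, the summand $(V_{2,2},\rho_{2,2})$ occurs precisely when there exist indices $1\leq i\leq\min(a,b)$ and $1\leq j\leq\min(m,n)$ with $a+b+1-2i=2$ and $m+n+1-2j=2$. The first equation requires $a+b$ odd, hence (since $a\geq b$) $i=(a+b-1)/2$; this lies in $\{1,\ldots,b\}$ iff $a\leq b+1$, giving $a=b+1$. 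The second equation gives symmetrically $j=(m+n-1)/2\in\{1,\ldots,\min(m,n)\}$ iff $|m-n|=1$. Under these two conditions there is exactly one such $(i,j)$, so the multiplicity is one; otherwise it is zero.

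Combining the two steps yields the lemma: a single trivial summand exists precisely when $a=b+1$ and $|n-m|=1$. There is no real obstacle here — the argument is a direct double application of Clebsch--Gordan — the only minor subtlety is remembering to check that the candidate index $(a+b-1)/2$ actually falls within the allowed range $\{1,\ldots,\min(a,b)\}$, which is what pins down $a-b=1$ rather than merely $a-b$ odd.
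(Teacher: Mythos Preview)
Your proof is correct and follows essentially the same approach as the paper: both reduce the count of trivial summands to the multiplicity of $(V_{2,2},\rho_{2,2})$ inside $(V_{a,m},\rho_{a,m})\otimes_\mathbb{C}(V_{b,n},\rho_{b,n})$ by first checking that $(V_{c,d},\rho_{c,d})\otimes_\mathbb{C}(V_{2,2},\rho_{2,2})$ is trivial-free unless $c=d=2$, and then apply Clebsch--Gordan again. You simply spell out the index computations that the paper leaves implicit.
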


\begin{proof}
First we look at the product $(V_{c,d},\rho_{c,d})\otimes_\mathbb{C} (V_{2,2},\rho_{2,2})$. We see that this product contains a single trivial component when $c=d=2$ and otherwise contains none. Thus, the total product contains a single trivial component for every $(V_{2,2},\rho_{2,2})$ summand in the decomposition of $(V_{a,m},\rho_{a,m})\otimes_\mathbb{C} (V_{b,n},\rho_{b,n})$. 

Examining the decomposition of $(V_{a,m},\rho_{a,m})\otimes_\mathbb{C} (V_{b,n},\rho_{b,n})$, we see that this product contains a single $(V_{2,2},\rho_{2,2})$ summand when $a=b+1$ and $|n-m|=1$. Otherwise, it contains none, proving the lemma.
\end{proof}

We first investigate the trivial summand present in $(V_{m+1,n\pm 1},\rho_{m+1,n\pm 1})\otimes_\mathbb{C} (V_{m,n}^*,\rho_{m,n}^*)\otimes_\mathbb{C} (V_{2,2},\rho_{2,2})$.
\begin{definition}
Let $B^{m,n,\pm}$ be the unit length generator of the unique one-dimensional representation in $(V_{m+1,n\pm 1},\rho_{m+1,n\pm 1})\otimes_\mathbb{C} (V_{m,n}^*,\rho_{m,n}^*)\otimes_\mathbb{C} (V_{2,2},\rho_{2,2})$. This generator is well-defined, up to a $\mathbb{C}^*$ factor, though a choice of scale leaves a $S^1$ factor. This generator can be thought of as a $\mathfrak{sp}(1)\oplus\mathfrak{sp}(1)$-invariant quadruple $(B_\mu^{m,n,\pm})_{\mu=0}^3$ of complex $(m+1)(n\pm 1)\times mn$ matrices. 
\end{definition}
These generators can be identified in much the same way as we compute the $B^n$ matrices in previous work~\cite[Appendix~E]{lang_thesis_2024}.

Above, we defined what spans the trivial summands of $(\hat{V},\hat{\rho})$ when decomposed as a complex representation. We use this definition to determine what spans the real decomposition.

Let $N(c,d):=\frac{3-(-1)^{c+d}}{2}cd$. That is, $N(c,d)=cd$ if $c$ and $d$ have the same parity and $N(c,d)=2cd$ otherwise. 
\begin{lemma}
Given $a\geq b\geq 1$ and $m,n\geq 1$, we have that 
\begin{equation*}
(\mathbb{R}^{N(a,m)},\varrho_{a,m})\otimes_\mathbb{R} ((\mathbb{R}^{N(b,n)})^*,\varrho_{b,n}^*)\otimes_\mathbb{R} (\mathbb{R}^4,\varrho_{2,2})
\end{equation*} 
possesses trivial summands when $a=b+1$ and $|n-m|=1$. Otherwise, there are no trivial summands. More specifically, if $a$ and $m$ have the same parity, then there is a single trivial summand. If $a$ and $m$ have different parities, then there are four trivial summands.\label{lemma:howmanytrivsummandsSpin4}
\end{lemma}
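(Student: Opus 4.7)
The plan is to reduce this real counting problem to the complex counting problem already handled by Lemma~\ref{lemma:spin4complextrivsummands}. The key dimension-theoretic fact is that for any real representation $W_\mathbb{R}$ of $\mathfrak{sp}(1)\oplus\mathfrak{sp}(1)$, the number of trivial summands in $W_\mathbb{R}$ equals $\dim_\mathbb{R} W_\mathbb{R}^{\mathfrak{sp}(1)\oplus\mathfrak{sp}(1)}$, which in turn equals $\dim_\mathbb{C}(W_\mathbb{R}\otimes_\mathbb{R}\mathbb{C})^{\mathfrak{sp}(1)\oplus\mathfrak{sp}(1)}$, the number of trivial summands in the complexification. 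So it suffices to compute the complexification and count trivial summands there.

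First I would complexify each factor. By the preceding note, $(\mathbb{R}^{N(c,d)},\varrho_{c,d})\otimes_\mathbb{R}\mathbb{C}$ is isomorphic to $(V_{c,d},\rho_{c,d})$ when $c\equiv d\pmod 2$ and to $(V_{c,d},\rho_{c,d})^{\oplus 2}$ when $c\not\equiv d\pmod 2$. Since $2\equiv 2\pmod 2$, the factor $(\mathbb{R}^4,\varrho_{2,2})$ complexifies to a single copy of $(V_{2,2},\rho_{2,2})$. Letting $\alpha\in\{1,2\}$ record whether $a+m$ is even or odd respectively, and $\beta\in\{1,2\}$ do the same for $b+n$, the full complexification becomes
\begin{equation*}
\left((V_{a,m},\rho_{a,m})\otimes_\mathbb{C}(V_{b,n}^*,\rho_{b,n}^*)\otimes_\mathbb{C}(V_{2,2},\rho_{2,2})\right)^{\oplus \alpha\beta}.
\end{equation*}
By Lemma~\ref{lemma:spin4complextrivsummands}, the inner complex triple tensor product contains exactly one trivial summand precisely when $a=b+1$ and $|n-m|=1$, and none otherwise. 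Hence the complexification has $\alpha\beta$ trivial summands in the exceptional case and none otherwise, and the same count persists in the real decomposition.

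Finally I would translate $\alpha\beta$ into the statement's parity dichotomy. Under $a=b+1$ and $|n-m|=1$, both $a+b$ and $m+n$ are odd, so $(a+m)+(b+n)$ is even. This forces $\alpha=\beta$: either $a\equiv m\pmod 2$ and $b\equiv n\pmod 2$ hold simultaneously (so $\alpha=\beta=1$, yielding one real trivial summand), or $a\not\equiv m\pmod 2$ and $b\not\equiv n\pmod 2$ hold simultaneously (so $\alpha=\beta=2$, yielding four real trivial summands). The dichotomy based on the parity of $a+m$ alone is therefore justified, matching the statement precisely.

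The main obstacle is ensuring the correspondence between trivial summands of the real and complex decompositions is valid without appealing to the finer real/quaternionic type of the irreducible factors; the fixed-point dimension argument sidesteps this cleanly. A secondary bookkeeping issue is verifying that the constraint $a=b+1,\,|n-m|=1$ rules out the mixed-parity case $\alpha\neq\beta$, but this follows immediately from the parity identity above.
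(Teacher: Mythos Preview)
Your proposal is correct and follows essentially the same approach as the paper: complexify each factor and invoke Lemma~\ref{lemma:spin4complextrivsummands} to count trivial summands, then translate back using the parity observation that $a=b+1$ and $|n-m|=1$ force $a+m$ and $b+n$ to have the same parity. The only notable difference is in how the passage from real to complex trivial-summand counts is justified: the paper verifies that every irreducible constituent $(V_{c,d},\rho_{c,d})$ appearing in the complexification has $c\equiv d\pmod 2$ and is therefore of real type (so the real and complex decompositions coincide), whereas you bypass this type-check entirely via the invariants-dimension identity $\dim_\mathbb{R} W^{\mathfrak{g}} = \dim_\mathbb{C}(W\otimes_\mathbb{R}\mathbb{C})^{\mathfrak{g}}$, which is a cleaner and more general device.
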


\begin{proof}
Suppose that $a\neq b+1$ or $|n-m|\neq 1$. Regardless of the relative parities of $a$, $b$, $m$, and $n$, the complexification of the real representations does not contain any trivial summands. Henceforth, we assume that $a=b+1$ and $|n-m|=1$.

Suppose that $a$ and $m$, and hence $b$ and $n$, have the same parity. Then the complexification of the real representation is isomorphic, as a complex representation, to $(V_{a,m},\rho_{a,m})\otimes_\mathbb{C} (V_{b,n},\rho_{b,n})\otimes_\mathbb{C} (V_{2,2},\rho_{2,2})$. By Lemma~\ref{lemma:spin4complextrivsummands}, we know that this tensor product has a single trivial summand. Furthermore, as this tensor product only contains summands $(V_{c,d},\rho_{c,d})$ with $c$ and $d$ having the same parity, the representation is real, meaning that the real decomposition is the same as the complex. Thus, we have proven the first part of the lemma.

Suppose that $a$ and $m$, and hence $b$ and $n$, have different parity. Then the complexification of the real representation is isomorphic, as a complex representation, to $(V_{a,m},\rho_{a,m})^{\oplus 2}\otimes_\mathbb{C} (V_{b,n},\rho_{b,n})^{\oplus 2}\otimes_\mathbb{C} (V_{2,2},\rho_{2,2})$. By Lemma~\ref{lemma:spin4complextrivsummands}, we know that this tensor product has four trivial summands. Furthermore, as this tensor product only contains summands $(V_{c,d},\rho_{c,d})$ with $c$ and $d$ having the same parity, the representation is real, meaning that the real decomposition is the same as the complex. Thus, we have proven the second part of the lemma.
\end{proof}

Using Lemma~\ref{lemma:howmanytrivsummandsSpin4}, we investigate the trivial summands of 
\begin{equation*}
(\mathbb{R}^{N(m+1,n\pm 1)},\varrho_{m+1,n\pm 1})\otimes_\mathbb{R} ((\mathbb{R}^{N(m,n)})^*,\varrho_{m,n}^*)\otimes_\mathbb{R} (\mathbb{R}^4,\varrho_{2,2}).
\end{equation*} 
\begin{lemma}
Let $m,n\in\mathbb{N}_+$. If $m$ and $n$ have the same parity, then there is a unique choice of $S^1$ factor, up to sign, such that the $B_i^{m,n,\pm}$ matrices are real. Then the quadruple spans the unique trivial summand of $(\mathbb{R}^{N(m+1,n\pm 1)},\varrho_{m+1,n\pm 1})\otimes_\mathbb{R} ((\mathbb{R}^{N(m,n)})^*,\varrho_{m,n}^*)\otimes_\mathbb{R} (\mathbb{R}^4,\varrho_{2,2})$. 

If $m$ and $n$ have different parity, then $N(m,n)=2mn$ and $N(m+1,n\pm 1)=2(m+1)(n\pm 1)$. Let $Y_i^+:=\varrho_{m+1,n\pm 1}(\upsilon_i)\in\mathfrak{so}(2(m+1)(n\pm 1))$ and $Y_i^-:=\varrho_{m,n}(\upsilon_i)\in\mathfrak{so}(2mn)$. Let $y_i^+:=\rho_{m+1,n\pm 1}(\upsilon_i)\in\mathfrak{su}((m+1)(n\pm 1))$ and $y_i^-:=\rho_{m,n}(\upsilon_i)\in\mathfrak{su}(mn)$. As the complexification of $(\mathbb{R}^{2(m+1)(n\pm 1)},\varrho_{m+1,n\pm 1})$ is isomorphic to $(V_{m+1,n\pm 1},\rho_{m+1,n\pm 1})^{\oplus 2}$ and similarly for $(\mathbb{R}^{2mn},\varrho_{m,n})$, there exists $U_+\in\mathrm{SU}((m+1)(n\pm 1))$ and $U_-\in\mathrm{SU}(mn)$ such that 
\begin{equation*}
Y_i^\pm=U_\pm^\dagger \begin{bmatrix}
y_i^\pm & 0 \\ 0 & y_i^\pm 
\end{bmatrix}U_\pm. 
\end{equation*} 
We can find four linearly independent quadruples of real matrices spanning the four trivial summands of $(\mathbb{R}^{N(m+1,n\pm 1)},\varrho_{m+1,n\pm 1})\otimes_\mathbb{R} ((\mathbb{R}^{N(m,n)})^*,\varrho_{m,n}^*)\otimes_\mathbb{R} (\mathbb{R}^4,\varrho_{2,2})$. For some choices of $\alpha,\beta,\gamma,\delta\in\mathbb{C}$, these linearly independent quadruples of matrices are given by
\begin{equation}
\left(U_+^\dagger \begin{bmatrix}
\alpha B_\mu^{m,n,\pm} & \beta B_\mu^{m,n,\pm} \\ 
\gamma B_\mu^{m,n,\pm} & \delta B_\mu^{m,n,\pm}
\end{bmatrix}U_-\right)_{\mu=0}^3.
\end{equation}
\end{lemma}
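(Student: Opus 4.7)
The plan is to parallel the proof of the analogous statement in the isoclinic spherical symmetry section. I set up the condition for a quadruple $(D_\mu)_{\mu=0}^3$ of real matrices of sizes $N(m+1,n\pm 1)\times N(m,n)$ to lie in a trivial summand: for every $(\upsilon,\omega)\in\mathfrak{sp}(1)\oplus\mathfrak{sp}(1)$,
$$\varrho_{m+1,n\pm 1}(\upsilon,\omega)\,D_\mu-D_\mu\,\varrho_{m,n}(\upsilon,\omega)+\sum_\nu(\varrho_{2,2}(\upsilon,\omega))_{\mu\nu}\,D_\nu=0.$$
All subsequent work turns on how this equation interacts with the complexification.

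For the same-parity case, both $\varrho_{m+1,n\pm 1}$ and $\varrho_{m,n}$ are already real forms of the irreducible complex representations $V_{m+1,n\pm 1}$ and $V_{m,n}$, so the complexification of the real representation is $(V_{m+1,n\pm 1})\otimes(V_{m,n}^*)\otimes(V_{2,2})$. By Lemma~\ref{lemma:spin4complextrivsummands} this contains a unique complex one-dimensional trivial summand, spanned by the quadruple $B_\mu^{m,n,\pm}$. As this line must equal the complexification of the unique real trivial summand, it is stable under the complex conjugation induced by the real structures on the tensor factors, and that conjugation must act as $B_\mu^{m,n,\pm}\mapsto c\,\overline{B_\mu^{m,n,\pm}}$ for some $c\in S^1$ (the unit-modulus condition comes from the conjugation squaring to the identity). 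Rescaling the $S^1$ phase freedom of $B_\mu^{m,n,\pm}$ by a square root of $c^{-1}$, which exists up to an overall sign, produces the real generator claimed in the statement, unique up to sign.

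For the different-parity case, I apply the unitary block-decompositions $U_\pm$ supplied in the statement to write $\varrho_{m+1,n\pm 1}(\upsilon,\omega)=U_+^\dagger\mathrm{diag}(y^+(\upsilon,\omega),y^+(\upsilon,\omega))U_+$ and $\varrho_{m,n}(\upsilon,\omega)=U_-^\dagger\mathrm{diag}(y^-(\upsilon,\omega),y^-(\upsilon,\omega))U_-$. Setting $D'_\mu:=U_+D_\mu U_-^\dagger$ and viewing each $D'_\mu$ as a $2\times 2$ block matrix with blocks in $\mathrm{Mat}((m+1)(n\pm 1),mn,\mathbb{C})$, the invariance condition decouples into four simultaneous copies of the complex invariance equation for $(V_{m+1,n\pm 1})\otimes(V_{m,n}^*)\otimes(V_{2,2})$, with $(\mathbb{R}^4,\varrho_{2,2})$ identified with the real form of $V_{2,2}$ using the basis convention implicit in the definition of $B_\mu^{m,n,\pm}$. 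By the uniqueness of the trivial summand, each of the four blocks is a complex scalar multiple of $B_\mu^{m,n,\pm}$, so
$$D_\mu=U_+^\dagger\begin{bmatrix}\alpha B_\mu^{m,n,\pm}&\beta B_\mu^{m,n,\pm}\\ \gamma B_\mu^{m,n,\pm}&\delta B_\mu^{m,n,\pm}\end{bmatrix}U_-$$
for some $(\alpha,\beta,\gamma,\delta)\in\mathbb{C}^4$, accounting for the eight real degrees of freedom before reality is imposed.

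The main obstacle is imposing reality of each $D_\mu$ and verifying that the resulting subspace of $\mathbb{C}^4$ has real dimension exactly four, matching the count predicted by Lemma~\ref{lemma:howmanytrivsummandsSpin4}. I expect this to follow from the observation that the complex conjugation on $V_{m+1,n\pm 1}^{\oplus 2}$ induced by the real structure of $\varrho_{m+1,n\pm 1}$ interchanges the two copies through a fixed antilinear isomorphism, and likewise for $V_{m,n}^{\oplus 2}$; composed, these intertwine the four blocks by an involution whose fixed-point set is exactly four-real-dimensional. Once this is in hand, selecting any four $\mathbb{R}$-linearly independent solutions of the reality equation on $(\alpha,\beta,\gamma,\delta)$ furnishes the four linearly independent real quadruples asserted by the statement.
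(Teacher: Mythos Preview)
Your approach is correct and is precisely the argument the paper defers to: the paper's proof consists only of the citation ``follows from previous work on hyperbolic monopoles~[Lemma~9, lang\_hyperbolic\_2023]'', and you have essentially reconstructed that argument, modelled on the isoclinic spherical case in the present paper.

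One small remark on your final paragraph: you need not work to verify that the reality constraint cuts the eight real parameters $(\alpha,\beta,\gamma,\delta)\in\mathbb{C}^4$ down to exactly four. Lemma~\ref{lemma:howmanytrivsummandsSpin4} already establishes that the real tensor product has exactly four trivial summands in the different-parity case, so the space of real invariant quadruples is four-dimensional a priori. Since your parametrisation via $U_\pm$ captures all complex invariants, the real ones form a four-dimensional real subspace of that $\mathbb{C}^4$, and you may simply pick any four $\mathbb{R}$-linearly independent solutions without analysing the conjugation explicitly.
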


\begin{proof}
The proof follows from previous work on hyperbolic monopoles~\cite[Lemma~9]{lang_hyperbolic_2023}.
\end{proof}

Now that we know exactly what spans the trivial summands of $(\hat{V},\hat{\rho})$, the following theorem tells us exactly what form $M$ takes if $\hat{M}$ has rotational symmetry. 
\begin{theorem}[Rotational Structure Theorem]
Let $\hat{M}\in\mathcal{M}_{n,k}$ have rotational symmetry. Theorem~\ref{thm:rotationalsym} tells us that $\hat{M}$ is generated by a real representation $(V,\rho)$ of $\mathfrak{sp}(1)\oplus\mathfrak{sp}(1)$, which we can decompose as
\begin{equation}
(V,\rho)\simeq \bigoplus_{a=1}^k (\mathbb{R}^{N(m_a,n_a)},\varrho_{m_a,n_a}).\label{eq:repdecompspin4}
\end{equation}
Without loss of generality, we may assume that $m_a\geq m_{a+1}$, $\forall a\in\{1,\ldots,I-1\}$.

Let $Y_{i,a}:=\rho_{m_a,n_a}(\upsilon_i)$ and
\begin{equation*}
Y_i:=\mathrm{diag}(Y_{i,1},\ldots,Y_{i,k}).
\end{equation*}
Then $Y_i$ induces $(V,\rho)$. When $n_a$ and $m_a$ have different parity, let $y_{i,a}:=\rho_{m_a,n_a}(\upsilon_i)\in\mathfrak{su}(n_am_a)$ induce the corresponding irreducible complex representation. Then there is some $U_{m_a,n_a}\in\mathrm{SU}(n_am_a)$ such that 
\begin{equation*}
Y_{i,a}=U^\dagger_{m_a,n_a}\mathrm{diag}(y_{i,a},y_{i,a})U_{m_a,n_a}.
\end{equation*}

Using the decomposition of $(V,\rho)$ given in \eqref{eq:repdecompspin4}, we have $(M_\mu)_{ab}\in\mathrm{R}^{N(m_a,n_a)}\otimes_\mathbb{R} (\mathbb{R}^{N(m_a,n_a)})^*$ such that 
\begin{equation*}
M_\mu=\begin{bmatrix}
(M_\mu)_{11} & \cdots & (M_\mu)_{1k} \\
\vdots & \ddots & \vdots \\
(M_\mu)_{k1} & \cdots & (M_\mu)_{kk}
\end{bmatrix}.
\end{equation*}

Then, up to a $\rho$-invariant gauge and for all $a,b\in\{1,\ldots,I\}$ we have that
\begin{enumerate}
\item[(1)] if $a<b$, $m_a=m_b+1$, $n_a=n_b\pm 1$, and $m_a$ and $n_a$ have the same parity, then $\exists \lambda_{ab}\in\mathbb{R}$ such that 
\begin{align*}
(M_\mu)_{ab}&=\lambda_{ab}B_\mu^{m_b,n_b,\pm}, \quad\textrm{and}\\
(M_\mu)_{ba}&=\lambda_{ab}(B_\mu^{m_b,n_b,\pm})^T,
\end{align*}
where the $S^1$ factor has been chosen so that the $B_\mu^{m_b,n_b,\pm}$ matrices are real;
\item[(2)] if $a<b$, $m_a=m_b+1$, $n_a=n_b\pm 1$, and $m_a$ and $n_a$ have different parity, then $\exists \kappa_{a,b,0}$,$\kappa_{a,b,1}$, $\kappa_{a,b,2}$,$\kappa_{a,b,3}\in\mathbb{C}$ such that 
\begin{align*}
(M_\mu)_{ab}&=U_{m_a,n_a}^\dagger \begin{bmatrix}
\kappa_{a,b,0} B_\mu^{m_b,n_b,\pm} & \kappa_{a,b,1} B_\mu^{m_b,n_b,\pm} \\ 
\kappa_{a,b,2} B_\mu^{m_b,n_b,\pm} & \kappa_{a,b,3} B_\mu^{m_b,n_b,\pm}
\end{bmatrix}U_{m_b,n_b}, \quad\textrm{and}\\
(M_\mu)_{ba}&=U_{m_b,n_b}^\dagger \begin{bmatrix}
\overline{\kappa_{a,b,0}} (B_\mu^{m_b,n_b,\pm})^\dagger & \overline{\kappa_{a,b,2}} (B_\mu^{m_b,n_b,\pm})^\dagger \\ 
\overline{\kappa_{a,b,1}} (B_\mu^{m_b,n_b,\pm})^\dagger & \overline{\kappa_{a,b,3}} (B_\mu^{m_b,n_b,\pm})^\dagger
\end{bmatrix}U_{m_b,n_b},
\end{align*}
and these blocks are real;
\item[(3)] otherwise, $(M_\mu)_{ab}=0$.
\end{enumerate}

Conversely, if $M$ has the above form for some real representation $(V,\rho)$, and for all $\upsilon_1,\upsilon_2\in\mathfrak{sp}(1)$, $[\rho(\upsilon_1,\upsilon_2),R]=0$, then $\hat{M}$ has rotational symmetry.
\end{theorem}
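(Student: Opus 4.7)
The plan is to invoke Corollary~\ref{cor:spin4triv} and then reduce the problem to the decomposition theory worked out in the preceding lemmas, exactly as in the hyperbolic monopole Structure Theorem cited after the statement. First I would assume $\hat{M}$ has rotational symmetry and use Corollary~\ref{cor:spin4triv} to obtain a real representation $(V,\rho)$ of $\mathfrak{sp}(1)\oplus\mathfrak{sp}(1)$ with $[\rho(\upsilon,\omega),R]=0$ and $\hat{\rho}(\upsilon,\omega)(M)=0$ for all $\upsilon,\omega\in\mathfrak{sp}(1)$. Thus $M\in\hat{V}$ lies in the direct sum of the trivial summands of $(\hat{V},\hat{\rho})=(V,\rho)\otimes_\mathbb{R}(V^*,\rho^*)\otimes_\mathbb{R}(\mathbb{H},\nu)$. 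Using the given decomposition \eqref{eq:repdecompspin4} of $(V,\rho)$ into real irreducibles $(\mathbb{R}^{N(m_a,n_a)},\varrho_{m_a,n_a})$, the representation $(\hat{V},\hat{\rho})$ decomposes blockwise, and the block $(a,b)$ sits inside $(\mathbb{R}^{N(m_a,n_a)},\varrho_{m_a,n_a})\otimes_\mathbb{R}((\mathbb{R}^{N(m_b,n_b)})^*,\varrho_{m_b,n_b}^*)\otimes_\mathbb{R}(\mathbb{R}^4,\varrho_{2,2})$.

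Next I would apply Lemma~\ref{lemma:howmanytrivsummandsSpin4} to each $(a,b)$ block. If $m_a\neq m_b+1$ or $|n_a-n_b|\neq 1$, then there are no trivial summands and $(M_\mu)_{ab}=0$, giving case (3). If $m_a=m_b+1$ and $n_a=n_b\pm 1$, then the lemma identifies one trivial summand spanned by $(B_\mu^{m_b,n_b,\pm})$ (when $m_a,n_a$ have the same parity) or four trivial summands spanned by quadruples built from $B_\mu^{m_b,n_b,\pm}$ using the unitaries $U_{m_a,n_a},U_{m_b,n_b}$ (when the parities differ). Restricting $M_\mu$ to the sum of trivial summands therefore yields the ansatz from cases (1) and (2) of the theorem, with some coefficients in $\mathbb{R}$ or $\mathbb{C}$. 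The reality of each block then forces the asserted reality constraints on the $\kappa_{a,b,\cdot}$ (the blocks must remain real because $M_\mu$ is real).

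Then I would pin down the relationship between $(M_\mu)_{ab}$ and $(M_\mu)_{ba}$ using the symmetry of $M$: as $M$ is a symmetric quaternionic matrix (Lemma~\ref{lemma:ADHMproperties} and Definition~\ref{def:Mstd}), block $(b,a)$ is the transpose of block $(a,b)$ in the appropriate sense, yielding the transpose/Hermitian-conjugate formulas stated in (1) and (2). To reach the stated canonical form up to a $\rho$-invariant gauge, I would exploit the residual gauge freedom: the centralizer of $\rho$ in $\mathrm{Sp}(n+k)\times\mathrm{GL}(k,\mathbb{R})$ acts on the multiplicity spaces of each isotypic component of $(V,\rho)$, and by Schur's lemma this centralizer is rich enough to simultaneously diagonalize (or reduce to the claimed shape) the choices of $S^1$ phase for the $B_\mu^{m_b,n_b,\pm}$ in case (1) and to absorb the redundancy in the four-parameter family in case (2). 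For the converse, suppose $M$ has the stated form: then $\hat{\rho}(\upsilon,\omega)(M)=0$ by construction (since every trivial summand is in the kernel of $\hat{\rho}$), and $[\rho(\upsilon,\omega),R]=0$ is assumed, so Theorem~\ref{thm:rotationalsym} gives rotational symmetry.

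The main obstacle is the third step, namely handling the $\rho$-invariant gauge and the reality constraints carefully: one must show that the residual gauge freedom (the centralizer of $\rho$) can be used to normalize the free parameters as in the theorem, and simultaneously that the normalization is compatible with both the symmetry of $M$ and the reality of each block. Fortunately, this is precisely the argument carried out in detail for the simple spherical case in~\cite[Theorem~3]{lang_hyperbolic_2023}, and the structure is identical here, with the only new ingredient being the bookkeeping for the second $\mathfrak{sp}(1)$ factor and the bi-parity analysis encoded in $N(m,n)$.
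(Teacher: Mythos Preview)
Your proposal is correct and follows essentially the same approach as the paper, which in fact gives only the one-line proof ``The proof follows the same structure as the proof of the Structure Theorem for hyperbolic monopoles~\cite[Theorem~3]{lang\_hyperbolic\_2023}.'' Your outline actually spells out more of the argument than the paper does, but the underlying strategy---reduce to trivial summands via Corollary~\ref{cor:spin4triv}, analyze blocks with Lemma~\ref{lemma:howmanytrivsummandsSpin4}, impose symmetry and reality of $M$, and normalize via the $\rho$-invariant gauge exactly as in the cited hyperbolic case---is identical.
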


\begin{proof}
The proof follows the same structure as the proof of the Structure Theorem for hyperbolic monopoles~\cite[Theorem~3]{lang_hyperbolic_2023}.
\end{proof}

\subsection{Novel example of rotational symmetry}\label{subsec:novelrot}

In this section, we identify novel examples of rotational symmetry, using the Rotational Structure Theorem.

\begin{prop}\label{prop:RotInstEx}
Let $\lambda>0$. Then define $\hat{M}$ by
\begin{equation}
\begin{aligned}
M&:=\lambda\begin{bmatrix}
0 & 0 & 0 & 0 & 1 \\
0 & 0 & 0 & 0 & i \\
0 & 0 & 0 & 0 & j \\
0 & 0 & 0 & 0 & k \\
1 & i & j & k & 0
\end{bmatrix} \quad\textrm{and}\\
L&:=\lambda\begin{bmatrix}
\sqrt{3} & -\frac{i}{\sqrt{3}} & -\frac{j}{\sqrt{3}} & -\frac{k}{\sqrt{3}} & 0 \\
0 & 2\sqrt{\frac{2}{3}} & k\sqrt{\frac{2}{3}} & -j\sqrt{\frac{2}{3}} & 0 \\
0 & 0 & \sqrt{2} & i\sqrt{2} & 0 
\end{bmatrix}.
\end{aligned}
\end{equation}
We have $\hat{M}\in\mathcal{M}_{3,5}$ and it is a $\mathrm{Sp}(3)$ instanton with rotational symmetry and instanton number $5$. 
\end{prop}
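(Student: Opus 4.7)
The plan is to invoke the Rotational Structure Theorem with the generating real representation $(V,\rho):=(\mathbb{R}^4,\varrho_{2,2})\oplus(\mathbb{R},0)$ of $\mathfrak{sp}(1)\oplus\mathfrak{sp}(1)$, whose summands have $(m_1,n_1)=(2,2)$ and $(m_2,n_2)=(1,1)$. These indices satisfy $m_1=m_2+1$ and $n_1=n_2+1$ with $m_1,n_1$ of the same parity, so Case~(1) of the theorem applies and the only admissible non-zero block of $M$ is $(M_\mu)_{12}=\lambda_{12}B_\mu^{1,1,+}$ together with its transpose. The given $M$ matches this form exactly, with the $4\times 1$ off-diagonal block $(1,i,j,k)^T$ being $\lambda$ times the standard unit-length generator $B^{1,1,+}$. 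Hence once membership in $\mathcal{M}_{3,5}$ and the commutation $[\rho(\upsilon,\omega),R]=0$ are verified, the converse direction of the Rotational Structure Theorem guarantees that $\hat{M}$ has rotational symmetry generated by $(V,\rho)$.

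Next I would check the conditions of Definition~\ref{def:Mstd}. Symmetry of $M$ is immediate from inspection. A short calculation, using that the three rows of $L$ are pairwise orthogonal of common squared norm $4\lambda^2$, gives $LL^\dagger=4\lambda^2 I_3$, which is positive definite. The pleasant miracle is that the non-real off-diagonal entries of $L^\dagger L$ cancel exactly against those of $M^\dagger M$ (e.g.\ the $(2,3)$-entries are $+k\lambda^2$ and $-k\lambda^2$), so that $R=L^\dagger L+M^\dagger M=4\lambda^2 I_5$ is a real, non-singular scalar multiple of the identity. In particular $[\rho(\upsilon,\omega),R]=0$ holds trivially for all $\upsilon,\omega\in\mathfrak{sp}(1)$.

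The only remaining, and the only potentially obstructive, step is positive-definiteness of $\Delta(x)^\dagger\Delta(x)$ everywhere on $\mathbb{H}$. Because the proposed data already satisfies the rotational symmetry equations, the note following Theorem~\ref{thm:rotationalsym} reduces the check to the half-ray $x=x_0\geq 0$. Using $R=4\lambda^2 I_5$ and $M+M^\dagger=2M_0$, a direct expansion gives
\begin{equation*}
\Delta(x_0)^\dagger\Delta(x_0)=(4\lambda^2+x_0^2)I_5-2x_0 M_0,
\end{equation*}
which is block-diagonal with three $1\times 1$ blocks equal to $4\lambda^2+x_0^2$ and one $2\times 2$ block on the indices $\{1,5\}$ whose eigenvalues are $4\lambda^2+x_0^2\pm 2x_0\lambda=(x_0\pm\lambda)^2+3\lambda^2$. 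All five eigenvalues are bounded below by $3\lambda^2>0$, so the matrix is positive definite for every $x_0\geq 0$. Combined with the previous paragraph this places $\hat{M}$ in $\mathcal{M}_{3,5}$ and identifies it as an $\mathrm{Sp}(3)$ instanton of charge $5$ with rotational symmetry.
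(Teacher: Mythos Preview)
Your proof is correct and follows essentially the same approach as the paper: both invoke the Rotational Structure Theorem with $(V,\rho)=(\mathbb{R}^4,\varrho_{2,2})\oplus(\mathbb{R},0)$, verify $LL^\dagger=4\lambda^2 I_3$ and $R=4\lambda^2 I_5$, reduce the non-singularity check to the ray $x=x_0\geq 0$, and obtain the same eigenvalues $4\lambda^2+x_0^2$ (multiplicity~3) and $4\lambda^2+x_0^2\pm 2x_0\lambda$. Your compact expression $\Delta(x_0)^\dagger\Delta(x_0)=(4\lambda^2+x_0^2)I_5-2x_0 M_0$ and the rewriting of the nontrivial eigenvalues as $(x_0\pm\lambda)^2+3\lambda^2$ are minor presentational improvements over the paper's explicit $5\times 5$ matrix, but the argument is the same.
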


\begin{proof}
Consider the representation $(V,\rho):=(\mathbb{H},\nu)\oplus(\mathbb{R},0)$. Seeking instantons with rotational symmetry generated by $(V,\rho)$, the Rotational Structure Theorem tells us that $M$ has to have the form given in the statement. Thus, if $\hat{M}\in\mathcal{M}_{3,5}$, then it has rotational symmetry. To that end, we have that $LL^\dagger=4\lambda^2 I_3$. Additionally, $R=4\lambda^2 I_5$. Due to the symmetry of the instanton, we need only check the final condition for $x=x_0\geq 0$. Doing so, we see that 
\begin{equation*}
\Delta(x)^\dagger \Delta(x)=\begin{bmatrix}
4\lambda^2+x_0^2 & 0 & 0 & 0 & -2x_0\lambda \\
0 & 4\lambda^2+x_0^2 & 0 & 0 & 0 \\
0 & 0 & 4\lambda^2+x_0^2 & 0 & 0 \\
0 & 0 & 0 & 4\lambda^2+x_0^2 & 0 \\
-2x_0\lambda & 0 & 0 & 0 & 4\lambda^2+x_0^2
\end{bmatrix}.
\end{equation*}
This matrix has eigenvalues $4\lambda^2+x_0^2$, $4\lambda^2\pm 2x_0\lambda+x_0^2$ with multiplicity 3, 1, and 1, respectively. Therefore, $\Delta(x)^\dagger \Delta(x)$ is positive-definite everywhere, proving the proposition. 
\end{proof}

\subsection{Constraint on rotational symmetry}\label{subsubsec:RotSymStructureGroups}

In Theorem~\ref{thm:rotationalsym}, we obtain a representation $(V,\rho)$ of $\mathfrak{sp}(1)\oplus\mathfrak{sp}(1)$ from an instanton with rotational symmetry by focusing on the bottom of \eqref{eq:rotationalsymdiff}. This representation induces another $(\hat{V},\hat{\rho})$, which we use to determine what an instanton with rotational symmetry looks like. However, by focusing on the top of \eqref{eq:rotationalsymdiff}, we obtain a second representation.
\begin{lemma}
Let $\pi_l\colon\mathfrak{sp}(1)\oplus\mathfrak{sp}(1)\rightarrow\mathfrak{sp}(1)$ be projection onto the $l$th component, for $l\in\{1,2\}$. Suppose $(\hat{M},U)\in\mathcal{M}_{n,k}$ has rotational symmetry generated by $(V,\rho)$, a real representation. Let $y_i:=(LL^\dagger)^{-1}L(\pi_2(\upsilon_i)I_k+\rho(\upsilon_i))L^\dagger\in\mathfrak{sp}(n)$. Then $y_i$ induce a quaternionic representation of $\mathfrak{sp}(1)\oplus\mathfrak{sp}(1)$, which we denote by $(W,\lambda)$, where $W:=\mathbb{H}^n$ and $\lambda$ is the linear map taking $\upsilon_i\mapsto y_i$.
\end{lemma}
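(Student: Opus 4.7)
The plan is to mirror the strategy of the analogous lemma in the isoclinic spherical case, adapting for the extra $\pi_2(\upsilon_i)I_k$ term now appearing in the generator. Two things must be verified: that each $y_i$ lies in $\mathfrak{sp}(n)$, and that the assignment $\upsilon_i\mapsto y_i$ preserves Lie brackets, extending uniquely to a Lie algebra homomorphism $\lambda\colon\mathfrak{sp}(1)\oplus\mathfrak{sp}(1)\to\mathfrak{sp}(n)$. Once both are in place, $(W,\lambda)=(\mathbb{H}^n,\lambda)$ is the desired quaternionic representation.

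For the membership claim, set $A_i:=\rho(\upsilon_i)+\pi_2(\upsilon_i)I_k$, so that $y_i=(LL^\dagger)^{-1}LA_iL^\dagger$. I will show $[A_i,L^\dagger L]=0$ for every $i$. When $i\in\{1,2,3\}$ one has $\pi_2(\upsilon_i)=0$ and the constraint $\upsilon_i M+[\rho(\upsilon_i),M]=0$ from Theorem~\ref{thm:rotationalsym} reproduces the isoclinic spherical situation verbatim. When $i\in\{4,5,6\}$ the constraint reads $[\rho(\upsilon_i),M]=M\pi_2(\upsilon_i)$; multiplying on the left by $M^\dagger$, taking the adjoint of the original constraint (using $\rho(\upsilon_i)^T=-\rho(\upsilon_i)$ and $\pi_2(\upsilon_i)^\dagger=-\pi_2(\upsilon_i)$), and equating the two resulting expressions for $M^\dagger\rho(\upsilon_i)M$ yields $[A_i,M^\dagger M]=0$. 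Combining with $[\rho(\upsilon_i),R]=0$ (also from Theorem~\ref{thm:rotationalsym}) and the trivial observation that $\pi_2(\upsilon_i)I_k$ commutes with the real matrix $R$, one obtains $[A_i,L^\dagger L]=0$. The standard manipulation $[LL^\dagger,LA_iL^\dagger]=0$ follows, and $y_i^\dagger=-y_i$ is immediate from the Hermiticity of $LL^\dagger$ together with $A_i^\dagger=-A_i$.

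For the bracket-preservation claim, the commutativity $[A_i,L^\dagger L]=0$ gives $LA_iL^\dagger L=(LL^\dagger)LA_i$, which collapses the product of two $y$'s to $y_iy_j=(LL^\dagger)^{-1}LA_iA_jL^\dagger$, so $[y_i,y_j]=(LL^\dagger)^{-1}L[A_i,A_j]L^\dagger$. Expanding the bracket, the cross terms $[\rho(\upsilon_i),\pi_2(\upsilon_j)I_k]$ vanish because $\rho(\upsilon_i)$ is a real matrix while $\pi_2(\upsilon_j)I_k$ is a scalar quaternion times the identity, and real-matrix entries commute with any quaternion. Using that $\rho$ and $\pi_2$ are Lie algebra homomorphisms, the surviving terms combine to $[A_i,A_j]=\rho([\upsilon_i,\upsilon_j])+\pi_2([\upsilon_i,\upsilon_j])I_k=A_{[\upsilon_i,\upsilon_j]}$, and therefore $[y_i,y_j]=y_{[\upsilon_i,\upsilon_j]}$.

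The main obstacle is the second-copy case of the membership step: the constraint $[\rho(\upsilon_i),M]=M\pi_2(\upsilon_i)$ is asymmetric, mixing a left bracket action of a real skew-symmetric matrix with right multiplication by a quaternion, and recovering a clean commutation relation for the combined generator $A_i$ requires careful bookkeeping with Hermitian conjugates. Once that identity is in hand, the remainder is routine matrix algebra, and the bracket calculation in the final step becomes purely formal.
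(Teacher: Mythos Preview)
Your proof is correct and follows essentially the same approach as the paper. The only cosmetic difference is that the paper handles all six basis vectors uniformly via the single constraint $[\rho(\upsilon_i),M]=M\pi_2(\upsilon_i)-\pi_1(\upsilon_i)M$, whereas you split into the cases $i\in\{1,2,3\}$ and $i\in\{4,5,6\}$; both routes establish $[A_i,M^\dagger M]=0$ and then proceed identically to $[A_i,L^\dagger L]=0$, $[LL^\dagger,LA_iL^\dagger]=0$, and the bracket computation.
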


\begin{proof}
Let $i,j\in\{1,\ldots,6\}$. By Theorem~\ref{thm:rotationalsym}, we know that $[\rho(\upsilon_i),M]=M\pi_2(\upsilon_i)-\pi_1(\upsilon_i)M$, so,
\begin{align*}
M^\dagger M(\pi_2(\upsilon_i)+\rho(\upsilon_i))&=M^\dagger [M,\rho(\upsilon_i)]+M^\dagger \rho(\upsilon_i)M+M^\dagger M \pi_2(\upsilon_i)\\
&=M^\dagger \pi_1(\upsilon_i)M+[M^\dagger,\rho(\upsilon_i)]M+\rho(\upsilon_i)M^\dagger M\\
&=(\pi_2(\upsilon_i)I_k+\rho(\upsilon_i))M^\dagger M.
\end{align*}
Thus, $[M^\dagger M,\pi_2(\upsilon_i)I_k+\rho(\upsilon_i)]=0$. Also from Theorem~\ref{thm:rotationalsym}, $[\rho(\upsilon_i),R]=0$. As $R$ is real, $[R,\pi_2(\upsilon_i)]=0$. Hence, $[L^\dagger L,\pi_2(\upsilon_i)I_k+\rho(\upsilon_i)]=0$. Thus
\begin{equation*}
LL^\dagger L(\rho(\upsilon_i)+\pi_2(\upsilon_i))L^\dagger=L(\rho(\upsilon_i)+\pi_2(\upsilon_i))L^\dagger LL^\dagger.
\end{equation*}
Hence, $[LL^\dagger,L(\rho(\upsilon_i)+\pi_2(\upsilon_i))L^\dagger]=0$, so $y_i\in\mathfrak{sp}(n)$. Therefore, we see
\begin{align*}
[y_i,y_j]&=(LL^\dagger)^{-1}[L(\rho(\upsilon_i)+\pi_2(\upsilon_i))L^\dagger,L(\rho(\upsilon_j)+\pi_2(\upsilon_j))L^\dagger ](LL^\dagger)^{-1}\\
&=(LL^\dagger)^{-1}L((\rho(\upsilon_i)+\pi_2(\upsilon_i))L^\dagger L(\rho(\upsilon_j)+\pi_2(\upsilon_j))\\
&\phantom{=(LL^\dagger)^{-1}L(}-(\rho(\upsilon_j)+\pi_2(\upsilon_j))L^\dagger L(\rho(\upsilon_i)+\pi_2(\upsilon_i)))L^\dagger (LL^\dagger)^{-1}\\
&=(LL^\dagger)^{-1}L[(\rho(\upsilon_i)+\pi_2(\upsilon_i)),(\rho(\upsilon_j)+\pi_2(\upsilon_j))]L^\dagger\\
&=(LL^\dagger)^{-1}L(\rho([\upsilon_i,\upsilon_j])+\pi_2([\upsilon_i,\upsilon_j]))L^\dagger.
\end{align*}
Thus, the $y_i$ satisfy the correct commutation relations. 
\end{proof}

We use $(W,\lambda)$ to determine what $\mathrm{Sp}(n)$ structure groups are possible given $\rho$. Using $W=\mathbb{H}^n\simeq \mathbb{C}^{2n}$, we can restrict the scalars from $\mathbb{H}$ to $\mathbb{C}$. In doing so, the generators $y_i\in\mathfrak{sp}(n)$ correspond to elements of $\mathfrak{su}(2n)$. Thus, the induced complex representation is a $2n$-representation. The complex representation obtained by restricting the scalars of an irreducible, quaternionic representation is either isomorphic to $(V_{a,b},\rho_{a,b})$ for some $a$ and $b$ with opposite parity or $(V_{a,b},\rho_{a,b})^{\oplus 2}$ for some $a$ and $b$ with the same parity~\cite[Appendix~D.3.1]{lang_thesis_2024}.
\begin{definition}
Recall the representation $(\mathbb{H},\iota_2)$, where $\iota_2(\upsilon,\omega)(x)=\omega x$. Restricting the scalars to $\mathbb{C}$, this representation is isomorphic to $(V_{1,2},\rho_{1,2})$.

Given a real representation $(V,\rho)$ of $\mathfrak{sp}(1)\oplus\mathfrak{sp}(1)$, with induced representation $(W,\lambda)$ as defined above, we define the real representation 
\begin{equation}
(\hat{W},\hat{\lambda}):=\left((W,\lambda)\otimes_\mathbb{R} (V^*,\rho^*)\right)\otimes_\mathbb{H} (\mathbb{H}^*,\iota_2^*)=(W,\lambda)\otimes_\mathbb{H}\left((V^*,\rho^*)\otimes_\mathbb{R}(\mathbb{H}^*,\iota_2^*)\right).
\end{equation}
\end{definition}

Unravelling how $\hat{\lambda}$ acts on $\hat{W}=\mathrm{Mat}(n,k,\mathbb{H})$, let $\upsilon,\omega\in\mathfrak{sp}(1)$ and $A\in\hat{W}$. Then
\begin{equation}
\hat{\lambda}(\upsilon,\omega)\left(A\right)=\lambda(\upsilon,\omega)A-A\rho(\upsilon,\omega) -A\omega.
\end{equation}

The definition of $(\hat{W},\hat{\lambda})$ is well-motivated. First, note that $L\in\hat{W}$. Moreover, the following lemma tells us exactly where $L$ lives in $\hat{W}$.
\begin{lemma}
Let $\hat{M}\in\mathcal{M}_{n,k}$ have rotational symmetry. Let $(V,\rho)$, $(W,\lambda)$ and $(\hat{W},\hat{\lambda})$ be as above. Then $\hat{\lambda}(\upsilon,\omega)(L)=0$ for all $(\upsilon,\omega)\in\mathfrak{sp}(1)\oplus\mathfrak{sp}(1)$.
\end{lemma}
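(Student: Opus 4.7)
The plan is to derive the identity $\hat{\lambda}(\upsilon,\omega)(L)=0$ directly from the top row of equation \eqref{eq:rotationalsymdiff}, which is exactly the extra content of rotational symmetry beyond the bottom row used to extract $\rho$. First I would note that by hypothesis $\hat{M}$ has rotational symmetry, so \eqref{eq:rotationalsymdiff} holds for all $(\upsilon,\omega)\in\mathfrak{sp}(1)\oplus\mathfrak{sp}(1)$, not just basis elements, since both sides depend linearly on $(\upsilon,\omega)$. Reading off the top row and factoring, the equation is
\begin{equation*}
L(\omega I_k+\rho(\upsilon,\omega))L^\dagger (LL^\dagger)^{-1}L-L(\omega I_k+\rho(\upsilon,\omega))=0.
\end{equation*}

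Next I would invoke the commutativity established in the preceding lemma that defined $\lambda$: the same computation that showed $y_i\in\mathfrak{sp}(n)$ produced $[LL^\dagger,L(\pi_2(\upsilon_i)I_k+\rho(\upsilon_i))L^\dagger]=0$, and by linearity this extends to $[LL^\dagger,L(\omega I_k+\rho(\upsilon,\omega))L^\dagger]=0$ for all $(\upsilon,\omega)$. Hence $(LL^\dagger)^{-1}$ commutes through, giving
\begin{equation*}
L(\omega I_k+\rho(\upsilon,\omega))L^\dagger(LL^\dagger)^{-1}L=(LL^\dagger)^{-1}L(\omega I_k+\rho(\upsilon,\omega))L^\dagger L=\lambda(\upsilon,\omega)L,
\end{equation*}
using the very definition $\lambda(\upsilon,\omega)=(LL^\dagger)^{-1}L(\omega I_k+\rho(\upsilon,\omega))L^\dagger$ (extended linearly from the basis).

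Substituting this identification into the displayed top row of \eqref{eq:rotationalsymdiff} collapses it to
\begin{equation*}
\lambda(\upsilon,\omega)L-L\rho(\upsilon,\omega)-L\omega=0,
\end{equation*}
which is precisely $\hat{\lambda}(\upsilon,\omega)(L)=0$ by the unravelled action of $\hat{\lambda}$ given just above the lemma. There is no real obstacle here; the only minor care needed is bookkeeping the extension from the basis $\upsilon_i$ (on which $\lambda$ and the equations were literally defined) to a general pair $(\upsilon,\omega)$, which is automatic from $\mathbb{R}$-linearity of both $\rho$, $\lambda$, and the conformal/gauge actions. This mirrors exactly how the analogous statements were obtained in the isoclinic spherical and isoclinic superspherical cases.
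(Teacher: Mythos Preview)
Your proposal is correct and takes essentially the same approach as the paper: both read off the top row of \eqref{eq:rotationalsymdiff} and identify it with $\hat{\lambda}(\upsilon,\omega)(L)=0$. The paper's proof is a one-liner that suppresses the commutativity step $[LL^\dagger,L(\omega I_k+\rho(\upsilon,\omega))L^\dagger]=0$ you spell out, but that step was already established in the preceding lemma, so your more detailed version and the paper's terse version are the same argument.
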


\begin{proof}
As $\hat{M}$ has rotational symmetry, we have \eqref{eq:rotationalsymdiff}. Then we see that for $\upsilon,\omega\in\mathfrak{sp}(1)$,
\begin{equation*}
\hat{\lambda}(\upsilon,\omega)(L)=\lambda(\upsilon,\omega)L-L\rho(\upsilon,\omega)-L\omega=0.\qedhere
\end{equation*}
\end{proof}

\begin{cor}
Let $\hat{M}\in\mathcal{M}_{n,k}$ have rotational symmetry. Then $(\hat{W},\hat{\lambda})$ must have trivial summands and $L$ must live in the direct sum of these summands.
\end{cor}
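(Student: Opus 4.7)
The plan is to mirror the argument already given for the analogous result about $(\hat V,\hat \rho)$ in the simple spherical case. The key observation is that $L\neq 0$: since $\hat M\in\mathcal{M}_{n,k}$, the matrix $LL^\dagger$ is positive definite, so in particular $L$ has rank $n\geq 1$. Therefore $\mathrm{span}(L)\subseteq\hat W$ is a one-dimensional subspace, and by the immediately preceding lemma, $\hat\lambda(\upsilon,\omega)(L)=0$ for all $(\upsilon,\omega)\in\mathfrak{sp}(1)\oplus\mathfrak{sp}(1)$, so this subspace is invariant and acted on trivially.

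Next I would invoke semi-simplicity: the Lie algebra $\mathfrak{sp}(1)\oplus\mathfrak{sp}(1)$ is semi-simple (equivalently, its real representations are completely reducible, being obtained as real forms of representations of the compact group $\mathrm{Sp}(1)\times\mathrm{Sp}(1)$), so the representation $(\hat W,\hat\lambda)$ decomposes as a direct sum of irreducible summands. Pick such a decomposition $(\hat W,\hat\lambda)\simeq\bigoplus_a (W_a,\lambda_a)$ and let $L=\sum_a L_a$ be the corresponding decomposition of $L$. Because $L$ is annihilated by every $\hat\lambda(\upsilon,\omega)$, and the projection onto each summand commutes with the action, each component $L_a$ is itself annihilated. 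Hence each nonzero $L_a$ spans a one-dimensional trivial subrepresentation of $(W_a,\lambda_a)$; by irreducibility this forces $(W_a,\lambda_a)$ itself to be the trivial representation whenever $L_a\neq 0$. Thus $(\hat W,\hat\lambda)$ has at least one trivial summand and $L$ lies in the direct sum of all the trivial summands, as claimed.

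There is no real obstacle here: the argument is essentially a one-line application of complete reducibility together with the preceding lemma, and the only subtlety is justifying $L\neq 0$, which follows immediately from the positive-definiteness condition in Definition~\ref{def:Mstd}. The entire proof can be written in just a few sentences, exactly parallel to the corollary following the $(\hat V,\hat\rho)$ discussion in the simple spherical case.
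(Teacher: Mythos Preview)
Your proposal is correct and follows essentially the same approach as the paper's proof: the paper simply notes that since $L\neq 0$, $\mathrm{span}(L)\subseteq\hat W$ is an invariant one-dimensional subspace, so $(\mathrm{span}(L),0)$ is a trivial summand of $(\hat W,\hat\lambda)$. Your version is actually more detailed than the paper's, spelling out the complete-reducibility step and the projection argument explicitly.
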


\begin{proof}
As $L\neq 0$, we know $\mathrm{span}(L)\subseteq \hat{W}$ is an invariant 1-dimensional subspace, so $(\mathrm{span}(L),0)$ is a trivial summand of the representation $(\hat{W},\hat{\lambda})$.
\end{proof}

Therefore, we know that $(\hat{W},\hat{\lambda})$ must have trivial summands, which narrows the possibilities for the structure group. Just like in the hyperbolic monopole case, we have straightforward restrictions for high dimensional representations in addition to more subtle restrictions for low dimensional representations. For instance, in Proposition~\ref{prop:RotInstEx}, we examine instantons with rotational symmetry generated by $(V,\rho)\simeq (\mathbb{R}^4,\varrho_{2,2})\oplus (\mathbb{R},0)$. This representation generated instantons with rotational symmetry and structure group $\mathrm{Sp}(3)$. However, this representation cannot generate instantons with rotational symmetry and lower rank structure groups.
\begin{prop}
Consider Proposition~\ref{prop:RotInstEx}, with $(V,\rho)\simeq (\mathbb{R}^4,\varrho_{2,2})\oplus (\mathbb{R},0)$. This representation does not generate an instanton with rotational symmetry and structure group $\mathrm{Sp}(1)$ or $\mathrm{Sp}(2)$. 
\end{prop}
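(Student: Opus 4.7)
The plan is to mirror the argument given after Proposition~\ref{prop:IsoEx} for the analogous isoclinic impossibility. Assume toward a contradiction that $(V,\rho)\simeq(\mathbb{R}^4,\varrho_{2,2})\oplus(\mathbb{R},0)$ generates an instanton with rotational symmetry and structure group $\mathrm{Sp}(n)$ for $n\in\{1,2\}$. Exhausting the possible induced quaternionic representations $(W,\lambda)$ of $\mathfrak{sp}(1)\oplus\mathfrak{sp}(1)$, I would rule out each candidate by showing that the joint constraints --- $\hat{\lambda}(\upsilon,\omega)L=0$ (so $L$ sits in a trivial summand of $(\hat{W},\hat{\lambda})$), reality and $\rho$-invariance of $R=L^\dagger L+M^\dagger M$, and positive-definiteness of $LL^\dagger$ --- cannot simultaneously be satisfied.

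For $n=1$, $W=\mathbb{H}$ so $(W,\lambda)_{\mathbb{C}}\in\{(V_{1,1})^{\oplus 2},(V_{2,1}),(V_{1,2})\}$. Using Clebsch--Gordan for $\mathfrak{sp}(1)\oplus\mathfrak{sp}(1)$, the tensor product
\[
(W,\lambda)_{\mathbb{C}}\otimes_{\mathbb{C}}\left((V_{2,2})\oplus(V_{1,1})\right)\otimes_{\mathbb{C}}(V_{1,2})
\]
has no trivial summand for $(V_{1,1})^{\oplus 2}$ and exactly one for each of $(V_{2,1})$ and $(V_{1,2})$. In the latter two cases, solving $\hat{\lambda}(\upsilon,\omega)L=0$ --- by identifying the $(\mathbb{R}^4,\varrho_{2,2})$-summand of $V$ with $\mathbb{H}$ so that $\rho(\upsilon,\omega)v=\upsilon v-v\omega$ on it, and viewing the corresponding block of $L$ as an $\mathbb{R}$-linear map $\mathbb{H}\to\mathbb{H}$ --- pins $L$ down, up to the $\rho$-invariant gauge, to $L=[a\,[1,i,j,k],\,0]$ with $a\in\mathbb{R}$ in the $(V_{2,1})$-case and to $L=[0,0,0,0,l]$ with $l\in\mathbb{R}$ in the $(V_{1,2})$-case. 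With $M$ dictated by the Rotational Structure Theorem (its cross-block being a real multiple of $[1,i,j,k]$ from the $B_\mu^{1,1,+}$-structure), the $4\times 4$ upper block of $R$ equals $(a^2+\lambda^2)E^\dagger E$ or $\lambda^2 E^\dagger E$ respectively, where $E=[1,i,j,k]$; both have pure-quaternion off-diagonal entries and no positive real scalar prefactor can make them real, a contradiction.

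For $n=2$, any quaternionic $(W,\lambda)$ on $\mathbb{H}^2$ is either one of the irreducible $2$-dimensional representations $(V_{4,1})$ or $(V_{1,4})$, or a direct sum of two $1$-dimensional quaternionic summands, each drawn from the trivial representation, $(V_{2,1})$, or $(V_{1,2})$. The same Clebsch--Gordan count shows $(V_{4,1})$ and $(V_{1,4})$ produce no trivial summand in $(\hat{W},\hat{\lambda})$, forcing $L=0$. Among the six remaining decompositions, applying the $n=1$ analysis row-by-row gives, in every case, either $LL^\dagger$ with a zero eigenvalue (whenever any summand is trivial or the non-trivial summands are repeated) or diagonal positive-definite $LL^\dagger$ with $R$ failing to be real by the same $E^\dagger E$-obstruction (the mixed case $(V_{2,1})\oplus(V_{1,2})$). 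So no admissible $L$ exists.

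The main obstacle is the explicit derivation of the constrained form of $L$ in the $n=1$ subcases. Recognising in the $(V_{2,1})$-case that the four-component block of $L$ is simultaneously left- and right-$\mathbb{H}$-linear as a map $\mathbb{H}\to\mathbb{H}$, and hence is real-scalar multiplication by $a\in\mathbb{R}$, is the key step; after that, the block calculation of $R$ and the observation that $E^\dagger E$ has non-real off-diagonal entries complete the argument, with the $n=2$ case reducing to $n=1$ by decomposing $L$ row-by-row according to the quaternionic summands of $(W,\lambda)$.
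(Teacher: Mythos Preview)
Your approach is essentially the same as the paper's: exhaust the possible quaternionic representations $(W,\lambda)$, use Clebsch--Gordan to locate trivial summands of $(\hat{W},\hat{\lambda})$, pin down $L$ in each surviving case, and extract a contradiction from the reality of $R$ or the positive-definiteness of $LL^\dagger$. The paper's $n=2$ argument is terser than yours---it simply observes that any $(W,\lambda)$ admitting a trivial summand in $(\hat{W},\hat{\lambda})$ must contain a $(V_{2,1})$ or $(V_{1,2})$ factor and then invokes the $n=1$ obstruction---whereas you spell out the six decompositions; both are fine.

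One small gap: in your $n=1$ argument, the phrase ``no positive real scalar prefactor can make them real'' does not dispose of the case where the prefactor $(a^2+\lambda^2)$ or $\lambda^2$ vanishes. You should note that in the $(V_{2,1})$-case this forces $a=\lambda=0$, hence $L=0$, while in the $(V_{1,2})$-case it forces $\lambda=0$, hence $M=0$ and $R=\mathrm{diag}(0,0,0,0,l^2)$ is singular. The paper handles this explicitly, and the same patch is needed in your mixed $(V_{2,1})\oplus(V_{1,2})$ case for $n=2$: $LL^\dagger$ is only positive-definite when both coefficients are nonzero, and if either vanishes you fall back to the zero-eigenvalue branch.
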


\begin{proof}
Suppose $(V,\rho)\simeq (\mathbb{R}^4,\varrho_{2,2})\oplus (\mathbb{R},0)$ generates an instanton with rotational symmetry. We know that $(\hat{W},\hat{\lambda})$ has a trivial summand. Consider $(V^*,\rho^*)\otimes_\mathbb{R} (\mathbb{H}^*,\iota^*)$. Restricting the scalars, this representation is isomorphic, as a complex representation, to $(V_{1,2},\rho_{1,2})\oplus (V_{2,1},\rho_{2,1})\oplus (V_{2,3},\rho_{2,3})$.

If we have a $\mathrm{Sp}(1)$ instanton, then we have that, after restricting the scalars of $(W,\lambda)$, the representation is isomorphic, as a complex representation, to $(V_{1,1},\rho_{1,1})^{\oplus 2}$, $(V_{1,2},\rho_{1,2})$, or $(V_{2,1},\rho_{2,1})$. The former does not give $(\hat{W},\hat{\lambda})$ any trivial summands. Thus we look to the latter two, which do. 

If the representation obtained by restricting the scalars of $(W,\lambda)$ is given by $(V_{1,2},\rho_{1,2})$, then there is some $q\in\mathrm{Sp}(1)$ such that $y_i=q\pi_2(\upsilon_i)q^\dagger$. Taking $\tilde{L}:=q^\dagger L$ and dropping the tilde, we have 
\begin{equation*}
[\pi_2(\upsilon_i),L]=L\rho(\upsilon_i).
\end{equation*}
The equations when $i\in\{1,2,3\}$ imply that $L=\begin{bmatrix}
0 & 0 & 0 & 0 & d
\end{bmatrix}$ for some $d\in\mathbb{H}$. The equations for $i\in\{4,5,6\}$ imply that $d\in\mathbb{R}$. But we know that as
\begin{equation*}
M=a\begin{bmatrix}
0 & 0 & 0 & 0 & 1 \\
0 & 0 & 0 & 0 & i \\
0 & 0 & 0 & 0 & j \\
0 & 0 & 0 & 0 & k \\
1 & i & j & k & 0
\end{bmatrix},
\end{equation*}
we have
\begin{equation*}
R=L^\dagger L+M^\dagger M=d^2\begin{bmatrix}
0 & 0 & 0 & 0 & 0 \\
0 & 0 & 0 & 0 & 0 \\
0 & 0 & 0 & 0 & 0 \\
0 & 0 & 0 & 0 & 0 \\
0 & 0 & 0 & 0 & 1 \\
\end{bmatrix}+a^2\begin{bmatrix}
1 & i & j & k & 0 \\
-i & 1 & -k & j & 0 \\
-j & k & 1 & -i & 0 \\
-k & -j & i & 1 & 0 \\
0 & 0 & 0 & 0 & 4
\end{bmatrix}
\end{equation*}
As $R$ is real, $a=0$. But then $R$ is singular. Contradiction! 

If the representation obtained by restricting the scalars of $(W,\lambda)$ is given by $(V_{2,1},\rho_{2,1})$ then there is some $q\in\mathrm{Sp}(1)$ such that $y_i=q\pi_1(\upsilon_i)q^\dagger$. Gauging $L$ as above, then we have
\begin{equation*}
\pi_1(\upsilon_i)L-LY_i-L\pi_2(\upsilon_i)=0.
\end{equation*}
The equations when $i\in\{1,2,3\}$ imply that there is some $A\in\mathbb{H}$ such that \begin{equation*}
L=\begin{bmatrix}
1 & i & j & k & 0
\end{bmatrix}A.
\end{equation*} 
The remaining equations imply that $A\in\mathbb{R}$. Then we have
\begin{equation*}
R=A^2\begin{bmatrix}
1 & i & j & k & 0 \\
-i & 1 & -k & j & 0 \\
-j & k & 1 & -i & 0 \\
-k & -j & i & 1 & 0 \\
0 & 0 & 0 & 0 & 0
\end{bmatrix}+a^2\begin{bmatrix}
1 & i & j & k & 0 \\
-i & 1 & -k & j & 0 \\
-j & k & 1 & -i & 0 \\
-k & -j & i & 1 & 0 \\
0 & 0 & 0 & 0 & 4
\end{bmatrix}
\end{equation*}
As $R$ is real, $A=a=0$, so $L=0$. Contradiction! Therefore, we cannot have a $\mathrm{Sp}(1)$ instanton from this representation $(V,\rho)$.

Instead, if we have a $\mathrm{Sp}(2)$ instanton, then we have that the representation obtained by restricting the scalars of $(W,\lambda)$ is some combination of the previously discussed $2$-representations or one of $(V_{4,1},\rho_{4,1})$ or $(V_{1,4},\rho_{1,4})$. However, the only choices for the representation obtained by restricting the scalars of $(W,\lambda)$ that create a trivial summand in $(\hat{W},\hat{\lambda})$ include $(V_{1,2},\rho_{1,2})$ or $(V_{2,1},\rho_{2,1})$. But we already showed that these representations do not generate an instanton with rotational symmetry. Thus, this representation does not generate a $\mathrm{Sp}(2)$ instanton with rotational symmetry either.
\end{proof}

\appendix
\section{Continuous subgroups of conformal transformations}\label{appendix:contsubgroups}
Our ultimate goal is to classify instantons with continuous conformal symmetries. In this appendix, we classify the relevant connected Lie subgroups up to conjugacy, providing a list of continuous conformal symmetries that an instanton can have. If we were just studying instantons symmetric under isometries, the finite action condition would imply that we would just need to study instantons symmetric under rotations~\cite[Corollary~B.0.8]{lang_thesis_2024}. However, the conformal invariance of the self-dual equations allows us to broaden our study to conformal symmetries.

Recall Conjecture~\ref{conj}, which states that up to conjugation by an element of $\mathrm{SL}(2,\mathbb{H})$, the group of symmetries of a non-flat instanton is a subgroup of $\mathrm{Sp}(2)$. Before we move on to classifying the Lie subalgebras of $\mf{sp}(2)$, let us discuss rotations in $\mathbb{R}^4$. In four dimensions there are three main types of rotations.
\begin{definition}
Let $R\in\mathrm{SO}(4)$. Associated to $R$ are two angles $\alpha,\beta\in[0,2\pi)$ and orthogonal two-planes $A$ and $B$ such that $A\oplus B\simeq \mathbb{R}^4$. The planes $A$ and $B$ are invariant under the action of $R$. Specifically, on $A$ and $B$, $R$ rotates about the origin by $\alpha$ and $\beta$, respectively. Note that $R$ is completely determined by its action on $A$ and $B$.

If either of $\alpha$ or $\beta$ is zero, then $R$ is said to be a \textbf{simple} rotation. Without loss of generality, we may assume $\alpha=0$. In this case, $R$ fixes a two-plane, $A$. Every completely orthogonal two-plane $B$, that is where every line in $B$ is orthogonal to every line in $A$, intersects $A$ at a point $P$. Then $R$ rotates elements in $B$ by $\beta$ about the point $P$. 

If $\alpha=\beta>0$, then $R$ is said to be \textbf{isoclinic}. In this case, there are infinitely many invariant two-planes and every vector is rotated by $R$ by $\alpha=\beta$. Otherwise, $R$ is said to be a \textbf{double} rotation.
\end{definition}

\begin{lemma}
For every rotation $R\in\mathrm{SO}(4)$ there exists $p,q\in\mathrm{Sp}(1)$ such that $Rx=pxq^\dagger$ for all $x\in\mathbb{H}\simeq\mathbb{R}^4$. The rotation is isoclinic if and only if $p$ or $q$ is real. The rotation $R$ is simple if and only if $\mathrm{Re}(p)=\mathrm{Re}(q)$. For instance, $x\mapsto -x$ can be constructed with $p=-1$ and $q=1$ and this transformation is not a simple rotation. Indeed, this rotation has no fixed points other than zero. 
\end{lemma}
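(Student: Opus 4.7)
The plan is to first establish the existence of $p,q$ via the standard double cover $\mathrm{Sp}(1)\times\mathrm{Sp}(1)\to\mathrm{SO}(4)$, and then to characterize the geometric type of $R$ by reducing $(p,q)$ to a canonical form.

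For the existence, I would define $\Phi\colon\mathrm{Sp}(1)\times\mathrm{Sp}(1)\to\mathrm{O}(4)$ by $\Phi(p,q)(x):=pxq^\dagger$. Each $\Phi(p,q)$ preserves the quaternionic norm $|x|^2=x\bar{x}$ since $p,q$ are unit quaternions. As $\mathrm{Sp}(1)\times\mathrm{Sp}(1)$ is connected and $\Phi(1,1)=\mathrm{id}$, the image lies in $\mathrm{SO}(4)$. A direct computation gives $\ker\Phi=\{\pm(1,1)\}$, so the image is a compact connected subgroup of dimension $6=\dim\mathrm{SO}(4)$, hence all of $\mathrm{SO}(4)$. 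This produces $p,q\in\mathrm{Sp}(1)$ for each $R\in\mathrm{SO}(4)$, unique up to the sign ambiguity $(p,q)\mapsto(-p,-q)$.

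For the geometric characterizations, observe that conjugating $\Phi(p,q)$ by the rotations $x\mapsto s_1 x$ and $x\mapsto xs_2^\dagger$ in $\mathrm{SO}(4)$ sends $(p,q)$ to $(s_1^\dagger p s_1,q)$ and $(p,s_2 q s_2^\dagger)$, respectively. Such conjugations preserve the geometric type of $R$ (which is an $\mathrm{SO}(4)$-conjugacy invariant), the scalar parts $\mathrm{Re}(p)$ and $\mathrm{Re}(q)$, and the property of $p$ or $q$ being real. I may therefore assume $p=e^{i\alpha}$ and $q=e^{i\beta}$ for some $\alpha,\beta\in[0,2\pi)$. In this canonical form, the decomposition $\mathbb{H}=\mathbb{C}\oplus\mathbb{C}j$ splits $R$ into invariant real $2$-planes: using $je^{i\beta}=e^{-i\beta}j$, I compute $R(x)=e^{i(\alpha-\beta)}x$ on $\mathbb{C}$ and $R(yj)=e^{i(\alpha+\beta)}yj$ on $\mathbb{C}j$, so the rotation angles of $R$ are $\alpha-\beta$ and $\alpha+\beta$.

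Thus $R$ is isoclinic iff the complex eigenvalue pairs $\{e^{\pm i(\alpha-\beta)}\}$ and $\{e^{\pm i(\alpha+\beta)}\}$ coincide as multisets; the case analysis $e^{i(\alpha-\beta)}=e^{\pm i(\alpha+\beta)}$ reduces this to $2\alpha\in 2\pi\mathbb{Z}$ or $2\beta\in 2\pi\mathbb{Z}$, i.e., $p\in\{\pm 1\}$ or $q\in\{\pm 1\}$. Similarly, $R$ is simple iff one of $\alpha\pm\beta$ is $\equiv 0\pmod{2\pi}$, i.e., $\alpha\equiv\pm\beta\pmod{2\pi}$, which is equivalent to $\cos\alpha=\cos\beta$, hence to $\mathrm{Re}(p)=\mathrm{Re}(q)$. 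Both conditions are invariant under $(p,q)\mapsto(-p,-q)$, so they are well-defined functions of $R$ alone. The main obstacle is purely bookkeeping around the double cover and the canonical form; a minor edge case worth noting is the identity rotation, for which both angles vanish and which the excerpt's definition of isoclinic formally excludes (by requiring the common angle to be strictly positive), while the characterization ``$p$ or $q$ real'' is trivially satisfied at $(p,q)=(1,1)$, so the isoclinic statement should be read with this convention in mind. Finally, one checks the stated example $x\mapsto -x$ via $(p,q)=(-1,1)$ is correctly classified as isoclinic ($p$ real) and not simple ($\mathrm{Re}(p)=-1\neq 1=\mathrm{Re}(q)$).
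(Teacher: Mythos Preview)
Your proof is correct and takes a genuinely different route from the paper. The paper argues each direction directly on the pair $(p,q)$: for the simple case it uses that a nonzero fixed point $x$ forces $p=xqx^\dagger/|x|^2$ (hence $p_0=q_0$), and conversely constructs such an $x$ from $\vec{p}=x\vec{q}x^\dagger$; for the isoclinic case it exploits the constant-angle condition $(x^\dagger pxq^\dagger)_0=|x|^2\cos\alpha$ and derives a contradiction by evaluating at two well-chosen unit $x$ when neither $p$ nor $q$ is real. Your approach instead conjugates in $\mathrm{SO}(4)$ to put $(p,q)$ in the canonical form $(e^{i\alpha},e^{i\beta})$, reads off the two planar angles $\alpha\mp\beta$ from the splitting $\mathbb{H}=\mathbb{C}\oplus\mathbb{C}j$, and reduces both characterizations to elementary trigonometric identities. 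This is cleaner and more uniform, and your explicit check that both conditions are invariant under $(p,q)\mapsto(-p,-q)$ is a point the paper leaves implicit. The paper's argument, on the other hand, avoids invoking conjugacy invariance of the geometric type and gives a slightly more constructive witness (an explicit fixed point) in the simple case.
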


\begin{proof}
The double cover of $\mathrm{SO}(4)$ is $\mathrm{Spin}(4)\simeq \mathrm{Sp}(1)\times\mathrm{Sp}(1)$, via the map $\pi\colon \mathrm{Sp}(1)\times\mathrm{Sp}(1)\rightarrow \mathrm{SO}(4)$ which acts on $x\in\mathbb{R}^4\simeq \mathbb{H}$ as $pxq^\dagger=\pi(p,q)x$. 

First, we note that for $p\in\mathrm{Sp}(1)$ and $x\in\mf{sp}(1)$, we have $pxp^\dagger\in\mf{sp}(1)$. Thus, if we decompose $x\in\mathbb{H}$ as $x=x_0+\vec{x}$, where $x_0\in\mathbb{R}$ and $\vec{x}\in\mf{sp}(1)$, we decompose $pxp^\dagger=x_0+p\vec{x}p^\dagger$. Note that the double cover $\pi\colon\mathrm{Sp}(1)\rightarrow\mathrm{SO}(3)$ acts on $\mathbb{R}^3\simeq\mf{sp}(1)$ as $pxp^\dagger=\pi(p)x$. Thus, all rotations are of the form $pxp^\dagger$.

Suppose that $R$ is simple. Then there is some $x\in\mathbb{H}\setminus\{0\}$ such that $pxq^\dagger=x$. Then $p=\frac{xqx^\dagger}{|x|^2}$. Using the decomposition from above, $p_0=q_0$. Conversely, suppose that $p_0=q_0$. Let $x\in\mathrm{Sp}(1)$ such that $\vec{p}=x\vec{q}x^\dagger$. We see that
\begin{equation*}
pxq^\dagger=(q_0+x\vec{q}x^\dagger)x(q_0-\vec{q})=x.
\end{equation*}
Thus, the rotation is simple.

Suppose that $R$ is isoclinic. Then, for every $x\in\mathbb{H}$, we have that the angle between $x$ and $pxq^\dagger$ is some $\cos\alpha$. That is, $(x^\dagger pxq^\dagger)_0=|x|^2\cos\alpha$ for all $x\in\mathbb{H}$. Suppose that $p$ and $q$ are not real. Let $x\in\mathrm{Sp}(1)$ and $A>0$ such that $\vec{p}=Ax\vec{q}x^\dagger$. Then
\begin{equation*}
(x^\dagger pxq^\dagger)_0=((p_0+A\vec{q})(q_0-\vec{q}))_0=p_0q_0-A|\vec{q}|^2.
\end{equation*}
However, if we choose $y\in\mathrm{Sp}(1)$ such that $y^\dagger \vec{p}y$ and $\vec{q}$ are orthogonal, then $(y^\dagger \vec{p}y\vec{q})_0=0$. Hence, $(y^\dagger pyq^\dagger)_0=p_0q_0$. Thus, we see that $A|\vec{q}|^2=0$. But $A>0$, so $q\in\mathbb{R}$. But $q$ is not real, contradiction! Hence, $p$ or $q$ is real.

Conversely, suppose that $p$ is real. Then for all $x\in\mathbb{H}$, we have 
\begin{equation*}
(x^\dagger pxq^\dagger)_0=pq_0|x|^2.
\end{equation*}
Alternatively, if $q$ is real, then for all $x\in\mathbb{H}$, we have
\begin{equation*}
(x^\dagger pxq^\dagger)_0=qp_0|x|^2.
\end{equation*}
In either case, we have that the rotation is isoclinic.
\end{proof}

\subsection{Classifying Lie subalgebras of $\mf{sp}(2)$}\label{subsec:classifysubalg}

Before we classify the connected Lie subgroups up to conjugacy, we classify the Lie subalgebras of $\mathfrak{sp}(2)$.
\begin{lemma}
The non-trivial Lie subalgebras of $\mf{sp}(2)$ are $\mathbb{R}$, $\mathbb{R}\oplus\mathbb{R}$, $\mf{sp}(1)$, $\mf{sp}(1)\oplus \mathbb{R}$, and $\mf{sp}(1)\oplus\mf{sp}(1)$.\label{lemma:subalgebraclassify}
\end{lemma}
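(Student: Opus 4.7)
The plan is to classify every non-zero proper Lie subalgebra $\mathfrak{h}\subseteq\mathfrak{sp}(2)$ up to isomorphism by exploiting the compactness and rank of $\mathfrak{sp}(2)$. First I would observe that $\mathfrak{sp}(2)$ is a compact Lie algebra, so every Lie subalgebra is compact, hence reductive, and decomposes as $\mathfrak{h}=\mathfrak{z}(\mathfrak{h})\oplus[\mathfrak{h},\mathfrak{h}]$, with the derived ideal $[\mathfrak{h},\mathfrak{h}]$ a direct sum of compact simple Lie algebras. Next I would establish a rank bound: any abelian subalgebra of a compact Lie algebra extends to a maximal abelian subalgebra, which for $\mathfrak{sp}(2)$ is a Cartan of dimension $\mathrm{rank}(\mathfrak{sp}(2))=2$. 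Therefore $\dim\mathfrak{z}(\mathfrak{h})+\mathrm{rank}([\mathfrak{h},\mathfrak{h}])\leq 2$.

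Then I would enumerate the possible simple summands. A compact simple Lie algebra of rank at most $2$ is, up to isomorphism, one of $\mathfrak{sp}(1)$ (rank $1$, dimension $3$), $\mathfrak{su}(3)$ (dimension $8$), $\mathfrak{sp}(2)$ (dimension $10$), or $\mathfrak{g}_2$ (dimension $14$). The algebra $\mathfrak{g}_2$ is ruled out instantly by dimension, and $\mathfrak{sp}(2)$ itself would give the whole ambient algebra, which is excluded. To rule out $\mathfrak{su}(3)$ I would use the faithful $4$-dimensional symplectic (quaternionic) representation of $\mathfrak{sp}(2)$ coming from its defining action on $\mathbb{H}^2\simeq\mathbb{C}^4$: any embedded subalgebra inherits a $4$-dimensional symplectic complex representation, but the irreducible complex representations of $\mathfrak{su}(3)$ of complex dimension at most $4$ are $\mathbf{1}$, $\mathbf{3}$, and $\bar{\mathbf{3}}$, none of which can be assembled into a faithful $4$-dimensional representation admitting an invariant skew form (the only self-conjugate summand is $\mathbf{1}$, and it is orthogonal, not symplectic, while $\mathbf{3}\oplus\bar{\mathbf{3}}$ has the wrong dimension).

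Combining these constraints, the possible semi-simple parts are $0$, $\mathfrak{sp}(1)$, or $\mathfrak{sp}(1)\oplus\mathfrak{sp}(1)$, and the center can contribute dimension at most $2-\mathrm{rank}([\mathfrak{h},\mathfrak{h}])$. The resulting non-zero proper isomorphism types are exactly $\mathbb{R}$, $\mathbb{R}\oplus\mathbb{R}$, $\mathfrak{sp}(1)$, $\mathfrak{sp}(1)\oplus\mathbb{R}$, and $\mathfrak{sp}(1)\oplus\mathfrak{sp}(1)$. To close the argument I would verify each type is actually realized: $\mathbb{R}$ by any one-dimensional span of a non-zero element; $\mathbb{R}\oplus\mathbb{R}$ by any Cartan (e.g.\ $\langle \mathrm{diag}(i,0),\mathrm{diag}(0,i)\rangle$); $\mathfrak{sp}(1)\oplus\mathfrak{sp}(1)$ as the block-diagonal subalgebra $\mathfrak{sp}(1)\oplus\mathfrak{sp}(1)\subseteq\mathfrak{sp}(2)$; and $\mathfrak{sp}(1)$ and $\mathfrak{sp}(1)\oplus\mathbb{R}$ as sub-summands of the preceding.

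The main obstacle is the exclusion of $\mathfrak{su}(3)$, which is where I need an actual computation rather than a dimension count. The representation-theoretic argument sketched above is clean but requires knowing the realness/quaternionicity of the small irreducibles of $\mathfrak{su}(3)$; an alternative, should that be preferred, is a root-system argument observing that a hypothetical $\mathfrak{su}(3)\hookrightarrow\mathfrak{sp}(2)$ would share a Cartan with $\mathfrak{sp}(2)$ by the rank comparison, which would force the $A_2$ root system to embed in the $B_2=C_2$ root system of $\mathfrak{sp}(2)$; but the angles between roots in $A_2$ ($60^\circ$) are incompatible with those available in $B_2$ ($45^\circ$ or $90^\circ$), giving a contradiction.
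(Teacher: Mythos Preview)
Your proof is correct and follows essentially the same route as the paper: reductive decomposition of a subalgebra of a compact Lie algebra, the rank bound $\dim\mathfrak{z}(\mathfrak{h})+\mathrm{rank}([\mathfrak{h},\mathfrak{h}])\leq 2$, enumeration of compact simple Lie algebras of rank at most two, and exclusion of $\mathfrak{g}_2$ by dimension and of $\mathfrak{su}(3)$ by a separate argument. The only real difference is in that last step: the paper excludes $\mathfrak{su}(3)$ via the root-system argument you give as your alternative (a rank-$2$ subalgebra shares a Cartan with $\mathfrak{sp}(2)$, hence is regular in the sense of Dynkin, and $A_2$ is not a closed root subsystem of $C_2$), whereas your primary argument uses the defining $4$-dimensional symplectic representation of $\mathfrak{sp}(2)$ and the absence of a faithful $4$-dimensional self-dual representation of $\mathfrak{su}(3)$. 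Both arguments are short and valid; yours is slightly more self-contained in that it avoids invoking Dynkin's regularity criterion, while the paper's avoids any case-check on representation types. Your explicit realization of each isomorphism type is a nice addition that the paper defers to the subsequent propositions.
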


\begin{proof}
As $\mathrm{Sp}(2)$ is a simple, compact Lie group, its Lie algebra $\mf{sp}(2)$ is simple and compact. Subalgebras of compact Lie algebras are compact, hence reductive. Let $\mf{g}\subseteq\mf{sp}(2)$ be a Lie subalgebra. Then $\mf{g}=\mf{z}(\mf{g})\oplus[\mf{g},\mf{g}]$, where $\mf{z}(\mf{g})$ is abelian and $[\mf{g},\mf{g}]$ is semi-simple. 

Note that the rank of $\mf{sp}(2)$ is two. Thus, $\mathrm{rank}(\mf{g})\leq 2$, so $\mf{z}(\mf{g})$ is one of $0$, $\mathbb{R}$, or $\mathbb{R}\oplus\mathbb{R}$. If $\mf{z}(\mf{g})=\mathbb{R}\oplus\mathbb{R}$, then the rank of $[\mf{g},\mf{g}]$ is zero. As this Lie subalgebra is semi-simple, we have that $[\mf{g},\mf{g}]=0$. Hence, $\mf{g}=\mathbb{R}\oplus\mathbb{R}$. 

Suppose that $\mf{z}(\mf{g})=\mathbb{R}$. Then the rank of $[\mf{g},\mf{g}]$ is at most one. Just as above, if the rank of $[\mf{g},\mf{g}]$ is zero, then $\mf{g}=\mf{z}(\mf{g})=\mathbb{R}$. However, if the rank of $[\mf{g},\mf{g}]$ is one, then as it is semi-simple, $[\mf{g},\mf{g}]=\mf{sp}(1)$. Thus, $\mf{g}=\mf{sp}(1)\oplus\mathbb{R}$. 

Finally, suppose that $\mf{z}(\mf{g})=0$. That is, $\mf{g}$ is semi-simple. Then the rank of $\mf{g}$ is at most two. If the rank of $\mf{g}$ is zero, then $\mf{g}=0$. If the rank of $\mf{g}$ is one, then $\mf{g}=\mf{sp}(1)$. However, we have options should the rank of $\mf{g}$ be two. The only semi-simple Lie algebras of rank two are $\mf{sp}(2)$, $\mf{sp}(1)\oplus\mf{sp}(1)$, $\mf{su}(3)$, and $\mf{g}_2$. However, we note that $\mf{sp}(2)$ is 10-dimensional, so $\mf{g}\subseteq \mf{sp}(2)$ must be at most ten dimensional. As $\mf{g}_2$ is 14-dimensional, we can exclude this possibility. It remains to show that $\mf{su}(3)$ is not a Lie subalgebra of $\mf{sp}(2)$.

Indeed, suppose for the sake of contradiction that $\mf{su}(3)\simeq\mf{g}\subseteq\mf{sp}(2)$. As $\mf{g}$ and $\mf{sp}(2)$ both have rank two, a Cartan subalgebra of $\mf{g}$ is a Cartan subalgebra of $\mf{sp}(2)$. Thus, $\mf{g}$ is a regular Lie subalgebra of $\mf{sp}(2)$~\cite{dynkin_semisimple_1957}. Therefore, $\mf{g}$ has a closed root subsystem of the root system of $\mf{sp}(2)$. However, by looking at their root diagrams, we note that the root system of $\mf{su}(3)$ is not a closed root subsystem of $\mf{sp}(2)$. Contradiction! Therefore, $\mf{su}(3)$ is not a Lie subalgebra of $\mf{sp}(2)$. 
\end{proof}

Consider two Lie subgroups $G$ and $G'$ of $\mathrm{Sp}(2)$ with isomorphic Lie algebras. An instanton equivariant under $G$ does not necessarily correspond to an instanton equivariant under $G'$. To study the different subgroups, we have the following lemma.
\begin{lemma}
Let $x\in\mf{sp}(2)$. Then there is some $A\in\mathrm{Sp}(2)$ such that $AxA^\dagger=\mathrm{diag}(ai,bi)$. Moreover, $A$ can be chosen such that $a\geq b\geq 0$. \label{lemma:conjugatesp2}
\end{lemma}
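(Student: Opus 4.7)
The plan is to exploit the compactness of $\mathrm{Sp}(2)$ together with the Cartan subalgebra structure of $\mathfrak{sp}(2)$. First I fix the candidate Cartan subalgebra
\begin{equation*}
\mathfrak{t} := \{\mathrm{diag}(ai, bi) : a, b \in \mathbb{R}\} \subseteq \mathfrak{sp}(2),
\end{equation*}
which is abelian and of dimension $2$, matching the rank of $\mathrm{Sp}(2)$, so it is indeed a Cartan subalgebra. A standard structural result for compact connected Lie groups says that every element of the Lie algebra is conjugate, under the adjoint action of the group, to an element of any fixed Cartan subalgebra. Applied here, this produces some $B \in \mathrm{Sp}(2)$ with $BxB^\dagger = \mathrm{diag}(ai, bi)$ for some $a, b \in \mathbb{R}$.

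The second step is to use the Weyl group $W$ of the pair $(\mathrm{Sp}(2), \mathfrak{t})$, which for the $C_2$ root system acts on $\mathfrak{t}$ by signed permutations of the coordinates $(a, b)$. Each such symmetry lifts to an explicit element of $\mathrm{Sp}(2)$: the swap $(a, b) \mapsto (b, a)$ is implemented by conjugation with $\begin{bmatrix} 0 & 1 \\ -1 & 0 \end{bmatrix} \in \mathrm{Sp}(2)$, while the sign flip $(a, b) \mapsto (-a, b)$ is implemented by conjugation with $\mathrm{diag}(j, 1) \in \mathrm{Sp}(2)$, using $jij = i$ so that $j(ai)(-j) = -ai$; analogously $\mathrm{diag}(1, j)$ flips the sign of $b$. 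Composing these into a single $W \in \mathrm{Sp}(2)$ and setting $A := WB$ yields $AxA^\dagger = \mathrm{diag}(ai, bi)$ with $a \geq b \geq 0$, as required.

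The only substantive obstacle is the invocation of the Cartan conjugacy theorem. If one prefers a hands-on argument, one can instead embed $\mathrm{Sp}(2) \hookrightarrow \mathrm{U}(4)$ via $x_1 + x_2 j \mapsto \begin{bmatrix} x_1 & x_2 \\ -\overline{x_2} & \overline{x_1} \end{bmatrix}$, under which a skew-Hermitian $x \in \mathfrak{sp}(2)$ becomes a skew-Hermitian complex $4 \times 4$ matrix; this is unitarily diagonalizable with purely imaginary eigenvalues that come in pairs $\pm i c$ because of the quaternionic structure. Translating back, one obtains two $\mathbb{H}$-orthonormal right eigenvectors of $x$ with purely imaginary quaternionic eigenvalues, which assemble into a $B \in \mathrm{Sp}(2)$ with $BxB^\dagger = \mathrm{diag}(\vec{u}, \vec{v})$ for $\vec{u}, \vec{v} \in \mathrm{Im}(\mathbb{H})$. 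A further conjugation by a suitable element of $\mathrm{Sp}(1) \times \mathrm{Sp}(1) \subseteq \mathrm{Sp}(2)$ rotates each imaginary quaternion to the form $ai$, after which the Weyl step above yields the ordering and non-negativity. The care required in this self-contained route lies in handling the distinction between left and right quaternionic eigenvalues.
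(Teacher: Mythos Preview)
Your proof is correct and follows essentially the same approach as the paper: invoke the maximal torus / Cartan conjugacy theorem for the first part, then use explicit Weyl group elements (the swap $\begin{bmatrix}0&1\\-1&0\end{bmatrix}$ and sign flips) to reach the chamber $a\geq b\geq 0$. Your use of the individual sign flips $\mathrm{diag}(j,1)$ and $\mathrm{diag}(1,j)$ is in fact sharper than the paper's $jI_2$, which only negates both entries simultaneously and by itself would not suffice to move, say, $(2,-1)$ into the closed positive chamber; the alternative $\mathrm{U}(4)$ argument you sketch is a pleasant self-contained bonus.
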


\begin{proof}
The first claim follows from the fact that a maximal torus of $\mf{sp}(2)$ is given by $\{\mathrm{diag}(ai,bi)\mid a,b\in\mathbb{R}\}$. As $\mathrm{Sp}(2)$ is a matrix group, the adjoint action of $\mathrm{Sp}(2)$ on $\mf{sp}(2)$ is just conjugation. Then, as all adjoint orbits intersect maximal tori, we have the first claim. 

For the second claim, if $x=\mathrm{diag}(ai,bi)$, we can conjugate by $A=jI_2\in\mathrm{Sp}(2)$ to get $AxA^\dagger=\mathrm{diag}(-ai,-bi)$. Additionally, conjugating by $A=\begin{bmatrix}
0 & 1 \\ -1 & 0
\end{bmatrix}$, we have that $AxA^\dagger=\mathrm{diag}(bi,ai)$. Thus, we may assume that $a\geq b\geq 0$.
\end{proof}

Now we are equipped to study the continuous conformal subgroups that we use to study symmetric instantons. From Lemma~\ref{lemma:subalgebraclassify}, we only need to study Lie subgroups with Lie algebra of the form $\mathbb{R}$, $\mathbb{R}\oplus\mathbb{R}$, $\mf{sp}(1)$, $\mf{sp}(1)\oplus\mathbb{R}$, $\mf{sp}(1)\oplus\mf{sp}(1)$, and $\mf{sp}(2)$. 

\subsection{Classifying continuous conformal subgroups}

In this section, we classify the different connected Lie subgroups of $\mathrm{Sp}(2)$ up to conjugacy that we use to study symmetric instantons. 

Suppose we take two connected Lie subgroups whose Lie algebras are conjugate. This conjugacy lifts to the connected Lie subgroups, as the exponential map on $\mathrm{Sp}(2)$ is surjective. As we are dealing with Lie subalgebras of $\mathfrak{sp}(2)$, which naturally act on $\mathbb{H}^2$, these Lie algebras are conjugate if and only if they are isomorphic representations. Therefore, we need only classify Lie subalgebras up to conjugacy.

To start with, we consider subgroups with Lie algebra $\mathbb{R}$.
\begin{prop}
The Lie subgroups of $\mathrm{Sp}(2)$ with Lie algebra $\mathbb{R}$ are conjugate to some member of the family of subgroups $\{R_t\mid 0\leq t\leq 1\}$ with $R_t$ given by
\begin{equation}
R_t:=\left\{\begin{bmatrix}
e^{i\theta} & 0 \\ 0 & e^{ti\theta}
\end{bmatrix}\mid \theta\in\mathbb{R}\right\}.
\end{equation}
\end{prop}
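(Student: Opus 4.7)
The plan is to reduce the classification to that of one-dimensional Lie subalgebras of $\mathfrak{sp}(2)$, and then invoke Lemma~\ref{lemma:conjugatesp2} to put any such subalgebra into a standard form. A connected Lie subgroup of $\mathrm{Sp}(2)$ is uniquely determined by its Lie algebra (as an immersed subgroup), and since $\mathrm{Sp}(2)$ is compact and connected the exponential map is surjective. Consequently, if two Lie subalgebras $\mathfrak{g},\mathfrak{g}'\subseteq\mathfrak{sp}(2)$ satisfy $A\mathfrak{g}A^\dagger=\mathfrak{g}'$ for some $A\in\mathrm{Sp}(2)$, then the associated connected Lie subgroups $G$ and $G'$ satisfy $AGA^\dagger=G'$. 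So I only need to classify the one-dimensional Lie subalgebras up to the adjoint action of $\mathrm{Sp}(2)$.

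Given a connected Lie subgroup $G\subseteq\mathrm{Sp}(2)$ with $\mathrm{Lie}(G)\cong\mathbb{R}$, choose a nonzero generator $x\in\mathrm{Lie}(G)\subseteq\mathfrak{sp}(2)$. By Lemma~\ref{lemma:conjugatesp2}, there is some $A\in\mathrm{Sp}(2)$ such that $AxA^\dagger=\mathrm{diag}(ai,bi)$ with $a\geq b\geq 0$. Since $x\neq 0$ we must have $a>0$, so we may set $t:=b/a\in[0,1]$. The one-dimensional subalgebra $\langle \mathrm{diag}(ai,bi)\rangle$ coincides with $\langle \mathrm{diag}(i,ti)\rangle$ (rescaling the generator does not change the span), which is precisely the Lie algebra of $R_t$.

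Finally, I would verify that exponentiating $\langle\mathrm{diag}(i,ti)\rangle$ recovers $R_t$ as given in the statement: direct computation yields
\begin{equation*}
\exp\left(\theta\begin{bmatrix} i & 0 \\ 0 & ti\end{bmatrix}\right)=\begin{bmatrix} e^{i\theta} & 0 \\ 0 & e^{ti\theta}\end{bmatrix},
\end{equation*}
so the image of the one-parameter subgroup generated by $\mathrm{diag}(i,ti)$ is exactly $R_t$. Thus $AGA^\dagger=R_t$ for some $t\in[0,1]$, which is what we wanted.

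There is no serious obstacle here; the argument is essentially a direct corollary of Lemma~\ref{lemma:conjugatesp2} together with the standard correspondence between connected Lie subgroups and Lie subalgebras. The only subtlety worth noting is that for irrational $t$, the subgroup $R_t$ is an immersed (not embedded) subgroup of $\mathrm{Sp}(2)$ whose closure is a maximal torus; this is consistent with the paper's later treatment of $R_t\simeq \mathbb{R}$ in that regime, so no additional care is required at this stage.
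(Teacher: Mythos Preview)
Your proof is correct and follows essentially the same approach as the paper: invoke Lemma~\ref{lemma:conjugatesp2} to conjugate the generator into $\mathrm{diag}(ai,bi)$ with $a\geq b\geq 0$, observe $a\neq 0$, and rescale to $\mathrm{diag}(i,ti)$ with $t=b/a\in[0,1]$. Your version is slightly more detailed (explicit exponentiation and the remark on immersed subgroups for irrational $t$), but the argument is the same.
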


\begin{note}
The subgroup $R_t$ is compact if and only if $t$ is rational. Thus, $R_t$ is isomorphic to $S^1$ when $t$ is rational and $\mathbb{R}$ otherwise. While $R_t$ is not always compact, it turns out that we do not need compactness in this case when investigating symmetric instantons~\cite[Proposition~1.2]{lang_moduli_2024}.
\end{note}

\begin{proof}
Let $G\subseteq \mathrm{Sp}(2)$ be a Lie subgroup with Lie algebra $\mathbb{R}$ generated by $x\in\mf{sp}(2)$. Thus, every element of $G$ is of the form $e^{x\theta}$ for some $\theta\in\mathbb{R}$. By Lemma~\ref{lemma:conjugatesp2}, we can take $x$ to be of the form $\mathrm{diag}(ai,bi)$ for some $a\geq b\geq 0$. If $a=0$, then $G=\{I\}$, which does not have a Lie algebra isomorphic to $\mathbb{R}$. Hence, $a\neq 0$. Let $t:=b/a$. Then $0\leq t\leq 1$. Finally, we see that the Lie algebra generated by $\mathrm{diag}(ai,bi)$ is the same as that generated by $\mathrm{diag}(i,ti)$. 
\end{proof}

Now we move on to subgroups with Lie algebra $\mathbb{R}\oplus\mathbb{R}$. 
\begin{prop}
The Lie subgroups of $\mathrm{Sp}(2)$ with Lie algebra $\mathbb{R}\oplus\mathbb{R}$ are conjugate to the Lie subgroup given by
\begin{equation}
\left\{\begin{bmatrix}
e^{i\theta} & 0 \\ 0 & e^{i\phi}
\end{bmatrix}\mid \theta,\phi\in\mathbb{R}\right\}.
\end{equation}
\end{prop}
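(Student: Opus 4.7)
The plan is to show that the Lie algebra $\mathfrak{g}\simeq\mathbb{R}\oplus\mathbb{R}$ of such a subgroup is a Cartan subalgebra of $\mathfrak{sp}(2)$, and then invoke the standard conjugacy theorem for maximal tori in compact Lie algebras to conclude. Specifically, $\mathfrak{g}$ is abelian of dimension two, which equals the rank of $\mathfrak{sp}(2)$; every abelian subalgebra of a compact Lie algebra consists of semisimple elements and is contained in some Cartan subalgebra, so by dimension $\mathfrak{g}$ equals a Cartan subalgebra. Since all Cartan subalgebras are conjugate under the adjoint action of $\mathrm{Sp}(2)$, $\mathfrak{g}$ is conjugate to the standard diagonal Cartan $\{\mathrm{diag}(ai,bi)\mid a,b\in\mathbb{R}\}$. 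Exponentiating this conjugation and using connectedness of the subgroup will give the desired conjugacy of subgroups.

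For a more hands-on proof mirroring the previous proposition, I would pick a basis $\{x,y\}$ for $\mathfrak{g}$ with $[x,y]=0$ and first apply Lemma~\ref{lemma:conjugatesp2} to conjugate $x$ into the standard form $\mathrm{diag}(ai,bi)$ with $a\geq b\geq 0$. Writing a general element of $\mathfrak{sp}(2)$ as $\bigl[\begin{smallmatrix}\alpha & \beta\\ -\beta^\dagger & \gamma\end{smallmatrix}\bigr]$ with $\alpha,\gamma\in\mathfrak{sp}(1)$ and $\beta\in\mathbb{H}$, I would compute the centralizer $\mathfrak{z}(x)\subseteq\mathfrak{sp}(2)$ directly by solving $[\mathrm{diag}(ai,bi),\cdot]=0$. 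In the generic case $a>b>0$, this centralizer is exactly the standard Cartan $\mathbb{R}i\oplus\mathbb{R}i$, so $y\in\mathfrak{z}(x)$ is automatically already diagonal, $\mathfrak{g}$ equals the standard Cartan, and exponentiating finishes the proof.

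The main obstacle is the degenerate cases where the centralizer of $x$ is strictly larger than the standard Cartan. When $a=b>0$ one finds $\mathfrak{z}(x)\simeq\mathfrak{u}(2)$, and when $a>0$, $b=0$ one finds $\mathfrak{z}(x)\simeq\mathbb{R}i\oplus\mathfrak{sp}(1)$; the case $a=b=0$ is excluded since $\mathfrak{g}$ is two-dimensional. In each degenerate case, I would exhibit an explicit element of the subgroup of $\mathrm{Sp}(2)$ centralizing $x$ that conjugates $y$ into the standard Cartan while leaving $x$ fixed. For $a=b>0$ this amounts to diagonalizing a Hermitian $2\times 2$ complex matrix representing $y$ by an element of $\mathrm{U}(2)$, and for $a>0$, $b=0$ it amounts to conjugating the $\mathfrak{sp}(1)$-component of $y$ into $\mathbb{R}i$ by a unit quaternion acting on the second coordinate (which commutes with $\mathrm{diag}(ai,0)$). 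In both cases, $y$ is semisimple in the reductive subalgebra $\mathfrak{z}(x)$, so such a diagonalizing element exists.

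Once $\mathfrak{g}$ has been conjugated into the standard Cartan $\mathbb{R}i\oplus\mathbb{R}i$, connectedness of the Lie subgroup together with surjectivity of $\exp$ on the torus produced by this Cartan implies that the subgroup itself is conjugate to the image of $\mathbb{R}i\oplus\mathbb{R}i$ under $\exp$, which is exactly the subgroup $\{\mathrm{diag}(e^{i\theta},e^{i\phi})\mid \theta,\phi\in\mathbb{R}\}$.
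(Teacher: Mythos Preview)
Your proposal is correct, and your first paragraph is essentially the paper's entire proof: the paper simply observes that any copy of $\mathbb{R}\oplus\mathbb{R}$ in $\mathfrak{sp}(2)$ is a Cartan subalgebra (since $\mathrm{rank}\,\mathfrak{sp}(2)=2$), invokes conjugacy of Cartan subalgebras, and exponentiates. Your additional hands-on argument via explicit centralizer computations is correct but goes well beyond what the paper does; it is not needed here.
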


\begin{note}
This Lie subgroup is compact and connected. Therefore, it is a maximal torus, isomorphic to $S^1\times S^1$.
\end{note}

\begin{proof}
Note that $\mathbb{R}\oplus\mathbb{R}\subseteq\mf{sp}(2)$ is a Cartan subalgebra. As all Cartan subalgebras are conjugate, we may take the Cartan subalgebra to be given by $\{\mathrm{diag}(i\theta,i\phi)\mid\theta,\phi\in\mathbb{R}\}$. Exponentiating, we get our group.
\end{proof}

Now that we move on to non-abelian Lie subalgebras, we can use representation theory to reduce the list of subgroups to investigate. Next, we study subgroups with Lie algebra $\mf{sp}(1)$.
\begin{prop}
The connected Lie subgroups of $\mathrm{Sp}(2)$ with Lie algebra $\mf{sp}(1)$ are conjugate to one of the unique connected Lie subgroups with the following Lie algebras:\label{prop:sp1subalgebras}
\begin{equation}
\begin{aligned}
\mf{h}_{3,1,1}&:=\left\langle\begin{bmatrix}
i/2 & 0 \\ 0 & i/2
\end{bmatrix},\begin{bmatrix}
j/2 & 0 \\ 0 & j/2
\end{bmatrix},\begin{bmatrix}
k/2 & 0 \\ 0 & k/2
\end{bmatrix}\right\rangle;\\
\mf{h}_{4,1}&:=\left\langle\begin{bmatrix}
i/2 & 0 \\ 0 & 0
\end{bmatrix},\begin{bmatrix}
j/2 & 0 \\ 0 & 0
\end{bmatrix},\begin{bmatrix}
k/2 & 0 \\ 0 & 0
\end{bmatrix}\right\rangle;\\
\mf{h}_5&:=\left\langle\begin{bmatrix}
i/2 & 0 \\ 0 & 3i/2
\end{bmatrix},\begin{bmatrix}
j & \sqrt{3}/2 \\ -\sqrt{3}/2 & 0
\end{bmatrix},\begin{bmatrix}
k & -\sqrt{3}i/2 \\ -\sqrt{3}i/2 & 0
\end{bmatrix}\right\rangle.
\end{aligned}\label{eq:sp1subalgebras}
\end{equation}
\end{prop}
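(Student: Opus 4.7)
The plan is to reduce the classification to representation theory, then enumerate. A connected Lie subgroup $G \subseteq \mathrm{Sp}(2)$ with $\mathrm{Lie}(G) \simeq \mf{sp}(1)$ is uniquely determined by its Lie algebra $\mf{h} \subseteq \mf{sp}(2)$ (since the exponential is surjective on compact connected groups), so it suffices to classify such $\mf{h}$ up to $\mathrm{Ad}(\mathrm{Sp}(2))$-conjugacy. Each such $\mf{h}$ arises from an injective homomorphism $\rho\colon\mf{sp}(1)\hookrightarrow\mf{sp}(2)$, and composing with the defining action of $\mf{sp}(2)$ on $\mathbb{H}^2$ turns $\mathbb{H}^2$ into a faithful quaternionic representation of $\mf{sp}(1)$ of $\mathbb{H}$-dimension 2 that preserves the standard Hermitian form. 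Conversely, any such representation yields a subalgebra. Two subalgebras are $\mathrm{Sp}(2)$-conjugate if and only if the associated representations are isomorphic: one direction is immediate, and the other follows from a polar decomposition argument, namely that any $\mathbb{H}$-linear intertwiner between the two representations can be written as a Hermitian isometry composed with a positive intertwiner (which is an automorphism of the rep by Schur's lemma applied componentwise), so the isometry itself lies in $\mathrm{Sp}(2)$ and realizes the conjugacy.

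Next, I would enumerate the faithful quaternionic representations of $\mf{sp}(1)$ of $\mathbb{H}$-dimension 2. Restricting scalars, $\mathbb{H}^2 \simeq \mathbb{C}^4$, so these are faithful complex 4-dimensional representations equipped with a compatible quaternionic structure. Writing $(V_n,\rho_n)$ for the irreducible complex $n$-representation, we recall that $V_n$ is of real type when $n$ is odd and of quaternionic type when $n$ is even, and a direct sum admits a quaternionic structure iff each real-type summand appears with even multiplicity (quaternionic-type summands may appear freely). The complex 4-dimensional decompositions are then $V_4$, $V_2 \oplus V_2$, $V_2 \oplus V_1^{\oplus 2}$, $V_3 \oplus V_1$, and $V_1^{\oplus 4}$. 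The last two are excluded: $V_3 \oplus V_1$ has only odd-type summands each of multiplicity one, so it fails the parity condition; $V_1^{\oplus 4}$ is the trivial representation, which is not faithful. This leaves exactly the three claimed isomorphism classes.

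Finally, I would exhibit and verify representatives. For $V_2 \oplus V_1^{\oplus 2}$, the natural realization is the standard inclusion $\mf{sp}(1) \hookrightarrow \mf{sp}(1) \oplus 0 \subseteq \mf{sp}(2)$, which visibly reproduces $\mf{h}_{4,1}$. For $V_2 \oplus V_2$, the diagonal inclusion $\mf{sp}(1) \hookrightarrow \mf{sp}(1) \oplus \mf{sp}(1) \subseteq \mf{sp}(2)$ reproduces $\mf{h}_{3,1,1}$. For $V_4$ (the spin-$3/2$ representation), the Cartan generator has complex eigenvalues $\pm i/2, \pm 3i/2$; a short calculation shows that the generators defining $\mf{h}_5$ satisfy the bracket relations of $\mf{sp}(1)$ and that $\mathrm{diag}(i/2,3i/2)$ has exactly these eigenvalues as a $\mathbb{C}$-linear map on $\mathbb{H}^2 \simeq \mathbb{C}^4$, hence $\mf{h}_5$ realizes $V_4$ up to equivalence.

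The main obstacle is the reduction step: carefully justifying that $\mathrm{Sp}(2)$-conjugacy of subalgebras matches isomorphism of the associated representations, since this requires promoting a generic quaternionic linear intertwiner to a Hermitian isometry via the polar decomposition argument. The representation-theoretic enumeration itself is routine once this reduction is in place, and matching the three resulting classes with the listed algebras $\mf{h}_{3,1,1}, \mf{h}_{4,1}, \mf{h}_5$ is a direct calculation (particularly the weight check for $\mf{h}_5$, which confirms irreducibility as a quaternionic module and rules out the other two complex decompositions).
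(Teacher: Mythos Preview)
Your proposal is correct and takes a genuinely different route from the paper. You classify embeddings $\mathfrak{sp}(1)\hookrightarrow\mathfrak{sp}(2)$ via the defining quaternionic $2$-representation on $\mathbb{H}^2$, restricting scalars to $\mathbb{C}^4$ and enumerating faithful complex $4$-representations admitting a quaternionic structure; the polar decomposition argument cleanly reduces $\mathrm{Sp}(2)$-conjugacy to quaternionic-representation isomorphism. The paper instead passes through the exceptional isomorphism $\mathfrak{sp}(2)\simeq\mathfrak{so}(5)$, so that an $\mathfrak{sp}(1)$-subalgebra yields a faithful real $5$-representation, and then enumerates those: $(\mathbb{R}^5,\varrho_5)$, $(\mathbb{R}^4,\varrho_4)\oplus(\mathbb{R},0)$, $(\mathbb{R}^3,\varrho_3)\oplus(\mathbb{R},0)^{\oplus 2}$, discarding the trivial case. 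The matching with $\mathfrak{h}_5$, $\mathfrak{h}_{4,1}$, $\mathfrak{h}_{3,1,1}$ is done by reading off eigenvalues of the Cartan generator under an explicit $\mathfrak{sp}(2)\leftrightarrow\mathfrak{so}(5)$ correspondence. Your approach is more self-contained---it avoids the $\mathrm{Spin}(5)$ detour and the explicit Chevalley-basis computation---while the paper's route explains the subscript notation (the indices $3,1,1$, $4,1$, $5$ record the real $5$-representation decomposition) and ties the classification directly to the geometric action of $\mathrm{Sp}(2)$ on $S^4\subset\mathbb{R}^5$, which is used elsewhere in the paper.
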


\begin{note}
The subscripts of the above Lie algebras are related to the representations of $\mf{sp}(1)$ they induce under the isomorphism $\mathfrak{sp}(2)\simeq\mathfrak{so}(5)$. The first two Lie algebras give rise to Lie subgroups composed only of isometries. In particular, the former is composed of only isoclinic rotations and the latter only simple rotations. 
\end{note}

\begin{note}
As $\mf{sp}(1)$ is simple, all the Lie subgroups associated to the above Lie algebras are compact. In particular, all the Lie subgroups are isomorphic to $\mathrm{Sp}(1)$. 
\end{note}

\begin{proof}
Recall that $\mathrm{Sp}(2)$ acts on $S^4\subseteq\mathbb{R}^5$. By extending the action, the group acts on $\mathbb{R}^5$. This action is induced by the isomorphism $\mathrm{Sp}(2)\simeq \mathrm{Spin}(5)$, meaning $\mf{sp}(2)\simeq\mf{so}(5)$. If $\mf{g}\subseteq\mf{sp}(2)$ is a Lie algebra isomorphic to $\mf{sp}(1)$, then the isomorphism induces a real 5-representation of $\mf{sp}(1)$. 

We have that irreducible real representations of $\mathfrak{sp}(1)$, denoted by $(\mathbb{R}^k,\varrho_k)$, only exist in odd dimensions as well as dimensions divisible by four. Moreover, all irreducible real representations of the same dimension are isomorphic, that is they are conjugate. The only five-representations are $(\mathbb{R}^5,\varrho_5)$, $(\mathbb{R}^4,\varrho_4)\oplus (\mathbb{R},\varrho_1)$, $(\mathbb{R}^3,\varrho_3)\oplus (\mathbb{R},\varrho_1)^{\oplus 2}$, and $(\mathbb{R},\varrho_1)^{\oplus 5}$. Note that the final representation is the trivial representation, taking all elements $\upsilon\in\mf{sp}(1)$ to zero. As the real 5-representation we have comes from a Lie algebra isomorphism, the representation must be injective. Thus, it cannot be the final representation. As we only care about Lie algebras up to isomorphic representations, it now only remains to show that the Lie algebras in the statement correspond to the real representations above. 

A map sending the generators of the Chevalley bases of $\mathfrak{sp}(2)$ and $\mathfrak{so}(5)$ to each other provides an explicit Lie algebra isomorphism between the two Lie algebras. The Chevalley bases of $\mathfrak{sp}(2)$ and $\mathfrak{so}(5)$ are given explicitly in my thesis~\cite[Appendix~F]{lang_thesis_2024}. Under such a map, we have, in particular, the following relationships between generators of the Chevalley bases:
\begin{equation}
\begin{aligned}
\begin{bmatrix}
i & 0 \\
0 & -i
\end{bmatrix}\in\mf{sp}(2)&\leftrightarrow\begin{bmatrix}
0 & 0 & 0 & 0 & 0 \\
0 & 0 & 0 & 0 & 0 \\
0 & 0 & 0 & 2 & 0 \\
0 & 0 & -2 & 0 & 0 \\
0 & 0 & 0 & 0 & 0
\end{bmatrix}\in\mf{so}(5);\\
\begin{bmatrix}
0 & 0 \\ 0 & i
\end{bmatrix}\in\mf{sp}(2)&\leftrightarrow\begin{bmatrix}
0 & 1 & 0 & 0 & 0 \\
-1 & 0 & 0 & 0 & 0 \\
0 & 0 & 0 & -1 & 0 \\
0 & 0 & 1 & 0 & 0 \\
0 & 0 & 0 & 0 & 0
\end{bmatrix}\in\mf{so}(5).
\end{aligned}\label{eq:sp2so5}
\end{equation}

A representation $(V,\rho)$ of $\mf{sp}(1)$ is determined by $Y_1:=\rho(i/2)$, $Y_2:=\rho(j/2)$, and $Y_3:=\rho(k/2)$. Furthermore, the decomposition of the representation is determined by the eigenvalues of $Y_1\otimes i$. In each Lie algebra in the statement, $Y_1$ is the first matrix listed. Using \eqref{eq:sp2so5}, we view $Y_1\in\mf{so}(5)$ in order to see its eigenvalues. Note that the eigenvalues of $Y_1\otimes i$ for $(\mathbb{R}^k,\varrho_k)$ with $k$ odd are $\left\{\frac{k-1}{2},\frac{k-3}{2},\ldots,-\frac{k-1}{2}\right\}$, each with multiplicity one. When $k$ is divisible by four, the eigenvalues of $Y_1\otimes i$ for $(\mathbb{R}^k,\varrho_k)$ are $\left\{\frac{k-2}{4},\frac{k-6}{4},\ldots,-\frac{k-2}{4}\right\}$, each with multiplicity two.

In the case $\mf{h}_{3,1,1}$, we see that $Y_1\otimes i$ is given by
\begin{equation*}
Y_1\otimes i=\begin{bmatrix}
0 & i & 0 & 0 & 0 \\
-i & 0 & 0 & 0 & 0 \\
0 & 0 & 0 & 0 & 0 \\
0 & 0 & 0 & 0 & 0 \\
0 & 0 & 0 & 0 & 0
\end{bmatrix}.
\end{equation*}
The eigenvalues of $Y_1\otimes i$ in this case are $\pm 1$, with multiplicity one each, and $0$ with multiplicity three. Thus, the representation is $(\mathbb{R}^3,\varrho_3)\oplus (\mathbb{R},\varrho_1)^{\oplus 2}$. 

In the case $\mf{h}_{4,1}$, we see that $Y_1\otimes i$ is given by
\begin{equation*}
Y_1\otimes i=\frac{1}{2}\begin{bmatrix}
0 & i & 0 & 0 & 0 \\
-i & 0 & 0 & 0 & 0 \\
0 & 0 & 0 & i & 0 \\
0 & 0 & -i & 0 & 0 \\
0 & 0 & 0 & 0 & 0
\end{bmatrix}.
\end{equation*}
The eigenvalues of $Y_1\otimes i$ in this case are $\pm 1/2$, with multiplicity two each, and $0$ with multiplicity one. Thus, the representation is $(\mathbb{R}^4,\varrho_4)\oplus (\mathbb{R},\varrho_1)$.

Finally, in the case $\mf{h}_5$, we see that $Y_1\otimes i$ is given by 
\begin{equation*}
Y_1\otimes i=\begin{bmatrix}
0 & 2i & 0 & 0 & 0 \\
-2i & 0 & 0 & 0 & 0 \\
0 & 0 & 0 & -i & 0 \\
0 & 0 & i & 0 & 0 \\
0 & 0 & 0 & 0 & 0
\end{bmatrix}.
\end{equation*}
The eigenvalues of $Y_1\otimes i$ in this case are $\pm 2$, $\pm 1$, and $0$, each with multiplicity one. Thus, the representation is $(\mathbb{R}^5,\varrho_5)$.

Suppose that $G\subseteq \mathrm{Sp}(2)$ is a connected Lie group with Lie algebra $\mathfrak{sp}(1)\simeq\mathfrak{g}\subseteq\mathfrak{sp}(2)$. Then $\mathfrak{g}$ gives a representation of $\mathfrak{sp}(1)$ isomorphic to one of $\mathfrak{h}_{3,1,1}$, $\mathfrak{h}_{4,1}$, or $\mathfrak{h}_5$. As isomorphic representations are conjugate, we have that $G$ is conjugate to the unique connected Lie group with the corresponding Lie algebra.
\end{proof}

With these Lie subalgebras of $\mf{sp}(2)$ isomorphic to $\mf{sp}(1)$ in mind, we continue with the remaining Lie subgroups. Next we examine subgroups with Lie algebra $\mf{sp}(1)\oplus\mathbb{R}$. 
\begin{prop}
The Lie subgroups of $\mathrm{Sp}(2)$ with Lie algebra $\mf{sp}(1)\oplus\mathbb{R}$ are conjugate to one of the unique connected Lie subgroups with the following Lie algebras:\label{prop:sp1s1subalgebras}
\begin{equation}
\begin{aligned}
\mf{p}_{3,1,1}&:=\left\langle \begin{bmatrix}
i/2 & 0 \\ 0 & i/2 
\end{bmatrix},\begin{bmatrix}
j/2 & 0 \\ 0 & j/2
\end{bmatrix},\begin{bmatrix}
k/2 & 0 \\ 0 & k/2
\end{bmatrix},\begin{bmatrix}
0 & 1 \\ -1 & 0
\end{bmatrix}\right\rangle;\\
\mf{p}_{4,1}&:=\left\langle \begin{bmatrix}
i/2 & 0 \\ 0 & 0 
\end{bmatrix},\begin{bmatrix}
j/2 & 0 \\ 0 & 0
\end{bmatrix},\begin{bmatrix}
k/2 & 0 \\ 0 & 0
\end{bmatrix},\begin{bmatrix}
0 & 0 \\ 0 & i
\end{bmatrix}\right\rangle.
\end{aligned}\label{eq:sp1s1subalgebras}
\end{equation}
\end{prop}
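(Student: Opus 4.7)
The plan is to show that every connected Lie subgroup $G\subseteq\mathrm{Sp}(2)$ with Lie algebra $\mf{g}\simeq\mf{sp}(1)\oplus\mathbb{R}$ is conjugate to the unique connected Lie subgroup with Lie algebra $\mf{p}_{3,1,1}$ or $\mf{p}_{4,1}$. Since connected Lie subgroups are determined by their Lie algebras and conjugation lifts from algebra to group via the (surjective) exponential on $\mathrm{Sp}(2)$, this reduces to classifying such $\mf{g}$ up to conjugation inside $\mf{sp}(2)$. Any such $\mf{g}$ decomposes canonically as $\mf{g}=\mf{s}\oplus\mf{z}$, where $\mf{s}:=[\mf{g},\mf{g}]\simeq\mf{sp}(1)$ and $\mf{z}$ is the one-dimensional center. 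By Proposition~\ref{prop:sp1subalgebras}, after an initial conjugation I may assume $\mf{s}$ equals one of $\mf{h}_{3,1,1}$, $\mf{h}_{4,1}$, or $\mf{h}_5$; it then remains to locate $\mf{z}$ inside the centralizer $C(\mf{s})\subseteq\mf{sp}(2)$, up to further conjugation by elements of $\mathrm{Sp}(2)$ that preserve the standard form of $\mf{s}$.

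In each case I would compute $C(\mf{s})$ directly by writing $X=\begin{bmatrix}a&b\\-b^\dagger&c\end{bmatrix}\in\mf{sp}(2)$ with $a,c\in\mf{sp}(1)$ and $b\in\mathbb{H}$, and imposing the commutator condition on generators. For $\mf{s}=\mf{h}_5$, the real $5$-representation of $\mf{sp}(1)$ on $\mathbb{R}^5$ constructed in the proof of Proposition~\ref{prop:sp1subalgebras} is irreducible, so Schur's lemma over $\mathbb{R}$ forces the commutant in $\mathrm{End}_\mathbb{R}(\mathbb{R}^5)$ to be $\mathbb{R}\cdot I_5$; intersecting with $\mf{so}(5)\simeq\mf{sp}(2)$ gives zero, hence $C(\mf{h}_5)=0$, and no extension to $\mf{sp}(1)\oplus\mathbb{R}$ exists in this case. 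For $\mf{s}=\mf{h}_{3,1,1}$, commuting with $\mathrm{diag}(\upsilon/2,\upsilon/2)$ for each $\upsilon\in\{i,j,k\}$ forces $a$ and $c$ to commute with all of $\mf{sp}(1)$ (so $a=c=0$, as they are purely imaginary) and $b$ to commute with all of $\mf{sp}(1)$ (so $b\in\mathbb{R}$); the $\mf{sp}(2)$ condition then gives $C(\mf{h}_{3,1,1})=\left\langle\begin{bmatrix}0&1\\-1&0\end{bmatrix}\right\rangle$, which is already one-dimensional, so $\mf{z}$ is uniquely determined and $\mf{g}=\mf{p}_{3,1,1}$.

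For $\mf{s}=\mf{h}_{4,1}$, the analogous calculation yields $C(\mf{h}_{4,1})=\{\mathrm{diag}(0,c):c\in\mf{sp}(1)\}\simeq\mf{sp}(1)$, a three-dimensional space, so a one-dimensional subspace $\mf{z}\subseteq C(\mf{h}_{4,1})$ must still be normalized. For this I would use that $\mathrm{diag}(1,q)\in\mathrm{Sp}(2)$ for every $q\in\mathrm{Sp}(1)$ fixes $\mf{h}_{4,1}$ pointwise while conjugating $\mathrm{diag}(0,c)$ to $\mathrm{diag}(0,qcq^\dagger)$; since the double cover $\mathrm{Sp}(1)\to\mathrm{SO}(3)$ is surjective, any nonzero $c\in\mf{sp}(1)$ can be rotated to a positive multiple of $i$, and rescaling within the one-dimensional $\mf{z}$ gives $\mf{z}=\langle\mathrm{diag}(0,i)\rangle$, hence $\mf{g}=\mf{p}_{4,1}$. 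The main obstacle I anticipate is the bookkeeping to check that the normalizer conjugations used in the $\mf{h}_{4,1}$ step really preserve the already-fixed standard form of $\mf{s}$ (so we do not trade one degree of freedom for another), but this follows straightforwardly from the block-diagonal nature of the $\mathrm{diag}(1,q)$ conjugations.
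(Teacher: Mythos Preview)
Your proposal is correct and follows essentially the same approach as the paper: decompose $\mf{g}$ into its $\mf{sp}(1)$ part and its center, conjugate the $\mf{sp}(1)$ part to one of the three standard forms via Proposition~\ref{prop:sp1subalgebras}, compute the centralizer in each case, and in the $\mf{h}_{4,1}$ case normalize the remaining one-dimensional piece by a further conjugation. Your version is slightly more explicit (invoking Schur's lemma for the $\mf{h}_5$ case and naming the normalizing elements $\mathrm{diag}(1,q)$ for the $\mf{h}_{4,1}$ case), whereas the paper simply asserts these computations, but the structure is the same.
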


\begin{note}
Both of the Lie subgroups associated to the above Lie algebras are compact and connected. In particular, the Lie subgroup associated with $\mf{p}_{4,1}$ is isomorphic to $\mathrm{Sp}(1)\times S^1$. The Lie subgroup associated with $\mf{p}_{3,1,1}$ is isomorphic to $(\mathrm{Sp}(1)\times S^1)/\{\pm (1,1)\}$.
\end{note}

\begin{proof}
Let $\mf{g}\subseteq\mf{sp}(2)$ be a Lie subalgebra isomorphic to $\mf{sp}(1)\oplus\mathbb{R}$. Thus, $\mf{g}\simeq\mf{g}_0\oplus \mf{g}'$, where $\mf{g}_0\subseteq\mf{sp}(2)$ is isomorphic to $\mf{sp}(1)$ and $\mf{g}'\subseteq\mf{sp}(2)$ is isomorphic to $\mathbb{R}$. Additionally, $\mf{g}_0$ and $\mf{g}'$ must commute. We know from Proposition~\ref{prop:sp1subalgebras}, we know that $\mf{g}_0$ is conjugate to one of $\mf{h}_{3,1,1}$, $\mf{h}_{4,1}$, or $\mf{h}_5$. Let $\mf{g}'$ be generated by $x\in\mf{sp}(2)$. Then $x$ must commute with all of $\mf{g}_0$.

Suppose that $\mf{g}_0$ is conjugate to $\mf{h}_5$. One can show that the only element of $\mf{sp}(2)$ commuting with $\mf{h}_5$ is zero. But then $\mf{g}\simeq\mf{sp}(1)$, contradiction! Thus, $\mf{g}_0$ cannot be isomorphic to $\mf{h}_5$.

Suppose that $\mf{g}_0$ is conjugate to $\mf{h}_{3,1,1}$. One can show that the only elements commuting with $\mf{h}_{3,1,1}$ are real multiples of $\begin{bmatrix}
0 & 1 \\ -1 & 0
\end{bmatrix}$. As choosing any non-zero multiple induces the same Lie algebra, we see that we get $\mf{p}_{3,1,1}$ as in the statement.

Finally, suppose that $\mf{g}_0$ is conjugate to $\mf{h}_{4,1}$. One can show that the only elements commuting with $\mf{h}_{4,1}$ are given by $\begin{bmatrix}
0 & 0 \\ 0 & \upsilon
\end{bmatrix}$, for some $\upsilon\in\mf{sp}(1)$. Up to conjugation, we may assume that $\upsilon$ is a real multiple of $i$. As the choice of non-zero multiple induces the same Lie algebra, we see that we get $\mf{p}_{4,1}$ as in the statement.

Just as in Proposition~\ref{prop:sp1subalgebras}, when dealing with connected Lie subgroups up to conjugacy, it is enough to investigate Lie subalgebras up to conjugacy.
\end{proof}

We finally consider the subgroups with Lie algebra $\mf{sp}(1)\oplus\mf{sp}(1)$, as the $\mf{sp}(2)$ case is trivial. Indeed, only $\mathrm{Sp}(2)$ has a Lie algebra $\mf{sp}(2)$. Note that $\mathrm{Sp}(2)$ is compact.
\begin{prop}
The Lie subgroups of $\mathrm{Sp}(2)$ with Lie algebra $\mf{sp}(1)\oplus\mf{sp}(1)$ are conjugate to $\mathrm{Sp}(1)\times\mathrm{Sp}(1)\subseteq\mathrm{Sp}(2)$, the subgroup comprised of diagonal matrices.
\end{prop}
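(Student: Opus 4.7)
The plan is to mirror the strategy used in Proposition~\ref{prop:sp1subalgebras}, exploiting the isomorphism $\mathfrak{sp}(2)\simeq\mathfrak{so}(5)$. Given any Lie subalgebra $\mathfrak{g}\subseteq\mathfrak{sp}(2)$ isomorphic to $\mathfrak{sp}(1)\oplus\mathfrak{sp}(1)$, the inclusion $\mathfrak{g}\hookrightarrow\mathfrak{so}(5)$ endows $\mathbb{R}^5$ with a faithful real $5$-representation of $\mathfrak{sp}(1)\oplus\mathfrak{sp}(1)$. Since isomorphic real representations into $\mathfrak{so}(5)$ are conjugate by an element of $\mathrm{SO}(5)\simeq\mathrm{Sp}(2)/\{\pm I\}$, which lifts to $\mathrm{Sp}(2)$, it suffices to classify the faithful real $5$-representations of $\mathfrak{sp}(1)\oplus\mathfrak{sp}(1)$ up to isomorphism and to identify one of them with the diagonal embedding.

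Next, I would enumerate the possible decompositions into irreducible real summands $(\mathbb{R}^{N(m,n)},\varrho_{m,n})$. The available summand dimensions not exceeding $5$ are $N(1,1)=1$, $N(3,1)=N(1,3)=3$, $N(2,2)=4$, and $N(2,1)=N(1,2)=4$. Listing the partitions of $5$ into these sizes gives only the options
\begin{equation*}
(\mathbb{R}^4,\varrho_{2,2})\oplus(\mathbb{R},0),\quad
(\mathbb{R}^4,\varrho_{2,1})\oplus(\mathbb{R},0),\quad
(\mathbb{R}^4,\varrho_{1,2})\oplus(\mathbb{R},0),
\end{equation*}
as well as
\begin{equation*}
(\mathbb{R}^3,\varrho_{3,1})\oplus(\mathbb{R},0)^{\oplus 2},\quad
(\mathbb{R}^3,\varrho_{1,3})\oplus(\mathbb{R},0)^{\oplus 2},\quad
(\mathbb{R},0)^{\oplus 5}.
\end{equation*}
All but the first are non-faithful, since in each remaining case one of the two $\mathfrak{sp}(1)$ summands acts trivially on every irreducible component. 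Hence up to conjugacy the only admissible real $5$-representation is $(\mathbb{R}^4,\varrho_{2,2})\oplus(\mathbb{R},0)$.

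Finally, I would exhibit a concrete realization. The diagonal subalgebra $\mathfrak{d}:=\{\mathrm{diag}(\upsilon,\omega)\mid \upsilon,\omega\in\mathfrak{sp}(1)\}\subseteq\mathfrak{sp}(2)$ is a Lie subalgebra isomorphic to $\mathfrak{sp}(1)\oplus\mathfrak{sp}(1)$. Its action on $S^4\simeq\mathbb{H}\cup\{\infty\}$ via Definition~\ref{def:conformalactionR4} sends $x\mapsto \upsilon x-x\omega$ infinitesimally, fixing $\infty$ and acting on $\mathbb{H}\simeq\mathbb{R}^4$ exactly as the representation $(\mathbb{H},\nu)\simeq(\mathbb{R}^4,\varrho_{2,2})$, while the fixed point $\infty$ provides the trivial summand $(\mathbb{R},0)$. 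Thus $\mathfrak{d}$ realizes the unique admissible representation, so any $\mathfrak{g}$ as above is conjugate to $\mathfrak{d}$. Since the exponential map on the compact group $\mathrm{Sp}(2)$ is surjective onto each connected Lie subgroup with a given Lie subalgebra, this conjugacy of subalgebras lifts to a conjugacy of subgroups, and the unique connected Lie subgroup with Lie algebra $\mathfrak{d}$ is precisely $\mathrm{Sp}(1)\times\mathrm{Sp}(1)\subseteq\mathrm{Sp}(2)$ embedded as diagonal matrices.

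The main obstacle I anticipate is the bookkeeping in the enumeration step: one must verify carefully that the parity condition in the definition of $N(m,n)$ rules out any $5$-dimensional combinations beyond those listed, and that the only faithful option indeed forces both $\mathfrak{sp}(1)$ factors to act non-trivially. Everything else is a direct adaptation of the argument used for Proposition~\ref{prop:sp1subalgebras}.
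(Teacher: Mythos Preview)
Your approach is correct but genuinely different from the paper's. The paper argues directly by centralizers: writing $\mathfrak{g}=\mathfrak{g}_1\oplus\mathfrak{g}_2$ with each $\mathfrak{g}_i\simeq\mathfrak{sp}(1)$, it uses the centralizer computations already done for Proposition~\ref{prop:sp1s1subalgebras} to rule out $\mathfrak{h}_5$ (trivial centralizer) and $\mathfrak{h}_{3,1,1}$ (one-dimensional centralizer, too small to contain $\mathfrak{g}_2$), forcing $\mathfrak{g}_1$ to be conjugate to $\mathfrak{h}_{4,1}$, whose centralizer is exactly the other diagonal $\mathfrak{sp}(1)$. Your route instead replays the representation-theoretic strategy of Proposition~\ref{prop:sp1subalgebras}, classifying faithful real $5$-representations of $\mathfrak{sp}(1)\oplus\mathfrak{sp}(1)$ directly. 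Both arguments are short; the paper's has the advantage of reusing the explicit centralizer calculations, while yours is more self-contained and makes the uniqueness transparent at the level of representation type.

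One small gap in your enumeration: you omitted the irreducibles $(\mathbb{R}^5,\varrho_{5,1})$ and $(\mathbb{R}^5,\varrho_{1,5})$, which have $N(5,1)=N(1,5)=5$. These give two more candidate decompositions of $\mathbb{R}^5$. They are non-faithful (one $\mathfrak{sp}(1)$ factor acts trivially), so the conclusion is unaffected, but they should appear on the list. This is exactly the bookkeeping hazard you flagged at the end. Also, your identification step via the conformal action on $\mathbb{H}\cup\{\infty\}$ is heuristic---the isomorphism $\mathfrak{sp}(2)\simeq\mathfrak{so}(5)$ used in the paper goes through the Chevalley bases, not directly through stereographic projection---but the logic does not actually depend on it: once you know there is a unique faithful class, any concrete $\mathfrak{sp}(1)\oplus\mathfrak{sp}(1)$ subalgebra (such as the diagonal one) must realize it.
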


\begin{note}
The subgroup $\mathrm{Sp}(1)\times \mathrm{Sp}(1)$ is compact.
\end{note}

\begin{proof}
Let $\mf{g}\subseteq\mf{sp}(2)$ be a Lie subalgebra isomorphic to $\mf{sp}(1)\oplus\mf{sp}(1)$. Thus, $\mf{g}\simeq \mf{g}_1\oplus \mf{g}_2$, where both $\mf{g}_1\subseteq\mf{sp}(2)$ and $\mf{g}_2\subseteq\mf{sp}(2)$ are isomorphic to $\mf{sp}(1)$. Additionally, $\mf{g}_1$ and $\mf{g}_2$ must commute. We know from Proposition~\ref{prop:sp1s1subalgebras} that no non-zero elements of $\mf{sp}(2)$ commute with $\mf{h}_5$. Hence, neither $\mf{g}_1$ nor $\mf{g}_2$ are conjugate to $\mf{h}_5$.

Suppose that $\mf{g}_1$ is conjugate to $\mf{h}_{3,1,1}$. We know that the only elements of $\mf{sp}(2)$ commuting with $\mf{h}_{3,1,1}$ are the real multiples of $\begin{bmatrix}
0 & 1 \\ -1 & 0
\end{bmatrix}$. Thus, $\mf{g}_2$ must be contained in the span of this matrix. However, $\mf{g}_2$ is three-dimensional, contradiction! 

Therefore, $\mf{g}_1$ is conjugate to $\mf{h}_{4,1}$. We know that the only elements of $\mf{sp}(2)$ commuting with $\mf{h}_{4,1}$ are the elements of the form $\begin{bmatrix}
0 & 0 \\ 0 & \upsilon
\end{bmatrix}$ for $\upsilon\in\mf{sp}(1)$. This space is three-dimensional, so $\mf{g}_2$ is exactly this space. Thus, we have that $\mf{g}$ is exactly given by 
\begin{equation}
\mf{g}=\left\{\begin{bmatrix}
\upsilon_1 & 0 \\ 0 & \upsilon_2
\end{bmatrix}\mid \upsilon_1,\upsilon_2\in\mf{sp}(1)\right\}.
\end{equation}
This space is exactly the Lie algebra of $\mathrm{Sp}(1)\times\mathrm{Sp}(1)\subseteq\mathrm{Sp}(2)$, the subgroup comprised of diagonal matrices.

Just as in the preceding propositions, when dealing with connected Lie subgroups up to conjugacy, it is enough to investigate Lie subalgebras up to conjugacy.
\end{proof}

Table~\ref{table:conformalsubgroups} summarizes the Lie subgroups mentioned in the preceding propositions in addition to $\mathrm{Sp}(2)$, lists their corresponding Lie algebras, and names the corresponding symmetries. The preceding propositions tell us that when studying instantons equivariant under continuous subgroups of $\mathrm{Sp}(2)$, it suffices to study these subgroups.
\newpage

\bibliographystyle{halpha}
\phantomsection
\addcontentsline{toc}{section}{\textbf{References}}
\bibliography{./Files/Bibliography/bibliography.bib}
\end{document}